\definecolor{darkgreen}{rgb}{0,0.5,0}
\newtheorem{theorem}{Theorem}[section]
\newtheorem{lemma}[theorem]{Lemma}
\newtheorem{corollary}[theorem]{Corollary}
\newtheorem{definition}{Definition}[section]
\newtheorem{proposition}[theorem]{Proposition}
\newtheorem{observation}[theorem]{Observation}
\newcommand{\defcal}[1]{\expandafter\newcommand\csname c#1\endcsname{{\mathcal{#1}}}}
\newcommand{\defbb}[1]{\expandafter\newcommand\csname b#1\endcsname{{\mathbb{#1}}}}
\newcommand{\defvec}[1]{\expandafter\newcommand\csname v#1\endcsname{{\mathbf{#1}}}}
\newcounter{calBbCounter}
    \edef\Letter{\Alph{calBbCounter}}
		\edef\letter{\alph{calBbCounter}}
\newcommand{\eps}{\varepsilon}
\newcommand{\nnR}{{\bR_{\geq 0}}}
\newcommand{\characteristic}{{\mathbf{1}}}
\newcommand{\inner}[2]{\left<#1, #2\right>}
\newcommand{\email}[1]{{\href{mailto:#1}{#1}}}
\newcommand{\vone}{\mathbf{1}}
\newcommand{\vzero}{\mathbf{0}}
\newcommand{\FWMCGFull}{{\textbf{Frank-Wolfe Guided Measured Continuous Greedy}}}
\newcommand{\FWMCG}{{\textbf{\upshape{FW-guided-MCG}}}\xspace}
\newcommand{\BoxMaximization}{{\textbf{\upshape{Box-Maximization}}}}
\newcommand{\psum}{\mathbin{\oplus}}
\newcommand{\hprod}{\mathbin{\odot}}
\DeclareMathOperator*{\PSum}{\scalerel*{\oplus}{\sum}}
\DeclareMathOperator*{\HProd}{\scalerel*{\hprod}{\sum}}
\newcommand{\RSet}{{\texttt{\upshape{R}}}}
\newcommand{\titleRef}[1]{{\texorpdfstring{\ref{#1}}{\ref*{#1}}}}
\newcommand{\XOR}{\mathbin{\text{XOR}}}
\newcommand{\io}{{i_o}}
\let\oldnl\nl
\newcommand{\nonl}{\renewcommand{\nl}{\let\nl\oldnl}}
\newcommand{\dotbigcup}{\charfusion[\mathop]{\bigcup}{\cdot}}
\DeclareMathOperator*{\Poly}{Poly}
\def\moverlay{\mathpalette\mov@rlay}
\def\mov@rlay#1#2{\leavevmode\vtop{%
   \baselineskip\z@skip \lineskiplimit-\maxdimen
   \ialign{\hfil$\m@th#1##$\hfil\cr#2\crcr}}}
\newcommand{\charfusion}[3][\mathord]{
    #1{\ifx#1\mathop\vphantom{#2}\fi
        \mathpalette\mov@rlay{#2\cr#3}
      }
    \ifx#1\mathop\expandafter\displaylimits\fi}
\begin{document}

\pagenumbering{arabic}

\title{Constrained Submodular Maximization via New Bounds for DR-Submodular Functions}

\author{Niv Buchbinder\thanks{Department of Statistics and Operations Research, Tel Aviv University. E-mail: \email{niv.buchbinder@gmail.com}} \and
				Moran Feldman\thanks{Department of Computer Science, University of Haifa. E-mail: \email{moranfe@cs.haifa.ac.il}}}

\maketitle
\thispagestyle{empty}
\pagenumbering{Alph}
\begin{abstract}
Submodular maximization under various constraints is a fundamental problem studied continuously, in both computer science and operations research, since the late $1970$'s. A central technique in this field is to approximately optimize the multilinear extension of the submodular objective, and then round the solution. The use of this technique requires a solver able to approximately maximize multilinear extensions. Following a long line of work, Buchbinder and Feldman~\cite{buchbinder2019constrained} described such a solver guaranteeing $0.385$-approximation for down-closed constraints, while Oveis Gharan and Vondr{\'{a}}k~\cite{gharan2011submodular} showed that no solver can guarantee better than $0.478$-approximation. In this paper, we present a solver guaranteeing $0.401$-approximation, which significantly reduces the gap between the best known solver and the inapproximability result. The design and analysis of our solver are based on a novel bound that we prove for DR-submodular functions. This bound improves over a previous bound due to Feldman et al.~\cite{feldman2011unified} that is used by essentially all state-of-the-art results for constrained maximization of general submodular/DR-submodular functions. Hence, we believe that our new bound is likely to find many additional applications in related problems, and to be a key component for further improvement.
\end{abstract}
\newpage
\pagenumbering{arabic}

\newtoggle{appendix}
\togglefalse{appendix}
\section{Introduction} \label{sec:introduction}

Submodular maximization under various constraints is a fundamental problem, and has been studied continuously, in both computer science and operations research, since the late $1970$'s~\cite{conforti1984submodular,fisher1978analysis,hausmann1978greedy,hausmann1980worst,jenkyns1976efficacy,korte1978analysis,nemhauser1978best,nemhauser1978analysis}. Given a submodular set function $f\colon 2^\cN \to \nnR$ over a ground set $\cN$ and a family of feasible subsets $\cI \subseteq 2^\cN$, the problem is to find a subset $A\in \cI$ maximizing $f(A)$.\footnote{We refer the reader to Section \ref{sec:preliminaries} for formal definitions.} 
Submodular functions are a rich class of functions that includes many functions of interest, such as cuts functions of graphs and directed graphs, the mutual information function, matroid weighted rank functions and log-determinants. Hence, many well-known problems in combinatorial optimization can be cast as special cases of submodular maximization problems. A few examples are Max-Cut~\cite{goemans1995improved,hastad2001optimal,karp1972reducibility,khot2007optimal,trevisan2000gadgets}, Max-DiCut~\cite{feige1995approximating,goemans1995improved,halperin2001combinatorial}, Generalized Assignment~\cite{chekuri2005polynomial,cohen2006efficient,feige2006approximation,fleischer2006tight}, Max-$k$-Coverage~\cite{feige1998threshold,khuller1999budgeted}, Max-Bisection~\cite{austrin2016better,frieze1997improved} and Facility Location~\cite{ageev1999approximation,cornuejols1977location,cornuejols1977uncapacitated}. From a more practical perspective, submodular maximization problems have found uses in social networks~\cite{hartline2008optimal,kempe2015maximizing}, vision~\cite{boykov2001interactive,jegelka2011submodularity}, machine learning~\cite{krause2005near,krause2008efficient,krause2008near,lin2010multidocument,lin2011class} and many other areas (the reader is referred, for example, to a comprehensive survey by Bach~\cite{bach2013foundations}). 
Submodular maximization has also been studied in various computational models, such as online and secretary (random arrival) settings~\cite{bateni2013submodular,buchbinder2020online,buchbinder2019online,feldman2018goes}, streaming~\cite{alaluf2022optimal,chekuri2015streaming,feldman2022streaming,feldman2020oneway,kazemi2019submodular,levin2021streaming}, parallel computation \cite{balkanski2022adaptive2,balkanski2018adaptive1,chekuri2019parallel1,chekuri2019parallel2} and distributed computing~\cite{mirzasoleiman2016distributed,pontebarbosa2016new}. Some works also took a game theoretic perspective on submodular optimization~\cite{assadi2020improved,dobzinski2021gross,mualem2023submodular}.

To tackle submodular maximization problems, earlier work used direct combinatorial approaches such as local search or greedy variants~\cite{buchbinder2015tight,feige11maximizing,feldman2011improved,lee2010maximizing,lee2010submodular,maxim2004note}. However, such algorithms tend to be highly tailored for the specific details of the problem at hand, making it difficult to get generally applicable results.
A second approach that emerged is the following. First, the feasible set $\cI$ is relaxed to a convex body $P\subseteq [0,1]^\cN$, and the objective function $f$ is extended to a function $F\colon [0,1]^\cN \to \nnR$. Defining the ``right'' extension function $F$ to use is tricky, as, unlike in the linear case, there is no single natural candidate. The extension that turned out to be the most useful is known as the \emph{multilinear extension}, first introduced by~\cite{calinescu2011maximzing}. Once $P$ and $F$ are defined, a two steps procedure is implemented. First, one finds a fractional solution $\vx \in P$ that approximately maximizes $F$. Then, the fractional solution $\vx$ is rounded to obtain an integral solution, while incurring a bounded loss in terms of the objective. This approach has been amazingly successful in resolving many open problems, and led to the introduction of many fundamental techniques. 

The rounding step of the above approach turned out to be successfully implementable for many kinds of constraints. For example, when $P$ is a matroid polytope, rounding can be done with no loss~\cite{calinescu2011maximzing,chekuri2010dependent}, and when $P$ is defined by a constant number of knapsack constraints, rounding results in only an arbitrarily small constant loss~\cite{kulik2013approximations} (see also~\cite{bruggmann2022optimal,chekuri2014submodular,feldman13maximization,qiu2022submodular} for other examples of rounding in submodular maximization problems). Hence, the main bottleneck and challenge in most cases is finding a good fractional solution to round. The first suggested (and prime example of) algorithm for this purpose is the Continuous Greedy algorithm designed by C{\u{a}}linescu et al.~\cite{calinescu2011maximzing}. Their algorithm outputs a $(1 - \nicefrac{1}{e})$-approximate solution when the submodular function is monotone.\footnote{A set function $f\colon 2^\cN \to \nnR$ is monotone if $f(A) \leq f(B)$ for every $A \subseteq B \subseteq \cN$.} Accompanied with the above mentioned rounding for matroid polytopes, this yielded the best possible approximation for maximizing a monotone submodular function subject to a matroid constraint, resolving a central problem that was open for $30$ years.

Finding a good fractional solution when $f$ is not guaranteed to be monotone proved to be a more challenging task. In fact, the only tight result currently known for such functions is a $\nicefrac{1}{2}$-approximation for the unconstrained case (in which every subset of $\cN$ is a feasible solution)~\cite{buchbinder2015tight,feige11maximizing,mualem2022using}. 
Nevertheless, there is a line work that aims to find a good fractional solution for maximizing a non-negative non-monotone submodular function $f$ subject to a general down-closed solvable polytope.\footnote{A polytope $P \subseteq [0, 1]^\cN$ is down-closed if $\vy \in P$ implies that every non-negative vector $\vx \leq \vy$ belongs to $P$ as well. The polytope $P$ is solvable if one can optimize linear functions subject to it.} The first work in this line was an algorithm of Chekuri et al.~\cite{chekuri2014submodular} guaranteeing $0.325$-approximation, which was soon improved over by an algorithm of Feldman et al.~\cite{feldman2011unified} termed Measured Continuous Greedy that guarantees an approximation ratio of $\nicefrac{1}{e} \approx 0.367$. The simplicity and the natural guarantee of Measured Continuous Greedy, combined with the fact that it was not improved for a few years, made some people suspect that it is optimal. However, Ene and Nguyen~\cite{ene2016constrained} shuttered this conjecture. They designed a $0.372$-approximation algorithm based on techniques used earlier to improved upon the $\nicefrac{1}{e}$-approximation in the special case of a cardinality constraint~\cite{buchbinder2014submodular}. The current best approximation is due to Buchbinder and Feldman~\cite{buchbinder2019constrained}, who improved the approximation ratio further to $0.385$. On the inapproximability side, Oveis Gharan and Vondr\'{a}k~\cite{gharan2011submodular} proved that no algorithm can achieve approximation better than $0.478$ even when $P$ is the matroid polytope of a partition matroid, and recently Qi~\cite{qi2022maximizing} showed that the same inapproximability applies also to a cardinality constraint. Closing the gap between the above algorithmic and inapproximability results, and understanding better this fundamental optimization problem, remains among the most important open problems in the area of submodular maximization.

We note that the restriction to down-closed polytopes in the above line of work is necessary, as otherwise the problem does not admit any constant factor approximation even for matroid base polytopes~\cite{vondrak2013symmetry}. However, fortunately, many natural constraints such as matroid constraints (or more generally, any set of packing constraints) are down-closed. Some works have also studied the possibility to bypass the inapproximability result of~\cite{vondrak2013symmetry} using an appropriate parametrization~\cite{du2022lyapunov,durr2020nonmonotone,mualem2023resolving}.

\subsection{Our Contribution}

Recent research has identified a useful class of continues functions, termed DR-submodular functions, that captures the diminishing return property of submodular set functions. A function $F\colon [0, 1]^\cN \to \bR$ is \emph{DR-submodular} if
$F(\vx + z \cdot \characteristic_u)-F(\vx)\geq F(\vy+z \cdot \characteristic_u)-F(\vy)$ for every element $u \in \cN$, two vectors $\vx \leq\vy\in[0, 1]^\cN$ and value $z \in [0, 1 - y_u]$, where $\characteristic_u$ is the standard basis vector having $1$ in the coordinate of $u$ and $0$ in all other coordinates. 
DR-submodular functions have many applications of their own (see, for example,~\cite{bian2017nonmonotone,mitra2021submodular,mualem2023resolving,niazadeh2020optimal}). However, it is also known that multilinear extensions of submodular set functions are DR-submodular~\cite{bian2017guaranteed}, and thus, optimization results for DR-submodular functions translate into optimization results for multilinear extensions.

Our main contribution is the design of a new algorithm for maximizing non-negative DR-submodular functions over a downed-closed convex body. This result is formally stated in Theorem~\ref{thm:main}. We refer the reader to Section~\ref{sec:preliminaries} for the definitions of the standard terms used in this theorem. However, we note that a convex body $P$ is meta-solvable if one can efficiently optimize linear functions over $P\cap Q$ for any polytope $Q$ defined by a polynomial number of explicit linear constraints.
\begin{theorem} \label{thm:main}
There exists an algorithm that given a non-negative $L$-smooth DR-submodular function $F$, a meta-solvable down-closed convex body $P \subseteq [0, 1]^\cN$ of diameter $D$ and a value $\delta \in (0, 1)$, returns a vector $\vx \in P$ such that $F(\vx) \geq 0.401 \cdot \max_{\vx\in P}F(\vx) - O(\delta D^2 L)$ and runs in $\Poly(|\cN|, \delta^{-1})$ time.
\end{theorem}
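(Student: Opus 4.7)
}
The plan is to establish the theorem via the \FWMCG{} algorithm hinted at by the paper's terminology: a continuous time process that runs Measured Continuous Greedy while simultaneously maintaining a Frank--Wolfe--updated auxiliary iterate used to \emph{guide} the choice of update direction. Concretely, I would maintain two trajectories, $\vx(t)$ (the main iterate) and $\vz(t)$ (the Frank--Wolfe companion), both starting at $\vzero$ and evolving for $t \in [0, T]$ where $T$ is a constant tuned to yield the factor $0.401$. At time $t$, I would solve a linear auxiliary problem over $P$ that uses a combination of $\nabla F(\vx(t))$ weighted by $\vone - \vx(t)$ (the measured direction) together with a correction depending on $\vz(t)$; the winning direction $\vy(t)$ is then used to update $\vx$ via the measured rule $\tfrac{d\vx}{dt} = (\vone - \vx(t)) \hprod \vy(t)$, while $\vz$ is moved linearly toward $\vy(t)$ in Frank--Wolfe fashion. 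The meta-solvability hypothesis is what makes the auxiliary linear maximization tractable, since it can be written as a linear program over $P$ intersected with a polynomial number of extra constraints.

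The heart of the proof is the lower bound on the instantaneous progress $\tfrac{d}{dt} F(\vx(t))$. Here I would invoke the paper's new DR-submodular bound, which I expect to take the form of an inequality relating $F(\vx \psum \vo)$, $F(\vx)$, $F(\vx \hprod \vo)$, and $\langle \nabla F(\vx), (\vone - \vx) \hprod \vo \rangle$ for an arbitrary point $\vo \in [0,1]^\cN$, strictly improving the Feldman--Naor--Schwartz bound $F(\vx \psum \vo) \geq (1 - \max_u x_u) F(\vo)$ used in the $\nicefrac{1}{e}$-approximation of Measured Continuous Greedy. Plugging $\vo = \vx^*$ (an optimal integral solution) and using DR-submodularity to relate the gradient at $\vx(t)$ to that at intermediate points, I expect to derive a differential inequality of the form $\tfrac{d}{dt}F(\vx(t)) \geq \alpha(t) \cdot \opt - \beta(t) \cdot F(\vx(t)) - \gamma(t) \cdot \Phi(\vz(t))$, where $\Phi$ is some potential controlled by the Frank--Wolfe step; summing this with the analogous inequality for $\vz$ should yield the desired coupled ODE system.

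The technical crux, and the step I expect to be the main obstacle, is integrating this coupled differential inequality and optimizing $T$ and the weighting coefficients in the auxiliary linear problem to certify the constant $0.401$. This is essentially a variational problem whose optimum is the approximation ratio, and the choice $0.401$ presumably comes from a specific solvable ODE (of the type encountered in \cite{buchbinder2019constrained,ene2016constrained}) whose solution can be written in closed form using the new bound. I would set up this ODE, guess an exponential ansatz for $F(\vx(t))$, verify it satisfies the inequality, and evaluate at $t = T$ to read off the ratio.

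Finally, I would discretize time into $\Theta(\delta^{-1})$ steps of size $\delta$. The error introduced in passing from the continuous to the discrete dynamics is controlled by the $L$-smoothness of $F$ and the diameter bound $D$ on $P$: a second-order Taylor expansion of $F$ along each discrete step shows the per-step error is $O(\delta^2 D^2 L)$, which accumulates to $O(\delta D^2 L)$ over $\Theta(\delta^{-1})$ steps, matching the additive loss in the theorem statement. The polynomial running time follows from the fact that each step requires a constant number of gradient evaluations and one call to the linear optimization oracle over $P$ intersected with polynomially many linear constraints, which is supplied by meta-solvability.
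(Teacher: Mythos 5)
Your proposal correctly identifies the discretization machinery (controlling per-step error by $L$-smoothness and summing $\delta^{-1}$ terms of size $O(\delta^2 D^2 L)$) and correctly observes that meta-solvability is needed to solve a linear problem over $P$ intersected with extra constraints. However, the core of your plan is not the approach the paper takes, and I believe your plan as stated would not reach $0.401$.

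The most important gap is structural. You propose a \emph{single} continuous-greedy pass carrying along a Frank--Wolfe companion iterate $\vz(t)$, coupled through a joint differential inequality. The paper's algorithm is instead a \emph{recursion} of constant depth $1 + \lceil 2/\eps\rceil$ around a subroutine (\FWMCG), plus a separate invocation of a Double-Greedy--style unconstrained-maximization box oracle. The reason the recursion is necessary is exactly the ``damage escalation'' problem that your coupled ODE cannot resolve: \FWMCG, seeded with some $\vz$, either produces a good $\vy$, or exhibits a direction $\vx(i)$ that damages $\vo$ \emph{more} than $\vz$ did (i.e., $F(\vx(i) \psum \vo) \leq F(\vz \psum \vo) - \eps F(\vo)$). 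Handling such an $\vx(i)$ requires re-seeding another copy of \FWMCG with $\vz = \vx(i)$, and the total damage being bounded by $F(\vo)$ caps the recursion depth. There is no single-trajectory, single-potential $\Phi(\vz(t))$ that captures this: the escalating-damage branch forces a branching (recursive) structure.

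Second, the role of Frank--Wolfe in \FWMCG is not what you describe. It is not a companion iterate moving by convex combination; it is the procedure of Theorem~\ref{thm:local_search} used at \emph{each time step} to find an approximate local maximum of $F$ inside the body $Q(\tau) = \{\vx \in P : \inner{\vw(\tau)}{\vx} \geq V(\tau) - F(\vy(\tau))\}$. The linear constraint in $Q(\tau)$ guarantees sufficient progress in $F(\vy(\tau))$, while the local-max property (Lemma~\ref{lem:optimal_local_maximum}) is the thing that makes the bad direction $\vx(i)$ exploitable by the unconstrained-maximization oracle in the branch where it has large intersection with $\vo$. Your plan has no mechanism corresponding to this: without the local-max property of the chosen direction, you have no handle on $F(\vx(i) \wedge \vo)$ and $F(\vx(i) \vee \vo)$, and without the Double-Greedy component you cannot exploit it. Relatedly, you have omitted the unconstrained box-maximization step entirely (Corollary~\ref{cor:unconstrained}); the final $0.401$ comes from a convex combination with weight $\alpha \approx 0.1974$ on that piece, so its absence means you will land strictly below $0.401$.

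Third, your guess for the new DR-submodular bound is off in a way that matters. You expect a pointwise gradient inequality relating $F(\vx \psum \vo)$, $F(\vx \hprod \vo)$, and $\inner{\nabla F(\vx)}{(\vone - \vx) \hprod \vo}$. The paper's new bound (Lemmas~\ref{lem:mathematical} and~\ref{lem:mathematical_general}, and their discrete counterparts) is a \emph{trajectory} bound: it lower bounds $F(\vone - \va \odot e^{-\int_0^t \vx(\tau) d\tau})$ by a Poisson-weighted sum of terms $F((\vone - \va) \psum \PSum_{j=1}^{i} \vx(\tau_j))$ involving the full history of directions. The point of this is that it tracks the damage inflicted by \emph{each individual} direction used along the way, rather than only $\|\vy(\tau)\|_\infty$ as in the old Feldman--Naor--Schwartz bound. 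A pointwise gradient inequality of the form you propose does not distinguish whether the current $\vy(\tau)$ was built from directions that individually hurt $\vo$ or from directions that are benign, and this distinction is exactly the source of the improvement beyond $0.385$. You would need this history-aware structure to set up the case analysis in Theorem~\ref{thm:main-component-discrete}.

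Finally, a smaller point: the paper works on the time horizon $[0,1]$ and instead tunes an internal phase parameter $t_s \approx 0.368$ at which the potential function $V(\tau)$ and the damping vector $\vz(\tau)$ change form; the time horizon $T$ is not the free parameter you suggest. The parameters $t_s$, the Double-Greedy ratio $r$, and the convex-combination weight $\alpha$ are the three degrees of freedom that are jointly optimized to certify $0.40101 > 0.401$.
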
 

As mentioned above, since multilinear extensions of submodular set functions are DR-sub\-modular, Theorem~\ref{thm:main} yields an improved approximation algorithm for such functions as well. Formally, we get the following theorem (see Appendix~\ref{app:technical_discrete} for a discussion of a few relevant technical issues in the derivation of Theorem~\ref{thm:discrete} from Theorem~\ref{thm:main}).
\begin{theorem} \label{thm:discrete}
There exists a polynomial time algorithm that, given a non-negative submodular set function $f \colon 2^\cN \to \nnR$ and a meta-solvable down-closed convex body $P \subseteq [0, 1]^\cN$, outputs a vector $\vx \in P$ such that
\[
	F(\vx)
	\geq
	0.401 \cdot \max_{\substack{S \subseteq \cN \\ \characteristic_S \in P}} f(S)
	\enspace,
\]
where $\characteristic_S$ is the characteristic vector of $S$, i.e., a vector that takes the value $1$ in the coordinates corresponding to the elements of $S$, and the value $0$ in the other coordinates.
\end{theorem}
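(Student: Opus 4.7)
The plan is to apply Theorem~\ref{thm:main} to the multilinear extension $F$ of $f$. Three preliminary facts are needed: (i) $F$ is DR-submodular~\cite{bian2017guaranteed}, so the hypothesis of Theorem~\ref{thm:main} is satisfied; (ii) $P \subseteq [0,1]^\cN$ has Euclidean diameter $D \leq \sqrt{|\cN|}$; and (iii) $F$ is $L$-smooth with $L = O(|\cN| \cdot f^*)$, where $f^* := \max_{S \subseteq \cN} f(S)$. The last bound follows because the off-diagonal entries of the Hessian of $F$ equal expected discrete second differences of $f$, and are bounded in absolute value by $O(f^*)$ via submodularity and non-negativity of $f$, while the diagonal entries vanish by multilinearity; the Euclidean operator norm of an $|\cN| \times |\cN|$ matrix of such entries is $O(|\cN| f^*)$.

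Denote $\mathrm{OPT}_d := \max_{\characteristic_S \in P} f(S)$. Since each characteristic vector $\characteristic_S \in P$ is a feasible continuous point satisfying $F(\characteristic_S) = f(S)$, we have $\max_{\vx \in P} F(\vx) \geq \mathrm{OPT}_d$. Running the algorithm of Theorem~\ref{thm:main} with parameter $\delta$ therefore produces $\vx \in P$ with
\[
F(\vx) \;\geq\; 0.401 \cdot \mathrm{OPT}_d - O(\delta \cdot D^2 L) \;=\; 0.401 \cdot \mathrm{OPT}_d - O\bigl(\delta \cdot |\cN|^2 \cdot f^*\bigr).
\]
Setting $\delta = \eta / |\cN|^2$ for a small constant $\eta > 0$ reduces the additive loss to $O(\eta f^*)$, while keeping the running time polynomial in $|\cN|$ and $\eta^{-1}$.

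The main obstacle is converting this additive loss into a multiplicative one without knowing $\mathrm{OPT}_d$ in advance, since $f^*$ can be arbitrarily larger than $\mathrm{OPT}_d$ when the unconstrained optimum of $f$ lies outside $P$. I would handle this in the standard way: submodularity and non-negativity of $f$ imply $f^* \leq |\cN| \cdot \max_u f(\{u\})$, so $\mathrm{OPT}_d$ lies in a polynomially bounded range. Guessing $\mathrm{OPT}_d$ within a factor of $2$ via binary search in powers of $2$ over this range, running the Theorem~\ref{thm:main} algorithm for each guess (with $\delta$ calibrated against the guess), and returning the best among all produced vectors together with the best feasible singleton $\characteristic_{\{u\}}$, costs only logarithmic overhead. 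Provided that the constant in Theorem~\ref{thm:main} carries some slack above $0.401$, so that a loss of the form $\eta \cdot \mathrm{OPT}_d$ can be absorbed, this yields the claimed $0.401$-approximation. Remaining degenerate cases (e.g., $\mathrm{OPT}_d = 0$ or $\characteristic_S \not\in P$ for any non-empty $S$) and the precise choice of $\eta$ are dealt with in Appendix~\ref{app:technical_discrete}.
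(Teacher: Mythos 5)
Your overall plan matches the paper's (Appendix~\ref{app:technical_discrete}): apply Theorem~\ref{thm:main} to the multilinear extension $F$, use $D \leq \sqrt{|\cN|}$, and bound the smoothness constant $L$. Your Hessian-based estimate $L = O(|\cN| \cdot f^*)$, where $f^* = \max_{S \subseteq \cN} f(S)$, is correct and in fact slightly tighter than the $2|\cN|^2 f^*$ the paper derives via Lemma~\ref{lem:derivative_change}. However, the step that converts the additive error $O(\delta D^2 L)$ into a multiplicative $O(\eps)\cdot\max_{\characteristic_S\in P} f(S)$ has a real gap.

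The problem is that you bound $L$ in terms of the \emph{unconstrained} maximum $f^*$, while the error must ultimately be small relative to the \emph{constrained} optimum $\mathrm{OPT}_d := \max_{\characteristic_S \in P} f(S)$, and the ratio $f^*/\mathrm{OPT}_d$ can be superpolynomial even when $\mathrm{OPT}_d > 0$ and feasible singletons exist. For instance, take $\cN=\{u,v\}$, $P=\{0\}\times[0,1]$, $f$ modular with $f(\{u\})=1$ and $f(\{v\})=2^{-n}$: then $\mathrm{OPT}_d = 2^{-n}$ while $f^*\approx 1$. Your binary search over guesses $g\approx\mathrm{OPT}_d$ does not rescue this: the number of guesses is indeed only polynomial, but for the correct guess you must set $\delta = \Theta\bigl(\eta\, g/(D^2 L)\bigr) = \Theta\bigl(\eta\,\mathrm{OPT}_d/(|\cN|^2 f^*)\bigr)$, so $\delta^{-1}$ (and hence the running time of that single invocation of Theorem~\ref{thm:main}) is superpolynomial. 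The inequality $f^* \leq |\cN|\max_u f(\{u\})$ that you invoke is of no help here, because $\max_u f(\{u\})$ need not be bounded by $\mathrm{OPT}_d$ when $\characteristic_u\notin P$. The paper's fix is a preprocessing WLOG: by down-closedness of $P$, any $u$ with $\characteristic_u\notin P$ cannot appear in any feasible $S$, so one may drop such $u$ from the ground set without changing $\mathrm{OPT}_d$; afterwards every singleton is feasible, hence $\max_u f(\{u\}) \leq \mathrm{OPT}_d$, giving $L \leq 2|\cN|^2 \mathrm{OPT}_d$ (Proposition~\ref{prop:bound_L}) and allowing a fixed $\delta = \Theta(|\cN|^{-3})$ with no guessing at all. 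Separately, you implicitly treat $F$ and $\nabla F$ as given, but only $f$ is accessible via oracle; the paper points out that $F$ and its gradient must be approximated by sampling, contributing an additional polynomially small additive error that is absorbed by the slack in the constant $0.401$.
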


Theorem~\ref{thm:discrete} can be used in a black box manner to improve various state-of-the-art submodular maximization results that are based on the rounding approach discussed above. However, here we only mention the following corollary, which follows by passing the output of the algorithm from Theorem~\ref{thm:discrete} to an appropriate rounding procedure (either Pipage Rounding~\cite{calinescu2011maximzing} or Swap Rounding~\cite{chekuri2010dependent} for a matroid constraint, and the rounding of Kulik et al.~\cite{kulik2013approximations} for a constant number of knapsack constraints).
\begin{corollary}
There exists a polynomial time $0.401$-approximation algorithm for the problem of maximizing a non-negative submodular set function $f \colon 2^\cN \to \nnR$ subject to either a matroid constraint or a constant number of knapsack constraints.
\end{corollary}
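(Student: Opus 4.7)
The plan is to derive this corollary as a direct composition of Theorem~\ref{thm:discrete} with known rounding procedures, invoking the former as a black box. First I would verify that the constraint polytopes in question fit the hypotheses of Theorem~\ref{thm:discrete}, i.e., are down-closed and meta-solvable. The matroid polytope is classically down-closed, and since it admits a polynomial-time separation oracle, its intersection with any polynomially-described polytope of explicit linear constraints supports linear optimization via the ellipsoid method, so it is meta-solvable. For a constant number of knapsack constraints, $P$ is the intersection of the box $[0,1]^\cN$ with a constant number of explicit linear inequalities; this is trivially down-closed and meta-solvable.

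Next I would apply Theorem~\ref{thm:discrete} to $f$ and $P$ in order to compute, in polynomial time, a vector $\vx \in P$ satisfying
\[
    F(\vx) \;\geq\; 0.401 \cdot \max_{\substack{S \subseteq \cN \\ \characteristic_S \in P}} f(S) \;=\; 0.401 \cdot f(S^*),
\]
where $S^*$ denotes an optimal integral feasible solution. It then remains to round $\vx$ into an integral feasible solution of comparable value. For a matroid constraint I would invoke Pipage Rounding~\cite{calinescu2011maximzing} or Swap Rounding~\cite{chekuri2010dependent}, each of which produces in polynomial time a random independent set $S$ with $\mathbb{E}[f(S)] \geq F(\vx)$, yielding the claimed $0.401$-approximation (in expectation) at once. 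For knapsack constraints I would invoke the rounding of Kulik et al.~\cite{kulik2013approximations}, which for any $\epsilon > 0$ returns in polynomial time a random feasible $S$ with $\mathbb{E}[f(S)] \geq (1-\epsilon) F(\vx)$.

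The only delicate point is the $(1-\epsilon)$ loss in the knapsack rounding: at face value this would bring the final ratio to $0.401(1-\epsilon) < 0.401$. The way I would handle this is to observe that the actual constant produced by the analysis underlying Theorem~\ref{thm:main} is slightly larger than $0.401$ (the statement rounds down to three decimal places), so there is enough numerical slack to absorb the $(1-\epsilon)$ factor by choosing $\epsilon$ sufficiently small. The only other piece of bookkeeping is to ensure that the additive $O(\delta D^2 L)$ error in Theorem~\ref{thm:main} does not degrade the multiplicative guarantee of Theorem~\ref{thm:discrete}, which is precisely what is handled in Appendix~\ref{app:technical_discrete}. All other ingredients are off-the-shelf, so the main \emph{obstacle}, such as it is, is constant-chasing rather than mathematical content.
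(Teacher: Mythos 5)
Your proposal is correct and follows the same route the paper itself sketches: verify that the matroid polytope and the intersection of knapsack constraints are down-closed and meta-solvable, apply Theorem~\ref{thm:discrete} as a black box, and then round via Pipage/Swap Rounding for matroids or Kulik et al.\ for knapsacks. Your explicit handling of the $(1-\epsilon)$ loss in the knapsack case --- absorbing it into the numerical slack of the $0.40101 \geq 0.401$ bound --- is a point the paper leaves implicit, but it is the right way to close that gap.
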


\paragraph{Remark.} The well-known equivalence between solvability and separability shows that any solvable convex body $P$ is almost meta-solvable in the following sense. If $P$ is solvable, then given a polytope $Q$ defined by a polynomial number of explicit linear constraints, one can optimize linear functions over $P \cap Q$ and get a solution that is almost optimal and close to being feasible. It is possible to use this observation to extend Theorems~\ref{thm:main} and~\ref{thm:discrete} to general solvable convex bodies at the cost of only guaranteeing that the output solution is close to being feasible. However, to avoid the need to handle the extra complications of almost feasible solutions, we restrict our attention to convex bodies that are meta-solvable.

\subsection{Techniques and Paper Structure}

Our main algorithm appears in Section~\ref{sec:algorithm} and makes use of several components. Two of these components are variants of known algorithms (discussed in Section~\ref{ssc:previous_alg}). However, the third component is novel, and we term it {\FWMCGFull} (or {\FWMCG} for short). As is hinted by its name, {\FWMCG} is a continues greedy algorithm. Intuitively, any continuous greedy algorithm maintains a solution $\vy(\tau)$ that starts as the all zeros solution at time $\tau = 0$, and monotonically increases as the time $\tau$ progresses, until the final solution of the algorithm is reached at time $\tau = 1$. The main distinction between various continues greedy algorithms are in the way they choose the direction in which $\vy(\tau)$ is increased at each time point. The analysis of these algorithms, then, always uses the fact that the optimal solution $\vo$ is a feasible direction to prove that the algorithm is able to find a direction leading to a substantial improvement in the value of $F(\vy(\tau))$. Typically, the improvement that can be guaranteed is initially large, but it decreases with $\tau$ since the already picked solution $\vy(\tau)$ inflicts an increasingly large ``damage'' on the value that can be gained by following the direction of $\vo$.

The analysis of the Measured Continuous Greedy algorithm of Feldman et al.~\cite{feldman2011unified} was based on a lemma (Lemma 2.2 of~\cite{feldman2011unified}) bounding the maximum possible damage at time $\tau$ using an expression based on $\|\vy(\tau)\|_\infty$. Since $\|\vy(\tau)\|_\infty$ provides very little information about $\vy(\tau)$, the bound given by the last lemma is necessarily weak. However, no better bound was known previously, and thus, \emph{essentially all} state-of-the-art works on non-monotone constrained submodular/DR-submodular maximization are based on this weak bound of~\cite{feldman2011unified}. In this work (in Section~\ref{sec:bounds}), we present a new bound on the damage. Our new bound is always at least as strong as the old bound of~\cite{feldman2011unified}, and in most cases it is strictly stronger. More specifically, our bound on the damage at time $\tau$ takes into accounts the directions used to increase $\vy(\tau)$ at every point prior to this time, and it shows that going in the direction of $\vo$ is good, unless $\vy(\tau)$ is always increased in directions that individually inflict a large damage on $\vo$. Theoretically, this means that we could guarantee a large output value if we could avoid increasing $\vy(\tau)$ in directions that inflict a large damage on $\vo$. Unfortunately, since we do not know $\vo$, we have to find an alternative way to take advantage of our new bound, which is described by the next paragraphs. Given the many uses of Lemma 2.2 of~\cite{feldman2011unified}, we believe that our stronger bound is likely to find additional applications in related problems, and to be a key component for further improvements.

Buchbinder and Feldman~\cite{buchbinder2019constrained} designed a version of Measured Continuous Greedy that gets a vector $\vz$, and produces a good output if $\vz$ inflicts a large damage on $\vo$ and also has a small intersection with $\vo$. Our novel component {\FWMCG} incorporates this feature, which means that if it ever increases $\vy(\tau)$ in a direction $\vx$ that inflicts a large damage on $\vo$ and also has a small intersection with $\vo$, then it is possible to get a good value by passing this direction $\vx$ as the vector $\vz$ to another copy of {\FWMCG}. However, this still does not allow us to handle the case in which the directions $\vx$ in which {\FWMCG} increases $\vy(\tau)$ always have a large intersection with $\vo$ while also inflicting a large damage on $\vo$. A first attempt to handle such directions is to use an algorithm for unconstrained submodular maximization to find a high value solution dominated by $\vx$. Since the intersection of $\vx$ with $\vo$ is large, we know that such a solution exist, and it is feasible by the down-closeness of $P$. However, the best possible algorithm for unconstrained submodular maximization cannot guarantee in general more than half of the value of this solution, which is not enough for our purposes. Thus, we need a way to guarantee that $\vx$ has extra properties that allow an unconstrained submodular maximization algorithm to find a solution dominated by $\vx$ with a better value.

All existing continuous greedy algorithms choose the direction in which $\vy(\tau)$ is increased by finding a vector $\vx \in P$ whose inner product with an appropriately chosen weight vector $\vw$ is large (Ene and Nguyen~\cite{ene2016constrained} uses a slightly different method, but Buchbinder and Feldman~\cite{buchbinder2019constrained} explain how \cite{ene2016constrained}'s method can be simulated with the usual one). The weakness of this method of choosing the direction $\vx$ is that it does not give the algorithm any control over the properties of $\vx$ (beyond its membership in $P$). Thus, our novel component {\FWMCG} uses a different method to choose directions. Specifically, it considers at every time $\tau$ the convex body $Q(\tau)$ obtained by restricting $P$ to vectors $\vx$ for which the inner product of $\vx$ with $\vw$ is large; specifically, at least as large as the lower bound we have on the value of the inner product of $\vo$ and $\vw$. The definition of $Q(\tau)$ guarantees that every vector $\vx \in Q(\tau)$ leads a significant increase in the value of $\vy(\tau)$ when used as the direction, and therefore, we are free to choose a vector from $Q(\tau)$ that has the properties that we need. In our case, it turns out that by using a Frank-Wolfe algorithm to choose $\vx$ as an (approximate) local maximum of $Q(\tau)$, we get properties allowing an unconstrained submodular maximization algorithm to produce a good value solution when $\vx$ both has a large intersection with $\vo$ and inflicts a lot of damage on $\vo$. Section~\ref{sec:main-component} presents and analyses two versions of {\FWMCG}: a continues time version that cannot be implemented, but is cleaner and easier to understand, and a discrete time version that is part of our final algorithm.

Our main algorithm (in Section~\ref{sec:algorithm}) is in charge of applying unconstrained submodular maximization to the directions $\vx$ used by {\FWMCG} as well as passing these directions to other copies of {\FWMCG} to handle the case that they inflict a large damage on $\vo$, but do not intersect it significantly. A technical difficulty is that a direction $\vx$ is considered to inflict a lot of damage on $\vo$ if it inflicts significantly more damage than the vector $\vz$ passed as input to {\FWMCG}. This means that if the main algorithm passes a direction $\vx$ that inflicts a lot of damage as the input vector $\vz$ to another copy of {\FWMCG}, it might get another direction $\vx'$ that inflicts even more damage, and thus, has to be passes to yet another copy of {\FWMCG}. Fortunately, the number of times this phenomenon can repeat is bounded because the damage inflicted to $\vo$ can never be larger than the entire value of $F(\vo)$. Thus, to handle this phenomenon, our main algorithm is designed as a recursive procedure, with the depth of the recursion bounded by the number of possible repetitions of the phenomenon.

\newtoggle{readingRecommendation}\togglefalse{readingRecommendation}
\iftoggle{readingRecommendation}{\subsection{Reading Recommendation}
For a short read showcasing our main new ideas, we recommend the familiar reader to start with the definitions in Section~\ref{sec:preliminaries} (excluding Subsection~\ref{ssc:previous_alg}), and then  read Sections~\ref{ssc:bounds_continuous} and~\ref{ssc:continuous_algorithm} that contain
our new bounds for DR-submodular functions and the unimplementable continuous time version of {\FWMCG} that uses these bounds, respectively. 
}{}

\section{Preliminaries} \label{sec:preliminaries}

This section introduces the notation and definitions that we use, and presents some previous results that we employ in this paper. We use $[n]$ to denote the set $\{1,2, \ldots, n\}$ and $[0, \tau]^i$ to denote the set of $i$-dimensional vectors whose individual coordinates all belong to the range $[0, \tau]$.

\paragraph{Vector operations:}
Throughout the paper, we use $\vzero$ and $\vone$ to represent the all zeros and all ones vectors, respectively.
We also employ various vector operators. First, given two vectors $\va, \vb \in [0,1]^\cN$, we use the following standard operators notation.
\begin{itemize}
	\item	The coordinate-wise maximum of $\va$ and $\vb$ is denoted by $\va \vee \vb$ (formally, $(\va \vee \vb)_u = \max\{a_u,b_u\}$ for every $u \in \cN$).
	\item The coordinate-wise minimum of $\va$ and $\vb$ is denoted by $\va \wedge \vb$ (formally, $(\va \wedge \vb)_u = \min\{a_u,b_u\}$ for every $u \in \cN$).
	\item The coordinate-wise product of $\va$ and $\vb$ (also known as the Hadamard product) is denoted by $\va \hprod \vb$ (formally, $(\va \hprod \vb)_u = a_u \cdot b_u$ for every $u \in \cN$).
	\item The inner product of $\va$ and $\vb$ is denoted by $\inner{\va}{\vb}$.
\end{itemize}

We also introduce a new operator $\psum$ for the coordinate-wise probabilistic sum of two vectors. Formally, $\va \psum \vb = \vone - (\vone -\va) \hprod (\vone-\vb)$. Observe that $\psum$ is a symmetric associative operator, and therefore, it makes sense to apply it also to sets of vectors. Formally, given vectors $\va^{(1)}, \va^{(2)}, \dotsc, \va^{(m)}$, we define
\[
	\PSum_{i=1}^{m}\va^{(i)}
	\triangleq
	\va^{(1)} \psum \va^{(2)} \psum \dotso \psum \va^{(m)}
	=
	\vone - \HProd_{i=1}^{m} (\vone - \va^{(i)})
	\enspace.
\]

To avoid using too many parentheses, we assume all the above vector operations have higher precedence compared to normal vector addition and subtraction. Additionally, whenever we have an inequality between vectors, this inequality should be understood to hold coordinate-wise. We occasionally also use vectors in the exponent. Such exponents should be understood as coordinate-wise operations. For example, given a vector $\vx \in [0, 1]^\cN$, $e^\vx$ is a vector such that $(e^\vx)_u = e^{x_u}$ for every $u \in \cN$.

%
%

\paragraph{The objective function:}

The focus of this work is on maximizing a \emph{DR-submodular} objective function.
%
As is common in the literature, we assume that the DR-submodular functions we consider are differentiable, and that one is able to evaluate both the function and its derivatives. For a differentiable function $F$, DR-submodularity is equivalent to the gradient $\nabla F$ being an antitone mapping, i.e., $\nabla F(\vx) \geq \nabla F(\vy)$ for every two vectors $\vx \leq \vy$ in $[0, 1]^\cN$. We also note that, for twice differentiable functions, DR-submodularity is equivalent to the Hessian being non-negative~\cite{bian2017guaranteed}. 

It is well known that every DR-submodular function is {\em continuous submodular},\footnote{The reverse is not always true~\cite{bian2017guaranteed}.} meaning that it satisfies
\[F(\vx)+F(\vy)\geq F(\vx \vee  \vy)+F(\vx \wedge \vy)\]
for every two vectors $\vx,\vy\in[0, 1]^\cN$.
The following lemma summarizes additional useful properties of DR-submodular functions. 

\begin{lemma} \label{lem:DR_properties}
Let $F\colon [0, 1]^\cN \to \nnR$ be a non-negative differentiable DR-submodular function. Then,
\begin{enumerate}
	\item $F(\lambda \cdot \vx+(1-\lambda) \cdot \vy) \geq \lambda \cdot F(\vx) +(1-\lambda) \cdot F(\vy)$ for every two vectors $\vx,\vy \in [0, 1]^\cN$ such that $\vx \leq \vy$ and value $\lambda\in[0,1]$. \label{prop:dr_bound1}
	\item $F(\vx+\lambda \cdot \vy) -F(\vx) \geq \lambda \cdot \left(F(\vx+\vy) -F(\vx)\right)$ for every two vectors $\vx \in [0, 1]^\cN$ and $\vy \geq \vzero$ such that $\vx+\vy \leq \vone$ and $\lambda\in[0,1]$. \label{prop:dr_bound11}
	\item $\inner{\nabla F(\vx)}{\vy} \geq F(\vx+\vy) - F(\vx)$ for every $\vx\in [0, 1]^\cN$ and $\vy\geq \vzero$ such that $\vx+\vy \leq \vone$. \label{prop:dr_bound2_up}
	\item $\inner{\nabla F(\vx)}{\vy} \leq F(\vx) - F(\vx-\vy)$ for every $\vx\in [0, 1]^\cN$ and $\vy\geq \vzero$ such that $\vx-\vy \geq \vzero$. \label{prop:dr_bound2_down}
	\item $F(\vx \vee \vy) + F(\vx \wedge \vy) \geq F(\vx \psum \vy) + F(\vx \hprod \vy)$ for every two vectors $\vx, \vy \in [0, 1]^\cN$. \label{prop:old_new_notation}
\end{enumerate}
\end{lemma}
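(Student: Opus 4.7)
The plan is to derive all five properties from a single observation: if $F$ is differentiable and DR-submodular, so that $\nabla F$ is antitone, then for any $\vx \in [0,1]^\cN$ and any non-negative direction $\vd$ with $\vx + \vd \leq \vone$, the one-variable function $g(\lambda) \triangleq F(\vx + \lambda \vd)$ has derivative $g'(\lambda) = \inner{\nabla F(\vx + \lambda \vd)}{\vd}$ that is non-increasing in $\lambda \in [0,1]$ (because as $\lambda$ grows, the gradient decreases coordinate-wise by antitonicity, and we take an inner product with a non-negative vector). In other words, $g$ is concave on $[0,1]$. Properties 1--4 are then immediate one-variable concavity facts, and property 5 needs a small identity plus a second invocation of antitonicity.

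For property 1, I would apply the observation with $\vd = \vy - \vx \geq \vzero$, so that $g$ is concave on $[0,1]$; then $F(\lambda\vx + (1-\lambda)\vy) = g(1-\lambda) \geq (1-\lambda) g(1) + \lambda g(0) = (1-\lambda) F(\vy) + \lambda F(\vx)$, matching the claim. Property 2 is a direct rearrangement of property 1 applied to the comparable pair $\vx \leq \vx+\vy$: the convex combination $(1-\lambda)\vx + \lambda(\vx+\vy) = \vx + \lambda \vy$ gives $F(\vx+\lambda\vy) \geq (1-\lambda)F(\vx) + \lambda F(\vx+\vy)$, equivalent to the stated inequality.

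Properties 3 and 4 are the two standard ``tangent-line from above'' estimates for concave functions. For property 3, take $g(\lambda) = F(\vx + \lambda \vy)$, concave on $[0,1]$ by the observation; then
\[
F(\vx+\vy) - F(\vx) = g(1) - g(0) \leq g'(0) = \inner{\nabla F(\vx)}{\vy}.
\]
For property 4, take instead $g(\lambda) = F(\vx - \vy + \lambda \vy)$, still concave on $[0,1]$, and apply the tangent bound at $\lambda = 1$: $g(0) \geq g(1) - g'(1)$, i.e., $F(\vx - \vy) \geq F(\vx) - \inner{\nabla F(\vx)}{\vy}$, which rearranges to the claim.

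Property 5 is where a bit of bookkeeping is required, and I expect this to be the main (mild) obstacle. The key coordinatewise identities are $\max(a,b) + \min(a,b) = (a+b-ab) + ab$ and $\min(a,b) \geq ab$ for $a,b \in [0,1]$. Setting $\vd \triangleq \vx \wedge \vy - \vx \hprod \vy$, one verifies $\vd \geq \vzero$, $\vx \vee \vy + \vd = \vx \psum \vy \leq \vone$, and $\vx \hprod \vy + \vd = \vx \wedge \vy$. The target inequality thus becomes
\[
F(\vx \hprod \vy + \vd) - F(\vx \hprod \vy) \geq F(\vx \vee \vy + \vd) - F(\vx \vee \vy),
\]
where $\vx \hprod \vy \leq \vx \vee \vy$. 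This is a pure consequence of antitonicity of $\nabla F$: for every $\lambda \in [0,1]$,
\[
\inner{\nabla F(\vx \hprod \vy + \lambda \vd)}{\vd} \geq \inner{\nabla F(\vx \vee \vy + \lambda \vd)}{\vd},
\]
and integrating both sides over $\lambda \in [0,1]$ produces exactly the displayed difference-of-differences inequality, completing the proof.
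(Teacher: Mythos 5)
Your proof takes essentially the same route as the paper: concavity along non-negative directions (which you derive directly from antitonicity of $\nabla F$, where the paper instead cites Proposition~4 of~\cite{bian2017guaranteed}) gives Properties~1--4 as one-variable concavity facts, and Property~5 follows from the same difference-of-marginals inequality driven by the ordering $\vx \hprod \vy \leq \vx \vee \vy$. So the structure matches the paper's; your write-up is just more self-contained.

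One thing to fix: in Property~4 your tangent bound has the wrong sign. For the concave function $g(\lambda)=F(\vx-\vy+\lambda\vy)$, the tangent line at $\lambda=1$ lies \emph{above} the graph, so $g(0)\leq g(1)-g'(1)$, not $\geq$. Correcting this gives $F(\vx-\vy)\leq F(\vx)-\inner{\nabla F(\vx)}{\vy}$, which does rearrange to the claimed $\inner{\nabla F(\vx)}{\vy}\leq F(\vx)-F(\vx-\vy)$. As written, both of your intermediate inequalities point the wrong way, and their rearrangement yields the reverse of the statement rather than the statement itself; the idea is right, but the displayed chain is not. Everything else checks out, including the coordinatewise identities and the integration argument in Property~5.
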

\begin{proof}
The first two properties follow from the fact that DR-submodular functions are concave along non-negative directions (see Proposition~4 of~\cite{bian2017guaranteed}). The next two properties are immediate consequences of $\nabla F$ being an antitone mapping. To see why the last property holds, note that
\[
	F(\vx \vee \vy) + F(\vx \wedge \vy)
	\geq
	F(\vx \vee \vy + (\vx \wedge \vy - \vx \odot \vy))+ F(\vx \wedge \vy - (\vx \wedge \vy - \vx \odot \vy))]
	=
	F(\vx \psum \vy) + F(\vx \hprod \vy)
	\enspace,
\]
where the inequality holds by the DR-submodularity of $F$ since $\vx \odot \vy \leq \vx \wedge \vy \leq \vx \vee \vy$. 
\end{proof}

In addition, we use the closure properties stated by the following lemma.

\begin{lemma} \label{lem:DR_submodular}
Given a non-negative DR-submdoualr function $F\colon [0, 1]^\cN \to \nnR$ and vector $\vy \in [0, 1]^\cN$, the functions $G_{\psum}(\vx) \triangleq F(\vx \psum \vy)$ and $G_{\odot}(\vx) \triangleq F(\vx \odot \vy)$ are both non-negative DR-submodular functions.
\end{lemma}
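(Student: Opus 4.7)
The plan is to verify the two claimed properties directly from the definition of DR-submodularity. Non-negativity is immediate for both functions: since $\vy \in [0,1]^\cN$, both $\vx \hprod \vy$ and $\vx \psum \vy$ lie in $[0,1]^\cN$ for any $\vx \in [0,1]^\cN$, and $F$ is non-negative on this domain. So the entire content of the lemma is the DR-submodularity of $G_\odot$ and $G_\psum$.

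For $G_\odot$, I would fix $u \in \cN$, vectors $\vx_1 \leq \vx_2$ in $[0,1]^\cN$, and $z \in [0, 1 - (x_2)_u]$, and verify the defining inequality. The key identity is that, coordinate-wise, $((\vx + z\characteristic_u) \hprod \vy)_v$ equals $(\vx \hprod \vy)_v$ for $v \neq u$ and equals $x_u y_u + z y_u$ for $v = u$; that is,
\[
  (\vx + z \characteristic_u) \hprod \vy = \vx \hprod \vy + (z y_u) \characteristic_u.
\]
Writing $\vx_i' = \vx_i \hprod \vy$, we have $\vx_1' \leq \vx_2'$ (pointwise), so applying DR-submodularity of $F$ at $\vx_1'$ and $\vx_2'$ with step $z' := z y_u$ in direction $\characteristic_u$ gives exactly the required inequality for $G_\odot$. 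The only thing to check is feasibility of the step, namely $z' \in [0, 1 - (x_2')_u]$; this reduces to $z y_u \leq 1 - (x_2)_u y_u$, which follows from $z \leq 1 - (x_2)_u$ together with $y_u \leq 1$ (since $y_u(1 - (x_2)_u) \leq 1 - (x_2)_u y_u$ iff $y_u \leq 1$).

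For $G_\psum$, the analogous identity is
\[
  (\vx + z \characteristic_u) \psum \vy = \vx \psum \vy + z(1 - y_u)\characteristic_u,
\]
obtained from $1 - (x_u + z)(1 - ?) \dots$; more directly, $(\vx+z\characteristic_u)\psum\vy$ differs from $\vx\psum\vy$ only in coordinate $u$, where the value changes from $1 - (1-x_u)(1-y_u)$ to $1 - (1 - x_u - z)(1-y_u)$, a change of exactly $z(1-y_u)$. Again setting $\vx_i' = \vx_i \psum \vy$, monotonicity of $\psum$ in its first argument gives $\vx_1' \leq \vx_2'$, and DR-submodularity of $F$ with step $z' := z(1 - y_u)$ in direction $\characteristic_u$ yields the desired inequality. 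Feasibility requires $z' \leq 1 - (x_2')_u$; but $1 - (x_2')_u = (1 - (x_2)_u)(1 - y_u)$, and $z(1-y_u) \leq (1-(x_2)_u)(1-y_u)$ follows from $z \leq 1 - (x_2)_u$.

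There is no real obstacle here; the proof is essentially bookkeeping. The main thing to be careful about is the feasibility of the scaled step sizes $z y_u$ and $z(1 - y_u)$, which is exactly where the factors $y_u \leq 1$ and the product form $1 - (\vx\psum\vy)_u = (1-x_u)(1-y_u)$ enter. Once these two identities are written down, both statements reduce to a single application of DR-submodularity of $F$ at comparable points, so the proof is short.
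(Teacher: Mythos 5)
Your proof is correct and takes the same route as the paper: reduce each claim to a single application of $F$'s DR-submodularity via the coordinate-wise identities $(\vx + z\characteristic_u) \hprod \vy = \vx \hprod \vy + zy_u\characteristic_u$ and $(\vx + z\characteristic_u) \psum \vy = \vx \psum \vy + z(1-y_u)\characteristic_u$. You are slightly more explicit than the paper about the feasibility of the rescaled step (the paper just notes the resulting vector stays $\leq \vone$), but the argument is identical.
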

\begin{proof}
The non-negativity of both functions follows immediately from the non-negativity of $F$.
Fix now two vectors $\vx^{(1)}, \vx^{(2)} \in [0, 1]^\cN$ such that $\vx^{(1)} \leq \vx^{(2)}$. Then, for every element $u \in \cN$ and value $p \in [0, 1 - x^{(2)}_u]$,
\begin{align*}
    G_{\psum}(\vx^{(1)} + p \cdot \characteristic_u&) - G_{\psum}(\vx^{(1)})
    =
    F((\vx^{(1)} + p \cdot \characteristic_u) \psum \vy) - F(\vx^{(1)} \psum \vy)\\
    ={} &
    F(\vx^{(1)} \psum \vy + p(1 - y_u) \cdot \characteristic_u) - F(\vx^{(1)} \psum \vy)\\
    \geq{} &
    F(\vx^{(2)} \psum \vy + p(1 - y_u) \cdot \characteristic_u) - F(\vx^{(2)} \psum \vy)\\
    ={} &
    F((\vx^{(2)} + p \cdot \characteristic_u) \psum \vy) - F(\vx^{(2)} \psum \vy)
    =
    G_{\psum}(\vx^{(2)} + p \cdot \characteristic_u) - G_{\psum}(\vx^{(2)})
    \enspace,
\end{align*}
where the inequality follows from the DR-submodularity of $F$ since $\vx^{(2)} \psum \vy + p(1 - y_u) \cdot \characteristic_u \leq \vone$. Similarly,
\begin{align*}
    G_{\odot}(\vx^{(1)} + p \cdot \characteristic_u&) - G_{\odot}(\vx^{(1)})
    =
    F((\vx^{(1)} + p \cdot \characteristic_u) \odot \vy) - F(\vx^{(1)} \odot \vy)\\
    ={} &
    F(\vx^{(1)} \odot \vy + p \cdot y_u \cdot \characteristic_u) - F(\vx^{(1)} \odot \vy)\\
    \geq{} &
    F(\vx^{(2)} \odot \vy + p \cdot y_u \cdot \characteristic_u) - F(\vx^{(2)} \odot \vy)\\
    ={} &
    F((\vx^{(2)} + p \cdot \characteristic_u) \odot \vy) - F(\vx^{(2)} \odot \vy)
    =
    G_{\odot}(\vx^{(2)} + p \cdot \characteristic_u) - G_{\odot}(\vx^{(2)})
    \enspace,
\end{align*}
where the inequality follows from the DR-submodularity of $F$ since $\vx^{(2)} \odot \vy + p \cdot y_u \cdot \characteristic_u \leq \vy \leq \vone$.
\end{proof}

We say that a differentiable function $F \colon [0, 1]^\cN \to \bR$ is \emph{$L$-smooth} for a given value $L \geq 0$ if $\|\nabla F(\vx) - \nabla F(\vy)\|_2 \leq L \cdot \|\vx - \vy\|_2$ for every two vectors $\vx, \vy \in [0, 1]^\cN$.

\paragraph{Problem definition:}
The problem we study is maximizing a non-negative differentiable DR-submodular function $F\colon [0, 1]^\cN \to \nnR$ subject to a feasible region $P \subseteq [0, 1]^\cN$ which is a meta-solvable down-closed convex body. We denote by $\vo\in P$ an (arbitrary) optimal solution maximizing $F$, and we assume that $F(\vo)$ is strictly positive since otherwise the problem is trivial (any feasible solution is optimal). We also say that the diameter of $P$ is $D$ if $\|\vx - \vy\|_2 \leq D$ for every two vectors $\vx, \vy \in P$.

\subsection{Previous Algorithmic Components} \label{ssc:previous_alg}

This section presents (small variants of) known algorithms that are used as procedures by our new algorithms.

\paragraph{Frank-Wolfe Variant.}
A vector $\vx$ is a \emph{local maximum with respect to a vector $\vy$} if it holds that $\inner{\vy - \vx}{\nabla F(\vx)} \leq 0$. While this condition does not suffice to guarantee that $F(\vx)$ is large compared to $F(\vy)$ when $F$ is a DR-submodular function, it does imply the next lemma.
\begin{lemma} \label{lem:optimal_local_maximum}
    Given a DR-submodular function $F \colon [0, 1]^\cN \to \bR$, if $\vx$ is a local optimum with respect to vector $\vy$, then
    \[
			F(\vx)
			\geq
			\frac{1}{2}\cdot[F(\vx \vee \vy)+ F(\vx \wedge \vy)]
			\enspace.
		\]
\end{lemma}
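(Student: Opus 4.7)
The plan is to reduce the claim to two applications of the gradient inequalities for DR-submodular functions (properties~\ref{prop:dr_bound2_up} and~\ref{prop:dr_bound2_down} of Lemma~\ref{lem:DR_properties}), together with the local-optimality hypothesis $\inner{\vy - \vx}{\nabla F(\vx)} \leq 0$.

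First, I would decompose the difference $\vy - \vx$ into its positive and negative parts with disjoint supports. Concretely, let $\va = (\vy - \vx) \vee \vzero \geq \vzero$ and $\vb = (\vx - \vy) \vee \vzero \geq \vzero$, so that $\vy - \vx = \va - \vb$, $\va \hprod \vb = \vzero$, and the two vectors $\vx + \va$ and $\vx - \vb$ are both in $[0,1]^\cN$. The purpose of this decomposition is the identities $\vx + \va = \vx \vee \vy$ and $\vx - \vb = \vx \wedge \vy$, which rewrite the target inequality as
\[
    F(\vx + \va) + F(\vx - \vb) \leq 2 F(\vx)\enspace.
\]

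Next I would bound each of the two terms on the left-hand side separately by a first-order expression in $\nabla F(\vx)$. Since $\va \geq \vzero$ and $\vx + \va \leq \vone$, property~\ref{prop:dr_bound2_up} of Lemma~\ref{lem:DR_properties} yields $F(\vx + \va) - F(\vx) \leq \inner{\nabla F(\vx)}{\va}$. Similarly, since $\vb \geq \vzero$ and $\vx - \vb \geq \vzero$, property~\ref{prop:dr_bound2_down} gives $F(\vx) - F(\vx - \vb) \geq \inner{\nabla F(\vx)}{\vb}$, i.e., $F(\vx - \vb) - F(\vx) \leq -\inner{\nabla F(\vx)}{\vb}$.

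Finally, adding the two inequalities I obtain
\[
    F(\vx + \va) + F(\vx - \vb) - 2 F(\vx) \leq \inner{\nabla F(\vx)}{\va - \vb} = \inner{\nabla F(\vx)}{\vy - \vx} \leq 0\enspace,
\]
where the last step is exactly the local-maximum hypothesis. Rearranging and substituting back $\vx + \va = \vx \vee \vy$, $\vx - \vb = \vx \wedge \vy$ gives the claim. There is essentially no obstacle here; the only thing to check carefully is that the decomposition $\vy - \vx = \va - \vb$ keeps both $\vx + \va$ and $\vx - \vb$ inside $[0,1]^\cN$ so that the lemma's prerequisites are satisfied, which is immediate from $\va, \vb \leq \vone$ and their disjoint supports.
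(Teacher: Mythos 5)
Your proof is correct and is essentially identical to the paper's: your vectors $\va = (\vy-\vx)\vee\vzero$ and $\vb = (\vx-\vy)\vee\vzero$ are exactly $(\vy\vee\vx)-\vx$ and $\vx-(\vy\wedge\vx)$, and you apply Properties~\ref{prop:dr_bound2_up} and~\ref{prop:dr_bound2_down} of Lemma~\ref{lem:DR_properties} in the same way and then invoke the local-optimality hypothesis.
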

\begin{proof}
By Properties~\ref{prop:dr_bound2_up} and~\ref{prop:dr_bound2_down} of Lemma~\ref{lem:DR_properties},
\begin{align*}
	0
	\geq{} &
	\inner{\vy - \vx}{\nabla F(\vx)}
	=
	\inner{(\vy \vee \vx) - \vx}{\nabla F(\vx)} - \inner{\vx - (\vy \wedge \vx)}{\nabla F(\vx)}\\
	\geq{} &
	[F(\vy \vee \vx) - F(\vx)] - [F(\vx) - F(\vy \wedge \vx)]
	=
	F(\vy \vee \vx) + F(\vy \wedge \vx) - 2 \cdot F(\vx)
	\enspace.
\end{align*}
The lemma now follows by rearranging this inequality.
\end{proof}

To use the guarantee of the last lemma with respect to the vector $\vy$ that is optimal in some convex body $P$, it is often necessary to find an (approximate) local maximum within this convex body. Formally, a vector $\vx$ is a \emph{local maximum with respect to $P$} if for all $\vy\in P$, $(\vy-\vx)\cdot \nabla F(\vx)\leq 0$. Chekuri~\cite{chekuri2014submodular} described a local search algorithm that finds an approximate local maximum when $F$ is the multilinear extension of a submodular set function. Extending their algorithm to general DR-submodular functions is not trivial as they implicitly use a bound on the smoothness of the multilinear extension. Motivated by the work of~\cite{lacostejulien16convergance}, Bian~\cite{bian2017nonmonotone} suggested using a version of Frank-Wolfe\footnote{It should be noted that local search and Frank-Wolfe refer basically to the same algorithm in this context. However, the first name is usually used by the submodular set function optimization community, while the latter name is usually used by the continuous submodular optimization community.} for that purpose. Unfortunately, Bian~\cite{bian2017nonmonotone}'s algorithm assumes knowledge of the curvature of $F$ with respect to $P$, which can be costly to determine. Therefore, we use here a somewhat different variant of Frank-Wolfe that has the guarantee stated in the next theorem. We note that the algorithm whose existence is guaranteed by the Theorem~\ref{thm:local_search} does not assume knowledge of $L$, and the proof of this theorem can be found in Appendix~\ref{app:local_search}.

\begin{restatable}{theorem}{thmLocalSearch} \label{thm:local_search}
There exists a polynomial time algorithm that given a non-negative $L$-smooth function $F \colon 2^\cN \to \nnR$, a solvable convex body $P$ of diameter $D$ and value $\delta \in (0, 1)$,\iftoggle{appendix}{}{\footnote{Throughout the paper, we use both $\eps$ and $\delta$ to denote error control parameters. However, we reserve $\eps$ for error parameters that are considered to be constant, and $\delta$ for error parameters that can be polynomially small.}} outputs a vector $\vx \in P$ such that $\inner{\vy - \vx}{\nabla F(\vx)} \leq \delta \cdot [\max_{\vy' \in P} F(\vy') + D^2 L / 2]$ for every vector $\vy \in P$. If $F$ is DR-submodular, this implies 
\[
F(\vx)
	\geq
	\frac{1}{2}\cdot \max_{\vy \in P} [F(\vx \vee \vy)+ F(\vx \wedge \vy)]
	-
	\frac{\delta \cdot [2 \cdot \max_{\vy' \in P} F(\vy') + D^2 L]}{4} \enspace.
\]
\end{restatable}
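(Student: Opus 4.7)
My plan is to instantiate a Frank-Wolfe variant with an adaptive (line-search based) step size, so that $L$ need not be known a priori, and to prove that after $T = O(1/\delta^2)$ iterations some iterate achieves the required bound on the Frank-Wolfe gap. Concretely, I would initialize $\vx^{(0)}$ to any point of $P$ (obtained via one linear optimization), and at each step $t$ use the solvability of $P$ to compute $\vs^{(t)} \in \arg\max_{\vs \in P} \inner{\vs}{\nabla F(\vx^{(t)})}$, so that the Frank-Wolfe gap $g_t \triangleq \inner{\vs^{(t)} - \vx^{(t)}}{\nabla F(\vx^{(t)})}$ equals $\max_{\vy \in P} \inner{\vy - \vx^{(t)}}{\nabla F(\vx^{(t)})}$. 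The update is $\vx^{(t+1)} = \vx^{(t)} + \eta_t (\vs^{(t)} - \vx^{(t)})$, where $\eta_t \in [0, 1]$ is chosen by a discretized one-dimensional search over a $\Poly(1/\delta)$-size grid that approximately maximizes $F$ along the segment; the algorithm returns the iterate $\vx^{(t^*)}$ with the smallest recorded gap.

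For the convergence analysis, define $\tau \triangleq \delta \cdot [\max_{\vy \in P} F(\vy) + D^2 L / 2]$ (a quantity the algorithm does not need to know). From $L$-smoothness and $\|\vs^{(t)} - \vx^{(t)}\|_2 \leq D$, one obtains $F(\vx^{(t+1)}) \geq F(\vx^{(t)}) + \eta g_t - L D^2 \eta^2 / 2$ for every $\eta \in [0, 1]$; in particular the best fixed step $\eta^* = \min(1, g_t/(L D^2))$ guarantees a one-step improvement of at least $\min(g_t/2,\, g_t^2/(2 L D^2))$, and the line search attains this value up to a negligible additive slack. Supposing for contradiction that $g_t > \tau$ for every $t < T$, the two defining inequalities $\tau \geq \delta \cdot \max_{\vy \in P} F(\vy)$ and $\tau \geq \delta \cdot L D^2 / 2$ imply, by case analysis on whether $g_t$ exceeds $L D^2$, a per-iteration gain of at least $\delta^2 \cdot \max_{\vy \in P} F(\vy) / 4$. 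Telescoping yields $\max_{\vy \in P} F(\vy) \geq F(\vx^{(T)}) - F(\vx^{(0)}) \geq T \delta^2 \max_{\vy \in P} F(\vy) / 4$, contradicting $T \geq \lceil 4/\delta^2 \rceil + 1$. Hence some $t^*$ has $g_{t^*} \leq \tau$, which by the definition of $g_{t^*}$ as a maximum over $P$ is precisely the first claim.

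The DR-submodular consequence then follows by retracing the opening of the proof of Lemma~\ref{lem:optimal_local_maximum}: for every $\vy \in P$, Properties~\ref{prop:dr_bound2_up} and~\ref{prop:dr_bound2_down} of Lemma~\ref{lem:DR_properties} give $\inner{\vy - \vx}{\nabla F(\vx)} \geq F(\vx \vee \vy) + F(\vx \wedge \vy) - 2 F(\vx)$; combining this with $\inner{\vy - \vx}{\nabla F(\vx)} \leq \tau = \delta \cdot (2 \max_{\vy' \in P} F(\vy') + D^2 L)/2$ and rearranging yields exactly the stated lower bound on $F(\vx)$, after taking a maximum over $\vy \in P$ on both sides.

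The main obstacle I anticipate is handling the line search cleanly without knowledge of $L$: I must argue that the discretized one-dimensional search realizes an improvement within a tolerably small additive slack of the best fixed step (so, for instance, the grid should be geometric or fine enough to approximate the quadratic lower bound in $\eta$ to within a constant factor), and that the aggregate line-search error across the $O(1/\delta^2)$ iterations can be absorbed into $\tau$ by slightly enlarging $T$. A minor additional subtlety is that $\max_{\vy' \in P} F(\vy')$ and $L$ both appear in the target $\tau$ but in neither the algorithm nor its termination condition, which is precisely what forces the contradiction-based iteration-count argument rather than any direct thresholding rule.
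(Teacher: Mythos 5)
Your proof strategy is correct in its core — the contradiction-based ``if every Frank--Wolfe gap were large, telescoping the per-iteration gains over $\Theta(1/\delta^2)$ iterations would push $F$ above its maximum'' argument is exactly the engine of the paper's proof, and the final step from the gap bound to the stated DR-submodular inequality (via Properties~\ref{prop:dr_bound2_up} and~\ref{prop:dr_bound2_down}) matches the paper's derivation exactly. However, the paper avoids the line-search entirely, and your main anticipated obstacle (``handling the line search cleanly without knowledge of $L$'') is therefore a non-issue in the actual proof. Concretely, the paper's Frank--Wolfe variant simply takes the \emph{fixed} step $\vx(i) \gets (1-\delta)\vx(i-1) + \delta \vz(i)$, and the smoothness calculation then yields
\[
F(\vx(i)) - F(\vx(i-1)) \;\geq\; \delta \cdot \inner{\vz(i) - \vx(i-1)}{\nabla F(\vx(i-1))} - \frac{\delta^2 D^2 L}{2},
\]
which, by the choice of $i^*$ as the iteration minimizing the gap, directly gives a per-iteration gain exceeding $\delta^2 \cdot \max_{\vy\in P} F(\vy)$ whenever the claimed inequality fails — with no case analysis on whether $g_t \gtrless LD^2$, and no grid for $\eta$. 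This is the insight you missed: the step size $\delta$ is chosen to match the desired additive slack $\delta\cdot[\max F + D^2L/2]$ precisely so that the $D^2L$ dependence in the error is absorbed by a fixed step with no adaptivity, making the algorithm oblivious to $L$ for free. Your line-search version would also work (your case analysis is sound, with the per-step gain lower bounded by $\delta^2 \max F / 4$), but it trades a one-line smoothness bound for a grid construction plus a slack-accumulation argument you would still have to make rigorous, with no corresponding gain in generality.
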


\paragraph{Unbalanced Unconstrained Maximization.}

Given a non-negative submodular function $f$ over a ground set $\cN$, the unconstrained submodular maximization problem asks for a subset of $\cN$ maximizing $f$ among all subsets of $\cN$. Buchbiner et al.~\cite{buchbinder2015tight} designed an algorithm termed \textbf{Double-Greedy} obtaining $\nicefrac{1}{2}$-approximation for this problem, matching an inapproximability result due to Feige et al.~\cite{feige11maximizing}.
Later, it was shown by Mualem and Feldman~\cite{mualem2022using} (based on observations made by~\cite{buchbinder2014} and the pre-print version~\cite{qi2022maximizingArxiv} of~\cite{qi2022maximizing}) that the value of the output set of \textbf{Double-Greedy} is in fact lower bounded by
\begin{equation} \label{eq:general}
	\max_{r \geq 0} \left[\frac{2r}{(r + 1)^2} \cdot f(OPT) + \frac{1}{(r + 1)^2} \cdot f(\varnothing) + \frac{r^2}{(r + 1)^2} \cdot f(\cN)\right]
	\enspace,
\end{equation}
where $OPT$ is the subset of $\cN$ maximizing $f$. By plugging $r=1$, this expression is always at least $\nicefrac{1}{2} \cdot f(OPT)$, which recovers the original guarantee of~\cite{buchbinder2015tight}. However, in many cases, improved results can be obtained by using this more general expression. 

The \textbf{Double-Greedy} algorithm was generalized to DR-submodular objective functions by multiple works~\cite{bian2019bian,chen2019unconstrained,niazadeh2020optimal}. However, these works only proved $\nicefrac{1}{2}$-approximation, and did not generalize the lower bound stated in Equation~\eqref{eq:general} (because they temporally preceded the introduction of this lower bound). Since we require in this paper a lower bound for DR-submodular functions similar to Equation~\eqref{eq:general}, we prove Theorem~\ref{thm:unconstrained} in Appendix~\ref{app:unconstrained} by describing and analyzing another generalization of \textbf{Double-Greedy} incorporating ideas from all the above mentioned previous works. In addition to having a stronger approximation guarantee, Theorem~\ref{thm:unconstrained} improves over the previous works of~\cite{chen2019unconstrained,niazadeh2020optimal} in the sense that it requires as few as $O(|\cN| \log (|\cN|\eps^{-1}))$ function evaluations and does not need to assume Lipschitz continuity.\footnote{Niazadeh et al.~\cite{niazadeh2020optimal} require the same number of function evaluations, but assume Lipschitz continuity; while Chen et al.~\cite{chen2019unconstrained} avoid assuming Lipschitz continuity, but require more function evaluations. We do not compare with the algorithm of Bian et al.~\cite{bian2019bian} since an implementation is provided for it only in the context of particular applications.}
\begin{restatable}{theorem}{thmUnconstrained} \label{thm:unconstrained}
There exists a polynomial time algorithm that given a non-negative DR-submodular function $F \colon \allowbreak [0, 1]^\cN \to \nnR$ and parameter $\eps \in (0, 1)$, accesses the function $F$ in $O(n \log (n\eps^{-1}))$ locations, and then outputs a vector $\vx \in [0, 1]^\cN$ such that
\[
	F(\vx)
	\geq
	\max_{r \geq 0} \left[\left(\frac{2r}{(r + 1)^2} - O(\eps)\right) \cdot F(\vo) + \frac{1}{(r + 1)^2} \cdot F(\vzero) + \frac{r^2}{(r + 1)^2} \cdot F(\vone)\right]
	\enspace,
\]
where $n = |\cN|$ and $\vo$ is any fixed vector in $[0, 1]^\cN$.
\end{restatable}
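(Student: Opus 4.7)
The plan is to analyze a discretized generalization of the randomized Double-Greedy algorithm~\cite{buchbinder2015tight}, following the continuous-space extensions in~\cite{bian2019bian,chen2019unconstrained,niazadeh2020optimal}. Fix an arbitrary ordering $u_1, \ldots, u_n$ of $\cN$, and maintain vectors $\vx^{(i)} \leq \vy^{(i)}$ initialized to $\vx^{(0)} = \vzero$ and $\vy^{(0)} = \vone$ which agree on the first $i$ coordinates after iteration $i$. At step $i$, the algorithm selects a value $t_i \in [0,1]$ and sets $x^{(i)}_{u_i} = y^{(i)}_{u_i} := t_i$. Since the one-dimensional restrictions $\phi_i(t) := F(\vx^{(i-1)} + t \cdot \characteristic_{u_i})$ and $\psi_i(t) := F(\vy^{(i-1)} + (t-1) \cdot \characteristic_{u_i})$ are both concave in $t$ by Property~\ref{prop:dr_bound1} of Lemma~\ref{lem:DR_properties}, the desired crossing point $t_i$---defined by a first-order balance condition on $\phi_i$ and $\psi_i$ at $t_i$---can be located to tolerance $\eps/n$ via binary search on finite-difference estimates of the derivatives, using $O(\log(n\eps^{-1}))$ queries per coordinate and $O(n \log(n\eps^{-1}))$ queries in total.

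For the analysis I define the hybrid vector $\vo^{(i)}$ that agrees with $\vx^{(i)} = \vy^{(i)}$ on the first $i$ coordinates and with $\vo$ on the rest, so that $\vo^{(0)} = \vo$ and $\vo^{(n)}$ equals the output $\vx$. The core of the argument is a per-step inequality bounding $F(\vo^{(i-1)}) - F(\vo^{(i)})$ in terms of the $\vx$-gain $\Delta^{\vx}_i := F(\vx^{(i)}) - F(\vx^{(i-1)})$ and the $\vy$-gain $\Delta^{\vy}_i := F(\vy^{(i)}) - F(\vy^{(i-1)})$, derived from the antitone-gradient property (Properties~\ref{prop:dr_bound2_up} and~\ref{prop:dr_bound2_down} of Lemma~\ref{lem:DR_properties}) together with the balance condition characterizing $t_i$. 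Summing telescopes, since $\sum_i \Delta^{\vx}_i = F(\vx) - F(\vzero)$ and $\sum_i \Delta^{\vy}_i = F(\vx) - F(\vone)$. Following the observation of~\cite{buchbinder2014} refined by~\cite{mualem2022using}, combining the two sides of the summed inequality with weights $1$ and $r^2$ yields
\[
    F(\vx) \;\geq\; \tfrac{2r}{(r+1)^2} F(\vo) + \tfrac{1}{(r+1)^2} F(\vzero) + \tfrac{r^2}{(r+1)^2} F(\vone) - O(\eps) \cdot F(\vo)
\]
for every $r \geq 0$ simultaneously, proving the theorem after taking the maximum over $r$.

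The main obstacle I anticipate is establishing the per-step inequality cleanly in the DR-submodular setting: in the set-function proof the coordinate $o_{u_i}$ belongs to $\{0, 1\}$, permitting a straightforward case split, but here $o_{u_i}$ can be any value in $[0, 1]$, so the inequality must hold uniformly in $o_{u_i}$. The plan is to split into the two regimes $o_{u_i} \leq t_i$ and $o_{u_i} \geq t_i$: in the first regime I apply Property~\ref{prop:dr_bound2_up} starting from $\vo^{(i-1)}$ to upper bound the loss by a quantity involving $\phi_i$, which is in turn controlled by $\Delta^{\vx}_i$ through the concavity of $\phi_i$ and the balance condition at $t_i$; the second regime is handled symmetrically using Property~\ref{prop:dr_bound2_down} and $\psi_i$. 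The $\eps/n$ binary-search tolerance contributes an $O(\eps/n) \cdot F(\vo)$ slack per step, which accumulates to the advertised $O(\eps) \cdot F(\vo)$ additive term over the full run.
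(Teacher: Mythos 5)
Your overall blueprint tracks the paper's proof closely: discretize to an $\eps/n$ grid, define $\vx^{(i)}\!\leq\!\vy^{(i)}$ converging coordinate by coordinate, introduce the hybrid $\vo^{(i)}$, bound per-step gain versus per-step loss, telescope, and combine with weights proportional to $1$ and $r^2$ (equivalently $1/r$ and $r$), mirroring the $r$-dependent bound of Mualem--Feldman / Buchbinder et al.\ Your flagged obstacle --- that $o_{u_i}$ can be any value in $[0,1]$, handled by splitting on $o_{u_i}\lessgtr t_i$ --- is exactly the case split the paper performs in its loss lemma, and the $O(\log(n\eps^{-1}))$-per-coordinate binary search exploiting one-dimensional concavity is also how the paper implements the coordinate step.

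The genuine gap is in how $t_i$ is chosen. You describe $t_i$ as ``a crossing point defined by a first-order balance condition on $\phi_i$ and $\psi_i$,'' located by binary search on derivative estimates. That is not what the paper's algorithm does, and as stated it is ill-defined: since $\vx^{(i-1)} + t\characteristic_{u_i} \leq \vy^{(i-1)} + (t-1)\characteristic_{u_i}$ coordinate-wise, antitone gradients give $\phi_i'(t)\geq\psi_i'(t)$ for all $t$, so the two derivatives never cross, and any ``balance'' condition on them would have to be something else entirely that you have not pinned down. The paper instead finds \emph{two} separate argmaxes $a_i \in \arg\max_{v} F(\vx^{(i-1)}+v\characteristic_{u_i})$ and $b_i \in \arg\max_{v} F(\vy^{(i-1)}-v\characteristic_{u_i})$, sets $\Delta_{a,i},\Delta_{b,i}$ to the corresponding marginal gains, and takes the \emph{affine combination} $t_i=\tfrac{\Delta_{a,i}}{\Delta_{a,i}+\Delta_{b,i}}a_i+\tfrac{\Delta_{b,i}}{\Delta_{a,i}+\Delta_{b,i}}(1-b_i)$. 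This deterministic derandomization of the randomized Double-Greedy probability is what makes the per-step bounds come out exactly as $F(\vx^{(i)})-F(\vx^{(i-1)})\geq \Delta_{a,i}^2/(\Delta_{a,i}+\Delta_{b,i})$, $F(\vy^{(i)})-F(\vy^{(i-1)})\geq \Delta_{b,i}^2/(\Delta_{a,i}+\Delta_{b,i})$, and loss $\leq \Delta_{a,i}\Delta_{b,i}/(\Delta_{a,i}+\Delta_{b,i})$, whose weighted combination gives the $\tfrac{2r}{(r+1)^2}$-type bound. Without specifying and justifying a $t_i$ that yields inequalities of this form, your per-step argument has no foundation; a single ``balance point'' found by a derivative condition would produce different quantities on both sides and there is no reason to expect they compose the same way.
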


\begin{corollary} \label{cor:unconstrained}
There exists a polynomial time algorithm, termed {\BoxMaximization}, that given a non-negative DR-submodular function $F \colon \allowbreak [0, 1]^\cN \to \nnR$, a vector $\vx \in [0, 1]^\cN$ and a parameter $\eps \in (0, 1)$, outputs a vector $\vy \leq \vx$ of value at least 
\begin{align} \label{eq:USM}
	\max_{r \geq 0} \left[\left(\frac{2r}{(r + 1)^2} - O(\eps)\right) \cdot F(\vx \hprod \vo) + \frac{1}{(r + 1)^2} \cdot F(\vzero) + \frac{r^2}{(r + 1)^2} \cdot F(\vx)\right]
	\enspace,
\end{align}
where $\vo$ is any fixed vector in $[0, 1]^\cN$. 
\end{corollary}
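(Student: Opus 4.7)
The plan is to reduce the corollary to an application of Theorem~\ref{thm:unconstrained} via a simple reparametrization that turns ``maximizing $F$ over the box $\{\vy : \vzero \le \vy \le \vx\}$'' into an unconstrained maximization on $[0,1]^\cN$. Specifically, I would define the auxiliary function $G \colon [0, 1]^\cN \to \nnR$ by $G(\vz) \triangleq F(\vz \hprod \vx)$. By Lemma~\ref{lem:DR_submodular} (instantiating its vector $\vy$ as our $\vx$), the function $G$ is non-negative and DR-submodular, so it is a legal input to the unconstrained algorithm.

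Next, I would invoke the algorithm from Theorem~\ref{thm:unconstrained} on $G$ with parameter $\eps$, and with the ``fixed vector'' of that theorem chosen to be the input vector $\vo$ of the corollary. This returns in polynomial time a vector $\vz \in [0, 1]^\cN$ such that
\[
    G(\vz)
    \geq
    \max_{r \geq 0}\! \left[\left(\frac{2r}{(r + 1)^2} - O(\eps)\right)\! \cdot G(\vo) + \frac{1}{(r + 1)^2} \cdot G(\vzero) + \frac{r^2}{(r + 1)^2} \cdot G(\vone)\right]\!.
\]
The algorithm \BoxMaximization{} then outputs $\vy \triangleq \vz \hprod \vx$. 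Because $\vz \in [0, 1]^\cN$, we have $\vy \leq \vx$ coordinate-wise, so the required dominance condition holds. Furthermore, by construction $F(\vy) = F(\vz \hprod \vx) = G(\vz)$.

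To finish, I would translate each term on the right-hand side back to $F$. Using the symmetry of $\hprod$, $G(\vo) = F(\vo \hprod \vx) = F(\vx \hprod \vo)$; using $\vzero \hprod \vx = \vzero$, $G(\vzero) = F(\vzero)$; and using $\vone \hprod \vx = \vx$, $G(\vone) = F(\vx)$. Substituting these three identities into the lower bound above yields precisely inequality~\eqref{eq:USM} for $F(\vy)$, completing the proof. There is no serious obstacle to expect: the only nontrivial ingredient is the closure property recorded in Lemma~\ref{lem:DR_submodular}, which is exactly what is needed to certify that $G$ remains DR-submodular so that Theorem~\ref{thm:unconstrained} applies.
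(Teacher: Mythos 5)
Your proposal is correct and matches the paper's own proof essentially verbatim: both define $G(\vz) = F(\vz \hprod \vx)$, invoke Lemma~\ref{lem:DR_submodular} to certify $G$ is non-negative DR-submodular, apply the algorithm from Theorem~\ref{thm:unconstrained} to $G$ with fixed vector $\vo$, and output $\vz \hprod \vx \leq \vx$ after translating $G(\vo), G(\vzero), G(\vone)$ back to $F(\vx \hprod \vo), F(\vzero), F(\vx)$.
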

\begin{proof}
For the (fixed) vector $\vx$, define $G(\va) \triangleq F(\vx \hprod \va)$ for every vector $\va$.
By Lemma~\ref{lem:DR_submodular}, $G$ is a non-negative DR-submodular function.
%
Therefore, one can apply the algorithm from Theorem~\ref{thm:unconstrained} to $G$, and get a vector $\vy'$ satisfying
\begin{align*}
	F(\vx \hprod \vy') ={}& G(\vy')
	\geq{} 
	\max_{r \geq 0} \left[\left(\frac{2r}{(r + 1)^2} - O(\eps)\right) \cdot G(\vo) + \frac{1}{(r + 1)^2} \cdot G(\vzero) + \frac{r^2}{(r + 1)^2} \cdot G(\vone)\right]\\
	={} &
	\max_{r \geq 0} \left[\left(\frac{2r}{(r + 1)^2} - O(\eps)\right) \cdot F(\vx \hprod \vo) + \frac{1}{(r + 1)^2} \cdot F(\vzero) + \frac{r^2}{(r + 1)^2} \cdot F(\vx)\right]
	\enspace.
\end{align*}
Hence, the vector $\vx \hprod \vy' \leq \vx$ has the property guaranteed by the corollary. 
\end{proof}

%

\section{Our Algorithm} \label{sec:algorithm}

In this section, we describe and analyze the algorithm used to prove our main theorem (Theorem~\ref{thm:main}).
Recall that the main component used by this algorithm is our novel procedure named {\FWMCGFull} (or {\FWMCG} for short), which is described in Section~\ref{sec:main-component}. The guarantee of {\FWMCG} is given by the next theorem.

\begin{restatable}{theorem}{thmMainComponentDiscrete} \label{thm:main-component-discrete}
{\FWMCG} is an algorithm that obtains as input a non-negative $L$-smooth DR-submodular function $F \colon [0, 1]^\cN \to \nnR$, a meta-solvable down-closed convex body $P \subseteq [0, 1]^\cN$ of diameter $D$, a vector $\vz \in P$ and parameters $t_s \in (0,1)$, $\eps\in (0, 1/2)$ and $\delta \in (0, 1)$. Given this input, {\FWMCG} outputs a vector $\vy\in P$ and vectors $\vx(1), \ldots, \vx(m)\in P$, for some $m = O(\delta^{-1} + \eps^{-1})$, such that at least one of the following is always true.
 \begin{itemize}
     \item The vector $\vy$ obeys
\begin{align*}
    F(\vy) \geq{} & e^{-1} \cdot \left[((2 - t_s)e^{t_s} - 1 - O(\eps)) \cdot F(\vo) - (e^{t_s} - 1) \cdot F(\vz \hprod \vo)\vphantom{\frac{1}{2}}\right. \\&\mspace{200mu}- \left.\left((2 - t_s)e^{t_s} + t_s - \frac{t_s^2 + 5}{2}\right) \cdot F(\vz \psum \vo) \right] - O(\delta D^2L)
    \enspace.
\end{align*}
\item There exists $i \in [m]$ such that $F(\vx(i) \psum \vo) \leq F(\vz \psum \vo) - \eps \cdot F(\vo)$ and the vector $\vx(i)$ satisifies $F(\vx(i)) \geq \frac{1}{2} \cdot [F(\vx(i) \vee \vo) + F(\vx(i) \wedge \vo)] - O(\eps) \cdot F(\vo) - O(\delta D^2L)$ (i.e., it is an approximate local maximum with respect to $\vo$).
    \end{itemize}
Moreover, if $t_s$ and $\eps$ are considered to be constants, then the time complexity of the above algorithm is $\Poly(|\cN|, \delta^{-1})$.
\end{restatable}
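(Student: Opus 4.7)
My plan is to analyze an idealized continuous-time version of {\FWMCG} first, then transfer the bound to the discrete algorithm with the approximation errors absorbed into the $O(\delta D^2L)$ term via $L$-smoothness. The continuous procedure starts at time $t_s$ (with $\vy(t_s) = \vzero$, the seed vector $\vz$ playing the role of a ``baseline damage'' that the algorithm attempts to improve upon) and evolves by the measured-greedy ODE $\tfrac{d}{d\tau}\vy(\tau) = (\vone - \vy(\tau)) \hprod \vx(\tau)$. The key new ingredient is the direction rule: form the weight vector $\vw(\tau) = \nabla F(\vy(\tau)) \hprod (\vone - \vy(\tau))$, let $\theta(\tau)$ be a lower estimate of $\inner{\vw(\tau)}{\vo}$ (handled by guessing $F(\vo)$ up to $(1+\eps)$-factor), set $Q(\tau) = \{\vx \in P : \inner{\vw(\tau)}{\vx} \geq \theta(\tau)\}$, and take $\vx(\tau)$ to be the approximate local maximum of $F$ in $Q(\tau)$ returned by Theorem~\ref{thm:local_search}. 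Meta-solvability of $P$ ensures that linear functions can be optimized over $Q(\tau)$, so Theorem~\ref{thm:local_search} applies.

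The dichotomy in the theorem will be set up as follows. Sample the directions $\vx(i)$ at the $m = O(\delta^{-1} + \eps^{-1})$ discretization time-points. If some sampled $\vx(i)$ satisfies $F(\vx(i) \psum \vo) \leq F(\vz \psum \vo) - \eps \cdot F(\vo)$, the second conclusion is immediate: by design $\vo \in Q(\tau_i)$ (up to the slack in $\theta$), so the Frank-Wolfe guarantee of Theorem~\ref{thm:local_search} directly yields $F(\vx(i)) \geq \tfrac{1}{2}[F(\vx(i) \vee \vo) + F(\vx(i) \wedge \vo)] - O(\eps)F(\vo) - O(\delta D^2 L)$. Otherwise every direction satisfies $F(\vx(i) \psum \vo) > F(\vz \psum \vo) - \eps \cdot F(\vo)$, and in this ``all-damage-large'' regime I will derive the first conclusion.

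Under this regime, I would track the evolution of $F(\vy(\tau))$ using Properties~\ref{prop:dr_bound11} and~\ref{prop:dr_bound2_up} of Lemma~\ref{lem:DR_properties} to get
\[
    \frac{dF(\vy(\tau))}{d\tau} \geq \inner{\nabla F(\vy(\tau))}{(\vone - \vy(\tau)) \hprod \vx(\tau)} \geq \inner{\nabla F(\vy(\tau))}{(\vone - \vy(\tau)) \hprod \vo} \enspace,
\]
where the second inequality follows because $\vx(\tau) \in Q(\tau)$. The right-hand side is where the novel damage bound of Section~\ref{sec:bounds} enters: under the standing hypothesis that every historical $\vx(s)$ (for $s \leq \tau$) satisfies $F(\vx(s) \psum \vo) > F(\vz \psum \vo) - \eps F(\vo)$, it replaces the classical $\|\vy(\tau)\|_\infty$-based damage estimate with one expressed through $F(\vo)$, $F(\vz \hprod \vo)$, and $F(\vz \psum \vo)$. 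Substituting this into the ODE yields a linear first-order differential inequality for $F(\vy(\tau))$; integrating from $t_s$ to $1$ produces an $e^{-1}$-prefactored bound, with the coefficients $(2-t_s)e^{t_s} - 1$, $e^{t_s} - 1$, and $(2-t_s)e^{t_s} + t_s - (t_s^2+5)/2$ emerging as the explicit integrals (the first two reducing to the standard measured-greedy analysis when $\vz = \vzero$).

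The main obstacle will be twofold. First, verifying $Q(\tau)$ is genuinely non-empty throughout the run requires choosing $\theta(\tau)$ conservatively enough that $\vo$ (approximately) belongs to $Q(\tau)$, while still tight enough that the inequality $\inner{\vw(\tau)}{\vx(\tau)} \geq \inner{\vw(\tau)}{\vo}$ loses at most $O(\eps)F(\vo)$ per step. Second, carefully applying the new damage bound: the coefficients in the first bullet are sensitive to its exact form, and combining it with the measured ODE must be done so that the $\eps$-slackness in the hypothesis $F(\vx(s) \psum \vo) > F(\vz \psum \vo) - \eps F(\vo)$ contributes only the $O(\eps)F(\vo)$ term. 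Discretization then adds $O(\delta D^2 L)$: each of the $K = O(\delta^{-1})$ steps of size $\delta$ incurs $L$-smoothness error at most $\tfrac{L}{2} \|\vy(\tau+\delta) - \vy(\tau)\|_2^2 \leq \tfrac{L D^2 \delta^2}{2}$, summing to $O(\delta D^2 L)$, and the Frank-Wolfe error of Theorem~\ref{thm:local_search} contributes at the same scale. Since meta-solvability of $P$ makes optimizing over $Q(\tau)$ tractable and both $\eps, t_s$ are constants, the total time is $\Poly(|\cN|, \delta^{-1})$ as claimed.
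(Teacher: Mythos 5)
Your high-level architecture (continuous version, Frank-Wolfe over a restricted body $Q(\tau)\subseteq P$, discretize with $L$-smoothness errors) matches the paper, but there are two structural gaps that would break the argument as you have set it up.

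First, your algorithm evolves by $\frac{d}{d\tau}\vy(\tau) = (\vone-\vy(\tau))\hprod\vx(\tau)$ throughout, starting at $\vy(t_s)=\vzero$, with $\vz$ appearing only as a ``baseline for comparison.'' That is not the paper's algorithm: the actual {\FWMCG} starts at $\vy(0)=\vzero$ and runs in two phases, with the update $\frac{d}{d\tau}\vy(\tau) = (\vone-\vy(\tau)-\vz(\tau))\hprod\vx(\tau)$ where $\vz(\tau)=\vz$ for $\tau<t_s$ and $\vz(\tau)=\vzero$ afterward. The weight vector is correspondingly $(\vone-\vy(\tau)-\vz(\tau))\hprod\nabla F(\vy(\tau))$. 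The first phase is essential: it is what makes $\vy(t_s)=\vone-\vz\psum e^{-\int_0^{t_s}\vx(s)ds}$ (Lemma~\ref{lem:y_value}) rather than $\vzero$, and it is the mechanism that produces the $F(\vz\hprod\vo)$ term in the final bound (via Lemma~\ref{lem:first_o_guarantee}). With your update rule the $F(\vz\hprod\vo)$ coefficient $e^{t_s}-1$ cannot arise, so the claimed inequality does not fall out of your ODE.

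Second, the dichotomy cannot be set up the way you describe. You propose to \emph{check} at each step whether $F(\vx(i)\psum\vo)\leq F(\vz\psum\vo)-\eps F(\vo)$, but the algorithm does not know $\vo$ and cannot evaluate either side. In the paper, the dichotomy is driven by whether $Q(i)$ is ever empty, which the algorithm \emph{can} detect. The threshold $V(i)$ defining $Q(i)$ is not a generic estimate of $\inner{\vw}{\vo}$ obtained by guessing $F(\vo)$; it is the precise lower bound that the damage bounds of Section~\ref{sec:bounds} (Lemma~\ref{lem:mathematical_general} applied in Lemmas~\ref{lem:first_o_guarantee}, \ref{lem:second_o_guarantee}, and discretely in Lemmas~\ref{lem:o_in_first}, \ref{lem:o_in_second}) show must hold for $\inner{\vw(\tau)}{\vo}$ \emph{as long as} no damaging direction has been chosen. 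The implication runs in the opposite direction from yours: $Q(\io)$ empty $\Rightarrow$ $\vo\notin Q(\io)$ $\Rightarrow$ the Section~\ref{sec:bounds} lower bound on $F(\vy(\io-1)\psum\vo)$ is violated $\Rightarrow$ by an averaging argument some earlier $\vx(i)$ with $i>i_s$ satisfies $F(\vx(i)\psum\vo)\leq F(\vz\psum\vo)-\eps F(\vo)$, and that $\vx(i)$ was computed as a Frank-Wolfe local maximum of a body still containing $\vo$, so Theorem~\ref{thm:local_search} gives the second bullet. Conversely, when $Q(i)$ is non-empty for all $i$, the differential inequality $F(\vy(i))-F(\vy(i-1))\geq\delta[V(i-1)-F(\vy(i-1))]-\delta^2D^2L/2$ (Observation~\ref{obs:basic_increase}) is unrolled to give the first bullet. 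Your claim that ``$\inner{\vw(\tau)}{\vx(\tau)}\geq\inner{\vw(\tau)}{\vo}$'' also overstates what membership in $Q(\tau)$ gives: the guarantee is only $\inner{\vw(\tau)}{\vx(\tau)}\geq V(\tau)-F(\vy(\tau))$, a lower estimate of $\inner{\vw(\tau)}{\vo}$ that is tight only when no damage has accrued.
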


If the first option happens (i.e., $F(\vy)$ is large) in a given execution of {\FWMCG}, we call this execution \emph{successful}. Otherwise, we say that the execution is \emph{unsuccessful}. Using this terminology, we can now describe the main algorithm, given as Algorithm~\ref{alg:main-rec}, that we use to prove Theorem~\ref{thm:main}. In addition to the parameters listed in Theorem~\ref{thm:main}, Algorithm~\ref{alg:main-rec} gets two additional parameteres: $\eps \in (0, 1/2)$ and $t_s \in (0, 1)$. Both parameters are simply passed forward to the executions of {\FWMCG} used, and they are assigned constant values to be determined later.

Algorithm~\ref{alg:main-rec} uses $1+\lceil 2/\eps\rceil$ levels of recursion (note that the parameter $i$ tracks the level of recursion of the current recursive call). The single recursive call on the first level of the recursion gets some approximate local maximum $\vz^{(0)}$ of $P$ as the parameter $\vz$, and then executes on this $\vz$ two algorithms: {\BoxMaximization} and {\FWMCG}, producing vectors $\vz'$, $\vy$ and $\vx(1), \vx(2), \dotsc, \vx(m)$. Each recursive call on the second level of the recursion executes the same two algorithms on a parameter $\vz$ set to be one of the vectors $\vx(\cdot)$ produced by the first level of the recursion, yielding a new set of vectors $\vz'$, $\vy$ and $\vx(1), \vx(2), \dotsc, \vx(m)$. More generally, each recursive call on level $i > 1$ of the recursion executes {\BoxMaximization} and {\FWMCG} on a parameter $\vz$ set to be one of the vectors $\vx(\cdot)$ produced by the $i - 1$ level of the recursion, yielding yet another set of vectors $\vz'$, $\vy$ and $\vx(1), \vx(2), \dotsc, \vx(m)$. The final output of Algorithm~\ref{alg:main-rec} is the best solution among all vectors $\vz'$ and $\vy$ produced by any level of the recursion.

\SetKwProg{Function}{Function}{:}{}
\begin{algorithm}
\DontPrintSemicolon
\caption{\texttt{Main Algorithm}$(F, P, t_s, \eps, \delta)$} \label{alg:main-rec}
Let $\vz^{(0)}$ be an (approximate) local maximum in $P$ obtained via the algorithm from Theorem~\ref{thm:local_search} by setting the error parameter $\delta$ of the last algorithm to $\min\{\eps, \delta\}$.\\
Execute \textbf{Main-Recursive}$(F, P, t_s, \vz^{(0)}, \eps, \delta, 1$).

\nonl
\noindent\rule{0.95\textwidth}{0.3pt}
\BlankLine

\Function{\textbf{\upshape{Main-Recursive}}$(F, P, t_s, \vz, \eps, \delta, i)$}{
$\vz'= \BoxMaximization(\vz)$ (Algorithm from Corollary~\ref{cor:unconstrained}).\label{line:unconstrained}\\
Let $(\vy, \vx(1), \ldots, \vx(m))= \FWMCG(F,P, \vz, t_s, \eps)$ (Algorithm from Theorem~\ref{thm:main-component-discrete}).\\
\If{$i<1+\lceil 2/\eps\rceil$}{
	\lFor{$j = 1$ \KwTo $m$}{Recursively execute \textbf{Main-Recursive}$(F, P, \vx(j), t_s, \eps, \delta, i+1)$, and let $\vy(j)$ be its output.}
}
\Return the vector maximizing $F$ among $\vz'$, $\vy$ and the vectors in $\{\vy(j) \mid j \in [m]\}$.
}
\end{algorithm}

We say that a recursive call of Algorithm~\ref{alg:main-rec} is \emph{successful} if its execution of {\FWMCG} is successful. The analysis of Algorithm~\ref{alg:main-rec}, which appears in Section~\ref{ssc:main_analysis}, shows that Algorithm~\ref{alg:main-rec} makes enough recursive calls to guarantee that at least one of them is successful and also gets a vector $\vz$ that is an approximate local maximum with respect to $\vo$. The analysis then shows that either the vector $\vz'$ or the vector $\vy$ produced by such a recursive call satisfies the guarantee of Theorem~\ref{thm:main}.

\subsection{Proof of Theorem~\ref{thm:main}} \label{ssc:main_analysis}

Note that the number of recursive calls executed by Algorithm~\ref{alg:main-rec} is $O\left(m^{1/\eps}\right) = (2 + \delta^{-1} + \eps^{-1})^{O(1/\eps)}$. For any constant $\eps$, this expression is polynomial in $\delta^{-1}$. Since every individual recursive call runs in time polynomial in $\delta^{-1}$ and $|\cN|$ (when excluding the time required for the recursive calls it invokes), we get that the full time complexity of Algorithm~\ref{alg:main-rec} is polynomial in these two values, as is guaranteed by Theorem~\ref{thm:main}.


From now on, we concentrate on proving that the value of the output solution of Algorithm~\ref{alg:main-rec} is large enough. We say that a recursive call $C$ of Algorithm~\ref{alg:main-rec} is an \emph{heir} if
\begin{itemize}
	\item it is the single recursive call on level $1$ of the recursion, or
	\item $C$ was invoked by another heir recursive call $C_p$ that is unsuccessful, and the input vector $\vz$ of $C$ is a vector $\vx(j)$ produced by $C_p$ that satisfies the second bullet in Theorem~\ref{thm:main-component-discrete} (such a vector must exist since $C_p$ is unsuccessful).
\end{itemize}
\begin{observation} \label{obs:heir}
An heir recursive call always gets a vector $\vz$ obeying
\[
	F(\vz)
	\geq
	\frac{1}{2} \cdot [F(\vz \vee \vo) + F(\vz \wedge \vo)] - O(\eps) \cdot F(\vo) - O(\delta D^2L)
	\enspace.
\]
\end{observation}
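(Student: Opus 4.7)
The proof plan is a straightforward case analysis following the two clauses in the definition of an heir recursive call.

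The plan is first to handle the base case, namely the unique heir on level $1$ of the recursion, whose input vector is $\vz^{(0)}$. By construction, $\vz^{(0)}$ is obtained by invoking the algorithm of Theorem~\ref{thm:local_search} on $P$ with error parameter $\min\{\eps,\delta\}$. Plugging the feasible vector $\vo \in P$ into the ``DR-submodular'' conclusion of Theorem~\ref{thm:local_search} yields
\[
F(\vz^{(0)}) \;\geq\; \tfrac{1}{2}\bigl[F(\vz^{(0)}\vee \vo)+F(\vz^{(0)}\wedge \vo)\bigr]
\;-\; \tfrac{\min\{\eps,\delta\}\cdot\bigl[2 F(\vo)+D^2 L\bigr]}{4}.
\]
Since $\min\{\eps,\delta\}\cdot F(\vo)\leq \eps\cdot F(\vo)$ and $\min\{\eps,\delta\}\cdot D^2L \leq \delta\cdot D^2L$, the error term is absorbed into $O(\eps)F(\vo)+O(\delta D^2L)$, which is exactly the bound claimed for $\vz^{(0)}$.

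Next I would handle the recursive case. By the second clause in the definition, such an heir was invoked by an unsuccessful heir $C_p$, and its input vector $\vz$ equals one of the vectors $\vx(j)$ produced by $C_p$'s call to \FWMCG{} that satisfies the second bullet of Theorem~\ref{thm:main-component-discrete}. But that second bullet states \emph{verbatim} that
\[
F(\vx(j))\;\geq\;\tfrac{1}{2}\bigl[F(\vx(j)\vee \vo)+F(\vx(j)\wedge \vo)\bigr]-O(\eps)\cdot F(\vo)-O(\delta D^2L),
\]
which is the desired inequality for $\vz=\vx(j)$. Combining the two cases completes the proof.

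There is no real obstacle here: the observation is by design a restatement, for heir calls, of the two ``approximate local maximum'' guarantees that were explicitly arranged in Algorithm~\ref{alg:main-rec} (via Theorem~\ref{thm:local_search}) and in the unsuccessful branch of Theorem~\ref{thm:main-component-discrete}. The only thing to be mildly careful about is the bookkeeping of the error parameter in Case~1, where one must note that using $\min\{\eps,\delta\}$ rather than $\delta$ in the call to Theorem~\ref{thm:local_search} is precisely what lets us split the error into an $O(\eps)F(\vo)$ piece and an $O(\delta D^2L)$ piece, matching the form demanded by the statement.
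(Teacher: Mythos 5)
Your proof is correct and follows essentially the same two-case analysis as the paper: the base case invokes the guarantee of Theorem~\ref{thm:local_search} with error parameter $\min\{\eps,\delta\}$ (specializing the maximum over $P$ to the feasible point $\vo$), and the recursive case reads the bound off the second bullet of Theorem~\ref{thm:main-component-discrete}. The bookkeeping observation about splitting the error term using $\min\{\eps,\delta\}$ matches the paper's intent precisely.
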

\begin{proof}
If the recursive call is not on the first level of the recursion, then it gets a vector $\vz$ that is equal to $\vx(j)$ for some vector $\vx(j)$ obeying the second bullet in Theorem~\ref{thm:main-component-discrete}, and therefore, this vector $\vz$ obeys the guarantee of the observation. Otherwise, $\vz$ is the output of executing the algorithm from Theorem~\ref{thm:local_search} on $P$ with the error parameter $\min\{\eps, \delta\}$, and thus, obeys
\begin{align*}
	F(\vz)
	\geq{} &
	\frac{1}{2}\cdot \max_{\vy \in P} [F(\vz \vee \vy)+ F(\vz \wedge \vy)]
	-
	\frac{\min\{\eps, \delta\} \cdot [2 \cdot \max_{\vy' \in P} F(\vy') + D^2 L]}{4}\\
	\geq{} &
	\frac{1}{2}\cdot [F(\vz \vee \vo)+ F(\vz \wedge \vo)]
	-
	\min\{\eps, \delta\} \cdot \frac{2 \cdot F(\vo) + D^2 L}{4}
	\enspace.
	\qedhere
\end{align*}
\end{proof}

Corollary~\ref{cor:successful} below shows that there is always some recursive call of Algorithm~\ref{alg:main-rec} which is a successful heir. To prove this corollary, we first need the following observation.
\begin{observation} \label{obs:if_all_fail}
If all heir recursive calls of Algorithm~\ref{alg:main-rec} are unsuccessful, then every level $i$ of the recursion must include an heir recursive call getting a vector $\vz$ such that $F(\vz \psum \vo) \leq F(\vz^{(0)} \psum \vo) - \eps(i - 1) \cdot F(\vo)$.
\end{observation}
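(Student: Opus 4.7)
My plan is to prove Observation~\ref{obs:if_all_fail} by straightforward induction on the level $i \in \{1, 2, \dotsc, 1 + \lceil 2/\eps \rceil\}$, using the definition of an heir recursive call and the second bullet of Theorem~\ref{thm:main-component-discrete}.

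\textbf{Base case ($i = 1$).} By definition, the single recursive call on level $1$ is an heir, and it receives the vector $\vz = \vz^{(0)}$ as its input. The required bound $F(\vz^{(0)} \psum \vo) \leq F(\vz^{(0)} \psum \vo) - \eps \cdot 0 \cdot F(\vo)$ is trivially satisfied.

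\textbf{Inductive step.} Assume the statement holds for some level $i$ with $1 \leq i < 1 + \lceil 2/\eps \rceil$; that is, there exists an heir recursive call $C$ on level $i$ whose input $\vz$ satisfies $F(\vz \psum \vo) \leq F(\vz^{(0)} \psum \vo) - \eps(i-1) \cdot F(\vo)$. By our blanket assumption, $C$ is unsuccessful, so the second bullet of Theorem~\ref{thm:main-component-discrete} guarantees that the execution of {\FWMCG} inside $C$ produces some $\vx(j)$ (with $j \in [m]$) satisfying both $F(\vx(j) \psum \vo) \leq F(\vz \psum \vo) - \eps \cdot F(\vo)$ and the approximate local maximum condition. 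Since $i < 1 + \lceil 2/\eps \rceil$, the recursive call on level $i+1$ with input vector $\vx(j)$ is actually invoked by Algorithm~\ref{alg:main-rec}, and by the definition of an heir it is itself an heir recursive call. Chaining the two inequalities yields
\[
	F(\vx(j) \psum \vo)
	\leq
	F(\vz \psum \vo) - \eps \cdot F(\vo)
	\leq
	F(\vz^{(0)} \psum \vo) - \eps \cdot i \cdot F(\vo)
	\enspace,
\]
which is exactly the bound required for level $i+1$. This completes the induction.

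I expect this proof to be essentially bookkeeping; the only subtle point is verifying that the propagation step is valid, i.e., that whenever the ``source'' heir on level $i$ is unsuccessful and $i$ is not the final recursion level, the algorithm indeed spawns a recursive call on the required $\vx(j)$, so that this call qualifies as an heir on level $i+1$. This is immediate from the condition $i < 1 + \lceil 2/\eps \rceil$ in the \textbf{if} statement of Algorithm~\ref{alg:main-rec}, combined with the second clause in the definition of an heir. No quantitative estimates beyond telescoping the $\eps \cdot F(\vo)$ drops are needed.
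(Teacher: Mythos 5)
Your proof is correct and takes essentially the same approach as the paper's own proof: induction on the recursion level, with the base case trivial and the inductive step chaining the $\eps\cdot F(\vo)$ drop guaranteed by the second bullet of Theorem~\ref{thm:main-component-discrete} for an unsuccessful heir. The only cosmetic difference is that you step from level $i$ to $i+1$ (and explicitly verify the $i < 1 + \lceil 2/\eps\rceil$ guard in the algorithm so the next-level call is actually spawned), while the paper steps from $i-1$ to $i$; the substance is identical.
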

\begin{proof}
The proof of the observation is done by induction. For $i = 1$ the observation is trivial; thus, let us prove the observation for level $i \geq 2$ assuming it holds for level $i - 1$. By the induction hypothesis, some heir recursive call on level $i - 1$ of the recursion gets a vector $\vz$ such that $F(\vz \psum \vo) \leq F(\vz^{(0)} \psum \vo) - \eps(i - 2) \cdot F(\vo)$. Since this heir recursive call is unsuccessful (as we assume all heir recursive calls are), there exists $j \in [m]$ such that the vector $\vx(j)$ produced by this recursive call satisfies the second bullet in Theorem~\ref{thm:main-component-discrete}, and in particular, obeys $F(\vx(j) \psum \vo) \leq F(\vz \psum \vo) - \eps \cdot F(\vo) \leq F(\vz^{(0)} \psum \vo) - \eps(i - 1) \cdot F(\vo)$. The vector $\vx(j)$ then becomes the vector $\vz$ of some recursive call on level $i$ of the recursion, and this recursive call is an heir by definition.
\end{proof}
\begin{corollary} \label{cor:successful}
Some recursive call of Algorithm~\ref{alg:main-rec} is a successful heir.
\end{corollary}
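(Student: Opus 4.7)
The approach is proof by contradiction: I assume that every heir recursive call of Algorithm~\ref{alg:main-rec} is unsuccessful, and derive a contradiction by combining the downward chain of values of $F(\cdot \psum \vo)$ from Observation~\ref{obs:if_all_fail} with an upper bound on $F(\vz^{(0)} \psum \vo)$ obtained from Observation~\ref{obs:heir} applied to the top-level heir.

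First, under the contradictory assumption, Observation~\ref{obs:if_all_fail} applies to every level of the recursion. Instantiating it at the deepest level $i^{\star} = 1 + \lceil 2/\eps \rceil$ yields an heir whose input vector $\vz$ obeys
\[
F(\vz \psum \vo) \leq F(\vz^{(0)} \psum \vo) - \eps \lceil 2/\eps \rceil \cdot F(\vo) \leq F(\vz^{(0)} \psum \vo) - 2 F(\vo),
\]
since $\eps \lceil 2/\eps \rceil \geq 2$. Together with the non-negativity of $F$, this forces the lower bound $F(\vz^{(0)} \psum \vo) \geq 2 F(\vo)$.

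Next, I aim to contradict this lower bound. By definition, the level-$1$ call is itself an heir, and its input is $\vz^{(0)}$. Therefore, Observation~\ref{obs:heir} applies and, combined with the optimality of $\vo$ (giving $F(\vz^{(0)}) \leq F(\vo)$), it yields
\[
F(\vz^{(0)} \vee \vo) + F(\vz^{(0)} \wedge \vo) \leq 2 F(\vz^{(0)}) + O(\eps) F(\vo) + O(\delta D^2 L) \leq 2 F(\vo) + O(\eps) F(\vo) + O(\delta D^2 L).
\]
Invoking Property~\ref{prop:old_new_notation} of Lemma~\ref{lem:DR_properties} and the non-negativity of $F(\vz^{(0)} \hprod \vo)$ to replace the $\vee / \wedge$ pair with $\psum$, I obtain the desired upper bound
\[
F(\vz^{(0)} \psum \vo) \leq 2 F(\vo) + O(\eps) F(\vo) + O(\delta D^2 L).
\]

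The main obstacle is to convert the sandwich $2 F(\vo) \leq F(\vz^{(0)} \psum \vo) \leq 2 F(\vo) + O(\eps) F(\vo) + O(\delta D^2 L)$ into a strict contradiction. The resolution is to track the constants: since $\eps$ is a constant to be fixed later (and chosen small enough that the hidden constant in $O(\eps)$ is dominated by the strict excess $\eps \lceil 2/\eps \rceil - 2$ that arises when $2/\eps$ is not integral, or, equivalently, after replacing the recursion depth by any constant-increase to obtain strict drop $> 2 F(\vo)$), and since $\delta$ is a free parameter that can be taken arbitrarily small relative to $F(\vo) / D^2 L$, the $O(\eps) F(\vo) + O(\delta D^2 L)$ slack is strictly smaller than the required gap, yielding the contradiction. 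This completes the proof and establishes the existence of a successful heir.
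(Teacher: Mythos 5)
Your proof has two genuine gaps.

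First, the lower bound you derive is only $F(\vz^{(0)} \psum \vo) \geq 2 F(\vo)$, which is non-strict; you need a strict excess. You gesture at a fix via the integrality slack $\eps\lceil 2/\eps\rceil - 2$ (which is $0$ whenever $2/\eps$ is an integer) or via increasing the recursion depth (which changes the algorithm). Neither is valid as stated. The right move, staying inside the algorithm as written, is to observe that the heir at level $1 + \lceil 2/\eps\rceil$ is itself unsuccessful by your hypothesis, so it produces some $\vx(j)$ with $F(\vx(j) \psum \vo) \leq F(\vz^{(0)} \psum \vo) - (2 + \eps) F(\vo)$, and non-negativity of $F(\vx(j)\psum\vo)$ then gives $F(\vz^{(0)} \psum \vo) \geq (2 + \eps) F(\vo)$. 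This is the one extra drop that buys strictness.

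Second, and more fundamentally, the upper bound you obtain via Observation~\ref{obs:heir} is too loose to close the contradiction. The slack $O(\eps) F(\vo) + O(\delta D^2 L)$ has a hidden constant in front of $\eps F(\vo)$ that you do not control, so there is no guarantee it is dominated by the $\eps F(\vo)$ margin coming from the strict lower bound, and the $O(\delta D^2 L)$ term need not be small relative to $F(\vo)$ (nothing in the theorem statement forces $\delta D^2 L \ll F(\vo)$; the corollary must hold unconditionally). The paper sidesteps this entirely by deriving a slack-free upper bound directly: by DR-submodularity, $F(\vz^{(0)} \psum \vo) - F(\vo) = F(\vo + (\vone-\vo)\hprod\vz^{(0)}) - F(\vo) \leq F(\vz^{(0)}\hprod(\vone-\vo)) - F(\vzero) \leq F(\vz^{(0)}\hprod(\vone-\vo))$, and since $\vz^{(0)}\hprod(\vone-\vo) \leq \vz^{(0)} \in P$ and $P$ is down-closed, $F(\vz^{(0)}\hprod(\vone-\vo)) \leq F(\vo)$. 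Combining this exact bound $F(\vz^{(0)}\psum\vo) \leq 2F(\vo)$ with the strict lower bound $F(\vz^{(0)}\psum\vo) \geq (2+\eps)F(\vo) > 2F(\vo)$ (using $F(\vo) > 0$) gives the contradiction $F(\vo) < F(\vo)$, independent of $\eps$ and $\delta$. Your route through Observation~\ref{obs:heir} is a detour that introduces error you cannot dismiss; replacing it with this direct DR-submodularity argument repairs the proof.
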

\begin{proof}
Assume the corollary is false. Then, by Observation~\ref{obs:if_all_fail}, level $1 + \lceil 2/\eps \rceil$ of the recursion includes an heir recursive call getting a vector $\vz$ such that $F(\vz \psum \vo) \leq F(\vz^{(0)} \psum \vo) - 2 \cdot F(\vo)$. Since this execution is also unsuccessful, it produces a vector $\vx(j)$ such that $F(\vx(j) \psum \vo) \leq F(\vz \psum \vo) - \eps \cdot F(\vo) \leq F(\vz^{(0)} \psum \vo) - (2 + \eps) \cdot F(\vo)$ for some $j \in [m]$. By the non-negativity of $F$ and the strict positivity of $F(\vo)$, the last inequality implies
\[
	F(\vo)
	<
	F(\vz^{(0)} \psum \vo) - F(\vo)
	\leq
	F(\vz^{(0)} \hprod (\vone - \vo)) - F(\vzero)
	\leq
	F(\vz^{(0)} \hprod (\vone - \vo))
	\enspace,
\]
where the second inequality holds by the DR-submodularity of $F$. However, this contradicts the definition of $\vo$ as the down-closeness of $P$ implies that $\vz^{(0)} \hprod (\vone - \vo) \in P$.
\end{proof}

Consider a successful heir recursive call of Algorithm~\ref{alg:main-rec}, and let us denote by $\vz^*$, $\vy^*$ and $\vz'^{*}$ the input vector $\vz$ of this call and the vectors $\vy$ and $\vz'$ produced during it, respectively. Then, by the guarantee of Theorem~\ref{thm:main-component-discrete},
\begin{align*}
    F(\vy^*) \geq{} & e^{-1} \cdot \left[((2 - t_s)e^{t_s} - 1 - O(\eps)) \cdot F(\vo) - (e^{t_s} - 1) \cdot F(\vz \hprod \vo)\vphantom{\frac{1}{2}}\right. \\&\mspace{200mu}- \left.\left((2 - t_s)e^{t_s} - \frac{t_s^2-2t_s + 5}{2}\right) \cdot F(\vz \psum \vo) \right] - O(\delta D^2L)
    \enspace,
\end{align*}
We also need to lower bound $F(\vz'^{*})$, which is obtained by the next lemma.

\begin{lemma}
It is guaranteed that
\[
	F(\vz'^{*})
	\geq
	\max_{r\geq 0}\left[\frac{4r + r^2}{2(r+1)^2} \cdot F(\vz^{*} \hprod \vo) + \frac{r^2}{2(r+1)^2} \cdot F(\vz^{*} \psum \vo)\right] - O(\eps)\cdot F(\vo) - O(\delta D^2L)
	\enspace.
\]
\end{lemma}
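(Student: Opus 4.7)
The plan is to combine three ingredients: the guarantee of \BoxMaximization\ on the vector $\vz^*$, the approximate local optimality of $\vz^*$ from Observation~\ref{obs:heir}, and Property~\ref{prop:old_new_notation} of Lemma~\ref{lem:DR_properties}, which converts $F(\vz^* \vee \vo) + F(\vz^* \wedge \vo)$ into the quantities $F(\vz^* \hprod \vo)$ and $F(\vz^* \psum \vo)$ that appear in the target inequality.

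Concretely, I would begin by invoking Corollary~\ref{cor:unconstrained} on $\vz^*$, which gives for every $r \geq 0$
\[
F(\vz'^{*}) \;\geq\; \left(\frac{2r}{(r+1)^2} - O(\eps)\right) F(\vz^* \hprod \vo) + \frac{1}{(r+1)^2} F(\vzero) + \frac{r^2}{(r+1)^2} F(\vz^*).
\]
I would discard the $F(\vzero)$ term using non-negativity, and then use Observation~\ref{obs:heir} together with Property~\ref{prop:old_new_notation} of Lemma~\ref{lem:DR_properties} to estimate
\[
F(\vz^*) \;\geq\; \tfrac{1}{2}\bigl[F(\vz^* \vee \vo) + F(\vz^* \wedge \vo)\bigr] - O(\eps) F(\vo) - O(\delta D^2 L) \;\geq\; \tfrac{1}{2}\bigl[F(\vz^* \psum \vo) + F(\vz^* \hprod \vo)\bigr] - O(\eps) F(\vo) - O(\delta D^2 L).
\]
Substituting this lower bound into the $F(\vz^*)$ term above and collecting coefficients of $F(\vz^* \hprod \vo)$ gives $\frac{2r}{(r+1)^2} + \frac{r^2}{2(r+1)^2} = \frac{4r + r^2}{2(r+1)^2}$, which is the coefficient appearing in the claimed bound. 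The additive error term $\tfrac{r^2}{(r+1)^2}\bigl[O(\eps)F(\vo) + O(\delta D^2 L)\bigr]$ is dominated by $O(\eps) F(\vo) + O(\delta D^2 L)$ because $\tfrac{r^2}{(r+1)^2} \leq 1$.

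The one subtlety is absorbing the $-O(\eps) F(\vz^* \hprod \vo)$ error coming from \BoxMaximization\ into the target error term $O(\eps) F(\vo)$. For this I would observe that $\vz^* \hprod \vo \leq \vo \in P$, so by the down-closedness of $P$ the point $\vz^* \hprod \vo$ is feasible, and hence $F(\vz^* \hprod \vo) \leq F(\vo)$ by the optimality of $\vo$. This turns the $O(\eps) F(\vz^* \hprod \vo)$ error into $O(\eps) F(\vo)$, as required. Taking the maximum over $r \geq 0$ on the resulting inequality yields the lemma.

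I do not expect any genuine obstacle here: the proof is essentially a substitution. The only item that needs a small argument (and is easy to overlook) is the feasibility/optimality step that bounds $F(\vz^* \hprod \vo)$ by $F(\vo)$, which is needed because $F$ is not assumed to be monotone and so the inequality does not follow from $\vz^* \hprod \vo \leq \vo$ alone.
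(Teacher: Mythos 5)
Your proof is correct and follows essentially the same approach as the paper: apply Corollary~\ref{cor:unconstrained} to $\vz^*$, discard the $F(\vzero)$ term by non-negativity, bound $F(\vz^*)$ via Observation~\ref{obs:heir} together with Property~\ref{prop:old_new_notation} of Lemma~\ref{lem:DR_properties}, and absorb the multiplicative $O(\eps)$ error on $F(\vz^* \hprod \vo)$ into $O(\eps) F(\vo)$ using $\vz^* \hprod \vo \in P$ by down-closedness. Your explicit remark about needing down-closedness (rather than a monotonicity argument) to conclude $F(\vz^* \hprod \vo) \leq F(\vo)$ is exactly the observation the paper makes.
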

\begin{proof}
Since $\vz'^*$ is the output of the algorithm from Corollary~\ref{cor:unconstrained},
\begin{align*}
	F(\vz'^{*}) \geq{} &
		\max_{r \geq 0} \left[\left(\frac{2r}{(r + 1)^2} - O(\eps)\right) \cdot F(\vz^{*} \hprod \vo) + \frac{1}{(r + 1)^2} \cdot F(\vzero) + \frac{r^2}{(r + 1)^2} \cdot F(\vz^{*})\right]\\
  \geq{} & 
	\max_{r \geq 0} \left[\frac{2r}{(r + 1)^2} \cdot F(\vz^{*} \hprod \vo) + \frac{r^2}{(r + 1)^2} \cdot F(\vz^{*})\right] - O(\eps)\cdot F(\vo)
	\enspace,
\end{align*}
where the second inequality holds by (i) the non-negativity of $F$ and (ii) the observation that the down-closeness of $P$ implies that $\vz^{*} \hprod \vo \in P$, which yields $F(\vz^{*} \hprod \vo) \leq F(\vo)$ by the definition of $\vo$.
Note also that
\[
	F(\vz^{*})
	\geq
	\frac{1}{2}\cdot[F(\vz^{*} \vee \vo)+ F(\vz^{*} \wedge \vo)] - O(\delta D^2L)\\
	\geq
	\frac{1}{2}\cdot[F(\vz^{*} \psum \vo) + F(\vz^{*} \hprod \vo)] - O(\delta D^2L)
	\enspace,
\]
where the first inequality holds by Observation~\ref{obs:heir} since $\vz^*$ is the input for an heir recursive call, and the second inequality holds by Property~\ref{prop:old_new_notation} of Lemma~\ref{lem:DR_properties}. The lemma now follows by plugging this inequality into the previous one. 
\end{proof}

We have so far proved lower bounds on $F(\vy^*)$ and $F(\vz'^{*})$. Both these lower bounds immediately apply also to the value of the output of Algorithm~\ref{alg:main-rec}, and thus, any convex combination of these lower bounds is also a lower bound on the value of this output. Hence, for every $\alpha \in [0, 1]$ and $r \geq 0$, the value of the output of Algorithm~\ref{alg:main-rec} is lower bounded by
\begin{align*}
  &
		\left[\frac{(1 - \alpha)[(2 - t_s)e^{t_s} - 1]}{e} - O(\eps)\right] \cdot F(\vo) + \left[\frac{\alpha(4r + r^2)}{2(r + 1)^2} - \frac{(1 - \alpha)(e^{t_s} - 1)}{e}\right] \cdot F(\vz^{*} \hprod \vo)\\
    &\mspace{130mu}+
    \left[\frac{\alpha r^2}{2(r + 1)^2} - \frac{(1 - \alpha)[(2 - t_s)e^{t_s} + t_s - (t_s^2 + 5)/2]}{e}\right] \cdot F(\vz^{*} \psum \vo) - O(\delta D^2 L)
    \enspace,
\end{align*}
At this point, we choose $\alpha = 0.1974$ and $r = 2.22$. Additionally, let us set the parameter $t_s$ of Algorithm~\ref{alg:main-rec} to be $0.3682$. One can verify that this choice of values makes the coefficients of $F(\vz^{*} \hprod \vo)$ and $F(\vz^{*} \psum \vo)$ in the last expression non-negative, and thus, dropping the terms involving these two values can only decrease the total value of the expression. Hence, the value of the output of Algorithm~\ref{alg:main-rec} is at least
\begin{align*}
	\left[\frac{(1 - \alpha)[(2 - t_s)e^{t_s} - 1]}{e} - O(\eps)\right] \cdot F(\vo) - O(\delta D^2 L)
	\geq{} &
	[0.40101 - O(\eps)] \cdot F(\vo) - O(\delta D^2 L)\\
	\geq{} &
	0.401 \cdot F(\vo) - O(\delta D^2 L)
	\enspace,
\end{align*}
where the first inequality holds by plugging in the values we have chosen for $\alpha$ and $t_s$, and the second inequality holds when we set $\eps$ to be a small enough constant. 

\section{New Bounds for DR-Submodular Functions} \label{sec:bounds}

In this section, we prove our novel bounds for DR-submodular functions. We present two sets of such bounds. The bounds of the first set, given in Section~\ref{ssc:bounds_continuous}, have a more stylized form, and are useful in the analysis of the continuous time version of {\FWMCG}. In principle, these bounds can also be used to analyze the discrete time version of {\FWMCG} by setting their continuous function parameters to be step functions. However, we chose instead to give, in Section~\ref{ssc:bounds_discrete}, a second set of bounds that are more natural in the context of discrete time algorithms.

\subsection{Bounds Designed for Continuous Time Algorithms} \label{ssc:bounds_continuous}

The basic bound that we prove in this section is given by the next lemma.
\begin{lemma} \label{lem:mathematical}
Given a non-negative DR-submodular function $F \colon [0, 1]^\cN \allowbreak \to \nnR$, a value $t \geq 0$, an integrable function $\vx \colon [0, t] \to [0, 1]^\cN$ and a vector $\va\in [0, 1]^\cN$,
\[
    F\left( \vone - \va \odot e^{-\int_0^t \vx(\tau) d\tau} \right)
    \geq
    e^{-t} \cdot \left[F\left(\vone-\va\right) + \sum_{i = 1}^{\infty} \frac{1}{i!} \cdot \int_{\tau \in [0, t]^i} F\left((\vone - \va) \psum \PSum_{j=1}^{i} \vx(\tau_j)\right)d\tau \right]
    \enspace.
\]
\end{lemma}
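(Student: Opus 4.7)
My plan is to prove the lemma via a recursive differential inequality for $G(t, \va) \triangleq F(\vone - \va \odot e^{-\int_0^t \vx(\tau) d\tau})$, followed by a Picard-style iteration. Write $\vw(t) \triangleq \vone - \va \odot e^{-\int_0^t \vx(\tau) d\tau}$, so that $\vw(0) = \vone - \va$ and $\vw'(t) = (\vone - \vw(t)) \odot \vx(t)$, and note the LHS of the lemma equals $G(t, \va)$. Using the chain rule together with Property~\ref{prop:dr_bound2_up} of Lemma~\ref{lem:DR_properties} (with $\vy = (\vone - \vw(t)) \odot \vx(t) \geq \vzero$, noting $\vw(t) + \vy = \vw(t) \psum \vx(t) \leq \vone$), I would derive
\[
\partial_t G(t, \va) = \inner{\nabla F(\vw(t))}{(\vone - \vw(t)) \odot \vx(t)} \geq F(\vw(t) \psum \vx(t)) - G(t, \va).
\]
A direct coordinate-wise check gives $\vw(t) \psum \vx(t) = \vone - [\va \odot (\vone - \vx(t))] \odot e^{-\int_0^t \vx(\tau) d\tau}$, so $F(\vw(t) \psum \vx(t)) = G(t, \va \odot (\vone - \vx(t)))$, and the inequality becomes the recursive bound $\partial_t G(t, \va) + G(t, \va) \geq G(t, \va \odot (\vone - \vx(t)))$, with $G(0, \va) = F(\vone - \va)$.

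Multiplying by $e^t$ and integrating from $0$ to $t$ gives
\[
G(t, \va) \geq e^{-t} F(\vone - \va) + \int_0^t e^{-(t - s)} \cdot G(s, \va \odot (\vone - \vx(s))) \, ds.
\]
I would then iterate by substituting the same bound into the integral, $n$ times in total. At each step the $\va$-argument accumulates one more factor $(\vone - \vx(s_j))$, and the identity $\vone - \va \odot \prod_{j=1}^k (\vone - \vx(s_j)) = (\vone - \va) \psum \PSum_{j=1}^k \vx(s_j)$ allows one to rewrite the $k$-th main term as $e^{-t}$ times the integral of $F((\vone - \va) \psum \PSum_{j=1}^k \vx(s_j))$ over the simplex $\{0 \leq s_k \leq \cdots \leq s_1 \leq t\}$. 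After $n$ iterations the bound is a sum of such $k$-fold simplex integrals for $k = 0, \dotsc, n-1$, plus a residue $R_n$ consisting of an $n$-fold simplex integral of $e^{-(t - s_n)} G(s_n, \cdot)$.

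Since $\psum$ is commutative and associative, the $k$-fold integrand is symmetric in $s_1, \dotsc, s_k$, so the simplex integral equals $1/k!$ times the cube integral over $[0,t]^k$, matching the $k$-th summand of the RHS of the lemma exactly. For the residue, $G$ is bounded above by $M \triangleq \max_{\vu \in [0,1]^\cN} F(\vu) < \infty$ while the $n$-dimensional simplex has volume $t^n/n!$, so $|R_n| \leq M t^n / n! \to 0$, and taking $n \to \infty$ yields the claimed bound. The main obstacle is the differential-inequality step, specifically observing that applying the $\psum \vx(t)$ update to an exponential-form vector $\vone - \va \odot e^{-\int \vx \, d\tau}$ is equivalent to replacing $\va$ by $\va \odot (\vone - \vx(t))$ in the same expression; this self-referential rewriting is precisely what allows the iteration to close in on the Taylor-series-like form of the RHS. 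The symmetrization step and the remainder bound are then routine.
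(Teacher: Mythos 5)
Your proposal is correct, and it rests on exactly the same two pillars as the paper's proof: the differential inequality obtained from the chain rule plus Property~\ref{prop:dr_bound2_up} of Lemma~\ref{lem:DR_properties}, and the key self-referential identity that applying a $\psum\,\vx(t)$ update to $\vone - \va \odot e^{-\int_0^t \vx\,d\tau}$ is the same as replacing $\va$ by $\va\odot(\vone-\vx(t))$. The only organizational difference is that you unroll the Duhamel (integral) form $n$ times and bound a remainder, while the paper proves the finite-$k$ truncation by induction (with a separate concavity argument for $k=0$) and then invokes monotone convergence; these are two bookkeepings of the same iteration. One small simplification available to you: the residue $R_n$ is non-negative because $F\geq 0$, so you can simply drop it and use monotone convergence on the partial sums rather than invoking the $Mt^n/n!\to 0$ bound -- though your bound is also valid, since differentiability of $F$ on the compact cube $[0,1]^\cN$ gives $M<\infty$.
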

\begin{proof}
We prove by induction that the lemma holds when the upper limit of the sum on the right hand side of the inequality of the lemma is replaced with any non-negative integer $k$. This implies the lemma by the monotone convergence theorem since every term of the infinite sum in the original right hand side is non-negative. For convenience, we use below the notation $\vy(t) \triangleq \int_0^t \vx(\tau) d\tau$. 

We begin with the base of the induction, i.e., the case of $k=0$. In this case,
\begin{align*}
    F\left( \vone - \va \odot e^{-\vy(t)} \right)
    & = F\left( e^{-t} \cdot (\vone - \va ) + (1-e^{-t})\cdot \left(\vone - \va \hprod \frac{e^{-\vy(t)}- e^{-t} \cdot \vone}{1-e^{-t}} \right)\right)\\
    & \geq e^{-t} \cdot F(\vone-\va) + (1 - e^{-t}) \cdot F\left(\vone - \va \hprod \frac{e^{-\vy(t)}- e^{-t} \cdot \vone}{1-e^{-t}} \right)
    \geq
    e^{-t} \cdot F\left(\vone - \va\right)
		\enspace,
\end{align*}
where the second inequality hold by the non-negativity of $F$, and the first inequality follows from Property~\ref{prop:dr_bound1} of Lemma~\ref{lem:DR_properties} because the fact that $\vzero \leq \vy(t)\leq t \cdot \vone$ implies $\vzero \leq \frac{e^{-\vy(t)}- e^{-t} \cdot \vone}{1-e^{-t}} \leq \vone$, and therefore, $\vzero \leq \vone - \va \leq \vone - \va \hprod \frac{e^{-\vy(t)}- e^{-t}}{1-e^{-t}}  \leq \vone$.

We now get to the induction step. In other words, we need to prove the bound for some integer $k\geq 1$ assuming it holds for $k - 1$. By the chain rule and the definition of $\vy(t)$,
\begin{align*}
		&
		\frac{dF\left(\vone - \va \odot e^{-\vy(t)}\right)}{dt}
    =
		\inner{\nabla F\left(\vone - \va \odot e^{-\vy(t)}\right)}{\va \odot \vx(t) \odot e^{-\vy(t)}}\\
    \geq{} &
    F\left(\vone - \va \odot e^{-\vy(t)}+ \va \odot \vx(t) \odot e^{-\vy(t)}\right) - F\left(\vone - \va \odot e^{-\vy(t)}\right) \\
    ={}& F\left(\vone - \va \odot (\vone-\vx(t) )\odot e^{-\vy(t)}\right) - F\left(\vone - \va \odot e^{-\vy(t)}\right) \\  
    \geq{} &
    e^{-t} \cdot \left[F\left((\vone - \va) \psum \vx(t)\right) + \sum_{i = 1}^{k - 1} \frac{1}{i!} \cdot \int_{\tau \in [0, t]^i} F\left((\vone - \va) \psum \vx(t) \psum \PSum_{j=1}^{i} \vx(\tau_j)\right) d\tau \right] \\
    & \mspace{520mu}- F\left(\vone - \va \odot e^{-\vy(t)}\right)
    \enspace,
\end{align*}
where the first inequality holds by Property~\ref{prop:dr_bound2_up} of Lemma~\ref{lem:DR_properties}, and the second inequality follows by applying the induction hypothesis to the function $F$ with $\va'= \va\odot(\vone-\vx(t)) = \vone - (\vone - \va) \psum \vx(t) \in [0, 1]^\cN$. Let us now define, for convenience, $g(t)\triangleq F\left(\vone - \va \odot e^{-\vy(t)}\right)$. Then, the previous inequality becomes
\[
    \frac{dg(t)}{dt}
    \geq
    e^{-t} \cdot \left[F\left((\vone - \va) \psum \vx(t)\right) + \sum_{i = 1}^{k - 1} \frac{1}{i!} \cdot \int_{\tau \in [0, t]^i} F\left((\vone - \va) \psum \vx(t) \psum \PSum_{j=1}^{i} \vx(\tau_j)\right) d\tau \right] - g(t)
		\enspace.
\]
Solving this differential inequality with the boundary condition $g(0)=F(\vone - \va)$, we get the promised inequality
\[F\left(\vone - \va \odot e^{-\vy(t)}\right) = g(t) \geq  e^{-t} \cdot \left[F(\vone - \va) + \sum_{i = 1}^{k} \frac{1}{i!} \cdot \int_{\tau \in [0, t]^i} F\left((\vone - \va) \psum \PSum_{j=1}^{i}\vx(\tau_j)\right) d\tau \right] \enspace.\]
To see that this is indeed the case, notice that, if we define $h(t)$ to be the right hand side of the last inequality, then, by the Leibniz integral rule,
\begin{align*}
\frac{dh(t)}{dt} & = - e^{-t} \cdot \left[F(\vone - \va) + \sum_{i = 1}^{k} \frac{1}{i!} \cdot \int_{\tau \in [0, t]^i} F\left((\vone - \va) \psum \PSum_{j=1}^{i}\vx(\tau_j)\right) d\tau \right]\\
& +  e^{-t} \cdot \left[F((\vone - \va) \psum \vx(t)) + \sum_{i = 2}^{k} \frac{1}{(i - 1)!} \cdot \int_{\tau \in [0, t]^{i - 1}} F\left((\vone - \va) \psum \vx(t) \psum \PSum_{j=1}^{i - 1}\vx(\tau_j)\right) d\tau \right]\\
={} &
 - h(t) + e^{-t} \cdot \left[F\left((\vone - \va) \psum \vx(t)\right) + \sum_{i = 1}^{k - 1} \frac{1}{i!} \cdot \int_{\tau \in [0, t]^i} F\left((\vone - \va) \psum \vx(t) \psum \PSum_{j=1}^{i} \vx(\tau_j)\right) d\tau \right]
\enspace.
\qedhere
\end{align*}
\end{proof}

Using the basic bound proved in the last lemma, it is possible to prove the more general bound given by the next lemma. Notice that Lemma~\ref{lem:mathematical} is the special case of Lemma~\ref{lem:mathematical_general} obtained by setting $h = 1$.

\begin{lemma} \label{lem:mathematical_general}
Consider a non-negative DR-submodular function $F \colon\allowbreak [0, 1]^\cN \to \nnR$, a value $t \geq 0$ and integer $h \geq 1$. Then, given an integrable function $\vx^{(i)} \colon [0, t] \to [0, 1]^\cN$ and two vectors $\va^{(i)}, \vb^{(i)} \in [0, 1]^\cN$ for every $i \in [h]$, such that $\sum_{i=1}^h \vb^{(i)} = \vone$, its holds that
\begin{align*}
    &   
    F\left( \vone - \sum_{i = 1}^h \vb^{(i)} \hprod \va^{(i)} \hprod e^{-\int_0^t \vx^{(i)}(\tau) d\tau}\right)\\
    \geq{} &
    e^{-t} \cdot \left[F\left(\vone- \sum_{i = 1}^h \vb^{(i)} \hprod \va^{(i)} \right) + \sum_{i = 1}^{\infty} \frac{1}{i!} \cdot \int_{\tau \in [0, t]^i} F\left(\sum_{i = 1}^h \vb^{(i)} \hprod \left((\vone - \va) \psum \PSum_{j=1}^{i}\vx^{(i)}(\tau_j)\right)\right)d\tau \right]
    \enspace.
\end{align*}
\end{lemma}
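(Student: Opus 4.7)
The plan is to imitate the proof of Lemma~\ref{lem:mathematical} almost verbatim, since Lemma~\ref{lem:mathematical_general} is essentially the ``$h$-component'' analog of it. As in Lemma~\ref{lem:mathematical}, I would first note that the infinite sum has non-negative terms, so by the monotone convergence theorem it suffices to prove by induction on $k \geq 0$ the analogous inequality in which the sum is truncated at $k$. For convenience, I will introduce the shorthand $\vs(t) \triangleq \vone - \sum_{i=1}^{h} \vb^{(i)} \hprod \va^{(i)} \hprod e^{-\int_0^t \vx^{(i)}(\tau)\, d\tau}$ and $\vs_0 \triangleq \vone - \sum_{i=1}^{h} \vb^{(i)} \hprod \va^{(i)}$. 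Using $\sum_i \vb^{(i)} = \vone$, both vectors can also be written as $\sum_i \vb^{(i)} \hprod (\vone - \va^{(i)} \hprod e^{-\int_0^t \vx^{(i)}})$ and $\sum_i \vb^{(i)} \hprod (\vone - \va^{(i)})$, respectively, which will be convenient.

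For the base case ($k = 0$), I would write $\vs(t)$ as a convex combination $e^{-t}\vs_0 + (1-e^{-t})\vw$, where $\vw = \sum_i \vb^{(i)} \hprod \bigl(\vone - \va^{(i)} \hprod \tfrac{e^{-\int_0^t \vx^{(i)}} - e^{-t}\vone}{1 - e^{-t}}\bigr)$. A coordinatewise check (using $\vzero \leq \int_0^t \vx^{(i)} \leq t\vone$) shows that the fraction lies in $[0,1]^\cN$, and consequently $\vzero \leq \vs_0 \leq \vw \leq \vone$. Applying Property~\ref{prop:dr_bound1} of Lemma~\ref{lem:DR_properties} and non-negativity of $F$ then gives $F(\vs(t)) \geq e^{-t}F(\vs_0)$, which is exactly the $k=0$ statement.

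For the induction step, I would differentiate $g(t) \triangleq F(\vs(t))$ by the chain rule to obtain $g'(t) = \langle \nabla F(\vs(t)),\, \sum_i \vb^{(i)} \hprod \va^{(i)} \hprod \vx^{(i)}(t) \hprod e^{-\int_0^t \vx^{(i)}}\rangle$, and then apply Property~\ref{prop:dr_bound2_up} of Lemma~\ref{lem:DR_properties} to lower bound this inner product by $F(\vs(t) + \tfrac{d\vs(t)}{dt}) - F(\vs(t))$. A short algebraic manipulation rewrites $\vs(t) + \tfrac{d\vs(t)}{dt} = \vone - \sum_i \vb^{(i)} \hprod \va'^{(i)} \hprod e^{-\int_0^t \vx^{(i)}}$ for $\va'^{(i)} \triangleq \va^{(i)} \hprod (\vone - \vx^{(i)}(t))$, which satisfies the key identity $\vone - \va'^{(i)} = (\vone - \va^{(i)}) \psum \vx^{(i)}(t)$. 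Invoking the induction hypothesis at level $k-1$ with the updated $\va'^{(i)}$'s then produces a differential inequality of the form $g'(t) \geq e^{-t}\cdot H(t) - g(t)$, where $H(t)$ is exactly the bracket appearing in the desired bound truncated at $k-1$ and shifted by one integration variable.

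Finally, I would solve this first-order linear differential inequality with boundary condition $g(0) = F(\vs_0)$. As in Lemma~\ref{lem:mathematical}, the most tedious part — and the only real obstacle — is verifying, via the Leibniz integral rule, that the claimed right-hand side indeed satisfies the ODE: one shows that when this right-hand side is differentiated, the $i=k$ term of the sum is created by raising the upper index of the integration while all other derivative contributions telescope against $-g(t)$ and match $e^{-t} H(t)$ term by term. Since the DR-submodularity is only used through Lemma~\ref{lem:DR_properties} in exactly the same two places as in the proof of Lemma~\ref{lem:mathematical}, no new analytic ideas are needed beyond the careful bookkeeping of the $h$ parallel exponentials.
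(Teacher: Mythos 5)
Your proposal is correct, but it takes a genuinely different route from the paper. The paper does not re-run the induction of Lemma~\ref{lem:mathematical} in the $h$-component setting. Instead it observes that the entire lemma is an instance of Lemma~\ref{lem:mathematical} applied to a \emph{lifted} function: it defines $G\colon[0,1]^{\dotbigcup_{i=1}^h\cN_i}\to\nnR$ on $h$ disjoint copies of the ground set by $G(\vc^{(1)},\dotsc,\vc^{(h)})=F\bigl(\sum_{i=1}^h\vb^{(i)}\hprod\vc^{(i)}\bigr)$, verifies (by a short partial-derivative calculation using the antitone-gradient characterization) that $G$ is non-negative and DR-submodular, and then invokes Lemma~\ref{lem:mathematical} for $G$ with the block vectors $\va=(\va^{(1)},\dotsc,\va^{(h)})$ and $\vx(\tau)=(\vx^{(1)}(\tau),\dotsc,\vx^{(h)}(\tau))$. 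The constraint $\sum_i\vb^{(i)}=\vone$ is what makes the resulting expressions collapse to the ones stated in the lemma. In contrast, you redo the induction on the truncation index $k$ directly for the $h$-component vector $\vs(t)=\vone-\sum_i\vb^{(i)}\hprod\va^{(i)}\hprod e^{-\int_0^t\vx^{(i)}}$, using the same two properties of Lemma~\ref{lem:DR_properties} and the same ODE-verification step. Both arguments are sound; the paper's lifting is shorter and cleaner precisely because it reuses Lemma~\ref{lem:mathematical} as a black box and concentrates all the new work into checking that $G$ is DR-submodular, whereas your route is more self-contained but forces you to track the $h$ parallel exponentials through the base case, the chain-rule step, the identity $\vone-\va'^{(i)}=(\vone-\va^{(i)})\psum\vx^{(i)}(t)$, and the Leibniz-rule verification -- all of which you correctly note go through, at the cost of essentially repeating the bookkeeping of Lemma~\ref{lem:mathematical} with extra indices.
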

\begin{proof}
Let us define a new DR-submodular function $G$ over an extended ground set. Formally, let $\cN_1, \cN_2, \dotsc, \cN_h$ be $h$ disjoint copies of $\cN$, and let us define a function $G \colon [0, 1]^{\dotbigcup_{i = 1}^h \cN_i} \to \nnR$. It is notationally convenient to think of $G$ as a function getting $h$ arguments $\vc^{(1)}, \vc^{(2)}, \dotsc, \vc^{(h)} \in [0, 1]^{\cN}$. Using this convention, we define $G$ as follows.
\[
	G(\vc^{(1)}, \vc^{(2)}, \dotsc, \vc^{(h)})
	=
	F\left(\sum_{i = 1}^h \vb^{(i)} \hprod \vc^{(i)}\right)
	\enspace.
\]

One can verify that the guarantee of the current lemma follows by applying Lemma~\ref{lem:mathematical} to $G$. However, this can be done only if $G$ has the properties assumed by Lemma~\ref{lem:mathematical}, namely, DR-submodularity and non-negativity. Therefore, the rest of this proof is devoted to showing that $G$ indeed has these properties. The non-negativity of $G$ follows immediately from the non-negativity of $F$. To see that $G$ is DR-submodular, note that, for any $i \in [h]$ and element $u \in \cN$, we have
\[
	\frac{\partial G(\vc^{(1)}, \vc^{(2)}, \dotsc, \vc^{(h)})}{\partial c^{(i)}_u}
	=
	b_u^{(i)} \cdot \left.\frac{\partial F(\vx)}{\partial x_u} \right|_{\vx = \sum_{i = 1}^h \vb^{(i)} \hprod \vc^{(i)}}
	\enspace,
\]
Thus, if we have vectors $\vc^{(i)}, \bar{\vc}^{(i)} \in [0, 1]^{\cN}$ that obey $\vc^{(i)} \leq \bar{\vc}^{(i)}$ for every $i \in [h]$, then, by the DR-submodularity of $F$,
\begin{align*}
	\frac{\partial G(\vc^{(1)}, \vc^{(2)}, \dotsc, \vc^{(h)})}{\partial c^{(i)}_u}
	={} &
	b_u^{(i)} \cdot \left.\frac{\partial F(\vx)}{\partial x_u} \right|_{\vx = \sum_{i = 1}^h \vb^{(i)} \hprod \vc^{(i)}}\\
	\geq{} &
	b_u^{(i)} \cdot \left.\frac{\partial F(\vx)}{\partial x_u} \right|_{\vx = \sum_{i = 1}^h \vb^{(i)} \hprod \bar{\vc}^{(i)}}
	=
	\frac{\partial G(\bar{\vc}^{(1)}, \bar{\vc}^{(2)}, \dotsc, \bar{\vc}^{(h)})}{\partial \bar{c}^{(i)}_u}
	\enspace.
	\qedhere
\end{align*}
\end{proof}

\subsection{Bounds Designed for Discrete Time Algorithms} \label{ssc:bounds_discrete}

In this section, we prove bounds for DR-submodular functions that are analogous to the ones given in Section~\ref{ssc:bounds_continuous}, but are designed to be useful in the analysis of discrete time algorithms. The first bound that we give in this section is given by the next lemma. As mentioned above, this bound can be viewed as an analog of Lemma~\ref{lem:mathematical}. 
Interestingly, it can also be viewed as an extension to DR-submodular functions of Lemma~2.2 of~\cite{feige11maximizing}, and the proof we give for the bound resembles the proof of the last lemma.

\begin{lemma} \label{lem:basic_bound_discrete}
Given a DR-submodular function $F: [0, 1]^\cN \to \nnR$, integer value $r \geq 1$, vectors $\vx(1), \vx(2), \dotsc, \vx(r) \in [0, 1]^\cN$, and values $p_1, p_2, \dotsc, p_r \in [0, 1]$,
\[
    F\left(\PSum_{i = 1}^r (p_i \cdot \vx(i)) \right)
    \geq
    \sum_{S \subseteq [r]} \left[\prod_{i \in S} p_i \cdot \mspace{-9mu} \prod_{i \in [r] \setminus S} \mspace{-9mu} (1 - p_i) \cdot F\left(\PSum_{i \in S} \vx(i)\right)\right]
    \enspace.
\]
\end{lemma}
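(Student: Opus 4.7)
The plan is to prove the bound by induction on $r$, relying on the concavity of DR-submodular functions along non-negative directions (Property~\ref{prop:dr_bound1} of Lemma~\ref{lem:DR_properties}) and on the closure property of Lemma~\ref{lem:DR_submodular}. Intuitively, the right-hand side is an expectation: if each index $i$ is independently selected with probability $p_i$, the RHS equals $\mathbb{E}[F(\PSum_{i \in S} \vx(i))]$, and a direct computation shows that $\PSum_{i=1}^r p_i \vx(i)$ equals the coordinate-wise expectation $\mathbb{E}[\PSum_{i \in S} \vx(i)]$. So the claim is a Jensen-type inequality, which, although false for DR-submodular functions in general, does hold here because we only need to peel off one coordinate of randomness at a time along non-negative directions.

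For the base case $r = 1$, apply Property~\ref{prop:dr_bound1} with $\vzero \leq \vx(1)$ and $\lambda = 1 - p_1$ to obtain $F(p_1 \vx(1)) \geq (1 - p_1) F(\vzero) + p_1 F(\vx(1))$; this matches the RHS once we observe that the empty probabilistic sum $\PSum_{i \in \emptyset} \vx(i)$ equals $\vzero$.

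For the inductive step, write $\vy \triangleq \PSum_{i=1}^{r-1} p_i \vx(i)$. The central algebraic identity is
\[
    \vy \psum (p_r \vx(r)) \;=\; p_r \cdot (\vy \psum \vx(r)) + (1 - p_r) \cdot \vy,
\]
which I would verify coordinate-wise from the expansion $1 - (1-y_u)(1 - p_r x(r)_u) = p_r[1 - (1-y_u)(1-x(r)_u)] + (1-p_r) y_u$. Since $\vy \leq \vy \psum \vx(r)$, Property~\ref{prop:dr_bound1} yields
\[
    F(\vy \psum p_r \vx(r)) \;\geq\; p_r \cdot F(\vy \psum \vx(r)) + (1 - p_r) \cdot F(\vy).
\]
I then apply the induction hypothesis twice: once to $F$ itself to expand $F(\vy)$ as a sum over $S \subseteq [r-1]$, and once to the auxiliary function $G(\vw) \triangleq F(\vw \psum \vx(r))$ to expand $F(\vy \psum \vx(r)) = G(\vy)$. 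The function $G$ is non-negative and DR-submodular by Lemma~\ref{lem:DR_submodular}, and the substitution $G(\PSum_{i \in S} \vx(i)) = F(\PSum_{i \in S \cup \{r\}} \vx(i))$ rewrites its expansion as a sum indexed by subsets of $[r]$ that contain $r$. Combining the two expansions, the $(1-p_r)$-weighted contribution yields subsets $S \subseteq [r]$ with $r \notin S$ and the $p_r$-weighted contribution yields subsets with $r \in S$, with exactly the required product weights $\prod_{i \in S} p_i \prod_{i \in [r] \setminus S} (1 - p_i)$.

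The main obstacle I anticipate is spotting the algebraic identity expressing $\vy \psum (p_r \vx(r))$ as a convex combination of $\vy$ and $\vy \psum \vx(r)$, since this is exactly what licenses invoking Property~\ref{prop:dr_bound1}; without it one has no leverage against the non-linearity of $\psum$. Once the identity is in hand, the remainder of the argument is a clean bookkeeping exercise that partitions $2^{[r]}$ according to whether $r \in S$.
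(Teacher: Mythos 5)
Your proof is correct and follows essentially the same route as the paper's: induction on $r$, peel off $\vx(r)$ to obtain $F(\PSum_{i=1}^r p_i \vx(i)) \geq p_r F(\vy \psum \vx(r)) + (1-p_r) F(\vy)$, then apply the induction hypothesis to $F$ and to $G(\vw) = F(\vw \psum \vx(r))$ (DR-submodular by Lemma~\ref{lem:DR_submodular}) and regroup by whether $r \in S$. The only cosmetic difference is that you obtain the first inequality via the explicit convex-combination identity and Property~\ref{prop:dr_bound1} applied to $F$ directly, whereas the paper defines $G_1(\vx) = F(\vx \psum \vy)$ and applies Property~\ref{prop:dr_bound1} to $G_1$ at the point $p_r\vx(r) = (1-p_r)\cdot\vzero + p_r\cdot\vx(r)$; these are the same step in different clothing.
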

\begin{proof}
We prove the lemma by induction on $r$. For $r = 1$ the lemma reduces to
\[
    F(p_1 \cdot \vx(1))
    \geq
    p_1 \cdot F(\vx^{(1)}) + (1 - p_1) \cdot F(\vzero)
    \enspace,
\]
which holds by Property~\ref{prop:dr_bound1} of Lemma~\ref{lem:DR_properties}. Assume now that the lemma holds for $r - 1$, and let us prove it for $r \geq 2$. Define $G_1(\vx) = F(\vx \psum \PSum_{i = 1}^{r - 1} (p_i \cdot \vx(i)))$ and $G_2(\vx) = F(\vx \psum \vx(r))$. By Lemma~\ref{lem:DR_submodular}, $G_1$ and $G_2$ are both non-negative DR-submodular functions. Hence,
\begin{align*}
    F\left(\PSum_{i = 1}^r (p_i \cdot \vx(i)) \right)
    ={} &
    G_1(p_r \cdot \vx(r))
    \geq
    p_r \cdot G_1(\vx(r)) + (1 - p_r) \cdot G_1(\vzero)\\
    ={} &
    p_r \cdot G_2\left(\PSum_{i = 1}^{r - 1} (p_i \cdot \vx(i))\right) + (1 - p_r) \cdot F\left(\PSum_{i = 1}^{r - 1} (p_i \cdot \vx(i))\right)\\
    \geq{} &
    p_r \cdot \sum_{S \subseteq [r - 1]} \left[\prod_{i \in S} p_i \cdot \mspace{-18mu} \prod_{i \in [r - 1] \setminus S} \mspace{-18mu} (1 - p_i) \cdot G_2\left(\PSum_{i \in S} \vx(i)\right) \right] \\&\mspace{100mu}+ (1 - p_r) \sum_{S \subseteq [r - 1]} \left[\prod_{i \in S} p_i \cdot \mspace{-18mu} \prod_{i \in [r - 1] \setminus S} \mspace{-18mu} (1 - p_i) \cdot F\left(\PSum_{i \in S} \vx(i)\right) \right]\\
    ={} &
    \sum_{S \subseteq [r]} \left[\prod_{i \in S} p_i \cdot \mspace{-9mu} \prod_{i \in [r] \setminus S} \mspace{-9mu} (1 - p_i) \cdot F\left(\PSum_{i \in S} \vx(i)\right) \right]
    \enspace,
\end{align*}
where the first inequality holds by Property~\ref{prop:dr_bound1} of Lemma~\ref{lem:DR_properties} since $G_1$ is DR-submodular, and the second inequality follows by applying the induction hypothesis to $G_2$ and $F$.
\end{proof}

Like in Section~\ref{ssc:bounds_continuous}, we would like to generalize the basic bound given by the last lemma. The generalized bound is given by the next corollary. 
\begin{corollary} \label{cor:general_bound_discrete}
Consider a non-negative DR-submodular function $F \colon\allowbreak [0, 1]^\cN \to \nnR$ and integer values $r, h \geq 1$. Then, given vectors $\vx^{(i)}(1), \vx^{(i)}(2), \dotsc, \vx^{(i)}(r), \vb^{(i)} \in [0, 1]^\cN$ for every $i \in [h]$, such that $\sum_{i=1}^h \vb^{(i)} = \vone$, and value $p_j \in [0, 1]$ for every $j \in [r]$, it holds that
\begin{align*}
    \mspace{100mu}&\mspace{-100mu}
    F\left(\sum_{i = 1}^h \left(\vb^{(i)} \hprod \PSum_{j = 1}^h (p_j \cdot \vx^{(i)}(j)) \right) \right)\\
    \geq{} &
    \sum_{S \subseteq [r]} \left[\left(\prod_{j \in S} p_j \cdot \mspace{-9mu} \prod_{j \in [r] \setminus S} \mspace{-9mu} (1 - p_j)\right) \cdot F\left(\sum_{i = 1}^h \left(\vb^{(i)} \hprod \PSum_{j \in S} \vx^{(i)}(j)\right)\right)\right]
    \enspace.
\end{align*}
\end{corollary}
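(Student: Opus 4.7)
The plan is to reduce this corollary to Lemma~\ref{lem:basic_bound_discrete} by the same extended-ground-set trick that was used in the proof of Lemma~\ref{lem:mathematical_general}. Concretely, I would take $h$ disjoint copies $\cN_1, \cN_2, \dotsc, \cN_h$ of $\cN$ and define an auxiliary function $G \colon [0,1]^{\dotbigcup_{i=1}^h \cN_i} \to \nnR$ by
\[
    G(\vc^{(1)}, \vc^{(2)}, \dotsc, \vc^{(h)})
    \triangleq
    F\left(\sum_{i=1}^h \vb^{(i)} \hprod \vc^{(i)}\right).
\]
Non-negativity of $G$ is inherited from $F$, and the DR-submodularity of $G$ follows by the exact same partial-derivative calculation used in Lemma~\ref{lem:mathematical_general}: the partial of $G$ in coordinate $u$ of copy $i$ equals $b_u^{(i)} \cdot \frac{\partial F}{\partial x_u}$ evaluated at $\sum_i \vb^{(i)} \hprod \vc^{(i)}$, and this is antitone in $(\vc^{(1)}, \dotsc, \vc^{(h)})$ because $F$ is DR-submodular and the $\vb^{(i)}$ are non-negative.

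Next, for each $j \in [r]$, I would assemble the vector $\tilde{\vx}(j) \in [0,1]^{\dotbigcup_{i=1}^h \cN_i}$ whose restriction to copy $\cN_i$ is $\vx^{(i)}(j)$. Since the probabilistic sum $\psum$ acts coordinate-wise, the restriction of $\PSum_{j \in S} (p_j \cdot \tilde{\vx}(j))$ to copy $\cN_i$ is exactly $\PSum_{j \in S} (p_j \cdot \vx^{(i)}(j))$, and similarly without the $p_j$ factors. Consequently, for every $S \subseteq [r]$,
\[
    G\left(\PSum_{j \in S} \tilde{\vx}(j)\right)
    =
    F\left(\sum_{i = 1}^h \left(\vb^{(i)} \hprod \PSum_{j \in S} \vx^{(i)}(j)\right)\right),
\]
and the analogous identity holds for $G\bigl(\PSum_{j=1}^r (p_j \cdot \tilde{\vx}(j))\bigr)$, which matches the left-hand side of the desired inequality.

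Finally, I would apply Lemma~\ref{lem:basic_bound_discrete} to $G$ with the vectors $\tilde{\vx}(1), \dotsc, \tilde{\vx}(r)$ and the scalars $p_1, \dotsc, p_r$, and then rewrite each resulting $G$-evaluation as an $F$-evaluation via the identity above. This immediately yields the stated bound. I do not expect any real obstacle: the genuinely non-trivial work—the induction that establishes the basic inequality and the verification that products against $\vb^{(i)}$ preserve DR-submodularity—has already been carried out in Lemma~\ref{lem:basic_bound_discrete} and the proof of Lemma~\ref{lem:mathematical_general}, so the present argument is purely a bookkeeping reduction.
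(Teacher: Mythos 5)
Your proposal is correct and matches the paper's own proof: both define the auxiliary function $G$ on the $h$-fold disjoint union of copies of $\cN$, verify its DR-submodularity by the same partial-derivative computation already done for Lemma~\ref{lem:mathematical_general}, and then invoke Lemma~\ref{lem:basic_bound_discrete} on $G$ with the assembled block vectors. The only difference is that you spell out the coordinate-wise action of $\psum$ more explicitly, which is a cosmetic elaboration rather than a different route.
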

\begin{proof}
Let us define a new DR-submodular function $G$ over an extended ground set. Formally, let $\cN_1, \cN_2, \dotsc, \cN_h$ be $h$ disjoint copies of $\cN$, and let us define a function $G \colon [0, 1]^{\dotbigcup_{i = 1}^h \cN_i} \to \nnR$. It is notationally convenient to think of $G$ as a function getting $h$ arguments $\vc^{(1)}, \vc^{(2)}, \dotsc, \vc^{(h)} \in [0, 1]^{\cN}$. Using this convention, we define $G$ as follows.
\[
	G(\vc^{(1)}, \vc^{(2)}, \dotsc, \vc^{(h)})
	=
	F\left(\sum_{i = 1}^h \vb^{(i)} \hprod \vc^{(i)}\right)
	\enspace.
\]

By repeating the arguments from the proof of Lemma~\ref{lem:mathematical_general}, we get that $G$ is a DR-submodular function. Thus, we can apply Lemma~\ref{lem:basic_bound_discrete} to $G$ with $\vx(i) = (\vx^{(1)}(i), \dotsc, \vx^{(h)}(i)) \in [0, 1]^{\dotbigcup_{i = 1}^h \cN_i}$ for every $i \in [r]$, which proves the corollary.
\end{proof}

\section{Our Main Algorithmic Component ({\FWMCG})}\label{sec:main-component}

In this section, we describe and analyze {\FWMCG}, and thereby prove Theorem~\ref{thm:main-component-discrete}. The formal description of the algorithm and its analysis appear in Section~\ref{ssc:component_discrete}. However, the exact description of the algorithm and the complete proof of Theorem~\ref{thm:main-component-discrete} involve many technical details. Hence, to convey better the main ideas of the algorithm and the proof, we find it useful to start by presenting a simplified unimplementable continuous time version of {\FWMCG}. This algorithm, along with a corresponding variant of Theorem~\ref{thm:main-component-discrete}, is presented in Section \ref{ssc:continuous_algorithm}.

\subsection{Continuous Time Version} \label{ssc:continuous_algorithm}

In this section, we present a simplified unimplementable continuous time version of {\FWMCG}. To avoid less important details, we make the following assumptions in the description and analysis of this version.

\begin{itemize}
    \item It is possible to efficiently find exact local maxima of convex bodies (rather than approximate ones as guaranteed by Theorem~\ref{thm:local_search}).
    \item The values of $F(\vo)$, $F(\vz \hprod \vo)$ and $F(\vz \psum \vo)$ are known to the algorithm.
		\item All integrals appearing in either the algorithm or its analysis are well-defined.
\end{itemize}

 Using our simplified continuous time version of {\FWMCG}, we prove the following continuous time version of Theorem~\ref{thm:main-component-discrete}. 

\begin{theorem} \label{thm:main-component}
There exists an (unimplementable) continuous time algorithm that, given a non-negative DR-submodular function $F \colon [0, 1]^\cN \to \nnR$, a meta-solvable down-closed convex body $P \subseteq [0, 1]^\cN$, a vector $\vz \in P$ and constant parameters $t_s\in(0,1)$ and $\eps \in (0, 1/2)$, outputs a vector $\vy\in P$ and a function $\vx(\tau) \colon [0, 1) \to P$ such that at least one of the following is always true.
 \begin{itemize}
     \item The vector $\vy$ obeys
\begin{align*}
    F(\vy) \geq{} & e^{-1} \cdot \left[((2 - t_s)e^{t_s} - 1 - O(\eps)) \cdot F(\vo) - (e^{t_s} - 1) \cdot F(\vz \hprod \vo)\vphantom{\frac{1}{2}}\right. \\&\mspace{200mu}- \left.\left((2 - t_s)e^{t_s} - \frac{t_s^2-2t_s + 5}{2}\right) \cdot F(\vz \psum \vo) \right]
    \enspace.
\end{align*}
\item There exists $\tau \in [t_s, 1)$ such that $F(\vx(\tau) \psum \vo) \leq F(\vz \psum \vo) - \eps \cdot F(\vo)$ and the vector $\vx(\tau)$ is a local maximum with respect to $\vo$.
\end{itemize}
\end{theorem}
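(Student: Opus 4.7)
The continuous time algorithm will maintain a solution $\vy(\tau)$ with $\vy(0) = \vzero$ evolving according to the Measured Continuous Greedy ODE $\vy'(\tau) = \vx(\tau) \hprod (\vone - \vy(\tau))$, which integrates to $\vy(\tau) = \vone - e^{-\int_0^\tau \vx(s)\,ds}$. The direction $\vx(\tau)$ is chosen in two phases. For $\tau \in [0, t_s)$ I set $\vx(\tau) = \vz$, which embeds the input vector into the accumulated direction and produces $\vy(t_s) = \vone - (\vone - \vz)^{\odot\, t_s}$ (a $\vz$-biased starting point for the second phase). For $\tau \in [t_s, 1)$ I set $\vw(\tau) = \nabla F(\vy(\tau))$ and form
\[
Q(\tau) \;=\; \{\vx \in P : \inner{\vw(\tau)}{\vx} \geq \inner{\vw(\tau)}{\vo}\},
\]
which is nonempty (it contains $\vo$) and, thanks to the meta-solvability of $P$, admits a Frank-Wolfe style local search. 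I then pick $\vx(\tau)$ as an exact local maximum of $F$ over $Q(\tau)$. Since $\vo \in Q(\tau)$, the local maximum condition $\inner{\vo - \vx(\tau)}{\nabla F(\vx(\tau))} \leq 0$ holds, and Lemma~\ref{lem:optimal_local_maximum} together with Property~\ref{prop:old_new_notation} of Lemma~\ref{lem:DR_properties} yields
\[
F(\vx(\tau)) \;\geq\; \tfrac{1}{2}\bigl[F(\vx(\tau) \psum \vo) + F(\vx(\tau) \hprod \vo)\bigr] \qquad \forall\, \tau \in [t_s, 1).
\]
The algorithm outputs $\vy = \vy(1)$ together with the function $\tau \mapsto \vx(\tau)$.

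\paragraph{Case split.} The case distinction is engineered to match the two bullets of the theorem. If there exists $\tau^* \in [t_s, 1)$ with $F(\vx(\tau^*) \psum \vo) \leq F(\vz \psum \vo) - \eps \cdot F(\vo)$, then, combined with the local maximum property of $\vx(\tau^*)$ noted above, the second bullet is immediate. Otherwise, for every $\tau \in [t_s, 1)$ the strict bound $F(\vx(\tau) \psum \vo) > F(\vz \psum \vo) - \eps \cdot F(\vo)$ holds, and I will derive the first bullet.

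\paragraph{Lower bounding $F(\vy(1))$.} In the remaining case I invoke Lemma~\ref{lem:mathematical} with $\va = \vone$ and $t = 1$ to obtain
\[
F(\vy(1)) \;\geq\; e^{-1} \sum_{i \geq 1} \frac{1}{i!} \int_{[0,1]^i} F\!\left(\PSum_{j=1}^{i} \vx(\tau_j)\right) d\tau.
\]
For each multi-index $(\tau_1, \dotsc, \tau_i)$, I partition the coordinates according to whether $\tau_j \in [0, t_s)$ (in which case $\vx(\tau_j) = \vz$) or $\tau_j \in [t_s, 1)$ (in which case $\vx(\tau_j)$ is a local maximum w.r.t.\ $\vo$). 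Writing the resulting probabilistic sum as $(\PSum_{j : \tau_j < t_s} \vz) \psum (\PSum_{j : \tau_j \geq t_s} \vx(\tau_j))$, the local maximum inequality, the case-(ii) lower bound on $F(\vx(\tau) \psum \vo)$, and repeated applications of DR-submodularity to absorb $\vo$-terms into the $\psum$ sums, let me bound each integrand from below by a linear combination of $F(\vo)$, $F(\vz \hprod \vo)$, and $F(\vz \psum \vo)$. Summing over the number $k$ of pre-$t_s$ coordinates and then over $i$, the resulting double series reduces to a combination of Taylor expansions of $e^{t_s}$ and its antiderivatives, producing the coefficients $(2 - t_s) e^{t_s} - 1$, $e^{t_s} - 1$, and $(2 - t_s) e^{t_s} - (t_s^2 - 2 t_s + 5)/2$ in the theorem, with the $-O(\eps) \cdot F(\vo)$ slack inherited from the case-(ii) threshold.

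\paragraph{Main obstacle.} The principal difficulty is the combinatorial bookkeeping in the last step: partitioning each $i$-dimensional integration region at $t_s$, applying the local-maximum inequality term-by-term, converting $\vee/\wedge$ into $\psum/\hprod$ via Property~\ref{prop:old_new_notation} of Lemma~\ref{lem:DR_properties}, extracting clean $F(\vo)$, $F(\vz \hprod \vo)$, and $F(\vz \psum \vo)$ pieces by DR-submodularity, and then re-summing the double series to match the precise polynomial coefficients of $t_s$ stated in the theorem. A minor additional subtlety is checking that the power-series expansion of the $(\vone - \vz)^{\odot\, t_s}$ block at $k$ pre-$t_s$ coordinates correctly produces the $t_s$-dependent polynomial factors, rather than just the $e^{t_s}$ factor; this is what ultimately distinguishes the new bound from the standard $1/e$ analysis.
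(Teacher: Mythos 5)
Your high-level framing matches the paper's: grow a measured-continuous-greedy trajectory $\vy(\tau)$ in two phases separated at $t_s$, pick $\vx(\tau)$ by Frank-Wolfe as a local maximum inside a restricted polytope $Q(\tau)$ so that the returned directions are local maxima with respect to $\vo$, and split on whether $F(\vx(\tau)\psum\vo)$ ever drops by $\eps F(\vo)$ below $F(\vz\psum\vo)$. But the low-level execution diverges from the paper in ways that make the plan unworkable.

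First, the first phase is set up differently. The paper keeps $\vx(\tau)$ chosen by Frank-Wolfe in both phases, and instead dampens the ODE with $\vz(\tau)$: $\vy'(\tau)=(\vone-\vy(\tau)-\vz(\tau))\odot\vx(\tau)$, so during $[0,t_s)$ the trajectory stays zero on the $\vz$ coordinates, i.e.\ $\vy(\tau)=(\vone-\vz)\odot\bigl(\vone-e^{-\int_0^\tau\vx}\bigr)$. You instead hard-code $\vx(\tau)=\vz$ on $[0,t_s)$ with the vanilla ODE $\vy'=\vx\odot(\vone-\vy)$, which makes $\vy$ grow exactly \emph{on} the $\vz$-coordinates and gives $\vy(t_s)=\vone-e^{-t_s\vz}$, not $\vone-(\vone-\vz)^{\odot t_s}$. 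This is not a small deviation: keeping $\vy$ zero on $\vz$-coordinates during the first phase is precisely what lets the analysis control the damage to $\vo$ via the mixed vector $\vy(\tau)\psum\vo-\vz(\tau)\hprod\vo$.

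Second, the paper's $Q(\tau)$ is $\{\vx\in P:\inner{\vw(\tau)}{\vx}\geq V(\tau)-F(\vy(\tau))\}$ with an explicit, guessable threshold function $V(\tau)$, and a central part of the argument is proving (Lemmata~\ref{lem:first_o_guarantee} and~\ref{lem:second_o_guarantee} plus Corollaries~\ref{cor:low_Q_non-empty} and~\ref{cor:good_below_empty}) that $\vo$ actually lies in $Q(\tau)$, because $V(\tau)-F(\vy(\tau))$ is a valid lower bound on $\inner{\vw(\tau)}{\vo}$; whenever it is not, you land in the second bullet. Your $Q(\tau)$ sets the threshold to $\inner{\vw(\tau)}{\vo}$ directly, so $\vo\in Q(\tau)$ by fiat and there is nothing left to prove — but then the mechanism that generates the second bullet, and the differential inequality $\tfrac{d}{d\tau}F(\vy(\tau))\geq V(\tau)-F(\vy(\tau))$ that generates the first, both disappear.

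Third, and most seriously, your route to the first bullet does not go through. You apply Lemma~\ref{lem:mathematical} with $\va=\vone$ and $t=1$ to bound $F(\vy(1))$ directly by $e^{-1}\sum_{i\geq 1}\tfrac{1}{i!}\int_{[0,1]^i}F\bigl(\PSum_{j}\vx(\tau_j)\bigr)d\tau$. But the terms $F(\PSum_j\vx(\tau_j))$ contain no $\vo$, and DR-submodularity cannot conjure one; as $i$ grows, $\PSum_j\vx(\tau_j)$ approaches an indicator vector and $F$ of it can be zero regardless of $F(\vo)$, $F(\vz\hprod\vo)$, or $F(\vz\psum\vo)$. The local-maximum inequality bounds $F(\vx(\tau_j))$ by quantities involving $\vo$, but not $F$ of a \emph{probabilistic sum of many directions}. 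This is exactly why the paper does \emph{not} use the new bound to control $F(\vy(1))$ directly; it uses it only to lower bound $F(\vy(\tau)\psum\vo-\vz(\tau)\hprod\vo)$, where the presence of $\vo$ inside the argument lets Lemma~\ref{lem:technical_less_ts} extract $F(\vo)$, $F(\vz\psum\vo)$, $F(\vz\hprod\vo)$ from each term of the probabilistic-sum expansion. The bound on $F(\vy(1))$ itself is then obtained by integrating the differential inequality $\tfrac{d}{d\tau}F(\vy(\tau))\geq V(\tau)-F(\vy(\tau))$ (Lemma~\ref{cor:y_values}). Without introducing $V(\tau)$ and this differential-inequality argument, the claimed Taylor-series bookkeeping has nothing to sum to the stated coefficients.
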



The algorithm that we use to prove Theorem~\ref{thm:main-component} appears as Algorithm~\ref{alg:aided_mcgreedy_continuous}. This algorithm grows continuously a solution $\vy(\tau)$. The direction of the growth at any time $\tau\in [0,1]$ is determined by a vector $\vx(\tau)$ chosen as a local maximum inside some convex body $Q(\tau) \subseteq P$. The definition of $Q(\tau)$ guarantees that every vector $\vx(\tau) \in Q(\tau)$ leads to an increase in the value of $\vy(\tau)$ at a rate of at least $V(\tau) - F(\vy(t))$.\footnote{A technical issues is that the convex body $Q(\tau)$ might end up being empty. We show below that whenever this happens, there is a value $\tau' \in [t_s, \tau)$ for which $\vx(\tau')$ satisfy the second case in the guarantee of Theorem~\ref{thm:main-component}. However, to guarantee that the function $\vx$ is defined for its entire domain, Algorithm~\ref{alg:aided_mcgreedy_continuous} still selects $\vx(\tau)$ as an arbitrary vector in $P$ in this case.}

\begin{algorithm}
\caption{{\FWMCGFull}\textsf{ -- Continuous Time Version}($F, P, \vz, t_s, \eps$)} \label{alg:aided_mcgreedy_continuous}
\DontPrintSemicolon
Let $V(\tau) \triangleq \left\{\begin{array}{ll}F(\vo) - F(\vz \hprod \vo) - (1- e^{-\tau})\cdot F(\vz \psum \vo)  & \tau\in [0,t_s)\\
e^{-\tau}\left[(e^{t_s} - 2\eps)\cdot F(\vo) -\left(e^{t_s}-1-(\tau-t_s)\right) \cdot F(\vz \psum \vo)\right] & \tau \in [t_s,1) \enspace.
\end{array}\right.$ \\
Let $\vz(\tau) \triangleq \left\{\begin{array}{ll}\vz  & \tau\in [0,t_s) \\
\vzero & \tau \in [t_s,1) \enspace.
\end{array}\right.$\\

\BlankLine

Let $\vy(0) \gets \vzero$.\\
\ForEach{$\tau \in [0, 1)$}
{
  Let $\vw(\tau) \gets (\vone - \vy(\tau)-\vz(\tau)) \hprod \nabla F(\vy(\tau))$.\label{line:weight}\\
  Let $Q(\tau) \triangleq  \{\vx \in \cP \mid  \inner{\vw(\tau)}{\vx} \geq V(\tau) - F(\vy(\tau))\}$.\\ 
  Find a local maximum of $\vx(\tau) \in Q(\tau)$ (if $Q(\tau)=\varnothing$, set $\vx(\tau)$ to be an arbitrary vector in $P$).\\
	Increase $\vy(\tau)$ at a rate of $\frac{d\vy(\tau)}{d\tau} = (\vone - \vy(\tau)-\vz(\tau)) \odot \vx(\tau)$.\label{line:rate}
}
\Return the vector $\vy(1)$ and the function $\vx(\cdot)$.
\end{algorithm}

We begin the analysis of Algorithm~\ref{alg:aided_mcgreedy_continuous} by deriving a formula for $\vy(\tau)$, and then using this formula to show that $\vy(1) \in P$.
\begin{lemma} \label{lem:y_value}
For every $\tau \in [0, 1]$,
\[
	\vy(\tau)
	=\left\{
	\begin{array}{ll}
		\vone - \vz \psum e^{-\int_0^\tau \vx(s) ds} & \tau\in [0,t_s]\\
		\vone - \left(\vz \psum e^{-\int_0^{t_s} \vx(s) ds}\right) \hprod e^{-\int_{t_s}^\tau \vx(s) ds} & \tau \in [t_s, 1] \enspace.
	\end{array}\right.
\]
\end{lemma}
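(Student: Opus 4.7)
The proof plan is to solve the ODE for $\vy(\tau)$ specified on Line~\ref{line:rate} of Algorithm~\ref{alg:aided_mcgreedy_continuous} separately on each of the two intervals $[0, t_s]$ and $[t_s, 1]$, exploiting the fact that it decouples coordinate-wise into a family of scalar linear ODEs. On $[0, t_s]$ the function $\vz(\tau)$ equals the constant $\vz$, so defining the auxiliary quantity $\vu(\tau) \triangleq \vone - \vy(\tau) - \vz$ converts the dynamics into $\frac{d\vu(\tau)}{d\tau} = -\vu(\tau) \hprod \vx(\tau)$. Coordinate-wise this is $\frac{du_i(\tau)}{d\tau} = -u_i(\tau) \cdot x_i(\tau)$, whose solution with initial condition $\vu(0) = \vone - \vz$ (since $\vy(0) = \vzero$) is
\[
    \vu(\tau) \;=\; (\vone - \vz) \hprod e^{-\int_0^\tau \vx(s)\, ds}.
\]

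Rearranging gives $\vy(\tau) = \vone - \vz - (\vone - \vz)\hprod e^{-\int_0^\tau \vx(s)\,ds}$, and the definition $\va \psum \vb = \vone - (\vone - \va)\hprod(\vone - \vb)$ lets me rewrite this compactly as $\vy(\tau) = \vone - \vz \psum e^{-\int_0^\tau \vx(s)\,ds}$, matching the first case of the lemma.

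For the second interval $\tau \in [t_s, 1]$, I will use $\vz(\tau) = \vzero$ so the dynamics become $\frac{d\vy(\tau)}{d\tau} = (\vone - \vy(\tau)) \hprod \vx(\tau)$. Setting $\vu(\tau) \triangleq \vone - \vy(\tau)$ now yields $\frac{d\vu(\tau)}{d\tau} = -\vu(\tau) \hprod \vx(\tau)$ with the same coordinate-wise exponential solution. The initial condition at $\tau = t_s$ is read off from the first case: $\vu(t_s) = \vone - \vy(t_s) = \vz \psum e^{-\int_0^{t_s}\vx(s)\,ds}$. Therefore
\[
    \vu(\tau) \;=\; \bigl(\vz \psum e^{-\int_0^{t_s}\vx(s)\,ds}\bigr)\hprod e^{-\int_{t_s}^\tau \vx(s)\,ds},
\]
and taking the complement produces the claimed formula on $[t_s, 1]$.

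The main things requiring a moment of care, rather than being genuine obstacles, are: (i) verifying that the coordinate-wise solution to $\dot u_i = -u_i x_i$ is correct even though $\vx(\cdot)$ is only assumed integrable (so one solves the linear ODE in absolutely continuous sense, valid because each $x_i(\tau) \in [0,1]$ gives a uniformly bounded forcing); and (ii) checking that the two branches agree at $\tau = t_s$, which the computation above makes automatic since the initial condition used for the second branch is precisely the value produced by the first branch at $t_s$. No other subtleties arise, and all the vector algebra reduces to translations between the Hadamard-product form and the $\psum$ form via the identity $\vone - \va \psum \vb = (\vone - \va)\hprod(\vone - \vb)$.
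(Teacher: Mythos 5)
Your proposal is correct and follows essentially the same route as the paper's proof: both treat the dynamics as a decoupled linear ODE on each of the two intervals $[0,t_s]$ and $[t_s,1]$, solve it with boundary conditions $\vy(0)=\vzero$ and $\vy(t_s)$ inherited from the first branch, and rewrite the resulting Hadamard-product expressions via the identity $\vone - \va \psum \vb = (\vone - \va)\hprod(\vone - \vb)$. The paper simply states the solutions and asks the reader to verify, whereas you spell out the substitution $\vu(\tau)=\vone-\vy(\tau)-\vz(\tau)$ and the coordinate-wise integration explicitly.
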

\begin{proof}
In the range $[0, t_s)$, the vector $\vy(\tau)$ grows at a rate of $(\vone - \vy(\tau)-\vz(\tau)) \odot \vx(\tau) = (\vone - \vy(\tau)-\vz) \odot \vx(\tau)$. One can verify that, together with the boundary condition $\vy(0) = \vzero$, the solution for this differential equation is
\[
	\vy(\tau)
	=
	(\vone - \vz) \hprod \left(1 - e^{-\int_0^\tau \vx(s) ds}\right)
	=
	\vone - \vz \psum e^{-\int_0^\tau \vx(s) ds}
	\enspace.
\]

In the range $[t_s, 1)$, the vector $\vy(\tau)$ grows at a rate of $(\vone - \vy(\tau)-\vz(\tau)) \odot \vx(\tau) = (\vone - \vy(\tau)) \odot \vx(\tau)$. From the previous part of the proof, we get that the boundary condition for this range is $\vy(t_s) = \vone - \vz \psum e^{-\int_0^{t_s} \vx(s) ds}$, and therefore, the solution for the last differential equation is
\[
	\vone - \left(\vz \psum e^{-\int_0^{t_s} \vx(s) ds}\right) \hprod e^{-\int_{t_s}^\tau \vx(s) ds}
	\enspace.
	\qedhere
\]





\end{proof}
\begin{corollary} \label{cor:belonging}
The vector $\vy(1)$ belongs to $P$.
\end{corollary}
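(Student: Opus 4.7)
My plan is to use the explicit formula for $\vy(1)$ from Lemma~\ref{lem:y_value} together with the down-closedness and convexity of $P$. The key step is to show the coordinate-wise bound
\[
\vy(1) \leq \int_0^1 \vx(s)\, ds,
\]
after which the conclusion follows immediately: since $\vx(s) \in P$ for every $s$, the integral $\int_0^1 \vx(s)\, ds$ is a convex combination (a continuous average) of vectors in $P$, hence lies in $P$ by convexity; then down-closedness of $P$ gives $\vy(1) \in P$.

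To establish the coordinate-wise bound, I would plug in the formula from Lemma~\ref{lem:y_value} with $\tau = 1$, namely
\[
\vy(1) = \vone - \bigl(\vz \psum e^{-\int_0^{t_s}\vx(s) ds}\bigr) \hprod e^{-\int_{t_s}^{1}\vx(s) ds}.
\]
Fixing a coordinate $u \in \cN$ and writing $A \triangleq \int_0^{t_s} x_u(s)\, ds$ and $B \triangleq \int_{t_s}^{1} x_u(s)\, ds$, the formula expands to
\[
y_u(1) = 1 - \bigl(z_u + (1-z_u)e^{-A}\bigr)e^{-B} = z_u(1 - e^{-B}) + (1-z_u)\bigl(1 - e^{-(A+B)}\bigr).
\]
Using the elementary inequality $1 - e^{-t} \leq t$ for $t \geq 0$, each of the two summands is bounded by $z_u B$ and $(1-z_u)(A+B)$ respectively, giving $y_u(1) \leq z_u B + (1-z_u)(A+B) \leq A + B = \int_0^1 x_u(s)\, ds$. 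This is the sought coordinate-wise bound.

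The only mildly delicate point is justifying that the integral $\int_0^1 \vx(s)\, ds$ actually lies in $P$; this follows from convexity by approximating the integral with Riemann sums of vectors $\vx(s_i)\in P$, each sum being an explicit convex combination of elements of $P$, and then using that $P$ is closed (as a convex body). No step appears to be a real obstacle — the continuous-time formulation hides only routine measure-theoretic bookkeeping, which is permitted under the assumption stated at the start of Section~\ref{ssc:continuous_algorithm} that all relevant integrals are well-defined.
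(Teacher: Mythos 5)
Your proof is correct and follows essentially the same route as the paper's: derive the coordinate-wise bound $\vy(1) \leq \int_0^1 \vx(s)\,ds$, put the integral in $P$ by convexity, and invoke down-closedness (your coordinate-level expansion via $1-e^{-t}\leq t$ is a minor cosmetic variant of the paper's vector manipulation). The one small thing you left implicit is that down-closedness only yields $\vy(1)\in P$ once you also check $\vy(1)\geq\vzero$; this is immediate since $\vz$, $e^{-\int_0^{t_s}\vx(s)\,ds}$ and $e^{-\int_{t_s}^1\vx(s)\,ds}$ all lie in $[0,1]^\cN$, and the paper states it explicitly.
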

\begin{proof}
Observe that
\begin{align*}
	\vy(1)
	={} &
	\vone - \left(\vz \psum e^{-\int_0^{t_s} \vx(s) ds}\right) \hprod e^{-\int_{t_s}^1 \vx(s) ds}\\
	\leq{} &
	\vone - \left(\vone - \int_0^{t_s} \vx(s) ds\right) \hprod \left(\vone -\int_{t_s}^1 \vx(s) ds\right)
	\leq
	\int_0^{t_s} \vx(s) ds + \int_{t_s}^1 \vx(s) ds
	=
	\int_0^1 \vx(s) ds
	\enspace.
\end{align*}
Since $P$ is convex, the rightmost hand side $\int_0^1 \vx(s) ds$ of the last inequality belongs to $P$, and thus, $\vy(1)$ is upper bounded by a point in $P$. By the down-closeness of $P$, this implies $\vy(1) \in P$ because the fact that $\vz$, $e^{-\int_0^{t_s} \vx(s) ds}$ and $e^{-\int_{t_s}^1 \vx(s) ds}$ are all vectors in $[0, 1]^\cN$ implies $\vy(1) = \vone - \left(\vz \psum e^{-\int_0^{t_s} \vx(s) ds}\right) \hprod e^{-\int_{t_s}^1 \vx(s) ds} \geq \vzero$.
\end{proof}

In the rest of the analysis of Algorithm~\ref{alg:aided_mcgreedy_continuous}, we need to consider two cases. The first case is when $Q(\tau)$ is non-empty for every $\tau \in [0, 1)$.
The following lemma completes the proof of Theorem~\ref{thm:main-component} in this first case.

\begin{lemma} \label{cor:y_values}
If $Q(\tau)$ is non-empty for every $\tau \in [0, 1)$, then
\begin{align*}
	F(\vy(1)) \geq{} & e^{-1} \cdot \left[((2 - t_s)e^{t_s} - 1 - 2\eps) \cdot F(\vo) - (e^{t_s} - 1) \cdot F(\vz \hprod \vo)\vphantom{\frac{1}{2}}\right. \\&\mspace{200mu}- \left.\left((2 - t_s)e^{t_s} + t_s - \frac{t_s^2 + 5}{2}\right) \cdot F(\vz \psum \vo) \right]
  \enspace.
\end{align*}
and therefore, Theorem~\ref{thm:main-component} holds in this case.
\end{lemma}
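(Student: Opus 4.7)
The plan is to derive a differential inequality for $F(\vy(\tau))$ that is driven by $V(\tau)$, integrate it from $0$ to $1$, and verify by direct computation that the resulting bound coincides with the one claimed.

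First I would observe that, by the chain rule and the definition of $\vw(\tau)$ on Line~\ref{line:weight},
\[
\frac{dF(\vy(\tau))}{d\tau}
= \bigl\langle \nabla F(\vy(\tau)),\, (\vone - \vy(\tau) - \vz(\tau)) \odot \vx(\tau) \bigr\rangle
= \langle \vw(\tau), \vx(\tau) \rangle .
\]
Since $Q(\tau)$ is assumed to be non-empty, the algorithm chooses $\vx(\tau)$ as a local maximum \emph{inside} $Q(\tau)$, so in particular $\vx(\tau) \in Q(\tau)$ and hence $\langle \vw(\tau), \vx(\tau)\rangle \geq V(\tau) - F(\vy(\tau))$. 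This yields the differential inequality $\frac{d}{d\tau}\bigl(e^{\tau} F(\vy(\tau))\bigr) \geq e^{\tau} V(\tau)$. Integrating from $0$ to $1$ and using $F(\vy(\vzero)) \geq 0$ gives
\[
F(\vy(1)) \;\geq\; e^{-1} \int_0^1 e^{\tau} V(\tau)\, d\tau .
\]

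The remaining step is routine: split the integral at $t_s$ and plug in the two pieces of the definition of $V$. On $[0, t_s)$, $e^{\tau}V(\tau) = e^{\tau}F(\vo) - e^{\tau}F(\vz \hprod \vo) - (e^{\tau}-1) F(\vz \psum \vo)$, whose integral is $(e^{t_s}-1)[F(\vo)-F(\vz\hprod \vo)] - [(e^{t_s}-1) - t_s] F(\vz \psum \vo)$. On $[t_s, 1)$, $e^{\tau}V(\tau) = (e^{t_s}-2\eps)F(\vo) - (e^{t_s}-1-(\tau - t_s)) F(\vz \psum \vo)$, which integrates to $(1-t_s)(e^{t_s}-2\eps) F(\vo) - \bigl[(1-t_s)(e^{t_s}-1) - \tfrac{(1-t_s)^2}{2}\bigr] F(\vz \psum \vo)$. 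Summing these two contributions, the coefficient of $F(\vo)$ becomes $(2-t_s) e^{t_s} - 1 - 2\eps(1-t_s) \geq (2-t_s)e^{t_s} - 1 - 2\eps$, the coefficient of $F(\vz \hprod \vo)$ is $-(e^{t_s}-1)$, and the coefficient of $F(\vz \psum \vo)$ simplifies (via the identity $(2-t_s)(e^{t_s}-1) - t_s - \tfrac{(1-t_s)^2}{2} = (2-t_s) e^{t_s} + t_s - \tfrac{t_s^2+5}{2}$) to exactly $-\bigl((2 - t_s) e^{t_s} + t_s - \tfrac{t_s^2+5}{2}\bigr)$.

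Multiplying by $e^{-1}$ yields the inequality claimed in the lemma. Since this inequality is the first bullet of Theorem~\ref{thm:main-component} (with $2\eps$ in place of $O(\eps)$), Theorem~\ref{thm:main-component} holds in this case. I do not anticipate any real obstacle here: the argument is entirely driven by the guarantee $\vx(\tau) \in Q(\tau)$ (valid precisely because $Q(\tau) \neq \emptyset$), and the only non-trivial step is the arithmetic simplification of the $F(\vz \psum \vo)$ coefficient, which I would present as a short separate calculation to keep the display clean.
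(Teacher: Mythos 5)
Your proposal is correct and essentially matches the paper's proof: both rest on the differential inequality $\frac{d}{d\tau}F(\vy(\tau)) \geq V(\tau) - F(\vy(\tau))$, obtained from the chain rule and the membership $\vx(\tau) \in Q(\tau)$, and both use the integrating factor $e^{\tau}$. The only distinction is presentational --- the paper solves the resulting ODE in two stages with explicit constant matching at $\tau = t_s$, while you integrate $e^{\tau}V(\tau)$ directly over $[0,1]$ and split the integral at $t_s$; this is a slightly cleaner route and arrives at the same coefficients (your check of the $F(\vz\psum\vo)$ coefficient identity and the bound $2\eps(1-t_s)\leq 2\eps$ are both correct).
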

\begin{proof}
By the chain rule,
\begin{align} \label{eq:differential_equation}
	\frac{dF(\vy(\tau))}{d\tau}
	={} &
	\inner{\frac{d \vy(\tau)}{d\tau}}{\nabla F(\vy(\tau))}\\\nonumber
	={} &
	\inner{(\vone - \vy(\tau)-\vz(\tau)) \hprod \vx(\tau)}{\nabla F(\vy(\tau))}
	=
	\inner{\vw(\tau)}{\vx(\tau)}
	\geq
	V(\tau) - F(\vy(\tau))
	\enspace,
\end{align}
where the last inequality holds since we assume that $Q(\tau)$ is non-empty for every $\tau \in [0, 1)$, and therefore, $\vx(\tau)$ is a vector in $Q(\tau)$. This implies a differential inequality. Solving it for $\tau \in [0, t_s]$ yields
\[
    F(\vy(\tau)) \geq F(\vo) - F(\vz \hprod \vo) - F(\vz \psum \vo) + e^{-\tau}[\tau \cdot F(\vz \psum \vo) + c]
\]
for some constant $c$. Due to the limit condition $F(\vy(0)) \geq 0$, which hold by the non-negativity of $F$, we can set $c = F(\vz \hprod \vo) + F(\vz \psum \vo) - F(\vo)$ in the last inequality, yielding
\[
    F(\vy(\tau)) \geq (1 - e^{-\tau}) \cdot F(\vo) - (1 - e^{-\tau}) \cdot F(\vz \hprod \vo) - (1 - (1 + \tau)e^{-\tau}) \cdot F(\vz \psum \vo) \enspace.
\]
Recall that the last inequality holds for $\tau \in [0, t_s]$. In particular, for $\tau = t_s$, we get
\begin{equation} \label{eq:t_s_value}
    F(\vy(t_s)) \geq 
    (1 - e^{-t_s}) \cdot F(\vo) - (1 - e^{-t_s}) \cdot F(\vz \hprod \vo) - (1 - (1 + t_s)e^{-t_s}) \cdot F(\vz \psum \vo)
    \enspace.
\end{equation}
For $\tau \in [t_s, 1]$, the solution of the differential inequality \eqref{eq:differential_equation} is
\[
    F(\vy(\tau)) \geq e^{-\tau}[\tau ((e^{t_s} - 2\eps) \cdot F(\vo) - (e^{t_s} - 1 + t_s) \cdot F(\vz \psum \vo)) + (\tau^2/2) \cdot F(\vz \psum \vo) + c] \enspace,
\]
where $c$ is again some constant. This time we use Inequality~\eqref{eq:t_s_value} as a limit condition. To determine the value of $c$ that is justified by this limit condition, we need to solve the following equation.
\begin{align*}
    &
    e^{-t_s}[t_s ((e^{t_s} - 2\eps) \cdot F(\vo) - (e^{t_s} - 1 + t_s) \cdot F(\vz \psum \vo)) + (t_s^2/2) \cdot F(\vz \psum \vo) + c]\\
    ={} &
    (1 - e^{-t_s}) \cdot F(\vo) - (1 - e^{-t_s}) \cdot F(\vz \hprod \vo) - (1 - (1 + t_s)e^{-t_s}) \cdot F(\vz \psum \vo)
    \enspace.
\end{align*}
One can verify that the solution of this equation is
\[
    c = ((1 - t_s)e^{t_s} + 2\eps t_s - 1) \cdot F(\vo) - (e^{t_s} - 1) \cdot F(\vz \hprod \vo) - ((1 - t_s)e^{t_s} - 1 - t_s^2/2) \cdot F(\vz \psum \vo)
    \enspace.
\]
Plugging this value of $c$ into the above stated solution of the differential inequality for $\tau \in [t_s, 1]$, we get that, for this range of $\tau$ values,
\begin{align*}
    F(\vy(\tau)) \geq{}& e^{-\tau}[((1 + \tau - t_s) e^{t_s} - 2\eps(\tau - t_s) - 1) \cdot F(\vo) - (e^{t_s} - 1) \cdot F(\vz \hprod \vo) \\&- ((1 - t_s)e^{t_s} - 1 - (t_s^2 + \tau^2)/2 + \tau(e^{t_s} - 1 + t_s)) \cdot F(\vz \psum \vz)]\\
		\geq{} &
		e^{-\tau}[((1 + \tau - t_s) e^{t_s} - 1 - 2\eps) \cdot F(\vo) - (e^{t_s} - 1) \cdot F(\vz \hprod \vo) \\&- ((1 - t_s)e^{t_s} - 1 - (t_s^2 + \tau^2)/2 + \tau(e^{t_s} - 1 + t_s)) \cdot F(\vz \psum \vo)]
    \enspace.
\end{align*}
In particular, for $\tau = 1$,
\begin{align*}
    F(\vy(1)) \geq{} & e^{-1} \cdot \left[((2 - t_s)e^{t_s} - 1 - 2\eps) \cdot F(\vo) - (e^{t_s} - 1) \cdot F(\vz \hprod \vo)\vphantom{\frac{1}{2}}\right. \\&\mspace{200mu}- \left.\left((2 - t_s)e^{t_s} + t_s - \frac{t_s^2 + 5}{2}\right) \cdot F(\vz \psum \vo) \right]
    \enspace. \qedhere
\end{align*}
\end{proof}

From this point on, we consider the case when $Q(\tau)$ is empty for some $\tau \in [0, 1)$. Note that, when $Q(\tau)$ is empty, it (in particular) does not include $\vo$. Therefore, it is useful to lower bound $\inner{\vw(\tau)}{\vo}$. The following lemma lower bounds this expression with the values of $F$ at appropriately chosen points.
\begin{lemma} \label{lem:product_lower_bound}
For every $\tau \in [0, 1)$, $\inner{\vw(\tau)}{\vo} \geq F(\vy(\tau) \psum \vo - \vz(\tau) \hprod \vo) - F(\vy(\tau))$.
\end{lemma}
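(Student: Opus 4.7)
The plan is to unpack the definition $\vw(\tau) = (\vone - \vy(\tau) - \vz(\tau)) \hprod \nabla F(\vy(\tau))$ and reduce the claim to a single application of Property~\ref{prop:dr_bound2_up} of Lemma~\ref{lem:DR_properties}. Setting $\vy' \triangleq (\vone - \vy(\tau) - \vz(\tau)) \hprod \vo$, commutativity of the Hadamard product gives
\[
\inner{\vw(\tau)}{\vo} = \inner{\nabla F(\vy(\tau))}{\vy'},
\]
so it suffices to verify the hypotheses $\vy' \geq \vzero$ and $\vy(\tau) + \vy' \leq \vone$ of that property, and then to check the algebraic identity $\vy(\tau) + \vy' = \vy(\tau) \psum \vo - \vz(\tau) \hprod \vo$; the desired inequality follows at once.

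To verify $\vy' \geq \vzero$ I need $\vone - \vy(\tau) - \vz(\tau) \geq \vzero$, which splits into two cases according to the definition of $\vz(\tau)$. For $\tau \in [t_s, 1)$ we have $\vz(\tau) = \vzero$ and the bound is immediate from $\vy(\tau) \in [0,1]^\cN$. For $\tau \in [0, t_s)$ we have $\vz(\tau) = \vz$, and by Lemma~\ref{lem:y_value}, $\vone - \vy(\tau) = \vz \psum e^{-\int_0^\tau \vx(s) ds}$; using the elementary inequality $\va \psum \vb = \vone - (\vone - \va) \hprod (\vone - \vb) \geq \va$ (valid for any $\va, \vb \in [0, 1]^\cN$, since $(\vone - \va) \hprod (\vone - \vb) \leq \vone - \va$), this yields $\vone - \vy(\tau) \geq \vz = \vz(\tau)$.

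The remaining algebra is routine: expanding $\vy(\tau) \psum \vo = \vy(\tau) + \vo - \vy(\tau) \hprod \vo$ gives
\[
\vy(\tau) + \vy' = \vy(\tau) + (\vone - \vy(\tau) - \vz(\tau)) \hprod \vo = \vy(\tau) \psum \vo - \vz(\tau) \hprod \vo,
\]
and this vector is at most $\vone$ because $\vy(\tau) \psum \vo \leq \vone$ while $\vz(\tau) \hprod \vo \geq \vzero$. There is no real obstacle here; the only substantive point is the first-segment bound $\vone - \vy(\tau) \geq \vz$, which is really the observation that Algorithm~\ref{alg:aided_mcgreedy_continuous} is designed so that, during the initial ``aided'' phase, the growing solution $\vy(\tau)$ never encroaches on the coordinates already occupied by $\vz$.
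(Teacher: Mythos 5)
Your proof is correct and follows essentially the same route as the paper's: both reduce the claim to a single application of Property~\ref{prop:dr_bound2_up} of Lemma~\ref{lem:DR_properties} via the commutativity of the Hadamard product and the identity $\vy(\tau) + (\vone - \vy(\tau) - \vz(\tau)) \hprod \vo = \vy(\tau) \psum \vo - \vz(\tau) \hprod \vo$. The only difference is that you explicitly verify the hypotheses $\vy' \geq \vzero$ (via Lemma~\ref{lem:y_value} and the bound $\va \psum \vb \geq \va$) and $\vy(\tau) + \vy' \leq \vone$, which the paper takes for granted; that is a reasonable piece of diligence but does not change the argument.
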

\begin{proof}
Property~\ref{prop:dr_bound2_up} of Lemma~\ref{lem:DR_properties} implies that
\begin{align*}
	\inner{\vw(\tau)}{\vo}
	={} &
	\inner{(\vone - \vy(\tau) - \vz(\tau)) \hprod \nabla F(\vy(\tau))}{\vo}
	=
	\inner{\nabla F(\vy(\tau))}{(\vone - \vy(\tau) - \vz(\tau)) \hprod \vo}\\
	\geq{} &
	F(\vy(\tau) + (\vone - \vy(\tau) - \vz(\tau)) \hprod \vo) - F(\vy(\tau))
	=
	F(\vy(\tau) \psum \vo - \vz(\tau) \hprod \vo) - F(\vy(\tau))
	\enspace.
	\qedhere
\end{align*}
\end{proof}

Given the last lemma, our next goal is to lower bound $F(\vy(\tau) \psum \vo - \vz(\tau) \hprod \vo)$. To achieve this goal, we first need to prove the following technical lemma.
\begin{lemma} \label{lem:technical_less_ts}
For every vector $\va \in [0, 1]^\cN$ obeying $\va \leq \vone - \vo$ and vector $\vo' \in [0, 1]^\cN$ obeying $\vo' \leq \vo$,
\[
	F(\vo' + (\vone - \vz) \hprod \va)
	\geq
	F(\vo) + F(\vzero) - F(\vz \psum \vo) - F(\vo - \vo')
	\geq
	F(\vo) - F(\vz \psum \vo) - F(\vo - \vo')
	\enspace.
\]
\end{lemma}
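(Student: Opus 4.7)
The plan is to chain three applications of the standard diminishing-returns inequality for DR-submodular functions: whenever $\vu \leq \vv$, $\vw \geq \vzero$, and $\vv + \vw \leq \vone$,
\[
    F(\vu + \vw) + F(\vv) \;\geq\; F(\vu) + F(\vv + \vw).
\]
This is an immediate consequence of Properties~\ref{prop:dr_bound2_up} and~\ref{prop:dr_bound2_down} of Lemma~\ref{lem:DR_properties} combined with the antitone-gradient characterization of DR-submodularity (equivalently, it follows by integrating the coordinate-wise definition along $\vw$).

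First, I will invoke the inequality above with $\vu = \vo'$, $\vv = \vo$, and $\vw = (\vone - \vz) \hprod \va$. The hypotheses hold because $\vo' \leq \vo$ and $\va \leq \vone - \vo$, and the conclusion rearranges to
\[
    F(\vo' + (\vone - \vz) \hprod \va) \;\geq\; F(\vo') + F(\vo + (\vone - \vz) \hprod \va) - F(\vo),
\]
so it remains to lower-bound the two summands on the right separately.

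Second, I will invoke the diminishing-returns inequality again, this time with $\vu = \vo$, $\vv = \vz \psum \vo = \vo + \vz \hprod (\vone - \vo)$, and the same $\vw = (\vone - \vz) \hprod \va$. The feasibility condition $\vv + \vw \leq \vone$ reduces to $(\vone - \vz) \hprod \va \leq (\vone - \vz) \hprod (\vone - \vo)$, which holds because $\va \leq \vone - \vo$. Together with non-negativity of $F$ (used to drop the resulting term $F(\vz \psum \vo + (\vone - \vz) \hprod \va)$), this yields
\[
    F(\vo + (\vone - \vz) \hprod \va) \;\geq\; F(\vo) - F(\vz \psum \vo).
\]
Third, I will invoke the inequality one last time with $\vu = \vzero$, $\vv = \vo - \vo'$, and $\vw = \vo'$, so that $\vv + \vw = \vo$, obtaining
\[
    F(\vo') \;\geq\; F(\vo) + F(\vzero) - F(\vo - \vo').
\]

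Chaining these three estimates and cancelling the $F(\vo)$ and $F(\vo')$ terms produces the first displayed inequality of the lemma; the second inequality is then immediate from $F(\vzero) \geq 0$. I do not anticipate any technical obstacle: the argument is essentially bookkeeping built from three instances of a single classical DR-submodularity identity, and the only non-routine checks are the three feasibility conditions $\vv + \vw \leq \vone$ verified above.
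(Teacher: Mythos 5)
Your proof is correct and follows essentially the same route as the paper's: both rely on repeated applications of the DR ``parallelogram'' inequality $F(\vu + \vw) + F(\vv) \geq F(\vu) + F(\vv + \vw)$ (for $\vu \leq \vv$, $\vw \geq \vzero$, $\vv + \vw \leq \vone$) followed by dropping a nonnegative term. The only cosmetic difference is that the paper merges your steps 1 and 3 into a single application (comparing the marginal of $\vo - \vo'$ at bases $\vzero$ and $\vo' + (\vone - \vz) \hprod \va$, which directly links $F(\vo'+(\vone-\vz)\hprod\va)$, $F(\vo+(\vone-\vz)\hprod\va)$, $F(\vzero)$, and $F(\vo-\vo')$), whereas you reach the same intermediate bound in two moves; your step 2 is the paper's second inequality verbatim.
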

\begin{proof}
By the DR-submodularity of $F$,
\begin{align*}
	F(\vo' + (\vone - \vz) \hprod \va)
	\geq{} &
	F(\vo + (\vone - \vz) \hprod \va) + F(\vzero) - F(\vo - \vo')\\
	\geq{} &
	F(\vo) + F(\vz \psum (\vo + \va)) - F(\vz \psum \vo) + F(\vzero) - F(\vo - \vo')\\
	\geq{} &
	F(\vo) + F(\vzero) - F(\vz \psum \vo) - F(\vo - \vo')
	\enspace,
\end{align*}
where the last inequality follows from the non-negativity of $F$. This proves the first inequality of the lemma. The second inequality of the lemma also follows from the non-negativity of $F$.
\end{proof}

We are now ready to lower bound $F(\vy(\tau) \psum \vo - \vz(\tau) \hprod \vo)$ in the range $[0, t_s)$. In Corollary~\ref{cor:low_Q_non-empty}, we use this lower bound to show that $Q(\tau)$ is never empty when $\tau \in [0, t_s)$.
\begin{lemma} \label{lem:first_o_guarantee}
For every $\tau \in [0, t_s)$,
\begin{align*}
	F(\vy(\tau) \psum \vo - \vz(\tau) \hprod \vo)
	={} &
	F(\vy(\tau) \psum \vo - \vz \hprod \vo)\\
	\geq{} &
	F(\vo) - F(\vz \hprod \vo) - (1 - e^{-\tau}) \cdot F(\vz \psum \vo)
	=
	V(\tau)
	\enspace.
\end{align*}
\end{lemma}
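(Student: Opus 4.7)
The plan is to decompose the point $\vy(\tau) \psum \vo - \vz \hprod \vo$ into the form required by Lemma~\ref{lem:technical_less_ts}, then to retain an intermediate term that the proof of that lemma would normally drop via non-negativity, and finally to control that intermediate term using Lemma~\ref{lem:mathematical}.

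The first equality in the statement is immediate because $\vz(\tau) = \vz$ throughout $[0, t_s)$. For the main inequality, let $\vu \triangleq e^{-\int_0^\tau \vx(s) ds}$, so Lemma~\ref{lem:y_value} gives $\vy(\tau) = (\vone - \vz) \hprod (\vone - \vu)$. A direct computation then yields
\[
\vy(\tau) \psum \vo - \vz \hprod \vo = (\vone - \vz) \hprod \vo + (\vone - \vz) \hprod \va, \qquad \va \triangleq (\vone - \vu) \hprod (\vone - \vo),
\]
so with $\vo' \triangleq (\vone - \vz) \hprod \vo$ we have $\vo' \leq \vo$ and $\va \leq \vone - \vo$, matching the hypotheses of Lemma~\ref{lem:technical_less_ts}.

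I would then repeat the two DR-submodularity steps in the proof of Lemma~\ref{lem:technical_less_ts}, but without discarding the term $F(\vz \psum (\vo + \va))$ at the end, obtaining
\[
F(\vy(\tau) \psum \vo - \vz \hprod \vo) \geq F(\vo) + F(\vz \psum (\vo + \va)) - F(\vz \psum \vo) + F(\vzero) - F(\vz \hprod \vo),
\]
where I used $\vo - \vo' = \vz \hprod \vo$. It remains to show that $F(\vz \psum (\vo + \va)) \geq e^{-\tau} \cdot F(\vz \psum \vo)$.

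This is exactly what Lemma~\ref{lem:mathematical} is designed to provide. Since $\vo + \va = \vone - \vu \hprod (\vone - \vo)$, we have $\vz \psum (\vo + \va) = \vone - (\vone - \vz) \hprod (\vone - \vo) \hprod \vu$. Applying Lemma~\ref{lem:mathematical} with its parameters set to $\va_{\mathrm{lem}} = (\vone - \vz) \hprod (\vone - \vo) \in [0, 1]^\cN$, $t = \tau$, and $\vx_{\mathrm{lem}}(\cdot) = \vx(\cdot)$, the left-hand side becomes exactly $F(\vz \psum (\vo + \va))$, while the zeroth term on the right becomes $e^{-\tau} \cdot F(\vone - (\vone - \vz) \hprod (\vone - \vo)) = e^{-\tau} \cdot F(\vz \psum \vo)$. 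Dropping the remaining infinite sum by non-negativity yields the desired bound. Substituting back and using $F(\vzero) \geq 0$ then gives $F(\vy(\tau) \psum \vo - \vz \hprod \vo) \geq F(\vo) - F(\vz \hprod \vo) - (1 - e^{-\tau}) F(\vz \psum \vo) = V(\tau)$. The main subtlety is spotting that $\vz \psum (\vo + \va)$ has the precise form $\vone - \va_{\mathrm{lem}} \hprod e^{-\int \vx}$ demanded by Lemma~\ref{lem:mathematical}; once that matching is identified the rest is mechanical.
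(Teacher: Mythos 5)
Your proof is correct, and it takes a genuinely different route from the paper's. The paper applies the generalized Lemma~\ref{lem:mathematical_general} (with $h=2$, $\vb^{(1)} = \vz$, $\vb^{(2)} = \vone - \vz$) directly to $F(\vy(\tau) \psum \vo - \vz \hprod \vo)$, producing an infinite sum whose terms are then each bounded via Lemma~\ref{lem:technical_less_ts}, and the geometric-like sum is resummed to give the $(1-e^{-\tau})$ factor. You instead first perform the DR-submodularity decomposition of Lemma~\ref{lem:technical_less_ts} (keeping the term $F(\vz \psum (\vo + \va))$ that the paper's version of that lemma drops), which isolates the single quantity that needs to carry the $e^{-\tau}$ factor, and then apply only the basic $h=1$ Lemma~\ref{lem:mathematical} to that one term. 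This avoids invoking Lemma~\ref{lem:mathematical_general} altogether for this particular lemma, which is a modest simplification, and the algebra is tidier since the infinite sum is discarded wholesale rather than resummed. The cost is that your argument requires re-deriving an intermediate bound inside the proof of Lemma~\ref{lem:technical_less_ts} rather than using it as a black box. Both approaches rely on the same underlying facts and yield the same constant; neither generalizes more readily than the other to Lemma~\ref{lem:second_o_guarantee}, where the surviving $\int_{t_s}^\tau F(\vo \psum \vx(s))\,ds$ term genuinely requires keeping the $i=1$ terms of the infinite sum and hence the $h=2$ machinery.
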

\begin{proof}
By Lemma~\ref{lem:y_value},
\begin{align*}
	F(\vy(\tau) \psum \vo - {}&\vz \hprod \vo)
	=
	F\left(\left(\vone - \vz \psum e^{-\int_0^\tau \vx(s) ds}\right) \psum \vo - \vz \hprod \vo\right)\\
	={} &
	F\left(\vone - \vz \hprod \vone \hprod e^{-\int_0^\tau \vzero ds} - (\vone - \vz) \hprod (\vone - \vo) \hprod e^{-\int_0^\tau \vx(s) ds}\right)\\
	\geq{} &
	e^{-\tau} \cdot \left[F(1 - \vz \hprod \vone - (\vone - \vz) \hprod (\vone - \vo)) \vphantom{\left(\PSum_{j = 1}^{i}\right)}\right.\\&+\left. \sum_{i = 1}^{\infty} \frac{1}{i!} \cdot \int_{\vs \in [0, \tau]^i} F\left(\vz \hprod \left((\vone - \vone) \psum \PSum_{j = 1}^i \vzero\right) + (\vone - \vz) \hprod \left(\vo \psum \PSum_{j = 1}^i \vx(s)\right)\right) d\vs \right]\\
	={} &
	e^{-\tau} \cdot \left[F((1 - \vz) \hprod \vo) + \sum_{i = 1}^{\infty} \frac{1}{i!} \cdot \int_{\vs \in [0, \tau]^i} F\left((1 - \vz) \hprod \left(\vo \psum \PSum_{j=1}^{i} \vx(s_j)\right)\right) d\vs \right]
	\enspace,
\end{align*}
where the inequality follows from Lemma~\ref{lem:mathematical_general} by setting $h = 2$, $\vb^{(1)} = \vz$ and $\vb^{(2)} = \vone - \vz$.
Note that the the expression $F\left((1 - \vz) \hprod \left(\vo \psum \PSum_{j=1}^{i} \vx(s_j)\right)\right)$ in the rightmost side of the last inequality can be lower bounded by $F(\vo) - F(\vz \psum \vo) - F(\vz \hprod \vo)$ due to Lemma~\ref{lem:technical_less_ts} (by setting $\vo' = (1 - \vz) \hprod \vo$). Additionally, the DR-submodularity and non-negativity of $F$ imply together $F((1 - \vz) \hprod \vo) \geq F(\vo) + F(\vzero) - F(\vz \hprod \vo) \geq F(\vo) - F(\vz \hprod \vo)$.

Combining all the above inequalities, we get
\begin{align*}
	F(\vy(\tau) \psum \vo - \vz \hprod \vo)
	\geq{} &
	e^{-\tau} \cdot \left[F(\vo) - F(\vz \hprod \vo) + \sum_{i = 1}^{\infty} \frac{1}{i!} \cdot \int_{\vs \in [0, \tau]^i} \mspace{-18mu} \{F(\vo) - F(\vz \psum \vo) - F(\vz \hprod \vo)\} d\tau \right]\\
	={} &
	e^{-\tau} \cdot \left[F(\vo) - F(\vz \hprod \vo) + \{F(\vo) - F(\vz \psum \vo) - F(\vz \hprod \vo)\} \cdot \sum_{i = 1}^{\infty} \frac{\tau^i}{i!} \right]\\
	={} &
	e^{-\tau} \cdot \left[F(\vo) - F(\vz \hprod \vo) + \{F(\vo) - F(\vz \psum \vo) - F(\vz \hprod \vo)\} \cdot (e^\tau - 1) \right]\\
	={} &
	F(\vo) - F(\vz \hprod \vo) - (1 - e^{-\tau}) \cdot F(\vz \psum \vo)
	\enspace.
	\qedhere
\end{align*}
\end{proof}

\begin{corollary} \label{cor:low_Q_non-empty}
For every $\tau \in [0, t_s)$, the vector $\vo$ belongs to $Q(\tau)$, and therefore, $Q(\tau)$ is non-empty.
\end{corollary}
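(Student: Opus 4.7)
The plan is to simply chain together Lemma~\ref{lem:product_lower_bound} and Lemma~\ref{lem:first_o_guarantee}. Recall that, by definition, $\vo \in Q(\tau)$ if and only if $\vo \in P$ (which holds by definition of $\vo$) and $\inner{\vw(\tau)}{\vo} \geq V(\tau) - F(\vy(\tau))$, so only the latter inequality requires a proof.

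First I would invoke Lemma~\ref{lem:product_lower_bound} to obtain
\[
\inner{\vw(\tau)}{\vo} \geq F(\vy(\tau) \psum \vo - \vz(\tau) \hprod \vo) - F(\vy(\tau)).
\]
Then, using the fact that $\vz(\tau) = \vz$ throughout the range $\tau \in [0, t_s)$ (by the definition of $\vz(\cdot)$ in Algorithm~\ref{alg:aided_mcgreedy_continuous}), I would apply Lemma~\ref{lem:first_o_guarantee}, which gives
\[
F(\vy(\tau) \psum \vo - \vz(\tau) \hprod \vo) = F(\vy(\tau) \psum \vo - \vz \hprod \vo) \geq V(\tau).
\]
Combining these two inequalities yields $\inner{\vw(\tau)}{\vo} \geq V(\tau) - F(\vy(\tau))$, which, together with $\vo \in P$, establishes $\vo \in Q(\tau)$ and thus that $Q(\tau)$ is non-empty.

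There is no real obstacle here — the corollary is a direct consequence of the two preceding lemmas, and all of the substantive work (bounding $\inner{\vw(\tau)}{\vo}$ from below by an $F$-difference, and in turn bounding $F(\vy(\tau) \psum \vo - \vz \hprod \vo)$ from below by $V(\tau)$ using Lemma~\ref{lem:mathematical_general} and Lemma~\ref{lem:technical_less_ts}) has already been carried out. Thus I would expect the formal proof to be a two- or three-line computation.
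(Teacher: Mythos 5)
Your proposal is correct and matches the paper's proof exactly: both observe that $\vo \in P$ reduces the claim to the inequality $\inner{\vw(\tau)}{\vo} \geq V(\tau) - F(\vy(\tau))$, which follows by chaining Lemma~\ref{lem:product_lower_bound} with Lemma~\ref{lem:first_o_guarantee}.
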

\begin{proof}
Since $\vo \in P$, to prove the corollary we only need to show that $\inner{\vw(\tau)}{\vo} \geq V(\tau) - F(\vy(\tau))$, which follows by combining Lemmata~\ref{lem:product_lower_bound} and~\ref{lem:first_o_guarantee}.
\end{proof}

We now shift our attention to values of $\tau$ above $t_s$. The next lemma lower bounds $F(\vy(\tau) \psum \vo - \vz(\tau) \hprod \vo)$ within this range of $\tau$ values.
\begin{lemma} \label{lem:second_o_guarantee}
For every $\tau \in [t_s, 1)$,
\begin{align*}
	F(\vy(\tau) \psum \vo - \vz(\tau) \hprod \vo)
	={} &
	F(\vy(\tau) \psum \vo)\\
	\geq{} &
	e^{-\tau} \cdot \left[e^{t_s} \cdot F(\vo) - (e^{t_s} - 1) \cdot F(\vz \psum \vo) + \int_{t_s}^\tau F(\vo \psum \vx(s)) ds \right]
	\enspace.
\end{align*}
\end{lemma}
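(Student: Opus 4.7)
The equality in the lemma statement is immediate: for $\tau \in [t_s, 1)$, Algorithm~\ref{alg:aided_mcgreedy_continuous} sets $\vz(\tau) = \vzero$, so $\vz(\tau) \hprod \vo = \vzero$. The nontrivial content is the inequality, which I plan to establish by a two-level application of the bounds from Section~\ref{ssc:bounds_continuous}, bootstrapping from the post-$t_s$ evolution of $\vy$ back down to the pre-$t_s$ portion of the trajectory.

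The first step is to rewrite, via Lemma~\ref{lem:y_value}, the vector $\vone - (\vy(\tau) \psum \vo) = (\vone - \vo) \hprod (\vone - \vy(t_s)) \hprod e^{-\int_{t_s}^\tau \vx(s) ds}$ for $\tau \geq t_s$. This has exactly the form required by Lemma~\ref{lem:mathematical} (after a trivial time shift) with $\va = (\vone - \vo) \hprod (\vone - \vy(t_s))$, and since $\vone - \va = \vo \psum \vy(t_s)$, that lemma yields
\[
    F(\vy(\tau) \psum \vo) \geq e^{-(\tau - t_s)}\left[F(\vo \psum \vy(t_s)) + \sum_{i \geq 1} \frac{1}{i!} \int_{[t_s, \tau]^i} F\left((\vo \psum \vy(t_s)) \psum \PSum_{j=1}^i \vx(s_j)\right) d\vs\right].
\]
I would then drop every $i \geq 2$ term by non-negativity of $F$ and focus on lower-bounding separately the $i = 0$ piece $F(\vo \psum \vy(t_s))$ and the $i = 1$ integrand $F((\vo \psum \vy(t_s)) \psum \vx(s_1))$.

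Both of these sub-bounds follow the same template: invoke Lemma~\ref{lem:y_value} once more (now in the range $[0, t_s]$) to write $\vy(t_s) = (\vone - \vz) \hprod (\vone - e^{-\int_0^{t_s} \vx(s) ds})$, so that the $\vz$-dependence appears through a factor $\vz \psum e^{-\int_0^{t_s} \vx(s) ds}$ that splits naturally into the $\vz$ and $\vone - \vz$ summands. For $F(\vo \psum \vy(t_s))$ itself, I would apply Lemma~\ref{lem:mathematical_general} with $h = 2$, $\vb^{(1)} = \vz$, $\vb^{(2)} = \vone - \vz$, $\va^{(1)} = \va^{(2)} = \vone - \vo$, $\vx^{(1)} \equiv \vzero$, and $\vx^{(2)} = \vx$; after simplification the inner integrand reduces to $F(\vo \psum ((\vone - \vz) \hprod \PSum_j \vx(s_j)))$, which Lemma~\ref{lem:technical_less_ts} (invoked with $\vo' = \vo$) lower-bounds by $F(\vo) - F(\vz \psum \vo)$. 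Summing the resulting Taylor-type series using $\sum_{i \geq 0} t_s^i/i! = e^{t_s}$ produces the clean bound $F(\vo \psum \vy(t_s)) \geq F(\vo) - (1 - e^{-t_s}) F(\vz \psum \vo)$. For the $i = 1$ integrand I would apply the identical decomposition, but to the non-negative DR-submodular function $\tilde{F}(\vx) \triangleq F(\vx \psum \vx(s_1))$ (non-negative and DR-submodular by Lemma~\ref{lem:DR_submodular}); this time I simply drop every non-leading term in the Lemma~\ref{lem:mathematical_general} expansion, which suffices to give $F((\vo \psum \vy(t_s)) \psum \vx(s_1)) \geq e^{-t_s} F(\vo \psum \vx(s_1))$.

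Substituting the two sub-bounds into the main inequality, the exponential factors combine cleanly: $e^{-(\tau - t_s)}$ times the $F(\vo \psum \vy(t_s))$ bound yields exactly $e^{-\tau}[e^{t_s} F(\vo) - (e^{t_s} - 1) F(\vz \psum \vo)]$, while $e^{-(\tau - t_s)} \cdot e^{-t_s} = e^{-\tau}$ converts the surviving $i = 1$ integral into the target's $e^{-\tau} \int_{t_s}^\tau F(\vo \psum \vx(s)) ds$ term. The delicate point — and the reason this approach works — is the second application of Lemma~\ref{lem:mathematical_general}: the $e^{-t_s}$ factor it produces for $\tilde{F}(\vo) = F(\vo \psum \vx(s_1))$ is precisely what is needed to combine with the $e^{-(\tau - t_s)}$ sitting outside the integral, so that dropping the higher-order terms at both levels is genuinely harmless. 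Verifying this cancellation carefully, and confirming that all exponential factors line up, is the only real technical obstacle I anticipate.
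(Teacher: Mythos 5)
Your proof is correct, and it takes a genuinely different route from the paper's. The paper applies Lemma~\ref{lem:mathematical_general} \emph{once} over the full time interval $[0,\tau]$, using the auxiliary pair of driver functions $\vx'(\cdot)$ (zero before $t_s$, equal to $\vx$ after) and $\vx(\cdot)$; it then lower-bounds the resulting multi-dimensional integrals over $\vs\in[0,\tau]^i$ by a case analysis on which coordinates of $\vs$ fall below or above $t_s$, keeping only the $i=1$, $s_1\geq t_s$ terms intact and bounding the all-below-$t_s$ terms via Lemma~\ref{lem:technical_less_ts}. You instead apply the bounds in a nested, two-stage fashion: first Lemma~\ref{lem:mathematical} over $[t_s,\tau]$ with $\va=\vone-(\vo\psum\vy(t_s))$ (a trivial time shift of the lemma), dropping $i\geq 2$ terms outright; then a second application of Lemma~\ref{lem:mathematical_general} over $[0,t_s]$ to lower-bound $F(\vo\psum\vy(t_s))$ by $F(\vo)-(1-e^{-t_s})F(\vz\psum\vo)$ (the same computation appearing, essentially, in the proof of Lemma~\ref{lem:first_o_guarantee}), and a truncated ($k=0$) application of the same lemma to $\tilde F(\vc)=F(\vc\psum\vx(s_1))$ to get $F((\vo\psum\vy(t_s))\psum\vx(s_1))\geq e^{-t_s}F(\vo\psum\vx(s_1))$. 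The exponential factors $e^{-(\tau-t_s)}$ and $e^{-t_s}$ then recombine to give exactly the paper's bound, so nothing is lost. Your approach trades the paper's one-shot hypercube case analysis for a cleaner hierarchical structure, and it makes the mechanism explicit: the pre-$t_s$ history contributes the $(e^{t_s}\cdot F(\vo)-(e^{t_s}-1)\cdot F(\vz\psum\vo))$ block through the $i=0$ term alone, while the post-$t_s$ integrand survives from the $i=1$ term. The only mild redundancy is that the lower bound on $F(\vo\psum\vy(t_s))$ re-derives what is effectively the $\tau=t_s$ limit of Lemma~\ref{lem:first_o_guarantee}, so a more economical write-up could cite that lemma instead of repeating the argument.
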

\begin{proof}
For notational convenience, let us define $\vx'(s_j)$ as equal to $\vx(s_j)$ for $s_j \geq t_s$ and equal to $\vzero$ otherwise. Then, by Lemma~\ref{lem:y_value},
\begin{align*}
	F(\vy(\tau) \psum \vo)
	={} &
	F\left(\left(\vone - \left(\vz \psum e^{-\int_0^{t_s} \vx(s) ds}\right) \hprod e^{-\int_{t_s}^\tau \vx(s) ds}\right) \psum \vo\right)\\
	={} &
	F\left(\vone - \vz \hprod (\vone - \vo) \hprod e^{-\int_{0}^\tau \vx'(s) ds}-  (\vone - \vz) \hprod (\vone - \vo) \hprod e^{-\int_0^\tau \vx(s) ds}\right)\\
	\geq{} &
	e^{-\tau} \cdot \left[F(\vone - \vz \hprod (\vone - \vo) + (\vone - \vz) \hprod (\vone - \vo)) \vphantom{\left(\PSum_{j = 1}^{i}\right)} \right.\\&\left.+ \sum_{i = 1}^{\infty} \frac{1}{i!} \cdot \int_{\vs \in [0, \tau]^i} F\left(\vz \hprod \left(\vo \psum \PSum_{j=1}^{i} \vx'(s_j)\right) + (\vone - \vz) \hprod \left( \vo \psum \PSum_{j=1}^{i} \vx(s_j)\right)\right) d\vs \right]
	\enspace,
\end{align*}
where the inequality follows from Lemma~\ref{lem:mathematical_general} by setting $h = 2$, $\vb^{(1)} = \vz$ and $\vb^{(2)} = \vone - \vz$.

At this point we need to lower bound the expression $F(\vz \hprod (\vo \psum \PSum_{j=1}^{i} \vx'(s_j)) + (1 - \vz) \hprod (\vo \psum \PSum_{j=1}^{i} \vx(s_j)))$ from the rightmost side of the last inequality. The lower bound we use for that purpose depends on the value of $i$ and the vector $\vs$. If $i = 1$ and the only coordinate of $\vs$ is at least $t_s$, then we observe that the above expression is equal to $F(\vo \psum \vx(s_1))$. Otherwise, if $i \geq 2$ and some coordinate of $\vs$ takes a value of at least $t_s$, then we simply use the non-negativity of $F$ to lower bound the above expression by $0$. It remains to lower bound the above expression in the case of $\vs \in [0, t_s)^i$. In this case,
\begin{align} \label{eq:basic_lower_bound_above_ts}
	F\left(\vz \hprod \left(\vo \psum \PSum_{j=1}^{i} \vx'(s_j)\right) + (1 - \vz) \hprod \left(\vo \psum \PSum_{j=1}^{i} \vx(s_j)\right)\right)\mspace{-300mu}&\mspace{300mu}\\\nonumber
	={} &
	F\left(\vo + (1 - \vz) \hprod (1 - \vo) \hprod \PSum_{j=1}^{i} \vx(s_j)\right)
	\geq
	F(\vo) - F(\vz \psum \vo)
	\enspace,
\end{align}
where the inequality follows from Lemma~\ref{lem:technical_less_ts} (by setting $\vo' = \vo$).

Plugging into Inequality~\eqref{eq:basic_lower_bound_above_ts} the lower bounds discussed above and the equality $\vone - \vz \hprod (\vone - \vo) + (\vone - \vz) \hprod (\vone - \vo) = \vo$, we obtain
{\allowdisplaybreaks\begin{align*}
	F(\vy(\tau) \psum \vo)
	\geq{} &
	e^{-\tau} \cdot \left[F(\vo) + \int_{t_s}^\tau F(\vo \psum \vx(s)) ds + \sum_{i = 1}^{\infty} \frac{1}{i!} \cdot \int_{\vs \in [0, t_s]^i} \mspace{-18mu} \{F(\vo) - F(\vz \psum \vo)\} d\tau \right]\\
	={} &
	e^{-\tau} \cdot \left[F(\vo) + \int_{t_s}^\tau F(\vo \psum \vx(s)) ds + \{F(\vo) - F(\vz \psum \vo)\} \cdot \sum_{i = 1}^{\infty} \frac{t_s^i}{i!} \right]\\
	={} &
	e^{-\tau} \cdot \left[F(\vo) + \int_{t_s}^\tau F(\vo \psum \vx(s)) ds + \{F(\vo) - F(\vz \psum \vo)\} \cdot (e^{t_s} - 1) \right]\\
	={} &
	e^{-\tau} \cdot \left[e^{t_s} \cdot F(\vo) - (e^{t_s} - 1) \cdot F(\vz \psum \vo) + \int_{t_s}^\tau F(\vo \psum \vx(s)) ds \right]
	\enspace.
	\qedhere
\end{align*}}%
\end{proof}

\begin{corollary} \label{cor:good_below_empty}
If $\vo \not \in Q(\tau)$ for some $\tau \in [t_s, 1)$, then there exists a value $\tau' \in [t_s, \tau - \eps / 2]$ such that $F(\vx(\tau') \psum \vo) \leq F(\vz \psum \vo) - \eps \cdot F(\vo)$.
\end{corollary}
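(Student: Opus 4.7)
The plan is to unpack the hypothesis $\vo \notin Q(\tau)$ into an integral inequality involving the directions $\vx(s)$ chosen on $[t_s,\tau]$, and then argue by contradiction that a sufficiently ``damaging'' $\vx(\tau')$ must appear in the truncated interval $[t_s,\tau-\eps/2]$.

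First I will translate the hypothesis. Since $\vo\in P$ but $\vo\notin Q(\tau)$, we have $\inner{\vw(\tau)}{\vo}<V(\tau)-F(\vy(\tau))$. Combining this with Lemma~\ref{lem:product_lower_bound} and the fact that $\vz(\tau)=\vzero$ for $\tau\geq t_s$ yields $F(\vy(\tau)\psum\vo)<V(\tau)$. I will then plug in the lower bound from Lemma~\ref{lem:second_o_guarantee} on the left and the explicit formula for $V(\tau)$ on the right; the leading $e^{t_s}F(\vo)$ and $-(e^{t_s}-1)F(\vz\psum\vo)$ terms cancel, and after multiplying by $e^{\tau}$ and rearranging, what remains is
\begin{equation*}
\int_{t_s}^{\tau}\bigl[F(\vz\psum\vo)-F(\vo\psum\vx(s))\bigr]\,ds \;>\; 2\eps\cdot F(\vo).
\end{equation*}

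Next I will establish the general inequality $F(\vz\psum\vo)\leq 2F(\vo)$, which is the uniform bound needed to control the tail of the integral. This follows from continuous submodularity combined with Property~\ref{prop:old_new_notation} of Lemma~\ref{lem:DR_properties}: these two together give $F(\vz)+F(\vo)\geq F(\vz\psum\vo)+F(\vz\hprod\vo)\geq F(\vz\psum\vo)$ by non-negativity, and then $F(\vz)\leq F(\vo)$ since $\vz\in P$ and $\vo$ is optimal in $P$. Hence $F(\vz\psum\vo)-F(\vo\psum\vx(s))\leq F(\vz\psum\vo)\leq 2F(\vo)$ for every $s$.

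Now I proceed by contradiction: assume every $\tau'\in[t_s,\tau-\eps/2]$ satisfies $F(\vx(\tau')\psum\vo)>F(\vz\psum\vo)-\eps\cdot F(\vo)$. I split the integral above at $\tau-\eps/2$. On $[t_s,\tau-\eps/2]$, the contrary assumption gives an integrand strictly below $\eps F(\vo)$, so this piece is at most $\eps F(\vo)\cdot(\tau-\eps/2-t_s)<\eps F(\vo)$ because $\tau<1$ and $t_s>0$. On the tail $[\tau-\eps/2,\tau]$, the uniform bound from the previous paragraph gives at most $2F(\vo)\cdot(\eps/2)=\eps F(\vo)$. Adding these two pieces yields $<2\eps F(\vo)$, contradicting the displayed integral inequality. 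Therefore some $\tau'\in[t_s,\tau-\eps/2]$ must satisfy $F(\vx(\tau')\psum\vo)\leq F(\vz\psum\vo)-\eps F(\vo)$, as required. (Note that if $\tau<t_s+\eps/2$, the first piece is vacuously zero and the tail-only estimate already contradicts the integral bound, so in this degenerate case the hypothesis $\vo\notin Q(\tau)$ is itself untenable and the corollary holds vacuously.)

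The only nontrivial obstacle is the tail estimate: without the bound $F(\vz\psum\vo)\leq 2F(\vo)$, one could not close the contradiction, and it is precisely the combination of continuous submodularity with Property~\ref{prop:old_new_notation} and the optimality of $\vo$ in $P$ that supplies this bound; everything else is a straightforward substitution of Lemmata~\ref{lem:product_lower_bound} and~\ref{lem:second_o_guarantee} and a pigeonhole-style averaging argument.
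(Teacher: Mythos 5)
Your proof is correct and follows essentially the same route as the paper: translate $\vo\notin Q(\tau)$ into an integral inequality via Lemmata~\ref{lem:product_lower_bound} and~\ref{lem:second_o_guarantee}, establish $F(\vz\psum\vo)\leq 2F(\vo)$, and close a contradiction by splitting the integral at $\tau-\eps/2$. The only differences are cosmetic --- you bound the reorganized integral $\int_{t_s}^\tau[F(\vz\psum\vo)-F(\vo\psum\vx(s))]\,ds$ from above piecewise and obtain the $2F(\vo)$ bound via continuous submodularity with Property~\ref{prop:old_new_notation}, whereas the paper lower-bounds $\int_{t_s}^\tau F(\vo\psum\vx(s))\,ds$ under the contrary assumption and rearranges algebraically to derive $F(\vz\psum\vo)>2F(\vo)$, proving the same $2F(\vo)$ bound via the decomposition $F(\vz\psum\vo)=F(\vo+(\vone-\vo)\hprod\vz)$.
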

\begin{proof}
Since $\vo$ does not belong to $Q(\tau)$ despite belonging to $P$, $\inner{\vw(\tau)}{\vo} < V(\tau) - F(\vy(\tau))$. By combining Lemmata~\ref{lem:product_lower_bound} and~\ref{lem:second_o_guarantee}, we get a lower bound on $\inner{\vw(\tau)}{\vo}$. Plugging this lower bound and the definition of $V(\tau)$ into the last inequality yields
\begin{align*}
	e^{-\tau} \cdot{}\mspace{180mu}&\mspace{-180mu} \left[e^{t_s} \cdot F(\vo) - (e^{t_s} - 1) \cdot F(\vz \psum \vo) + \int_{t_s}^\tau F(\vo \psum \vx(s)) ds \right] - F(\vy(\tau))\\
	\leq{} &
	e^{-\tau}\left[(e^{t_s} - 2\eps)\cdot F(\vo) -\left(e^{t_s}-1-(\tau-t_s)\right) \cdot f(\vz \psum \vo)\right] - F(\vy(\tau))
	\enspace,
\end{align*}
and by rearranging this inequality, we obtain
\begin{equation} \label{eq:violating_boils_down}
	\int_{t_s}^\tau F(\vo \psum \vx(s)) ds
	\leq
	(\tau-t_s) \cdot F(\vz \psum \vo) - 2\eps \cdot F(\vo)
	\enspace.
\end{equation}
Notice that, since the left hand side of the last inequality is non-negative, we immediately get $F(\vz \psum \vo) - \eps \cdot F(\vo) \geq (\tau-t_s) \cdot F(\vz \psum \vo) - 2\eps \cdot F(\vo) \geq 0$. This observation is used below.

Assume now towards a contradiction that the corollary is false. This implies that the integral on the left hand side of Inequality~\eqref{eq:violating_boils_down} can be lower bounded by $(\tau - \eps / 2 - t_s) \cdot [F(\vz \psum \vo) - \eps \cdot F(\vo)]$---notice that we have used here the fact that $F$ is non-negative, and when $\tau - t_s \leq \eps/2$, we also used the inequality $F(\vz \psum \vo) - \eps \cdot F(\vo) \geq 0$ proved above. Plugging this lower bound into Inequality~\eqref{eq:violating_boils_down} yields
\[
	(\tau - \eps / 2 - t_s) \cdot [F(\vz \psum \vo) - \eps \cdot F(\vo)]
	\leq
	(\tau-t_s) \cdot f(\vz \psum \vo) - 2\eps \cdot F(\vo)
	\enspace,
\]
and rearranging this inequality gives
\[
	(2 - \tau + \eps / 2 + t_s) \cdot F(\vo)
	\leq
	\tfrac{1}{2} \cdot F(\vz \psum \vo)
	\enspace.
\]
Since $(2 - \tau + \eps / 2 + t_s) \geq 1 + \eps / 2 > 1$, we get $F(\vz \psum \vo) > 2 \cdot F(\vo)$ (recall that we assume that $F(\vo)$ is strictly positive). However, this leads to a contradiction since the DR-submodularity of $F$ implies that
\begin{align*}
	F(\vz \psum \vo)
	={} &
	F(\vo + (1 - \vo) \hprod \vz)
	\leq
	F(\vo) + F((1 - \vo) \hprod \vz) - F(\vzero)\\
	\leq{} &
	F(\vo) + F((1 - \vo) \hprod \vz)
	\leq
	2 \cdot F(\vo)
	\enspace,
\end{align*}
where the second inequality holds by $F$'s non-negativity, and the last inequality holds by the definition of $\vo$ since the down-closeness of $P$ implies that $(1 - \vo) \hprod \vz \in P$ since $(1 - \vo) \hprod \vz \leq \vz$.
\end{proof}

We are now ready to complete the proof of Theorem~\ref{thm:main-component} for the case in which $Q(\tau)$ is empty for some $\tau \in [0, 1)$.
\begin{lemma}
If there exists a value $\tau \in [0, 1)$ for which $Q(\tau)$ is empty, then there exists a value $\tau' \in [t_s, 1)$ for which $\vx(\tau')$ is a local maximum with respect to $\vo$, and $F(\vx(\tau') \psum \vo) \leq F(\vz \psum \vo) - \eps \cdot F(\vo)$. Thus, Theorem~\ref{thm:main-component} holds in this case.
\end{lemma}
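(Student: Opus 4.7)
The plan is to combine the existing corollaries by picking a carefully chosen time $\tau$ just above the moment when $\vo$ first leaves $Q(\tau)$, so that the $\tau'$ produced by Corollary~\ref{cor:good_below_empty} lies strictly before this moment and therefore still has $\vo \in Q(\tau')$. This is the piece not immediately handed to us by the previous lemmas: Corollary~\ref{cor:good_below_empty} produces a $\tau'$ with the required value inequality $F(\vx(\tau')\psum\vo) \le F(\vz\psum\vo) - \eps \cdot F(\vo)$, but nothing in its statement guarantees that $\vo$ itself is in $Q(\tau')$, which is what we need in order to upgrade "$\vx(\tau')$ is a local maximum of $Q(\tau')$" to "$\vx(\tau')$ is a local maximum with respect to $\vo$".

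Concretely, I would set $\tau^\star \triangleq \inf\{\tau\in[0,1) : \vo\notin Q(\tau)\}$. This infimum is well-defined and satisfies $\tau^\star<1$ by the hypothesis that some $Q(\tau)$ is empty (so in particular $\vo\notin Q(\tau)$ there), and $\tau^\star\ge t_s$ by Corollary~\ref{cor:low_Q_non-empty}. By definition of infimum, I can then choose a specific $\tau\in[t_s,1)$ with $\vo\notin Q(\tau)$ and $\tau<\tau^\star+\eps/4$.

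Applying Corollary~\ref{cor:good_below_empty} to this $\tau$ yields a value $\tau'\in[t_s,\tau-\eps/2]$ with
\[
    F(\vx(\tau')\psum\vo) \;\le\; F(\vz\psum\vo) - \eps\cdot F(\vo).
\]
The key observation is that $\tau' \le \tau - \eps/2 < \tau^\star + \eps/4 - \eps/2 = \tau^\star - \eps/4 < \tau^\star$, so by the definition of $\tau^\star$ we must have $\vo\in Q(\tau')$. In particular $Q(\tau')$ is non-empty, so Algorithm~\ref{alg:aided_mcgreedy_continuous} selects $\vx(\tau')$ as a local maximum of $Q(\tau')$; that is, $\inner{\vy-\vx(\tau')}{\nabla F(\vx(\tau'))}\le 0$ for every $\vy\in Q(\tau')$. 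Plugging in $\vy=\vo$ gives $\inner{\vo-\vx(\tau')}{\nabla F(\vx(\tau'))}\le 0$, which is exactly the statement that $\vx(\tau')$ is a local maximum with respect to $\vo$. Combined with the value inequality, this gives the second bullet of Theorem~\ref{thm:main-component}, completing the proof in this case.

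The only step that requires care is the infimum argument itself: I expect no technical obstacle beyond verifying that one can always select $\tau$ strictly below $\tau^\star+\eps/4$ with $\vo\notin Q(\tau)$, which follows directly from the definition of infimum (if $\tau^\star$ itself is attained then we can even take $\tau=\tau^\star$). No additional properties of $F$, $\vx(\cdot)$, or $\vw(\cdot)$ beyond what is already established are needed.
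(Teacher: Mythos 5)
Your proof is correct and follows essentially the same approach as the paper: define the infimum of the set of times at which $\vo$ leaves $Q(\cdot)$, pick a nearby time where $\vo$ is actually excluded, apply Corollary~\ref{cor:good_below_empty} to obtain a $\tau'$ strictly below the infimum, and conclude that $\vo\in Q(\tau')$ so that the local-maximum guarantee applies with respect to $\vo$. The only cosmetic difference is notation ($\tau^\star$ vs.\ the paper's $\tau_I$) and the phrasing of how the nearby time is selected.
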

\begin{proof}
Let $\tau_I$ be the infimum over all values $\tau \in [0, 1)$ for which $\vo \not \in Q(\tau)$. Notice that there are such $\tau$ values since we assume that $Q(\tau)$ is empty for some value of $\tau$. Corollary~\ref{cor:low_Q_non-empty} guarantees that all the values of $\tau$ on which we take the infimum are at least $t_s$, and therefore, $\tau_I \geq t_s$.

By the definition of $\tau_I$, there must exist a value $s \in [\tau_I, \tau_I + \eps/4]$ such that $\vo \not \in Q(s)$; and thus, by Corollary~\ref{cor:good_below_empty}, there exists a value $\tau' \in [t_s, s - \eps / 2] \subseteq [t_s, \tau_I)$ such that $F(\vx(\tau') \psum \vo) \leq F(\vz \psum \vo) - \eps \cdot F(\vo)$.

Since $\tau' < \tau_I$, $\vo \in Q(\tau')$ by the definition of $\tau_I$, which in particular implies that $Q(\tau')$ is not empty. Thus, $\vx(\tau')$ is chosen by Algorithm~\ref{alg:aided_mcgreedy_continuous} as a local maximum of $Q(\tau')$, and in particular, it is a local maximum with respect to the vector $\vo$ included in this convex body.
\end{proof}

\subsection{Discrete Time Version} \label{ssc:component_discrete}

In this section, we prove Theorem~\ref{thm:main-component-discrete}, which we repeat here for convenience.

\thmMainComponentDiscrete*

The algorithm {\FWMCG} used to prove Theorem~\ref{thm:main-component-discrete} is given as Algorithm~\ref{alg:component-discrete}, which is a discrete time analog of  Algorithm~\ref{alg:aided_mcgreedy_continuous}. 
In the description of Algorithm~\ref{alg:component-discrete}, we assume for simplicity that $\delta^{-1}$ is integral and $\delta \leq \eps$ (which allows us to set $m = \delta^{-1}$). If this is not the case, then the value of $\delta$ can be reduced to $1 / \lceil 1 / \min\{\delta, \eps\} \rceil$.
Finally, in order to remove the assumption used by Algorithm~\ref{alg:aided_mcgreedy_continuous} that $F(\vo),F(\vz \hprod \vo)$ and $F(\vz \psum \vo)$ are known, we prove in Appendix~\ref{sec:knowledge} (using standard guessing arguments) the following lemma. 
\begin{restatable}{lemma}{lemGuess}
\label{lem:guess_tau}
It is possible to compute a constant size (depending only on $\eps$) set of triples $G$ of non-negative values that is guaranteed to contain a triple $(g, g_{\hprod}, g_{\psum})\in G$ such that
\begin{align*}
   (1 - \eps) \cdot F(\vo) & \leq g \leq F(\vo) \enspace,\\
   F(\vz \hprod \vo) - \eps g & \leq g_{\hprod} \leq F(\vz \hprod \vo) \enspace, \quad\text{and}\\
   F(\vz \psum \vo) - \eps g & \leq g_{\psum} \leq F(\vz \psum \vo) \enspace.
\end{align*}
\end{restatable}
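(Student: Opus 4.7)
The lemma is proved by a standard guessing argument: I will construct $G$ as a small grid of candidate triples that is guaranteed to include one matching the true values $(F(\vo), F(\vz\hprod\vo), F(\vz\psum\vo))$ to the required precision. The only nontrivial step is seeding the grid with a crude multiplicative estimate of $F(\vo)$ whose approximation factor depends only on $\eps$ (and in particular neither on $n$ nor on $\delta$).

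First, I would compute a constant-factor estimate $V$ of $F(\vo)$ by invoking any known constant-factor approximation algorithm for maximizing a non-negative DR-submodular function over a down-closed body (for instance the Measured Continuous Greedy algorithm mentioned in the introduction, which does not itself require guessing $F(\vo)$). This yields a feasible vector with value $V$ satisfying $V \leq F(\vo) \leq cV$ for an absolute constant $c$ (say $c = e$). Second, I form the multiplicative progression
\[
\mathcal{V} = \{V\cdot(1+\eps)^i : 0 \leq i \leq \lceil \log_{1+\eps} c \rceil\},
\]
of size $O(1/\eps)$. Since $F(\vo) \in [V, cV]$, some $g \in \mathcal{V}$ satisfies $(1-\eps) F(\vo) \leq g \leq F(\vo)$. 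Third, for each candidate $g$ I enumerate both $g_\hprod$ and $g_\psum$ independently over the arithmetic progression $\{j\cdot \eps g : 0 \leq j \leq \lceil 2/\eps \rceil\}$, also of size $O(1/\eps)$.

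The ranges swept by the additive grids are justified as follows: $F(\vz\hprod\vo) \leq F(\vo)$ because the down-closedness of $P$ implies $\vz\hprod\vo \in P$ and $\vo$ is optimal in $P$; and $F(\vz\psum\vo) \leq 2F(\vo)$ by exactly the submodularity-plus-down-closedness computation used inside the proof of Corollary~\ref{cor:good_below_empty}, namely $F(\vz\psum\vo) = F(\vo + (\vone-\vo)\hprod\vz) \leq F(\vo) + F((\vone-\vo)\hprod\vz) \leq 2F(\vo)$, where the last inequality uses $(\vone-\vo)\hprod\vz \in P$ and the optimality of $\vo$. Hence at least one grid point matches each of $F(\vz\hprod\vo)$ and $F(\vz\psum\vo)$ within the additive tolerance of $\eps g$. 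The total set has $|G| = O(1/\eps^3)$, depending only on $\eps$, as required. The main obstacle is producing the multiplicative estimate $V$ of $F(\vo)$ without circular dependence on the lemma; this is bypassed by appealing to an off-the-shelf constant-factor approximation algorithm whose guarantee is independent of any guessing procedure.
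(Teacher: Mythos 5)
Your approach is essentially the paper's: obtain a crude constant-factor estimate of $F(\vo)$ from an off-the-shelf approximation algorithm, lay a geometric grid for $g$, and for each $g$ lay arithmetic grids of spacing $\eps g$ for $g_\hprod$ and $g_\psum$, bounded via $F(\vz\hprod\vo)\le F(\vo)$ and $F(\vz\psum\vo)\le 2F(\vo)$ (the latter derived exactly as in the proof of Corollary~\ref{cor:good_below_empty}). The only structural difference---ascending powers $(1+\eps)^i$ starting from the underestimate $V$, versus the paper's descending powers $(1-\eps)^i$ starting from the overestimate $v/c$---is immaterial.

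There is, however, a small but genuine slip in the upper limit of your additive grid. You take $j$ up to $\lceil 2/\eps\rceil$, so the grid tops out at roughly $2g$; but the target $F(\vz\psum\vo)$ can be as large as $2F(\vo)$, which may exceed $2g$ because $g$ is only guaranteed to be at least $(1-\eps)F(\vo)$. Concretely, if $F(\vz\psum\vo)=2F(\vo)$ and $g=(1-\eps)F(\vo)$, then the required interval $[F(\vz\psum\vo)-\eps g,\,F(\vz\psum\vo)] = [(2-\eps+\eps^2)F(\vo),\,2F(\vo)]$ lies strictly above $2g=(2-2\eps)F(\vo)$ for every $\eps>0$, so none of your grid points is a valid $g_\psum$. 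The fix is to extend the index range to $j\le\bigl\lceil 2/(\eps(1-\eps))\bigr\rceil$, as the paper does; this keeps $|G|=O(1/\eps^3)$ and otherwise leaves your argument intact.
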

Therefore, by enumerating all triples of $G$, we may assume Algorithm~\ref{alg:component-discrete} has access to values $g, g_{\hprod}$, and $g_{\psum}$ with the properties guaranteed by Lemma~\ref{lem:guess_tau}. 

\begin{algorithm}
\caption{{\FWMCGFull}($F, P, \vz, t_s, \eps, \delta$)} \label{alg:component-discrete}
\DontPrintSemicolon


Let $i_s = \lceil t_s / \delta \rceil$.\\
\For{$i = 0$ \KwTo $\delta^{-1} - 1$}
{
	Let $V(i) \triangleq \left\{\begin{array}{ll}(1 - 2\eps) \cdot g - g_{\hprod} - (1- (1 - \delta)^i)\cdot g_{\psum}  & i < i_s\\
	(1 - \delta)^i \cdot \left[((1 - \delta)^{-i_s} - 4\eps)\cdot g -\left((1 - \delta)^{-i_s}-1-\delta(i-i_s)\right) \cdot g_{\psum}\right] & i \geq i_s \enspace.
\end{array}\right.$ \\
	Let $\vz(i) \triangleq \left\{\begin{array}{ll}\vz  & i < i_s \\
\vzero & i \geq i_s \enspace.
\end{array}\right.$
}

\BlankLine

Let $\vy(0) \gets \vzero$.\\
\For{$i = 1$ \KwTo $\delta^{-1}$}
{
  Let $\vw(i) \gets (\vone - \vy(i - 1)-\vz(i - 1)) \hprod \nabla F(\vy(i - 1))$.\label{line:weight_discrete}\\
  Let $Q(i) \triangleq  \{\vx \in \cP \mid  \inner{\vw(i)}{\vx} \geq V(i - 1) - F(\vy(i - 1))\}$.\\ 
  Use the algorithm from Theorem~\ref{thm:local_search} to find an approximate local maximum $\vx(i)$ of $Q(i)$ (if $Q(i) = \varnothing$, set $\vx(i)$ to be an arbitrary vector in $P$).\\
	Let $\vy(i) \gets \vy(i - 1) + \delta (\vone - \vy(i - 1)-\vz(i - 1)) \odot \vx(i)$.\label{line:update}
}
\Return the vectors $\vy(\delta^{-1})$ and $\vx(1), \vx(2), \dotsc, \vx(\delta^{-1})$.
\end{algorithm}

One can verify that Algorithm~\ref{alg:component-discrete} can be implemented to run in the time complexity stated in Theorem~\ref{thm:main-component-discrete}. The rest of this section is devoted to the showing that the output vector $\vy(\delta^{-1})$ and vectors $\vx(1), \vx(2), \dotsc, \vx(\delta^{-1})$ of Algorithm~\ref{alg:component-discrete} obey the other guarantees of Theorem~\ref{thm:main-component-discrete}. This proof has the same general plan as the analysis of Algorithm~\ref{alg:aided_mcgreedy_continuous} in Section~\ref{ssc:continuous_algorithm}. In particular, we begin by deriving a formula for $\vy(i)$, which we later use to show that $\vy(\delta^{-1}) \in P$.
\begin{lemma} \label{lem:y_value_discrete}
For every integral $0 \leq i \leq \delta^{-1}$,
\[
	\vy(i)
	=\left\{
	\begin{array}{ll}
		(\vone - \vz) \hprod \PSum_{j = 1}^i (\delta \cdot \vx(j)) & i \leq i_s \\[1mm]
		(\vone - \vz) \hprod \PSum_{j = 1}^i (\delta \cdot \vx(j)) + \vz \hprod \PSum_{j = i_s + 1}^i (\delta \cdot \vx(j)) & i \geq i_s \enspace.
	\end{array}\right.
\]
For notational convenience, we assume in this lemma that $\PSum_{j = a}^{a - 1} (\delta \cdot \vx(j)) = \vzero$ for every value $a$.
\end{lemma}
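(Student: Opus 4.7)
The plan is to prove the formula by induction on $i$, with the key algebraic ingredient being the telescoping identity for the $\psum$ operator: for any vectors $\va^{(1)},\dotsc,\va^{(i)} \in [0,1]^\cN$,
\[
	\vone - \PSum_{j=1}^{i} \va^{(j)}
	=
	\HProd_{j=1}^{i}(\vone - \va^{(j)})
	=
	\left(\vone - \PSum_{j=1}^{i-1} \va^{(j)}\right) \hprod \left(\vone - \va^{(i)}\right)
	\enspace,
\]
which follows directly from the definition of $\psum$. Applied with $\va^{(j)} = \delta \cdot \vx(j)$, this reorganizes into the incremental form $\PSum_{j=1}^{i}(\delta\vx(j)) = \PSum_{j=1}^{i-1}(\delta\vx(j)) + \delta\vx(i) \hprod \HProd_{j=1}^{i-1}(\vone - \delta\vx(j))$, which is what turns the additive update on Line~\ref{line:update} into the claimed closed-form expression involving a $\PSum$.

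The base case $i=0$ is immediate from $\vy(0) = \vzero$ and the convention that the empty $\PSum$ equals $\vzero$. For the inductive step in the first phase ($1 \le i \le i_s$), one uses $\vz(i-1) = \vz$ and the induction hypothesis $\vy(i-1) = (\vone-\vz) \hprod \PSum_{j=1}^{i-1}(\delta\vx(j))$ to compute
\[
	\vone - \vy(i-1) - \vz
	=
	(\vone-\vz) - \vy(i-1)
	=
	(\vone-\vz) \hprod \HProd_{j=1}^{i-1}(\vone - \delta\vx(j))
	\enspace.
\]
Substituting into Line~\ref{line:update} and applying the telescoping identity above gives $\vy(i) = (\vone-\vz) \hprod \PSum_{j=1}^{i}(\delta\vx(j))$, as desired.

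For the inductive step in the second phase ($i > i_s$), one uses $\vz(i-1) = \vzero$, so the update simplifies to $\vy(i) = \vy(i-1) + \delta(\vone - \vy(i-1)) \hprod \vx(i)$, equivalently $\vone - \vy(i) = (\vone - \vy(i-1)) \hprod (\vone - \delta\vx(i))$. The induction hypothesis (together with the base case $i-1 = i_s$, where the $\vz \hprod \PSum_{j=i_s+1}^{i_s}$ term is the convention-empty $\vzero$, so it agrees with the first-phase formula at $i_s$) yields
\[
	\vone - \vy(i-1)
	=
	(\vone-\vz) \hprod \HProd_{j=1}^{i-1}(\vone - \delta\vx(j)) + \vz \hprod \HProd_{j=i_s+1}^{i-1}(\vone - \delta\vx(j))
	\enspace,
\]
by splitting $\vone = \vz + (\vone - \vz)$ in each term. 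Multiplying by $\vone - \delta\vx(i)$ and applying the telescoping identity separately to each of the two products then gives the claimed expression for $\vy(i)$.

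No step looks technically difficult; the only mild subtlety is to handle the matching between the two phases at $i = i_s$ cleanly, which is why the convention $\PSum_{j=i_s+1}^{i_s} = \vzero$ stated in the lemma is convenient. The proof is essentially bookkeeping with the $\psum$/$\hprod$ algebra.
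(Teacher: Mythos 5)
Your proof is correct and follows essentially the same route as the paper's: induction on $i$, split into the two phases $i \le i_s$ and $i > i_s$, with the only ingredient being the $\psum$/$\hprod$ telescoping identity and the splitting of $\vone = \vz + (\vone - \vz)$ in the second phase. Working with $\vone - \vy(i)$ rather than $\vy(i)$ is a cosmetic reorganization of the same algebra.
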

\begin{proof}
We prove the lemma by induction on $i$. Clearly, the lemma holds for $i = 0$ since $\vy(0)$ is set to $\vzero$. Assume now that the lemma holds for $i - 1$, and let us prove it for an integer $1 \leq i \leq \delta^{-1}$. There are two cases to consider based on the relationship between $i$ and $i_s$. If $i \leq i_s$, then $i - 1 < i_s$, and therefore,
\begin{align*}
	\vy(i)
	={} &
	\vy(i - 1) + \delta (\vone - \vy(i - 1)-\vz(i - 1)) \odot \vx(i)
	=
	(\vone - \delta \cdot \vx(i)) \hprod \vy(i - 1) + \delta (\vone-\vz) \odot \vx(i)\\
	={} &
	(\vone - \delta \cdot \vx(i)) \hprod (\vone - \vz) \hprod \PSum_{j = 1}^{i - 1} (\delta \cdot \vx(j)) + \delta (\vone-\vz) \odot \vx(i)\\
	={} &
	(\vone - \vz) \hprod \left[(\vone - \delta \cdot \vx(i)) \hprod \PSum_{j = 1}^{i - 1} (\delta \cdot \vx(j)) + \delta \cdot \vx(i)\right]
	=
	(\vone - \vz) \hprod \PSum_{j = 1}^{i}(\delta \cdot \vx(j))
	\enspace,
\end{align*}
where the third equality holds by the induction hypothesis.

Consider now the case of $i > i_s$. Since $i$ and $i_s$ are both integral, we get in this case $i - 1 \geq i_s$, and thus,
{\allowdisplaybreaks\begin{align*}
	\vy(i)
	={} &
	\vy(i - 1) + \delta (\vone - \vy(i - 1)-\vz(i - 1)) \odot \vx(i)
	=
	(\vone - \delta \cdot \vx(i)) \hprod \vy(i - 1) + \delta \cdot \vx(i)\\
	={} &
	(\vone - \delta \cdot \vx(i)) \hprod \left[(\vone - \vz) \hprod \PSum_{j = 1}^{i - 1} (\delta \cdot \vx(j)) + \vz \hprod \PSum_{j = i_s + 1}^{i - 1} (\delta \cdot \vx(j))\right] + \delta \cdot \vx(i)\\
	={} &
	(\vone - \vz) \hprod \left[(\vone - \delta \cdot \vx(i)) \hprod \PSum_{j = 1}^{i - 1} (\delta \cdot \vx(j)) + \delta \cdot \vx(i)\right] \\&\mspace{200mu}+ \vz \hprod \left[(\vone - \delta \cdot \vx(i)) \hprod \PSum_{j = i_s + 1}^{i - 1} (\delta \cdot \vx(j)) + \delta \cdot \vx(i)\right]\\
	={} &
	(\vone - \vz) \hprod \PSum_{j = 1}^{i} (\delta \cdot \vx(j)) + \vz \hprod \PSum_{j = i_s + 1}^{i} (\delta \cdot \vx(j))
	\enspace,
\end{align*}}%
where the third equality holds by the induction hypothesis.
\end{proof}
\begin{corollary} \label{cor:belonging_discrete}
The vector $\vy(\delta^{-1})$ belongs to $P$.
\end{corollary}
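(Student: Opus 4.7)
My plan is to mimic the proof of Corollary~\ref{cor:belonging} in the discrete setting, using the closed-form expression for $\vy(\delta^{-1})$ provided by Lemma~\ref{lem:y_value_discrete}. Specializing that lemma at $i = \delta^{-1}$ (and noting that $\delta^{-1} \geq i_s$), we obtain
\[
\vy(\delta^{-1}) = (\vone - \vz) \hprod \PSum_{j=1}^{\delta^{-1}} (\delta \cdot \vx(j)) + \vz \hprod \PSum_{j=i_s+1}^{\delta^{-1}} (\delta \cdot \vx(j)).
\]
The strategy is then to sandwich this vector: show it is non-negative, and show it is upper-bounded by a vector already known to lie in $P$; the down-closedness of $P$ then delivers the conclusion.

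First, I would observe that $\vy(\delta^{-1}) \geq \vzero$ trivially, since both summands above are Hadamard products of vectors in $[0,1]^\cN$. Next, I would argue that the probabilistic sum is monotone in the number of non-negative arguments, so
\[
\PSum_{j=i_s+1}^{\delta^{-1}} (\delta \cdot \vx(j)) \leq \PSum_{j=1}^{\delta^{-1}} (\delta \cdot \vx(j)).
\]
Consequently, since the coefficient vectors $\vone - \vz$ and $\vz$ are non-negative and sum to $\vone$, we obtain
\[
\vy(\delta^{-1}) \leq \bigl[(\vone - \vz) + \vz\bigr] \hprod \PSum_{j=1}^{\delta^{-1}} (\delta \cdot \vx(j)) = \PSum_{j=1}^{\delta^{-1}} (\delta \cdot \vx(j)).
\]

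The third step is to note that the probabilistic sum is always dominated by the ordinary sum (expanding $\vone - \HProd_j(\vone - \delta \vx(j))$ via the inclusion-exclusion-style bound $1 - \prod(1 - a_j) \leq \sum a_j$ for $a_j \in [0,1]$), giving
\[
\PSum_{j=1}^{\delta^{-1}} (\delta \cdot \vx(j)) \leq \sum_{j=1}^{\delta^{-1}} \delta \cdot \vx(j) = \delta \sum_{j=1}^{\delta^{-1}} \vx(j).
\]
Because $\delta^{-1}$ is integral (by the normalization applied at the start of the section) and each $\vx(j) \in P$, the rightmost expression is a convex combination of points of $P$ with equal weights $\delta$, hence belongs to $P$ by its convexity.

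Combining these three steps yields $\vzero \leq \vy(\delta^{-1}) \leq \vw$ for some $\vw \in P$, so the down-closedness of $P$ implies $\vy(\delta^{-1}) \in P$. None of the steps are delicate; the only minor point to watch is keeping the bookkeeping consistent in the case $\delta^{-1} = i_s$ (where the second probabilistic sum is the empty one $\vzero$, matching the notational convention introduced in Lemma~\ref{lem:y_value_discrete}), but this merely strengthens the bound and does not complicate the argument.
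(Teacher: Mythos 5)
Your proof is correct and follows essentially the same route as the paper's: apply Lemma~\ref{lem:y_value_discrete} at $i = \delta^{-1}$, bound the partial probabilistic sum by the full one, bound the probabilistic sum by the ordinary sum, observe that the ordinary sum is a convex combination of vectors in $P$, and invoke down-closedness. The only cosmetic difference is that you spell out the non-negativity of $\vy(\delta^{-1})$, which the paper leaves implicit.
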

\begin{proof}
%
Since $\delta^{-1} \geq i_s$, we get, by Lemma~\ref{lem:y_value_discrete},
\begin{align*}
	\vy(\delta^{-1})
	={} &
	(\vone - \vz) \hprod \PSum_{j = 1}^{\delta^{-1}} (\delta \cdot \vx(j)) + \vz \hprod \PSum_{j = i_s + 1}^{\delta^{-1}} \mspace{-6mu} (\delta \cdot \vx(j))\\
	\leq{} &
	(\vone - \vz) \hprod \PSum_{j = 1}^{\delta^{-1}} (\delta \cdot \vx(j)) + \vz \hprod \PSum_{j = 1}^{\delta^{-1}} (\delta \cdot \vx(j))
	=
	\PSum_{j = 1}^{\delta^{-1}} (\delta \cdot \vx(j))
	\leq
	\sum_{j = 1}^{\delta^{-1}} (\delta \cdot \vx(j))
	\in
	P
	\enspace,
\end{align*}
where the inclusion holds since $P$ is a convex body and $\vx(j) \in P$ for every $j \in [\delta^{-1}]$. The corollary now follows since $P$ is down-closed.
\end{proof}

It remains to show that at least one of the two bullets stated in Theorem~\ref{thm:main-component-discrete} always holds. We do that by considering two case. The first case is when $Q(i)$ is non-empty for every $i \in [\delta^{-1}]$. We show below that in this case $F(\vy(\delta^{-1}))$ is large enough to satisfy the guarantee given in the first bullet. Towards this goal, we begin with the following observation, which lower bounds the rate in which $F(\vy(i))$ increases as a function of $i$ when $Q(i)$ is non-empty.
\begin{observation} \label{obs:basic_increase}
If $Q(i)$ is non-empty for some $i \in [\delta^{-1}]$, then $F(\vy(i)) - F(\vy(i - 1)) \geq \delta[V(i - 1) - F(\vy(i - 1))] - \delta^2 D^2L / 2$.
\end{observation}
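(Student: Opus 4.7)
The plan is to combine three ingredients: the $L$-smoothness of $F$, the fact that $\vx(i)$ is chosen from $Q(i)$ when $Q(i)$ is non-empty, and a diameter bound on the size of the step $\vy(i)-\vy(i-1)$. First, I would invoke the standard smoothness inequality
\[
F(\vy(i)) \geq F(\vy(i-1)) + \inner{\nabla F(\vy(i-1))}{\vy(i) - \vy(i-1)} - \tfrac{L}{2}\|\vy(i) - \vy(i-1)\|_2^2,
\]
which is the usual quadratic lower bound implied by $L$-smoothness (via Taylor's theorem with integral remainder). This is the discrete analog of using the chain rule in the continuous proof of Lemma~\ref{cor:y_values}, with the $L/2$ penalty replacing the vanishing error of the differential.

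Next, I would unpack the update rule on Line~\ref{line:update}. Since $\vy(i) - \vy(i-1) = \delta(\vone - \vy(i-1) - \vz(i-1)) \hprod \vx(i)$, the linear term in the smoothness bound equals $\delta \inner{\vw(i)}{\vx(i)}$ by the definition of $\vw(i)$ on Line~\ref{line:weight_discrete}. Because we assume $Q(i)$ is non-empty, the algorithm from Theorem~\ref{thm:local_search} is applied to the non-empty body $Q(i)$ and returns a vector $\vx(i) \in Q(i)$; by the very definition of $Q(i)$, this yields the required inner-product lower bound $\inner{\vw(i)}{\vx(i)} \geq V(i-1) - F(\vy(i-1))$.

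Finally, I would bound the quadratic error. Coordinate-wise, $|y(i)_u - y(i-1)_u| = \delta(1 - y(i-1)_u - z(i-1)_u) \cdot x(i)_u \leq \delta \cdot x(i)_u$, where the inequality uses that $\vone - \vy(i-1) - \vz(i-1) \in [0,1]^\cN$. This latter fact follows from Lemma~\ref{lem:y_value_discrete}: for $i-1 < i_s$ one has $\vy(i-1) \leq \vone - \vz$ and $\vz(i-1) = \vz$, while for $i-1 \geq i_s$ one has $\vy(i-1) \leq \vone$ and $\vz(i-1) = \vzero$. Summing over coordinates gives $\|\vy(i) - \vy(i-1)\|_2 \leq \delta \|\vx(i)\|_2 \leq \delta D$, where the last inequality uses that $\vzero \in P$ by the down-closedness of $P$, so $\|\vx(i)\|_2 = \|\vx(i) - \vzero\|_2 \leq D$. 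Substituting all three pieces into the smoothness bound yields
\[
F(\vy(i)) - F(\vy(i-1)) \geq \delta[V(i-1) - F(\vy(i-1))] - \tfrac{\delta^2 D^2 L}{2},
\]
as required. No step here is a genuine obstacle: the argument is a direct discretization of the differential inequality~\eqref{eq:differential_equation} from the continuous analysis, with the smoothness correction being the only new element.
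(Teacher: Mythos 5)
Your proposal is correct and follows essentially the same route as the paper's proof. The paper derives the quadratic correction term by writing $F(\vy(i)) - F(\vy(i-1))$ as an integral of $\inner{\vd}{\nabla F(\vy(i-1)+s\vd)}$, splitting off $\delta\inner{\vw(i)}{\vx(i)}$, and bounding the remainder via Cauchy--Schwarz and $L$-smoothness; you instead invoke the resulting quadratic lower bound $F(\vy(i)) \geq F(\vy(i-1)) + \inner{\nabla F(\vy(i-1))}{\vy(i)-\vy(i-1)} - \tfrac{L}{2}\|\vy(i)-\vy(i-1)\|_2^2$ as a black box, which is proved by exactly that integral argument, so the substance is identical. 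You also make explicit the step $\|\vy(i)-\vy(i-1)\|_2 \leq \delta D$ (via $\vone - \vy(i-1) - \vz(i-1) \in [0,1]^\cN$ from Lemma~\ref{lem:y_value_discrete} and $\vzero,\vx(i)\in P$), which the paper leaves implicit.
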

\begin{proof}
By the chain rule,
{\allowdisplaybreaks\begin{align*}
	F(\vy(i))&{} - F(\vy(i - 1))
	=
	F(\vy(i - 1) + \delta(\vone - \vy(i - 1) - \vz(i - 1)) \hprod \vx(i))- F(\vy(i - 1))\\
	={} &
	\int_0^\delta \inner{(\vone - \vy(i - 1) - \vz(i - 1)) \hprod \vx(i)}{\nabla F(\vy(i - 1) + s(\vone - \vy(i - 1) - \vz(i - 1))  \hprod \vx(i))} ds\\
	={} &
	\delta \cdot \inner{\vw(i)}{\vx(i)} + \int_0^\delta \langle(\vone - \vy(i - 1) - \vz(i - 1)) \hprod \vx(i), \\&\mspace{150mu}\nabla F(\vy(i - 1) + s(\vone - \vy(i - 1) - \vz(i - 1))  \hprod \vx(i)) - \nabla F(\vx(i))\rangle ds\\
	\geq{} &
	\delta \cdot \inner{\vw(i)}{\vx(i)} - \int_0^\delta \|\vone - \vy(i - 1) - \vz(i - 1)) \hprod \vx(i)\|_2 \\&\mspace{150mu} \cdot \|\nabla F(\vy(i - 1) + s(\vone - \vy(i - 1) - \vz(i - 1))  \hprod \vx(i)) - \nabla F(\vx(i))\|_2 ds\\
	\geq{} &
	\delta \cdot \inner{\vw(i)}{\vx(i)} - \int_0^\delta sL \cdot \|\vone - \vy(i - 1) - \vz(i - 1)) \hprod \vx(i)\|^2_2 ds\\
	\geq{} &
	\delta \cdot \inner{\vw(i)}{\vx(i)} - \delta^2 D^2L / 2
	\geq
	\delta[V(i - 1) - F(\vy(i - 1))] - \delta^2 D^2L / 2
	\enspace,
\end{align*}}%
where the first inequality holds by the Cauchy-Schwarz inequality, the second inequality follows from the $L$-smoothness of $F$, and the last inequality holds since the assumption of the lemma that $Q(i)$ is non-empty implies $\vx(i) \in Q(i)$.
\end{proof}

The last observation implies a lower bound on $F(\vy(i))$ given as a recursive formula. The following claims (up to Corollary~\ref{cor:final_value}) aim to obtain a closed form formula for this lower bound.

\begin{lemma} \label{lem:recursive_solve_first}
If $Q(i)$ is non-empty for every $i \in [i_s]$, then, for every integer $0 \leq i \leq i_s$, $F(\vy(i)) \geq (1 - (1 - \delta)^i) \cdot ((1 - 2\eps)g - g_{\hprod}) - (1 - (1 + \delta i) \cdot (1 - \delta)^i) \cdot g_{\psum} - i \cdot \delta^2 D^2 L$.
\end{lemma}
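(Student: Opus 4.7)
The plan is to prove the lemma by induction on $i$, using Observation~\ref{obs:basic_increase} (which is applicable at every step because of the non-emptiness hypothesis on $Q(i)$) together with the closed-form definition of $V(i-1)$ valid in the range $i-1 < i_s$. The base case $i = 0$ is immediate: the right-hand side evaluates to $0$, while $F(\vy(0)) = F(\vzero) \geq 0$ by the non-negativity of $F$.

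For the inductive step with $1 \leq i \leq i_s$, Observation~\ref{obs:basic_increase} rearranges to
\[
F(\vy(i)) \geq (1-\delta) F(\vy(i-1)) + \delta V(i-1) - \tfrac{1}{2}\delta^2 D^2 L,
\]
and since $i - 1 < i_s$ we can substitute $V(i-1) = (1-2\eps)g - g_{\hprod} - (1 - (1-\delta)^{i-1}) g_{\psum}$. Applying the induction hypothesis to $F(\vy(i-1))$, I would then separately track the coefficients of $((1-2\eps)g - g_{\hprod})$, of $g_{\psum}$, and of $\delta^2 D^2 L$ on the resulting right-hand side, and compare them to the desired expression for $F(\vy(i))$.

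The coefficient of $((1-2\eps)g - g_{\hprod})$ telescopes cleanly: $(1-\delta)(1-(1-\delta)^{i-1}) + \delta = 1 - (1-\delta)^i$. For $g_{\psum}$, the computation yields
\[
-(1-\delta)\bigl(1 - (1+\delta(i-1))(1-\delta)^{i-1}\bigr) - \delta(1 - (1-\delta)^{i-1}) = -1 + (1-\delta)^{i-1}\bigl[1 + \delta(i-1)(1-\delta)\bigr],
\]
and a short algebraic manipulation shows that this quantity exceeds the target coefficient $-(1 - (1+\delta i)(1-\delta)^i)$ by exactly $\delta^2(1-\delta)^{i-1} \geq 0$, so (since $g_{\psum} \geq 0$ by the guarantee of Lemma~\ref{lem:guess_tau}) dropping this positive slack only weakens the bound. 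Finally, for the smoothness error, the accumulated loss is $(1-\delta)(i-1)\delta^2 D^2 L + \tfrac{1}{2}\delta^2 D^2 L = \bigl(i - \tfrac{1}{2} - \delta(i-1)\bigr)\delta^2 D^2 L \leq i \cdot \delta^2 D^2 L$, matching the stated bound.

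The main obstacle is the algebraic verification of the $g_{\psum}$ coefficient, since the two natural-looking expressions $(1-\delta)^{i-1}[1 + \delta(i-1)(1-\delta)]$ and $(1+\delta i)(1-\delta)^i$ differ, and one must show the former dominates. Everything else is either routine telescoping or direct application of the inductive hypothesis; the proof needs no new analytic ideas beyond Observation~\ref{obs:basic_increase} and the non-negativity guarantee on $g_{\psum}$.
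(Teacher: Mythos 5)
Your proof is correct and follows essentially the same plan as the paper's: induction on $i$, using Observation~\ref{obs:basic_increase} and the first branch of the definition of $V(i-1)$, then collecting coefficients. The paper's own inductive step is slightly terser and contains a small slip: after substituting, it writes $V(i-1) = (1-2\eps)g - g_{\hprod} - (1-(1-\delta)^{i})\cdot g_{\psum}$ with an exponent of $i$, whereas the definition gives exponent $i-1$. With the paper's (incorrect) exponent the $g_{\psum}$ coefficient telescopes to exactly $-(1 - (1+\delta i)(1-\delta)^i)$; with the correct exponent one gets a lower bound that exceeds the target by the positive slack $\delta^2(1-\delta)^{i-1}$ that you computed, and the inequality then closes using $g_{\psum} \geq 0$. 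In other words, you have carried out the calculation honestly and made explicit the step that the paper swept under a misplaced ``='' sign. Your handling of the $\delta^2 D^2 L$ error term is also more careful than the paper's, which simply drops directly to $-i \cdot \delta^2 D^2 L$ without showing the intermediate bound $(1-\delta)(i-1) + \tfrac{1}{2} \leq i$. No gaps; your version is if anything the more rigorous of the two.
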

\begin{proof}
We prove the lemma by induction on $i$. For $i = 0$, the lemma follows immediately from the non-negativity of $F$. Assume now that the lemma holds for $i - 1$, and let us prove it for $i \geq 1$. By Observation~\ref{obs:basic_increase},
\begin{align*}
	F(\vy(i))
	\geq{} &
	F(\vy(i - 1)) + \delta[V(i - 1) - F(\vy(i - 1))] - \delta^2 D^2L / 2\\
	={} &
	(1 - \delta) \cdot F(\vy(i - 1)) + \delta[(1 - 2\eps) \cdot g - g_{\hprod} - (1- (1 - \delta)^i)\cdot g_{\psum}] - \delta^2 D^2L / 2\\
	\geq{} &
	(1 - \delta) \cdot [(1 - (1 - \delta)^{i - 1}) \cdot ((1 - 2\eps)g - g_{\hprod}) - (1 - (1 + \delta (i - 1)) \cdot (1 - \delta)^{i - 1}) \cdot g_{\psum}] \\&\mspace{150mu}+ \delta[(1 - 2\eps) \cdot g - g_{\hprod} - (1- (1 - \delta)^i)\cdot g_{\psum}] - i \cdot \delta^2 D^2 L\\
	={} &
	(1 - (1 - \delta)^i \cdot ((1 - 2\eps)g - g_{\hprod}) - (1 - (1 + \delta i) \cdot (1 - \delta)^i) \cdot g_{\psum} - i \cdot \delta^2 D^2 L
	\enspace,
\end{align*}
where the first equality holds since $i \leq i_s$ implies $i - 1 < i_s$, and the second inequality holds by the induction hypothesis.
\end{proof}

\begin{lemma} \label{lem:recursive_solve_second}
If $Q(i)$ is non-empty for every integer $i_s < i \leq \delta^{-1}$, then, for every integer $i_s \leq i \leq \delta^{-1}$,
\begin{align*}
	F(\vy(i))
	\geq{} &
	\delta (i - i_s) \cdot (1 - \delta)^i \cdot \left[((1 - \delta)^{-i_s} - 4\eps)\cdot g -\left((1 - \delta)^{-i_s}-1 - \frac{\delta (i + 1 - i_s)}{2}\right) \cdot g_{\psum}\right] \\&\mspace{300mu} + (1 - \delta)^{i - i_s} \cdot F(\vy(i_s)) - (i - i_s) \cdot \delta^2 D^2 L
	\enspace.
\end{align*}
\end{lemma}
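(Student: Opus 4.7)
The plan is to prove this by induction on $i$, running from $i = i_s$ up to $\delta^{-1}$. The base case $i = i_s$ is immediate: the leading term vanishes because its prefactor $\delta(i-i_s)$ is zero, the error term $(i-i_s)\delta^2 D^2 L$ is zero, and what remains on the right is exactly $(1-\delta)^{0} F(\vy(i_s)) = F(\vy(i_s))$, so the inequality holds with equality.

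For the inductive step, suppose the stated bound holds for $i-1$, where $i > i_s$. Since $Q(i)$ is non-empty by hypothesis, Observation~\ref{obs:basic_increase} yields
\[
F(\vy(i)) \geq (1-\delta) F(\vy(i-1)) + \delta V(i-1) - \tfrac{1}{2}\delta^2 D^2 L.
\]
Abbreviate $K := (1-\delta)^{-i_s}$, $A := (K - 4\eps) g$, and $C_j := K - 1 - \frac{\delta(j+1-i_s)}{2}$; then $V(i-1) = (1-\delta)^{i-1}[A - (K - 1 - \delta(i-1-i_s))\, g_{\psum}]$. Substituting the inductive hypothesis for $F(\vy(i-1))$ and expanding, the proof reduces to verifying three items: (i) the $F(\vy(i_s))$ coefficient becomes $(1-\delta)(1-\delta)^{i-1-i_s} = (1-\delta)^{i-i_s}$, as required; (ii) the error bookkeeping $(1-\delta)(i-1-i_s)\delta^2 D^2 L + \tfrac{1}{2}\delta^2 D^2 L \leq (i-i_s)\delta^2 D^2 L$ holds comfortably (equivalent to $\tfrac{1}{2} \leq 1 + \delta(i-1-i_s)$, which is immediate); and (iii) the coefficients of $A$ and $g_{\psum}$ combine correctly.

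Item (iii) is the algebraic core. After factoring $\delta(1-\delta)^{i-1}$, the coefficient of $A$ from the combined IH-plus-$V(i-1)$ contribution is $(1-\delta)(i-1-i_s) + 1$, which one rewrites as $(1-\delta)(i-i_s) + \delta$, producing the target factor $\delta(i-i_s)(1-\delta)^i$ together with a positive surplus of order $\delta^2(1-\delta)^{i-1} A$. The coefficient of $-g_{\psum}$ is handled using the key identity
\[
(i-i_s) C_i - (i-1-i_s) C_{i-1} \;=\; (K-1) - \delta(i-i_s),
\]
which is precisely why the averaging term $\tfrac{\delta(i+1-i_s)}{2}$ (the arithmetic mean of $\delta(i-1-i_s)/2$ and $\delta(i+1-i_s)/2$-like increments encountered along the path from $i_s$ to $i$) appears in $C_i$. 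Combining this identity with the $B_{i-1}$ contribution from $\delta V(i-1)$ yields the target coefficient $-\delta(i-i_s)(1-\delta)^i C_i$, up to a correction of order $\delta^2(1-\delta)^{i-1}[K - \delta(i-i_s)]\,g_{\psum}$.

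The main obstacle will be the final balancing: verifying that the positive $A$-surplus $\delta^2(1-\delta)^{i-1} A$, together with the $\Omega(\delta^2 D^2 L)$ slack left over from item (ii), compensates for the $g_{\psum}$-deficit just described. One approach is to observe that these residual terms are exactly the $O(\delta^2)$ corrections that distinguish the discrete recurrence from its continuous analog (Lemma~\ref{cor:y_values}), and hence can be absorbed provided the coefficient structure of $C_i$ was chosen to match precisely the inductive averaging. If the algebra closes cleanly, the induction propagates to $i = \delta^{-1}$; if a signed residue remains, it must be bounded using the inequalities $g_{\psum} \leq 2 F(\vo)$ (from continuous submodularity together with $F(\vz) \leq F(\vo)$) and $(1-\delta)^{i-1} \leq 1$, absorbing whatever is left into the $(i-i_s)\delta^2 D^2 L$ error term.
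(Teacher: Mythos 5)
Your plan matches the paper's exactly: induction from $i=i_s$, a trivial base case, and an inductive step powered by Observation~\ref{obs:basic_increase}. The identity $(i-i_s)C_i - (i-1-i_s)C_{i-1} = (K-1) - \delta(i-i_s)$ that you isolate is correct and is indeed the algebraic crux. But you do not close the argument: you end on a hedge about whether a signed residue survives. In the paper's computation there is \emph{no} residue at all. After expanding $\delta V(i-1)$ as $\delta(1-\delta)^i\bigl[(K-4\eps)g - (K-1-\delta(i-i_s))g_{\psum}\bigr]$ and multiplying the inductive hypothesis through by $(1-\delta)$, the $g$-coefficient combines to exactly $\delta(i-i_s)(1-\delta)^i(K-4\eps)$, the $g_{\psum}$-coefficient, via your identity, to exactly $\delta(i-i_s)(1-\delta)^i C_i$, and the only inequality consumed is the error bookkeeping $(i-1-i_s)\delta^2 D^2 L + \tfrac12\delta^2 D^2 L \leq (i-i_s)\delta^2 D^2 L$. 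The $O(\delta^2)$ surplus and deficit you flag arise only if you expand $\delta V(i-1)$ with exponent $i-1$ and $\delta(i-1-i_s)$ inside the bracket rather than the form the paper substitutes, so check carefully which index of $V$ you are using; with the paper's form the cancellation is exact and there is nothing to balance.

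Separately, your fallback of folding a residue into $(i-i_s)\delta^2 D^2 L$ cannot work even if a residue were present. The putative residue $\delta^2(1-\delta)^{i-1}\bigl[(K-4\eps)g - (K-\delta(i-i_s))g_{\psum}\bigr]$ scales with $g$ and $g_{\psum}$, i.e., with $F(\vo)$, while the error slack scales with $D^2 L$; there is no hypothesis relating $F(\vo)$ to $D^2 L$, so the bound $g_{\psum}\leq 2F(\vo)$ gives you no leverage. If a miscancellation of this form genuinely remained, the stated bound would simply be false. The argument must close by exact cancellation of the $g$- and $g_{\psum}$-coefficients --- which is precisely what your identity delivers once the $V$ expansion is aligned with it --- not by a soft absorption into the $\delta^2 D^2 L$ term.
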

\begin{proof}
We prove the lemma by induction on $i$. For $i = i_s$, the lemma trivially holds since the right hand side of the equality of the lemma reduces to $F(\vy(i_s))$ for this choice of $i$ value. Assume now that the lemma holds for $i - 1$, and let us prove it for an integer $i \in (i_s, \delta^{-1}]$.  By Observation~\ref{obs:basic_increase},
\begin{align*}
	F(\vy(i)&)
	\geq
	F(\vy(i - 1)) + \delta[V(i - 1) - F(\vy(i - 1))] - \delta^2 D^2L / 2\\
	={} &
	(1 - \delta) \cdot F(\vy(i - 1)) + \delta(1 - \delta)^i \cdot \left[(1 - \delta)^{-i_s} - 4\eps)\cdot g -\left((1 - \delta)^{-i_s}-1-\delta(i-i_s)\right) \cdot g_{\psum}\right]\\&\mspace{500mu}- \delta^2 D^2L / 2\\
	\geq{} &
	\delta (i - 1 - i_s) \cdot (1 - \delta)^i \cdot \left[((1 - \delta)^{-i_s} - 4\eps)\cdot g -\left((1 - \delta)^{-i_s}-1 - \frac{\delta (i - i_s)}{2}\right) \cdot g_{\psum}\right]\\&\mspace{150mu}+ \delta(1 - \delta)^i \cdot \left[(1 - \delta)^{-i_s} - 4\eps)\cdot g -\left((1 - \delta)^{-i_s}-1-\delta(i-i_s)\right) \cdot g_{\psum}\right] \\&\mspace{150mu}+ (1 - \delta)^{i - i_s} \cdot F(\vy(i_s)) - (i - i_s) \cdot \delta^2 D^2 L\\
	={} &
	\delta (i - i_s) \cdot (1 - \delta)^i \cdot \left[((1 - \delta)^{-i_s} - 4\eps)\cdot g -\left((1 - \delta)^{-i_s}-1 - \frac{\delta (i + 1 - i_s)}{2}\right) \cdot g_{\psum}\right]\\&\mspace{150mu}+ (1 - \delta)^{i - i_s} \cdot F(\vy(i_s)) - (i - i_s) \cdot \delta^2 D^2 L
	\enspace,
\end{align*}
where the first equality holds since the integrality of $i$ and $i_s$ implies $i - 1 \geq i_s$, and the second inequality holds by the induction hypothesis.
\end{proof}

Combining the last two lemmata, we get the following corollary, which completes the proof of Theorem~\ref{thm:main-component-discrete} in the case that $Q(i)$ is always non-empty.
\begin{corollary} \label{cor:final_value}
If $Q(i)$ is non-empty for every $i \in [\delta^{-1}]$, then
\begin{align*}
	F(\vy(\delta^{-1}))
	\geq{} &
	e^{-1} \cdot \left[((2 - t_s)e^{t_s} - 1 - O(\eps)) \cdot F(\vo) - (e^{t_s} - 1) \cdot F(\vz \hprod \vo) - \vphantom{\frac{1}{2}}\right.\\&\mspace{190mu} \left.\left((2 - t_s)e^{t_s} + t_s - \frac{t_s^2 + 5}{2}\right) \cdot F(\vz \psum \vo) \right] - \delta D^2 L
	\enspace.
\end{align*}
\end{corollary}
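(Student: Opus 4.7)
The plan is to chain the two recursive bounds: first invoke Lemma~\ref{lem:recursive_solve_first} at $i = i_s$ to obtain a closed form lower bound on $F(\vy(i_s))$, then invoke Lemma~\ref{lem:recursive_solve_second} at $i = \delta^{-1}$ and substitute this bound for the $F(\vy(i_s))$ term that appears on its right hand side. Both lemmata apply because we assume $Q(i)$ is non-empty for every $i \in [\delta^{-1}]$, which covers both ranges $i \leq i_s$ and $i_s < i \leq \delta^{-1}$. The aggregated smoothness error from the two lemmata is $i_s \delta^2 D^2 L + (\delta^{-1} - i_s) \delta^2 D^2 L = \delta D^2 L$, matching the error term in the conclusion.

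Next I would pass from the discrete coefficients to their exponential limits via the standard approximations $(1-\delta)^{i_s} = e^{-t_s}(1 + O(\delta))$, $(1-\delta)^{\delta^{-1}} = e^{-1}(1 + O(\delta))$, $(1-\delta)^{\delta^{-1}-i_s} = e^{-(1-t_s)}(1 + O(\delta))$, $\delta i_s = t_s + O(\delta)$, and $\delta(\delta^{-1}-i_s) = (1-t_s) + O(\delta)$. The resulting $O(\delta)$ errors, multiplied by the bounded constants that appear as coefficients, can be charged to the $O(\eps) F(\vo)$ slack in the conclusion (since $\delta \leq \eps$ by assumption). Then I would use Lemma~\ref{lem:guess_tau} to replace $g, g_{\hprod}, g_{\psum}$ by $F(\vo), F(\vz \hprod \vo), F(\vz \psum \vo)$: the coefficient of $g$ is positive so the substitution $g \geq (1-\eps) F(\vo)$ costs an $O(\eps) F(\vo)$ additive error, while the coefficients of $g_{\hprod}$ and $g_{\psum}$ are negative so the substitutions $g_{\hprod} \leq F(\vz \hprod \vo)$ and $g_{\psum} \leq F(\vz \psum \vo)$ only weaken the lower bound, requiring no further error.

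The main obstacle is verifying that the three limiting coefficients match the claimed expression. The coefficient of $F(\vo)$ combines $(1-t_s)e^{t_s-1}$ from the direct term in Lemma~\ref{lem:recursive_solve_second} with $e^{t_s-1}(1 - e^{-t_s}) = e^{t_s-1} - e^{-1}$ from the propagated $F(\vy(i_s))$ contribution, summing to $(2-t_s)e^{t_s-1} - e^{-1} = e^{-1}[(2-t_s) e^{t_s} - 1]$. The coefficient of $F(\vz \hprod \vo)$ comes purely from the propagated part and evaluates to $-e^{t_s-1}(1 - e^{-t_s}) = -e^{-1}(e^{t_s}-1)$. The coefficient of $F(\vz \psum \vo)$ is the delicate one: adding the direct contribution $-(1-t_s)e^{-1}[e^{t_s} - 1 - (1-t_s)/2]$ to the propagated contribution $-e^{t_s-1} + (1+t_s)e^{-1}$ and expanding the polynomial-in-$t_s$ parts yields $-e^{-1}[(2-t_s)e^{t_s} + t_s - (t_s^2+5)/2]$. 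Once this computation is carried out the claimed bound follows immediately.
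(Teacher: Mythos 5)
Your proposal follows exactly the paper's route: chain Lemma~\ref{lem:recursive_solve_first} at $i = i_s$ into Lemma~\ref{lem:recursive_solve_second} at $i = \delta^{-1}$, pass from the discrete products to their exponential limits, and then replace the guessed values $g$, $g_{\hprod}$, $g_{\psum}$ with $F(\vo)$, $F(\vz \hprod \vo)$, $F(\vz \psum \vo)$ via Lemma~\ref{lem:guess_tau}. The limiting coefficients you compute are the ones that appear in the paper.

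One step does not go through as written. You claim the coefficient of $g_{\psum}$ is negative, so that the one-sided inequality $g_{\psum} \leq F(\vz\psum\vo)$ suffices and costs no error. But the coefficient of $g_{\psum}$ in the limit is $-e^{-1}\bigl[(2 - t_s)e^{t_s} + t_s - \tfrac{t_s^2 + 5}{2}\bigr]$, and the bracketed expression is negative for small $t_s$ (it equals $-1/2$ at $t_s = 0$), making the coefficient of $g_{\psum}$ \emph{positive} there. Since the corollary is stated for every $t_s \in (0,1)$, you must also handle this case, which requires the lower bound $g_{\psum} \geq F(\vz\psum\vo) - \eps g$ from Lemma~\ref{lem:guess_tau}; the paper does this via the observation that the bracketed expression is always $\geq -1$, so the resulting extra error is still $O(\eps)\cdot g = O(\eps)\cdot F(\vo)$. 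Relatedly, when you fold the discretization error $O(\eps)(g + g_{\hprod} + g_{\psum})$ into the $O(\eps) F(\vo)$ slack, you implicitly need $g_{\psum} \leq 2 F(\vo)$ (not merely $g_{\psum} \leq F(\vz\psum\vo)$), which the paper records explicitly. These are bookkeeping omissions rather than conceptual ones, but they need to be closed for the argument to cover the full stated parameter range.
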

\begin{proof}
Plugging the lower bound given by Lemma~\ref{lem:recursive_solve_first} for $i = i_s$ into the lower bound given by Lemma~\ref{lem:recursive_solve_second} for $i = \delta^{-1}$, we get
\begin{align} \label{eq:basic_bound}
	F(\vy(\delta^{-1}))
	\geq{} &
	(1 - \delta i_s) \cdot (1 - \delta)^{1/\delta} \left[((1 - \delta)^{-i_s} - 4\eps) g -\left((1 - \delta)^{-i_s}-1 - \frac{1 + \delta (1 - i_s)}{2}\right) g_{\psum}\right] \\\nonumber&\mspace{200mu}+ (1 - \delta)^{1/\delta} \cdot ((1 - \delta)^{-i_s} - 1) \cdot ((1 - 2\eps)g - g_{\hprod}) \\\nonumber&\mspace{200mu}- ( (1 - \delta)^{1/\delta - i_s} - (1 + \delta i_s) \cdot (1 - \delta)^{1/\delta}) \cdot g_{\psum} - \delta D^2 L
	\enspace.
\end{align}

The coefficient of $g$ on the right hand side of Inequality~\eqref{eq:basic_bound} is
\begin{align*}
	(1 - \delta i_s) \cdot (&1 - \delta)^{1/\delta} \cdot ((1 - \delta)^{-i_s} - 4\eps) + (1 - \delta)^{1/\delta} \cdot ((1 - \delta)^{-i_s} - 1) \cdot (1 - 2\eps)\\
	={} &
	(1 - \delta i_s) \cdot (1 - \delta)^{1/\delta} \cdot (1 - \delta)^{-i_s} + (1 - \delta)^{1/\delta} \cdot ((1 - \delta)^{-i_s} - 1) - O(\eps)\\
	={} &
	(1 - \delta)^{1/\delta} \cdot [(1 - \delta)^{-i_s} \cdot (2 - \delta i_s) - 1] - O(\eps)
	\geq
	e^{-1}(1 - \delta) \cdot [e^{\delta i_s} \cdot (2 - \delta i_s) - 1] - O(\eps)\\
	\geq{} &
	e^{-1}(1 - \delta) \cdot [e^{t_s} \cdot (2 - t_s - \delta) - 1] - O(\eps)
	=
	e^{-1}\cdot [(2 - t_s)e^{t_s} - 1] - O(\eps)
	\enspace,
\end{align*}
where the last equality holds due to our assumption that $\delta \leq \eps$.

The coefficient of $g_{\hprod}$ on the right hand side of Inequality~\eqref{eq:basic_bound} is
\begin{align*}
	- (1 - \delta)^{1/\delta} \cdot ((1 - \delta)^{-i_s} - 1)
	={} &
	(1 - \delta)^{1/\delta} - (1 - \delta)^{1/\delta - i_s}
	\geq
	e^{-1}(1 - \delta) - e^{\delta i_s - 1}\\
	\geq{} &
	e^{-1}(1 - \delta) - (1 + 2\delta) \cdot e^{t_s - 1}
	=
	-e^{-1}(e^{t_s} - 1) - O(\eps)
	\enspace,
\end{align*}
where the last equality follows from our assumption that $\delta \leq \eps$.

Finally, the coefficient of $g_{\psum}$ on the right hand side of Inequality~\eqref{eq:basic_bound} is
\begin{align*}
	&
	- (1 - \delta i_s) \cdot (1 - \delta)^{1/\delta} \left((1 - \delta)^{-i_s}-1 - \frac{1 + \delta (1 - i_s)}{2}\right) - (1 - \delta)^{1/\delta - i_s} + (1 + \delta i_s) \cdot (1 - \delta)^{1/\delta}\\
	\geq{} &
	- (1 - \delta i_s) \cdot (1 - \delta)^{1/\delta} \left((1 - \delta)^{-i_s}-1 - \frac{1 - \delta i_s}{2}\right) - (1 - \delta)^{1/\delta - i_s} + (1 + \delta i_s) \cdot (1 - \delta)^{1/\delta}\\
	={} &
	- (1 - \delta)^{1/\delta} \left(2(1 - \delta)^{-i_s} + \delta i_s(1 - (1 - \delta)^{-i_s}) - \frac{5 + \delta^2i_s^2}{2} \right)\\
	\geq{} &
	- (1 - \delta)^{1/\delta} \left(2(1 - \delta)^{-i_s} + t_s(1 - (1 - \delta)^{-i_s}) + \delta - \frac{5 + t_s^2}{2} \right)\\
	={} &
	- (1 - \delta)^{1/\delta}\left(t_s + \delta - \frac{5 + t_s^2}{2}\right) - (1 - \delta)^{1/\delta - i_s}(2 - t_s)
	\geq
	- e^{-1}\left(t_s + \delta - \frac{5 + t_s^2}{2}\right) - e^{\delta i_s - 1}(2 - t_s)\\
	\geq{} &
	- e^{-1}\left(t_s + \delta - \frac{5 + t_s^2}{2}\right) - (1 + 2\delta) \cdot e^{t_s - 1}(2 - t_s)
	=
	-e^{-1}\left((2 - t_s)e^{t_s} + t_s - \frac{t_s^2 + 5}{2}\right) - O(\eps)
	\enspace,
\end{align*}
where the last equality again holds due to our assumption that $\delta \leq \eps$.

Combining all the above inequalities, we get
\begin{align*}
	F(\vy(\delta^{-1}))
	\geq{} &
	e^{-1} \cdot \left[((2 - t_s)e^{t_s} - 1) \cdot g - (e^{t_s} - 1) \cdot g_{\hprod} - \left((2 - t_s)e^{t_s} + t_s - \frac{t_s^2 + 5}{2}\right) \cdot g_{\psum} \right] \\&\mspace{400mu}- O(\eps) \cdot (g + g_{\hprod} + g_{\psum}) - \delta D^2 L\\
	\geq{} &
	e^{-1} \cdot \left[((2 - t_s)e^{t_s} - 1) \cdot F(\vo) - (e^{t_s} - 1) \cdot F(\vz \hprod \vo) - \vphantom{\frac{1}{2}}\right.\\&\mspace{30mu} \left.\left((2 - t_s)e^{t_s} + t_s - \frac{t_s^2 + 5}{2}\right) \cdot F(\vz \psum \vo) \right] - O(\eps) \cdot (F(\vo) + g + g_{\hprod} + g_{\psum}) - \delta D^2 L
	\enspace,
\end{align*}
where the second inequality holds since $(2 - t_s)e^{t_s} + t_s - \frac{5 + t_s^2}{2} \geq -1$. To complete the proof of the lemma, it now remains to observe that: $g \leq F(\vo)$ by the definition of $g$; $g_{\vz \hprod \vo} \leq F(\vz \hprod \vo) \leq F(\vo)$, where the second inequality holds by the definition of $\vo$ since the down-closeness of $P$ guarantees that $\vz \hprod \vo \in P$; and $g_{\psum} \leq F(\vz \psum \vo) \leq 2\cdot F(\vo)$ (see the proof of Corollary~\ref{cor:good_below_empty} for a justification of the last inequality).
\end{proof}

At this point, it remains to consider the case in which $Q(i)$ is empty for some $i \in [\delta^{-1}]$. Note that when $Q(i)$ is empty, we get, in particular, $\vo \not \in Q(i)$. Therefore, we can define $\io \triangleq \min\{i \in [\delta^{-1}] \mid \vo \not \in Q(i)\}$.
\begin{observation} \label{obs:approximate_local_search}
For every $i \in [\io - 1]$,
\[
	F(\vx(i))
	\geq
	\frac{1}{2} \cdot [F(\vx(i) \vee \vo) + F(\vx(i) \wedge \vo)] - \eps \cdot F(\vo) - \delta D^2L
	\enspace.
\]
\end{observation}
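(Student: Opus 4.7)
The plan is to derive the observation as an essentially immediate consequence of Theorem~\ref{thm:local_search} applied to the convex body $Q(i)$, using the specific feasible direction $\vo$. Concretely, I would have Algorithm~\ref{alg:component-discrete} invoke the local-maximum routine with an error parameter $\delta' \triangleq \min\{2\eps,\,4\delta\}$ (any polynomial-in-the-input choice of this magnitude is fine, and this was left implicit in the algorithm's pseudocode).

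First, I would use the definition $\io \triangleq \min\{i \in [\delta^{-1}] \mid \vo \not\in Q(i)\}$ to conclude that for every $i \in [\io - 1]$, we have $\vo \in Q(i)$; in particular, $Q(i)$ is non-empty, so $\vx(i)$ is genuinely the output of the approximate local-search procedure on $Q(i)$. By Theorem~\ref{thm:local_search} applied with convex body $Q(i)$ (which has diameter at most $D$) and error parameter $\delta'$,
\[
F(\vx(i))
\;\geq\;
\tfrac{1}{2}\cdot \max_{\vy \in Q(i)} \bigl[F(\vx(i)\vee \vy) + F(\vx(i)\wedge \vy)\bigr]
\;-\;
\tfrac{\delta'\,[2\cdot \max_{\vy' \in Q(i)} F(\vy') + D^2 L]}{4}\,.
\]

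Next, I would specialize the maximum on the right-hand side to the particular vector $\vo \in Q(i)$, which only decreases that term and produces $F(\vx(i)\vee \vo)+F(\vx(i)\wedge \vo)$. To handle the subtracted error term I would use that $Q(i)\subseteq P$, so $\max_{\vy'\in Q(i)} F(\vy') \leq \max_{\vy'\in P} F(\vy') = F(\vo)$ by the definition of $\vo$. The error term is therefore at most $\tfrac{\delta'}{2}\cdot F(\vo) + \tfrac{\delta' D^2 L}{4}$, and plugging in $\delta' \leq \min\{2\eps,\,4\delta\}$ yields the claimed bound $\eps\cdot F(\vo) + \delta D^2 L$.

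I don't anticipate any real obstacle: the only tricky point is being explicit about the error parameter passed into Theorem~\ref{thm:local_search}, which needs to be at most a small constant multiple of $\min\{\eps,\delta\}$. Since this error parameter enters only polynomially into the running time of the local-search subroutine, this choice is consistent with the overall $\Poly(|\cN|,\delta^{-1})$ time bound promised by Theorem~\ref{thm:main-component-discrete}, and no additional structural reasoning is needed.
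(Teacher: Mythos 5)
Your proposal is correct and takes essentially the same approach as the paper: use the minimality of $\io$ to conclude $\vo \in Q(i)$, invoke Theorem~\ref{thm:local_search} on $Q(i)$, specialize the maximum to $\vo$, and bound the error term using $\max_{\vy' \in Q(i)} F(\vy') \leq F(\vo)$. The only cosmetic difference is that the paper passes the error parameter $\delta$ itself to the local-search subroutine (already assumed $\leq \eps$ in the setup of Algorithm~\ref{alg:component-discrete}), whereas you introduce a separate $\delta' = \min\{2\eps, 4\delta\}$; both choices yield the claimed bound.
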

\begin{proof}
Since $\vo \in Q(i)$ by the definition of $\io$, Theorem~\ref{thm:local_search} guarantees that the vector $\vx(i)$ obeys
\begin{align*}
	F(\vx(i))
	\geq{} &
	\frac{1}{2}\cdot \max_{\vy \in Q(i)} [F(\vx(i) \vee \vy)+ F(\vx(i) \wedge \vy)]
	-
	\frac{\delta \cdot [2 \cdot \max_{\vy' \in Q(i)} F(\vy') + D^2 L]}{4}\\
	\geq{} &
	\frac{1}{2}\cdot [F(\vx(i) \vee \vo)+ F(\vx(i) \wedge \vo)]
	-
	\frac{\delta \cdot [2 \cdot F(\vo) + D^2 L]}{4}\\
	={} &
	\frac{1}{2}\cdot [F(\vx(i) \vee \vo)+ F(\vx(i) \wedge \vo)] - O(\eps) \cdot F(\vo) + O(\delta D^2 L)
	\enspace,
\end{align*}
where the last equality holds by our assumption that $\delta \leq \eps$.
\end{proof}

Observation~\ref{obs:approximate_local_search} immediately implies Theorem~\ref{thm:main-component-discrete} if there is some $i \in [\io - 1]$ such that $F(\vx(i) \psum \vo) \leq F(\vz \psum \vo) - \eps \cdot F(\vo)$. Hence, proving Theorem~\ref{thm:main-component-discrete} now boils down to showing that the last inequality must hold for at least one value $i \in [\io - 1]$. Lemmata~\ref{lem:o_in_first} and~\ref{lem:o_in_second} show together that this is indeed the case, which completes the proof of Theorem~\ref{thm:main-component-discrete}. To prove these lemmata, we first need to prove the following helper lemma.

\begin{lemma} \label{lem:helper}
It must hold that $F(\vy(\io - 1) \psum \vo - \vz(\io - 1) \hprod \vo) \leq V(\io - 1)$.
\end{lemma}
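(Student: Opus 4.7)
The proof plan is essentially a discrete analog of Lemma~\ref{lem:product_lower_bound} from the continuous section, and it is short. First I would unpack the definition of $\io$: since $\io = \min\{i \in [\delta^{-1}] \mid \vo \notin Q(i)\}$ and $\vo \in P$ (note $\vo \in P$ always), the failure of membership in $Q(\io)$ reduces to the single defining inequality of $Q(\io)$, giving
\[
   \inner{\vw(\io)}{\vo} < V(\io - 1) - F(\vy(\io - 1)).
\]

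Next I would expand $\vw(\io) = (\vone - \vy(\io - 1) - \vz(\io - 1)) \hprod \nabla F(\vy(\io - 1))$, so that
\[
   \inner{\vw(\io)}{\vo} = \inner{\nabla F(\vy(\io - 1))}{(\vone - \vy(\io - 1) - \vz(\io - 1)) \hprod \vo}.
\]
To invoke Property~\ref{prop:dr_bound2_up} of Lemma~\ref{lem:DR_properties} I need to verify that $\vy(\io - 1) + \vz(\io - 1) \leq \vone$, so that $(\vone - \vy(\io - 1) - \vz(\io - 1)) \hprod \vo$ is a legitimate non-negative increment that keeps the argument inside $[0,1]^\cN$. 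This I would check using Lemma~\ref{lem:y_value_discrete}: when $\io - 1 < i_s$ the closed form gives $\vy(\io-1) \leq \vone - \vz = \vone - \vz(\io - 1)$, and when $\io - 1 \geq i_s$ we have $\vz(\io - 1) = \vzero$ so the bound is immediate. Property~\ref{prop:dr_bound2_up} then yields
\[
   \inner{\vw(\io)}{\vo} \geq F\bigl(\vy(\io - 1) + (\vone - \vy(\io - 1) - \vz(\io - 1)) \hprod \vo\bigr) - F(\vy(\io - 1)).
\]

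The last ingredient is the algebraic identity
\[
   \vy + (\vone - \vy - \vz) \hprod \vo = \vy \psum \vo - \vz \hprod \vo,
\]
which follows directly from $\vy \psum \vo = \vy + \vo - \vy \hprod \vo$. Applying this identity with $\vy = \vy(\io - 1)$ and $\vz = \vz(\io - 1)$, and then chaining the two inequalities above, we obtain
\[
   F(\vy(\io - 1) \psum \vo - \vz(\io - 1) \hprod \vo) - F(\vy(\io - 1)) \leq \inner{\vw(\io)}{\vo} < V(\io - 1) - F(\vy(\io - 1)),
\]
which after cancelling $F(\vy(\io - 1))$ gives the desired conclusion. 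There is no substantive obstacle in this proof; the only point of care is the domain check $\vy(\io - 1) + \vz(\io - 1) \leq \vone$, handled in the two cases depending on whether $\io - 1$ lies below or above $i_s$.
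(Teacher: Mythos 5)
Your proposal is correct and follows essentially the same route as the paper's proof: the paper derives $\inner{\vw(\io)}{\vo} \leq V(\io-1) - F(\vy(\io-1))$ from $\vo\notin Q(\io)$, and then observes that "the proof of Lemma~\ref{lem:product_lower_bound} can be repeated" to get $\inner{\vw(i)}{\vo} \geq F(\vy(i-1) \psum \vo - \vz(i-1) \hprod \vo) - F(\vy(i-1))$, which is exactly the unpacking (Property~\ref{prop:dr_bound2_up} plus the identity $\vy + (\vone-\vy-\vz)\hprod\vo = \vy\psum\vo - \vz\hprod\vo$) that you wrote out explicitly. Your additional verification that $\vy(\io-1)+\vz(\io-1)\leq\vone$ is a reasonable sanity check that the paper leaves implicit in Lemma~\ref{lem:product_lower_bound}, and it is handled correctly in both cases.
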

\begin{proof}
By the definition of $\io$, $\vo \not \in Q(\io)$. Thus, we must have
\[
	\inner{\vw(\io)}{\vo} \leq V(\io - 1) - F(\vy(\io - 1))
	\enspace.
\]
We now observe that the proof of Lemma~\ref{lem:product_lower_bound} can be repeated to get, for every $i \in [\delta^{-1}]$,
\[
	\inner{\vw(i)}{\vo} \geq F(\vy(i - 1) \psum \vo - \vz(i - 1) \hprod \vo) - F(\vy(i - 1))
	\enspace.
\]
The lemma now follows by plugging $i = \io$ into the last inequality, and then combining it with the previous inequality.
\end{proof}

\begin{lemma} \label{lem:o_in_first}
It must hold that $\io > i_s$.
\end{lemma}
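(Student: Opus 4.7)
The plan is a contradiction argument. Suppose $\io \le i_s$ and set $k = \io - 1$; since $\io$ and $i_s$ are integral this gives $k < i_s$, hence $\vz(k) = \vz$. The goal becomes showing $F(\vy(k) \psum \vo - \vz \hprod \vo) > V(k)$, which directly contradicts Lemma~\ref{lem:helper}. The whole argument is a natural discretization of the continuous-time proof of Lemma~\ref{lem:first_o_guarantee}, using Corollary~\ref{cor:general_bound_discrete} in place of Lemma~\ref{lem:mathematical_general}.

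First I would rewrite the argument of $F$ in the form Corollary~\ref{cor:general_bound_discrete} expects. Plugging the closed form of $\vy(k)$ from Lemma~\ref{lem:y_value_discrete} into $\vy(k) \psum \vo - \vz \hprod \vo$ and simplifying via $\psum$/$\hprod$ algebra yields the identity $\vy(k) \psum \vo - \vz \hprod \vo = (\vone - \vz) \hprod (\vo \psum \PSum_{j=1}^{k}(\delta \cdot \vx(j)))$. Then I would apply Corollary~\ref{cor:general_bound_discrete} with $h = 2$, $\vb^{(1)} = \vz$, $\vb^{(2)} = \vone - \vz$, and a length-$(k+1)$ sequence in which $\vx^{(1)} \equiv \vzero$, $\vx^{(2)}(j) = \vx(j)$ and $p_j = \delta$ for $j \in [k]$, while $\vx^{(2)}(k+1) = \vo$ with $p_{k+1} = 1$. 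Only subsets $S$ containing $k+1$ contribute; reindexing by $S' = S \setminus \{k+1\} \subseteq [k]$ gives the lower bound $\sum_{S' \subseteq [k]} \delta^{|S'|}(1-\delta)^{k-|S'|} \cdot F((\vone - \vz) \hprod (\vo \psum \PSum_{j \in S'} \vx(j)))$.

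For $S' \ne \varnothing$, Lemma~\ref{lem:technical_less_ts} (with $\vo' = (\vone - \vz) \hprod \vo$ and $\va = (\vone - \vo) \hprod \PSum_{j \in S'} \vx(j)$) bounds the summand below by $F(\vo) - F(\vz \psum \vo) - F(\vz \hprod \vo)$. For $S' = \varnothing$, the summand is $F((\vone - \vz) \hprod \vo)$, which by the standard DR-submodular concavity inequality $F(\vu) + F(\vv) \ge F(\vu + \vv) + F(\vzero)$ (applied with $\vu = (\vone - \vz) \hprod \vo$ and $\vv = \vz \hprod \vo$) plus non-negativity is at least $F(\vo) - F(\vz \hprod \vo)$. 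Summing with the binomial weights collapses the geometric mass $\sum_{S' \ne \varnothing} \delta^{|S'|}(1-\delta)^{k-|S'|} = 1 - (1-\delta)^k$ onto the $F(\vz \psum \vo)$ term, yielding $F(\vy(k) \psum \vo - \vz \hprod \vo) \ge F(\vo) - F(\vz \hprod \vo) - (1 - (1-\delta)^k) F(\vz \psum \vo)$.

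Finally I would compare this bound with $V(k) = (1 - 2\eps) g - g_{\hprod} - (1 - (1-\delta)^k) g_{\psum}$, using the three inequalities of Lemma~\ref{lem:guess_tau}: $g \le F(\vo)$, $g_{\hprod} \ge F(\vz \hprod \vo) - \eps g$, and $g_{\psum} \ge F(\vz \psum \vo) - \eps g$. Together these give a slack of at least $2\eps g - \eps g - (1 - (1-\delta)^k) \eps g = \eps g (1-\delta)^k > 0$, since $g \ge (1-\eps) F(\vo) > 0$ and $\delta < 1$. This strict inequality contradicts Lemma~\ref{lem:helper}, yielding $\io > i_s$. The only real obstacle is the $\psum$/$\hprod$ identity at the start — once it is in hand, everything else is a faithful discrete analog of Lemma~\ref{lem:first_o_guarantee} plus $O(\eps)$ bookkeeping to absorb the gap between $F(\vo), F(\vz \hprod \vo), F(\vz \psum \vo)$ and their guesses $g, g_{\hprod}, g_{\psum}$.
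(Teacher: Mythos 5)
Your proof is correct and takes essentially the same route as the paper: assume $\io \le i_s$, rewrite $\vy(\io-1)\psum\vo-\vz\hprod\vo$ as $(\vone-\vz)\hprod(\vo\psum\PSum_{j=1}^{\io-1}(\delta\cdot\vx(j)))$, apply Corollary~\ref{cor:general_bound_discrete} with an extra $\vo$-element of probability $1$ (so only subsets containing it survive), lower bound the nonempty-$S'$ terms by $F(\vo)-F(\vz\psum\vo)-F(\vz\hprod\vo)$ and the empty term by $F(\vo)-F(\vz\hprod\vo)$, collapse the binomial weights, and contradict Lemma~\ref{lem:helper} after accounting for the $\eps$-gaps between $g,g_\hprod,g_\psum$ and the true values. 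The only cosmetic difference is that you invoke Lemma~\ref{lem:technical_less_ts} to bound the nonempty-$S'$ terms where the paper re-derives the same inequality inline via two applications of DR-submodularity, and you make explicit the positive slack $\eps g(1-\delta)^{\io-1}$ in the final comparison, which the paper leaves implicit.
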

\begin{proof}
Assume towards a contradiction that $\io \leq i_s$. Then, since $\io - 1 < i_s$, the inequality proved by Lemma~\ref{lem:helper} can be rewritten as
\begin{align} \label{eq:before_expansion}
	F(\vy(\io - 1) \psum \vo - \vz \hprod \vo)
	\leq{} &
	(1 - 2\eps)	\cdot g - g_{\hprod} - (1- (1 - \delta)^{\io - 1})\cdot g_{\psum}\\\nonumber
	\leq{} &
	g - F(\vz \hprod \vo) - (1- (1 - \delta)^{\io - 1})\cdot F(\vz \psum \vo)\\\nonumber
	\leq{} &
	F(\vo) - F(\vz \hprod \vo) - (1- (1 - \delta)^{\io - 1})\cdot F(\vz \psum \vo)
	\enspace.
\end{align}
Notice now that Lemma~\ref{lem:y_value_discrete} implies that
\begin{align*}
	F(\vy(\io - 1) \psum \vo - \vz \hprod \vo)
	={} &
	F\left(\left((\vone - \vz) \hprod \PSum_{j = 1}^{i_o - 1} (\delta \cdot \vx(j))\right) \psum \vo - \vz \hprod \vo\right)\\
	={} &
	F\left(\mspace{-1mu}(\vone - \vz) \hprod \left(\PSum_{j = 1}^{i_o - 1} (\delta \cdot \vx(j)) \psum (1 \cdot \vo)\right) \mspace{-2mu}+ \vz \hprod \left(\PSum_{j = 1}^{i_o - 1} (\delta \cdot \vzero) \psum (1 \cdot \vzero)\right)\mspace{-1mu}\right)\\
	\geq{} &
	\sum_{S \subseteq [\io - 1]} \delta^{|S|}(1 - \delta)^{\io - 1 - |S|} \cdot F\left((\vone - \vz) \hprod \left(\PSum_{j \in S} \vx(j) \psum \vo\right)\right)
	\enspace,
\end{align*}
where the inequality follows from Corollary~\ref{cor:general_bound_discrete}. The term on the rightmost side of the last inequality corresponding to $S = \varnothing$ is equal to
\begin{align*}
	(1 - \delta)^{\io - 1} \cdot F((\vone - \vz) \hprod \vo)
	\geq{} &
	(1 - \delta)^{\io - 1} \cdot [F(\vo) - F(\vz \hprod \vo) + F(\vzero)]\\
	\geq{} &
	(1 - \delta)^{\io - 1} \cdot [F(\vo) - F(\vz \hprod \vo)]
	\enspace,
\end{align*}
where the first inequality holds by $F$'s DR-submodularity, and the second inequality follows from $F$'s non-negativity. The term corresponding to every set $S \neq \varnothing$ can be lower bounded by $\delta^{|S|}(1 - \delta)^{\io - 1 - |S|}$ times
\begin{align*}
	\mspace{160mu}&\mspace{-160mu}
	F\left((\vone - \vz) \hprod \left(\PSum_{j \in S} \vx(j) \psum \vo\right)\right)
	=
	F\left(\left((\vone - \vz) \hprod \PSum_{j \in S} (\delta \cdot \vx(j))\right) \psum \vo - \vz \hprod \vo\right)\\
	\geq{} &
	F\left(\left((\vone - \vz) \hprod \PSum_{j \in S} (\delta \cdot \vx(j))\right) \psum \vo\right) - F(\vz \hprod \vo)\\
	\geq{} &
	F\left(\left((\vone - \vz) \hprod \PSum_{j \in S} (\delta \cdot \vx(j)) + \vz\right) \psum \vo\right) - [F(\vz \psum \vo) - F(\vo)] - F(\vz \hprod \vo)\\
	\geq{} &
	F(\vo) - F(\vz \psum \vo) - F(\vz \hprod \vo)
	\enspace,
\end{align*}
where the first inequality follows form $F$'s DR-submodularity and non-negativity, the second inequality holds by the DR-submodularity of $F$, and the last inequality holds by the non-negativity of $F$. Combining the above inequalities, we get
\begin{align*}
	F(\vy(\io - 1) &{}\psum \vo - \vz \hprod \vo)
	\geq
	(1 - \delta)^{\io - 1} \cdot [F(\vo) - F(\vz \hprod \vo)] \\&\mspace{150mu}+ \sum_{\varnothing \neq S \subseteq [\io - 1]} \mspace{-18mu} \delta^{|S|}(1 - \delta)^{\io - 1 - |S|} \cdot [F(\vo) - F(\vz \psum \vo) - F(\vz \hprod \vo)]\\
	={} &
	(1 - \delta)^{\io - 1} \cdot [F(\vo) - F(\vz \hprod \vo)] + [1 - (1 - \delta)^{\io - 1}] \cdot [F(\vo) - F(\vz \psum \vo) - F(\vz \hprod \vo)]\\
	={} &
	F(\vo) - F(\vz \hprod \vo) - [1 - (1 - \delta)^{\io - 1}] \cdot  F(\vz \psum \vo)
	\enspace,
\end{align*}
which contradicts Inequality~\eqref{eq:before_expansion}.
\end{proof}

\begin{lemma} \label{lem:o_in_second}
If $\io > i_s$, then there must exist some $i \in [\io - 1]$ such that $F(\vx(i) \psum \vo) \leq F(\vz \psum \vo) - \eps \cdot F(\vo)$.
\end{lemma}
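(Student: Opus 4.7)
My plan is to mirror the continuous-time argument of Corollary~\ref{cor:good_below_empty}. Since $\io > i_s$ forces $\vz(\io-1) = \vzero$, the conclusion of Lemma~\ref{lem:helper} simplifies to $F(\vy(\io-1) \psum \vo) \leq V(\io-1)$. I would combine this upper bound with a matching lower bound on $F(\vy(\io-1) \psum \vo)$ that makes the values $F(\vo \psum \vx(k))$ for $k > i_s$ explicit, and then derive a contradiction by assuming $F(\vx(i) \psum \vo) > F(\vz \psum \vo) - \eps \cdot F(\vo)$ for every $i \in [\io - 1]$.

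For the lower bound, I would first use Lemma~\ref{lem:y_value_discrete} together with the identities $\vy(\io-1) \psum \vo = \vone - (\vone - \vy(\io-1)) \hprod (\vone - \vo)$ and $\vo \psum \vb - \vo = (\vone - \vo) \hprod \vb$ to write
\[
\vy(\io-1) \psum \vo = (\vone-\vz) \hprod \bigl(\vo \psum \PSum_{j=1}^{\io-1}(\delta \vx(j))\bigr) + \vz \hprod \bigl(\vo \psum \PSum_{j=i_s+1}^{\io-1}(\delta \vx(j))\bigr).
\]
This is exactly the form needed to apply Corollary~\ref{cor:general_bound_discrete} with $h = 2$, $\vb^{(1)} = \vone - \vz$, $\vb^{(2)} = \vz$, an auxiliary deterministic index of probability $p = 1$ carrying the vector $\vo$, the remaining probabilities equal to $\delta$, and $\vx^{(2)}(j) = \vzero$ for $j \leq i_s$. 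Expanding, I would split the sum over subsets $T \subseteq [\io-1]$ of the non-deterministic indices into four classes: (i) $T = \varnothing$ gives $F(\vo)$; (ii) $T = \{k\}$ with $k > i_s$ gives exactly $F(\vo \psum \vx(k))$; (iii) nonempty $T \subseteq [i_s]$ has summand $F(\vo + (\vone-\vz) \hprod (\vone-\vo) \hprod \PSum_{j\in T}\vx(j))$, which Lemma~\ref{lem:technical_less_ts} (applied with $\vo' = \vo$) bounds below by $F(\vo) - F(\vz \psum \vo)$; and (iv) all other subsets, which I discard via $F \geq 0$. The geometric weights in class (iii) telescope to $(1-\delta)^{\io-1-i_s} - (1-\delta)^{\io-1}$, giving
\[
F(\vy(\io-1)\psum\vo) \geq (1-\delta)^{\io-1-i_s} F(\vo) + \delta(1-\delta)^{\io-2}\mspace{-14mu}\sum_{k=i_s+1}^{\io-1}\mspace{-14mu} F(\vo\psum\vx(k)) - \bigl[(1-\delta)^{\io-1-i_s} - (1-\delta)^{\io-1}\bigr] F(\vz\psum\vo).
\]

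I would then chain this with $F(\vy(\io-1)\psum\vo) \leq V(\io-1)$, solve for the sum over $k$, and plug in the contradiction-hypothesis lower bound of $M \cdot (F(\vz\psum\vo) - \eps F(\vo))$, where $M \triangleq \io - 1 - i_s \leq \delta^{-1}$. After substituting the explicit formula for $V(\io-1)$ and replacing $g, g_{\psum}$ by $F(\vo), F(\vz\psum\vo)$ via Lemma~\ref{lem:guess_tau} (incurring only $O(\eps) \cdot F(\vo)$ error), the leading $F(\vo)$ and $F(\vz\psum\vo)$ contributions coming from the $(1-\delta)^{\io-1-i_s}$ and $(\alpha-1)\beta$ terms (with $\alpha \triangleq (1-\delta)^{-i_s}$ and $\beta \triangleq (1-\delta)^{\io-1}$) cancel exactly as in the continuous proof, leaving a small non-negative left-hand side pitted against an $\eps F(\vo)$-scaled right-hand side whose coefficient is essentially $\alpha - 5 + \delta M/(1-\delta)$. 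Because $\alpha < e < 5$ for our $t_s < 1$ and $\delta M \leq 1$, this coefficient is strictly negative for sufficiently small $\delta, \eps$, yielding the contradiction; alternatively the same manipulation can be rearranged to force $F(\vz\psum\vo) > 2 F(\vo)$, which contradicts the bound $F(\vz\psum\vo) \leq 2 F(\vo)$ already proved at the end of Corollary~\ref{cor:good_below_empty}.

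The main obstacle I anticipate is the combinatorial bookkeeping in constructing the lower bound: one must pick precisely the right inputs to Corollary~\ref{cor:general_bound_discrete} so that the expansion collapses into the four clean classes above, and this requires the trick of inserting an auxiliary deterministic coordinate of probability $1$ to place $\vo$ into both $\vb^{(1)}$- and $\vb^{(2)}$-components simultaneously. A secondary nuisance is that the sign of $\alpha - 1 - \delta M$ in $V(\io-1)$ can flip depending on $\io$, so one must use the appropriate direction of the Lemma~\ref{lem:guess_tau} bound on $g_{\psum}$ in each regime; fortunately both regimes lead to the same final contradiction.
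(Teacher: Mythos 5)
Your proposal matches the paper's proof in all essentials: start from Lemma~\ref{lem:helper} to get $F(\vy(\io-1)\psum\vo)\leq V(\io-1)$, expand $F(\vy(\io-1)\psum\vo)$ from below via Lemma~\ref{lem:y_value_discrete} and Corollary~\ref{cor:general_bound_discrete} with the auxiliary probability-$1$ index carrying $\vo$ and $\vx^{(2)}(j)=\vzero$ for $j\leq i_s$, split the subsets into the same four classes, bound class~(iii) by $F(\vo)-F(\vz\psum\vo)$, and compute the geometric weights. Your use of Lemma~\ref{lem:technical_less_ts} for class~(iii) is a cosmetic substitute for the paper's direct DR-submodularity chain; they give the identical bound.

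Two small corrections to the endgame. First, the paper does not argue by contradiction: after cancelling the leading $F(\vo)$ and $F(\vz\psum\vo)$ contributions (and dropping the $(1-\delta)^{-1}$ factor, which is safe since $F\geq 0$), one directly obtains $\delta\sum_{i=i_s+1}^{\io-1}F(\vx(i)\psum\vo)\leq -\eps F(\vo)+\delta(\io-1-i_s)\cdot F(\vz\psum\vo)\leq\delta(\io-1-i_s)[F(\vz\psum\vo)-\eps F(\vo)]$ using $\delta(\io-1-i_s)\leq 1$, and the claim follows by averaging. Your contradiction route works too, but it reduces cleanly to $\delta(\io-1-i_s)>1$ (not to the coefficient $\alpha-5+\delta M/(1-\delta)$ you state, which is imprecise). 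Second, your proposed ``alternative'' of deriving $F(\vz\psum\vo)>2F(\vo)$ does not materialize here: that cancellation pattern is specific to the continuous-time Corollary~\ref{cor:good_below_empty}, whereas in the discrete case the $F(\vz\psum\vo)$ term survives with coefficient $\delta(\io-1-i_s)$ and is handled by the $\delta(\io-1-i_s)\leq 1$ bound, not by bounding $F(\vz\psum\vo)$ against $F(\vo)$. Neither of these issues affects the soundness of your main line of reasoning.
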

\begin{proof}
Since $\io$ and $i_s$ are integral, $\io - 1 \geq i_s$. Thus, the inequality proved by Lemma~\ref{lem:helper} can be rewritten as
\begin{align} \label{eq:before_expansion_second}
	F(\vy(\io - 1&) \psum \vo)
	\leq
	(1 - \delta)^{\io - 1} \cdot \left[((1 - \delta)^{-i_s} - 4\eps)\cdot g -\left((1 - \delta)^{-i_s}-1-\delta(\io - 1 -i_s)\right) \cdot g_{\psum}\right]\\\nonumber
	\leq{} &
	(1 - \delta)^{\io - 1} \cdot \left[((1 - \delta)^{-i_s} - \eps)\cdot g -\left((1 - \delta)^{-i_s}-1-\delta(\io - 1 -i_s)\right) \cdot F(\vz \psum \vo)\right] \\\nonumber
	\leq{} &
	(1 - \delta)^{\io - 1} \cdot \left[((1 - \delta)^{-i_s} - \eps)\cdot F(\vo) -\left((1 - \delta)^{-i_s}-1-\delta(\io - 1 -i_s)\right) \cdot F(\vz \psum \vo)\right]
	\enspace,
\end{align}
where the first inequality uses the fact that $(1 - \delta)^{-i_s} \leq (1 - \delta)^{-1/\delta} \leq 4$ because we assume that $\delta \leq \eps \leq 1/2$.

Lemma~\ref{lem:y_value_discrete} now implies
\begin{align} \label{eq:after_ts_basic_bound_discrete}
	F(\vy(\io - 1) \psum \vo)
	={} &
	F\left(\left((\vone - \vz) \hprod \PSum_{j = 1}^{\io - 1} (\delta \cdot \vx(j)) + \vz \hprod \PSum_{j = i_s + 1}^{
	\io - 1}(\delta \cdot \vx(j))\right) \psum \vo\right)\\\nonumber
	={} &
	F\left((\vone - \vz) \hprod  \left(\PSum_{j = 1}^{\io - 1} (\delta \cdot \vx(j)) \psum (1 \cdot \vo)\right) + \vz \hprod \left(\PSum_{j = i_s + 1}^{
	\io - 1}(\delta \cdot \vx(j)) \psum (1 \cdot \vo)\right)\right)\\\nonumber
	\geq{} &
	\sum_{S \subseteq [\io - 1]} \delta^{|S|} (1 - \delta)^{\io - 1 - |S|} \cdot F\left((\vone - \vz) \hprod \left(\PSum_{j \in S} \vx(j) \psum \vo \right) \right.\\\nonumber&\mspace{380mu}\left.+ \vz \hprod \left(\PSum_{j \in S \setminus [i_s]} \vx(j) \psum \vo \right)\right)
	\enspace,
\end{align}
where the inequality follows from Corollary~\ref{cor:general_bound_discrete} by thinking about $\PSum_{j = i_s + 1}^{\io - 1}(\delta \cdot \vx(j))$ as $\PSum_{j = 1}^{i_s} (\delta \cdot 0) \psum \PSum_{j = i_s + 1}^{\io - 1}(\delta \cdot \vx(j))$. We now lower bound some of the terms on rightmost side of the last inequality. Notice that terms that we do not (explicitly) lower bound are trivially lower bounded by $0$ since $F$ is non-negative. The term corresponding to $S = \varnothing$ is equal to
\[
	(1 - \delta)^{\io - 1} \cdot F((\vone - \vz) \hprod \vo + \vz \hprod \vo)
	=
	(1 - \delta)^{\io - 1} \cdot F(\vo)
	\enspace.
\]
Next, we need to bound all the terms corresponding to $\varnothing \neq S \subseteq [i_s]$. Such terms can be lower bounded by $\delta^{|S|} (1 - \delta)^{\io - 1 - |S|}$ times
\begin{align*}
	F\left((\vone - \vz) \hprod \left(\PSum_{j \in S} \vx(j) \psum \vo \right) + \vz \hprod \vo\right)
	={} \mspace{-200mu}&\mspace{200mu}
	F\left(\vo \psum \left((\vone - \vz) \hprod \PSum_{j \in S} \vx(j) \right)\right)\\
	\geq{} &
	F\left(\vo \psum \left(\vz + (\vone - \vz) \hprod \PSum_{j \in S} \vx(j) \right)\right) - [F(\vz \psum \vo) - F(\vo)]
	\geq
	F(\vo) - F(\vz \psum \vo)
	\enspace,
\end{align*}
where the first inequality follows from the DR-submodularity of $F$, and the second inequality holds by $F$'s non-negativity. It remains to bound terms corresponding to sets $S$ that contain a single element of $j \in [\io - 1] \setminus [i_s]$. Such terms are given by
\[
	\delta(1 - \delta)^{\io - 2} \cdot F((\vone - \vz) \hprod (\vx(j) \psum \vo) + \vz \hprod (\vx(j) \psum \vo))
	=
	\delta(1 - \delta)^{\io - 2} \cdot F(\vx(j) \psum \vo)
	\enspace.
\]
Plugging all the above bounds on the terms corresponding to various sets $S$ into Inequality~\eqref{eq:after_ts_basic_bound_discrete}, we get
\begin{align*}
	F(\vy(\io - 1&) \psum \vo)
	\geq
	(1 - \delta)^{\io - 1} \cdot F(\vo) + (1 - \delta)^{\io - 1 - i_s} [1 - (1 - \delta)^{i_s}] \cdot [F(\vo) - F(\vz \psum \vo)] \\&\mspace{350mu}+ \sum_{i = i_s + 1}^{\io - 1} \delta(1 - \delta)^{\io - 2} \cdot F(\vx(j) \psum \vo)\\
	={} &
	(1 - \delta)^{\io - 1} \cdot [(1 - \delta)^{-i_s} \cdot F(\vo) - ((1 - \delta)^{-i_s} - 1) \cdot F(\vz \psum \vo) \\&\mspace{350mu}+ \delta(1 - \delta)^{-1} \cdot \sum_{i = i_s + 1}^{\io - 1} F(\vx(j) \psum \vo)]\\
	\geq{} &
	(1 - \delta)^{\io - 1} \cdot [(1 - \delta)^{-i_s} \cdot F(\vo) - ((1 - \delta)^{-i_s} - 1) \cdot F(\vz \psum \vo) + \delta \cdot \sum_{i = i_s + 1}^{\io - 1} F(\vx(j) \psum \vo)]
	\enspace.
\end{align*}

Combining the last inequality with Inequality~\eqref{eq:before_expansion_second}, and rearranging, yields
\[
	\delta \cdot \sum_{i = i_s + 1}^{\io - 1} F(\vx(j) \psum \vo)
	\leq
	- \eps \cdot F(\vo) + \delta(\io - 1 -i_s) \cdot F(\vz \psum \vo)
	\leq
	\delta(\io - 1 -i_s) \cdot [F(\vz \psum \vo) - \eps \cdot F(\vo)]
	\enspace,
\]
where the second inequality holds since $\delta(\io - 1 - i_s) \leq 1$. The lemma now follows from the last inequality by an averaging argument.
\end{proof}

\section{Concluding Remarks}

In this paper, we have improved the state-of-the-art approximation ratio for maximizing either a non-negative submodular set function or a non-negative DR-submodular function over a convex set constraint to $0.401$, which is the first approximation ratio for this problem exceeding the natural $0.4$ barrier. Our result significantly reduces the gap between the state-of-the-art approximation ratio for this problem and the state-of-the-art inapproximability of $0.478$~\cite{gharan2011submodular}. Naturally, it will be interesting to further reduce this gap by either improving over our approximation ratio, or improving over the inapproximability of~\cite{gharan2011submodular} (which, at this point, is already over a decade old).

We would like to mention two points regarding the possibility of getting better approximation guarantees using our techniques. First, as written, {\FWMCG} uses a fixed $t_s$ value. Using a value of $t_s$ that depends on the ratios $F(\vz \psum \vo) / F(\vo)$ and $F(\vz \hprod \vo) / F(\vo)$ leads to an improvement over the guarantee stated in Theorem~\ref{thm:main}. However, this improvement is minimal (our calculations show that only the fourth digit to the right of the decimal point is affected), and thus, we chose to avoid introducing this extra complication into our algorithm. Second, in the context of some convex bodies, it is possible to allow {\FWMCG} to run until some time beyond $1$ (or $\delta^{-1}$ in the discrete version of the algorithm, see~\cite{feldman2011unified} for a proof of a similar claim for the related algorithm Measured Continuous Greedy). This immediately yields a somewhat improved guarantee for such convex bodies.

\bibliographystyle{plain}
\bibliography{submodular}

\appendix
\toggletrue{appendix}
\section{Technical Details of Proving Theorem~\titleRef{thm:discrete}} \label{app:technical_discrete}

As mentioned in Section~\ref{sec:introduction}, Theorem~\ref{thm:discrete} is a consequence of the proof of our main result (Theorem~\ref{thm:main}). However, proving Theorem~\ref{thm:discrete} in this way requires handling a few technical issues, which we discuss below.

A set function $f\colon 2^\cN \to \bR$ is submodular if it obeys $f(A) + f(B) \geq f(A \cup B) + f(A \cap B)$ for every two sets $A, B \subseteq \cN$. A common way to extend set functions to continuous functions is the multilinear extension defined as follows. The name of the multilinear extension stems from the observation that it is a multilinear function of the coordinates of $\vx$.
\begin{definition}
Given a set function $f\colon 2^\cN \to \bR$, the \emph{multilinear extension} of $f$ is a function $F\colon [0, 1]^\cN \to \bR$ such that, for every $\vx \in [0, 1]^\cN$,
\[
	F(\vx)
	=
	\bE[f(\RSet(\vx))] =\sum_{S\subseteq \cN} \left[f(S) \cdot \prod_{u\in S} x_u \prod _{u\notin S}(1-x_u)\right]
	\enspace,
\]
where $\RSet(\vx)$ is a random subset of $\cN$ that includes every element $u \in \cN$ with probability $x_u$, independently.
\end{definition}

To prove Theorem~\ref{thm:discrete} using Theorem~\ref{thm:main}, we need to apply the algorithm from the last theorem to the multilinear extension $F$. However, this already raises the first technical issue that we need to discuss. Theorem~\ref{thm:main} assumes the ability to access both the DR-submodular function $F$ and its derivatives; but, in the setting of Theorem~\ref{thm:discrete}, we have access to the objective function $f$, but not to its multilinear extension $F$. Fortunately, it is possible to use sampling to approximate both $F$ and its derivatives up to a polynomially small error term with high probability (see~\cite{calinescu2011maximzing} for the details). Using such approximations, one can implement the algorithms from the proof of Theorem~\ref{thm:main} even without access to $F$ and its derivatives. Technically, the use of this method comes at the cost of introducing an additional error term whose value is equal to $E(|\cN|) \cdot \max_{S \subseteq \cN, \characteristic_S \in P} f(S)$ for a polynomially small function $E(n)$ of our choice. However, since the multiplicative constant in the approximation guarantee implied by the proof of Theorem~\ref{thm:main} is in fact slightly better than the stated value of $0.401$, this additional error term can be ignored.

The second technical issue that we need to discuss is that Theorem~\ref{thm:main} includes the error term $\delta D^2 L$, which does not appear in Theorem~\ref{thm:discrete}. For the reason explained in the previous paragraph, this error term can be ignored if its value is at most an arbitrarily small constant times $\max_{S \subseteq \cN, \characteristic_S \in P} f(S)$. Since $\delta$ is an arbitrarily small polynomial value that we can choose, it suffices to show two things: (i) $D$ is upper bounded by a polynomial function, (ii) $L$ is upper bounded by a polynomial function times $\max_{S \subseteq \cN, \characteristic_S \in P} f(S)$. The upper bound on $D$, the diameter of the convex body $P \subseteq [0, 1]^\cN$, is trivial since the diameter of the entire hypercube $[0, 1]^\cN$ is $\sqrt{|\cN|}$. Proving the necessary bound on $L$ is somewhat more involved. Recall that $L$ is the smoothness parameter of $F$. Thus, we need to upper bound the smoothness parameter of the multilinear extension of any non-negative submodular function. We are not aware of an explicit proof in the literature of such a bound, and therefore, we prove it below (as Proposition~\ref{prop:bound_L}). However, related proofs appeared, for example, in~\cite{feldman13maximization,mokhtari2020stochastic}.

We begin the proof of Proposition~\ref{prop:bound_L} with the following observation. Let $M = \max\{f(\varnothing), |\cN| \cdot \max_{u \in \cN} f(\{u\})\}$.
\begin{observation} \label{obs:trivial_bound}
If $f\colon 2^\cN \to \nnR$ is a non-negative submodular function, then for every set $S \subseteq \cN$, $f(S) \leq M$.
\end{observation}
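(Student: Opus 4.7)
The plan is to treat the empty set as a trivial base case and then reduce to proving subadditivity for disjoint unions, which follows from submodularity combined with non-negativity.

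First I would dispose of the case $S = \varnothing$: by the definition of $M$, we have $M \geq f(\varnothing)$, so there is nothing to prove. It remains to handle $S \neq \varnothing$, and I would aim to show $f(S) \leq \sum_{u \in S} f(\{u\})$, from which the bound $f(S) \leq |S| \cdot \max_{u \in \cN} f(\{u\}) \leq |\cN| \cdot \max_{u \in \cN} f(\{u\}) \leq M$ is immediate.

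The key auxiliary fact is that any non-negative submodular set function is subadditive on disjoint sets: for disjoint $A, B \subseteq \cN$, submodularity gives
\[
	f(A \cup B) + f(\varnothing)
	=
	f(A \cup B) + f(A \cap B)
	\leq
	f(A) + f(B)
	\enspace,
\]
and since $f(\varnothing) \geq 0$ by non-negativity, this yields $f(A \cup B) \leq f(A) + f(B)$. Iterating this inequality over an arbitrary enumeration $u_1, u_2, \dotsc, u_{|S|}$ of the elements of $S$ (formally, by induction on $|S|$, splitting off one singleton at each step), I obtain
\[
	f(S)
	=
	f\left(\bigcup_{u \in S} \{u\}\right)
	\leq
	\sum_{u \in S} f(\{u\})
	\leq
	|S| \cdot \max_{u \in \cN} f(\{u\})
	\leq
	|\cN| \cdot \max_{u \in \cN} f(\{u\})
	\leq
	M
	\enspace.
\]

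There is no real obstacle here; the only point requiring minor care is making sure that the intersection terms that appear when iterating submodularity always reduce to $f(\varnothing)$ (which we control via non-negativity), and this is ensured by always splitting off a fresh singleton so the corresponding intersection is empty. Hence the observation follows.
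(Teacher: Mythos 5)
Your proof is correct and follows essentially the same route as the paper: both reduce to $f(S) \leq \sum_{u \in S} f(\{u\})$ via submodularity (you via iterated subadditivity for disjoint sets, the paper via the direct telescoping bound $f(S) \leq f(\varnothing) + \sum_{u \in S}[f(\{u\}) - f(\varnothing)]$), with non-negativity used in both to discard the $f(\varnothing)$ terms, and then finish by bounding with $|\cN| \cdot \max_u f(\{u\}) \leq M$. The only difference is cosmetic bookkeeping.
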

\begin{proof}
If $S = \varnothing$, then the observation is trivial. Otherwise, by the submodularity of $f$
\[
	f(S)
	\leq
	f(\varnothing) + \sum_{u \in S} [f(\{u\}) - f(\varnothing)]
	\leq
	\sum_{u \in S} f(\{u\})
	\leq
	\sum_{u \in \cN} f(\{u\})
	\leq
	|\cN| \cdot \max_{u \in \cN} f(\{u\})
	\leq
	M
	\enspace,
\]
where the second and third inequalities follow from the non-negativity of $f$.
\end{proof}

Using the last observation, we can now provide a first bound on the smoothness of multilinear extensions.

\begin{lemma} \label{lem:derivative_change}
Let $F$ be the multilinear extension of a non-negative submodular function $f\colon 2^\cN \to \nnR$. Then, $F$ is $L$-smooth for $L = 2|\cN| \cdot M$.
\end{lemma}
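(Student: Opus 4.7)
The plan is to bound the operator norm of the Hessian of $F$ uniformly on $[0,1]^\cN$, and then invoke a standard mean-value argument to convert this into an $L$-smoothness estimate for $\nabla F$.

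First I would recall the multilinearity of $F$: since $F$ is linear in each coordinate, the diagonal entries $\partial^2 F / \partial x_u^2$ of the Hessian are identically zero, and for distinct $u, v \in \cN$ the mixed second partial is independent of $x_u$ and $x_v$ and satisfies
\[
	\frac{\partial^2 F(\vx)}{\partial x_u \partial x_v}
	= F(\vx|_{x_u = 1, x_v = 1}) - F(\vx|_{x_u = 1, x_v = 0}) - F(\vx|_{x_u = 0, x_v = 1}) + F(\vx|_{x_u = 0, x_v = 0}),
\]
where $\vx|_{x_u = a, x_v = b}$ denotes the vector obtained from $\vx$ by fixing its $u$ and $v$ coordinates to $a$ and $b$. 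Each of the four terms on the right is a convex combination of values $f(S)$ (by the probabilistic definition of $F$), and Observation~\ref{obs:trivial_bound} tells us $0 \leq f(S) \leq M$ for every $S \subseteq \cN$. Consequently every off-diagonal entry of the Hessian lies in $[-2M, 2M]$, and so does every entry of $\nabla^2 F(\vx)$.

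Next I would translate this entrywise bound into a spectral-norm bound. For any symmetric $n \times n$ matrix $H$ with $|H_{ij}| \leq 2M$ we have $\|H\|_2 \leq \sqrt{\|H\|_1 \|H\|_\infty} = \|H\|_\infty \leq n \cdot 2M$, where $n = |\cN|$. Hence $\|\nabla^2 F(\vx)\|_2 \leq 2|\cN| M = L$ for every $\vx \in [0,1]^\cN$.

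Finally I would upgrade this pointwise Hessian bound to the smoothness inequality. For any $\vx, \vy \in [0,1]^\cN$, parametrize the line segment by $\vz(t) = \vx + t(\vy - \vx)$ for $t \in [0,1]$; since $[0,1]^\cN$ is convex this segment stays in the domain. By the fundamental theorem of calculus applied coordinatewise,
\[
	\nabla F(\vy) - \nabla F(\vx)
	= \int_0^1 \nabla^2 F(\vz(t)) \cdot (\vy - \vx) \, dt,
\]
so the triangle inequality for integrals combined with the Hessian bound gives
\[
	\|\nabla F(\vy) - \nabla F(\vx)\|_2
	\leq \int_0^1 \|\nabla^2 F(\vz(t))\|_2 \cdot \|\vy - \vx\|_2 \, dt
	\leq L \cdot \|\vy - \vx\|_2,
\]
establishing the lemma. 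There is no genuine obstacle here; the only subtlety is that the diagonal of the Hessian vanishes by multilinearity, which must be noted lest one try to bound $\partial^2 F/\partial x_u^2$ directly (which is undefined as a ``marginal-of-marginal'' at $x_u \in (0,1)$). Everything else is bookkeeping on top of Observation~\ref{obs:trivial_bound}.
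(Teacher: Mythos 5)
Your proof is correct, but it takes a genuinely different route from the paper's. The paper works directly with first derivatives and a probabilistic coupling: it writes $\partial F / \partial x_u$ as $\bE[f(\RSet(\vx)\cup\{u\}) - f(\RSet(\vx)\setminus\{u\})]$, couples $\RSet(\vx)$ and $\RSet(\vy)$ through a ``difference set'' $D$ containing each $u$ independently with probability $|x_u - y_u|$, and bounds each coordinate of $\nabla F(\vy) - \nabla F(\vx)$ by $2M\cdot\Pr[D\neq\varnothing] \le 2M\|\vx-\vy\|_1 \le 2M\sqrt{|\cN|}\,\|\vx-\vy\|_2$, after which a crude $\sqrt{|\cN|}$ factor gives $L = 2|\cN|M$. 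You instead compute the Hessian explicitly: diagonal entries vanish by multilinearity, each mixed partial is a signed sum of four evaluations of $F$, each in $[0,M]$ by Observation~\ref{obs:trivial_bound}, giving entrywise bound $2M$; the interpolation inequality $\|H\|_2 \le \sqrt{\|H\|_1\|H\|_\infty} = \|H\|_\infty \le 2|\cN|M$ (using symmetry of $H$) plus the fundamental theorem of calculus for $\nabla F$ then yields the same $L$. Both arguments are sound and arrive at the same constant. Your route is the more textbook one (bound the Hessian, integrate), and it cleanly isolates where multilinearity is used (vanishing diagonal); the paper's coupling argument avoids second derivatives entirely and stays in the ``marginals of set functions'' language that pervades the rest of the paper, but the per-coordinate-then-$\ell_2$ aggregation is a bit looser conceptually (it loses a $\sqrt{|\cN|}$ factor twice rather than once through a spectral bound, though the final constants coincide). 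One small quibble: your parenthetical remark that $\partial^2 F/\partial x_u^2$ is ``undefined as a marginal-of-marginal'' is misleading --- the second pure partial is perfectly well-defined and equals zero; you just mean that one cannot give it a two-step discrete-marginal interpretation the way one can for mixed partials.
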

\begin{proof}
Fix arbitrary two vectors $\vx, \vy \in [0, 1]^\cN$. We need to show that $\|\nabla F(\vx) - \nabla F(\vy)\|_2 \leq L \cdot \|\vx - \vy\|_2$. Let $A$ be the set of elements $u \in \cN$ with $x_u < y_u$, and let $B$ be the set of elements with $x_u > y_u$. Recall that $\RSet(x)$ is a random set containing every element $u \in \cN$ with probability $x_u$, independently. For the sake of the proof, we assume that $\RSet(\vy)$ is obtained from $\RSet(\vx)$ using the following process. Every element of $A \setminus \RSet(\vx)$ is added to a set $D$ with probability of $1 - (1 - y_u) / (1 - x_u)$, and every element of $B \cap \RSet(\vx)$ is added to $D$ with probability $1 - y_u / x_u$. Then, $\RSet(\vy)$ is chosen as $\RSet(\vx) \XOR D$.\footnote{We use $\XOR$ to indicate exclusive-OR since the symbol $\psum$ is already reserved for probabilistic sum in this paper.} Observe that every element $u \in \cN$ gets into $D$ with probability $|x_u - y_u|$, independently, and thus, $\RSet(\vx) \XOR D$ indeed has the distribution that $\RSet(\vy)$ should have.

Using the above definitions, we now get for every element $u \in \cN$,
\begin{align*}
	\frac{\partial F(\vy)}{\partial y_u}
	={} &
	\bE[f(\RSet(\vy) \cup \{u\}) - f(\RSet(\vy) \setminus \{u\})]
	=
	\bE[f((\RSet(\vx) \oplus D) \cup \{u\}) - f((\RSet(\vx) \oplus D) \setminus \{u\})]\\
	={} &
	\Pr[D = \varnothing] \cdot \bE[f(\RSet(\vx) \cup \{u\}) - f(\RSet(\vx) \setminus \{u\}) \mid D = \varnothing]\\
	&+ \Pr[D \neq \varnothing] \cdot \bE[f((\RSet(\vx) \oplus D) \cup \{u\}) - f((\RSet(\vx) \oplus D) \setminus \{u\}) \mid D \neq \varnothing]\\
	\leq{} &
	\Pr[D = \varnothing] \cdot \bE[f(\RSet(\vx) \cup \{u\}) - f(\RSet(\vx) \setminus \{u\}) \mid D = \varnothing] + \Pr[D \neq \varnothing] \cdot M
	\enspace,
\end{align*}
where the first equality holds since $F$ is a multilinear function, and the inequality holds by Observation~\ref{obs:trivial_bound} and the non-negativity of $f$.
Similarly, the non-negativity of $f$ and Observation~\ref{obs:trivial_bound} also yield
\begin{align*}
	\Pr[D &{}= \varnothing] \cdot \bE[f(\RSet(\vx) \cup \{u\}) - f(\RSet(\vx) \setminus \{u\}) \mid D = \varnothing]\\
	={} &
	\bE[f(\RSet(\vx) \cup \{u\}) - f(\RSet(\vx) \setminus \{u\}] - \Pr[D \neq \varnothing] \cdot \bE[f(\RSet(\vx) \cup \{u\}) - f(\RSet(\vx) \setminus \{u\}) \mid D \neq \varnothing]\\
	\leq{} &
	\bE[f(\RSet(\vx) \cup \{u\}) - f(\RSet(\vx) \setminus \{u\}] + \Pr[D \neq \varnothing] \cdot M
	=
	\frac{\partial F(\vx)}{\partial y_u} + \Pr[D \neq \varnothing] \cdot M
	\enspace.
\end{align*}
Combining this inequality with the previous one yields
\[
	\frac{\partial F(\vy)}{\partial y_u} - \frac{\partial F(\vx)}{\partial y_u}
	\leq
	2\Pr[D \neq \varnothing] \cdot M
	\leq
	2M \cdot \|\vy - \vx\|_1
	\leq
	2M \sqrt{\cN} \cdot \|\vy - \vx\|_2
	\enspace.
\]

By the symmetry between $\vx$ and $\vy$, the last inequality implies $\left|\frac{\partial F(\vy)}{\partial y_u} - \frac{\partial F(\vx)}{\partial y_u}\right| \leq 2M \sqrt{|\cN|} \cdot \|\vy - \vx\|_2$, which implies the lemma since every coordinate of $\nabla F(\vy) - \nabla F(\vx)$ is equal to $\frac{\partial F(\vy)}{\partial y_u} - \frac{\partial F(\vx)}{\partial y_u}$ for some element $u \in \cN$.
\end{proof}

To get a bound on $L$ in terms of $\max_{S \subseteq \cN, \characteristic_S \in P} f(S)$, it is necessary to assume that $\characteristic_u \in P$ for every $u \in \cN$. Fortunately, this assumption is without loss of generality since any element $u \in \cN$ violating the inclusion $\characteristic_u \in P$ can be dropped from $\cN$ without affecting the value of $\max_{S \subseteq \cN, \characteristic_S \in P} f(S)$.
\begin{proposition} \label{prop:bound_L}
Let $F$ be the multilinear extension of a non-negative submodular function $f\colon 2^\cN \to \nnR$. If $P \subseteq [0, 1]^\cN$ is a down-closed convex body such that $\characteristic_u \in P$ for every $u \in \cN$, then $L \leq 2|\cN|^2 \cdot \max_{S \subseteq \cN, \characteristic_S \in P} f(S)$.
\end{proposition}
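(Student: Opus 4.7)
The plan is to reduce this proposition essentially to Lemma~\ref{lem:derivative_change}, which already gives the bound $L \leq 2|\cN| \cdot M$ with $M = \max\{f(\varnothing),\ |\cN| \cdot \max_{u \in \cN} f(\{u\})\}$. It then suffices to show that $M \leq |\cN| \cdot \max_{S \subseteq \cN,\ \characteristic_S \in P} f(S)$, since then chaining the two inequalities yields exactly the claimed bound $L \leq 2|\cN|^2 \cdot \max_{S \subseteq \cN,\ \characteristic_S \in P} f(S)$.

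To bound $M$, I would use the assumption that $\characteristic_u \in P$ for every $u \in \cN$ in two ways. First, for every element $u \in \cN$, the singleton $\{u\}$ is of the form $S$ with $\characteristic_S \in P$, so $f(\{u\}) \leq \max_{S \subseteq \cN,\ \characteristic_S \in P} f(S)$, which immediately gives $|\cN| \cdot \max_{u \in \cN} f(\{u\}) \leq |\cN| \cdot \max_{S \subseteq \cN,\ \characteristic_S \in P} f(S)$. Second, since $P$ is down-closed and contains (for instance) $\characteristic_u$ for some $u$, it must also contain $\vzero = \characteristic_\varnothing$, so $f(\varnothing) \leq \max_{S \subseteq \cN,\ \characteristic_S \in P} f(S) \leq |\cN| \cdot \max_{S \subseteq \cN,\ \characteristic_S \in P} f(S)$ (assuming $|\cN| \geq 1$, which is the only nontrivial case). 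Taking the maximum of the two bounds yields $M \leq |\cN| \cdot \max_{S \subseteq \cN,\ \characteristic_S \in P} f(S)$.

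Combining with Lemma~\ref{lem:derivative_change} gives the proposition. There is no real obstacle here — the work has all been done in Lemma~\ref{lem:derivative_change}, and this proposition just converts its absolute bound $M$ into a bound scaled by the best feasible integer solution value by exploiting the hypothesis $\characteristic_u \in P$ for all $u$.
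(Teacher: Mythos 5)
Your proposal is correct and takes essentially the same route as the paper: both invoke Lemma~\ref{lem:derivative_change} to reduce the claim to bounding $|\cN|\cdot M$ by $|\cN|^2 \cdot \max_{S \subseteq \cN,\ \characteristic_S \in P} f(S)$, and both do so by observing that singletons (and, via down-closedness, $\varnothing$) are feasible while the extra factor of $|\cN|$ absorbs the worse of the two terms in $M$. The only cosmetic difference is that you spell out the $f(\varnothing)$ case explicitly, whereas the paper folds it into a single chained inequality.
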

\begin{proof}
In light of Lemma~\ref{lem:derivative_change}, to prove the current proposition it suffices to observe that the conditions of the proposition imply
\begin{align*}
	|\cN| \cdot M
	={} &
	|\cN| \cdot \max\{f(\varnothing), |\cN| \cdot \max_{u \in \cN} f(\{u\})\}\\
	\leq{} &
	|\cN|^2 \cdot \max\{f(\varnothing), \max_{u \in \cN} f(\{u\})\}
	\leq
	|\cN|^2 \cdot \max_{\substack{S \subseteq \cN \\ \characteristic_S \in P}} f(S)
	\enspace.
	\qedhere
\end{align*}
\end{proof}
\section{Frank-Wolfe Variant} \label{app:local_search}

In this section, we prove Theorem~\ref{thm:local_search}, which we repeat here for convenience.

\thmLocalSearch*

The algorithm that we use to prove Theorem~\ref{thm:local_search} is given as Algorithm~\ref{alg:local_search}. 
We assume in Algorithm~\ref{alg:local_search} that $\delta^{-1}$ is integral. If that is not the case, then $\delta$ can be replaced with $1 / \lceil \delta^{-1} \rceil$.

\begin{algorithm}
\caption{\texttt{Frank-Wolfe Variant}$(F, P, \delta)$} \label{alg:local_search}
Let $\vx(0)$ be an arbitrary vector in $P$.\\
\For{$i = 1$ \KwTo $\delta^{-2}$}
{
	Let $\vz(i) \in \arg\max_{\vy\in P}\inner{\vy}{\nabla F(\vx(i - 1))}$.\\
	Let $\vx(i) \gets (1 - \delta) \cdot \vx(i - 1) + \delta \cdot \vz(i)$.\\
}
Let $i^* \in \arg\min_{i=1}^{\delta^{-2}}\{\inner{\vz(i) - \vx(i - 1)}{\nabla F(\vx(i - 1))}$.\\
\Return $\vx(i^*-1)$.
\end{algorithm}

Clearly, Algorithm~\ref{alg:local_search} runs in polynomial time. The next observation shows that the output vector of Algorithm~\ref{alg:local_search} belongs to $P$.
\begin{observation} \label{obs:output_in_P}
For every $0 \leq i \leq \delta^{-2}$, $\vx(i) \in P$.
\end{observation}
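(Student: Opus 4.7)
The plan is to prove this by a straightforward induction on $i$, exploiting only the convexity of $P$ and the way $\vx(i)$ is constructed from $\vx(i-1)$ and $\vz(i)$.

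For the base case $i = 0$, the statement is immediate: Algorithm~\ref{alg:local_search} explicitly chooses $\vx(0)$ as an arbitrary vector in $P$.

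For the inductive step, assume $\vx(i-1) \in P$ for some $1 \leq i \leq \delta^{-2}$. By the definition of $\vz(i)$ in Algorithm~\ref{alg:local_search}, we have $\vz(i) \in P$ since $\vz(i)$ is obtained as an optimizer of a linear function over $P$ (this is well-defined because $P$ is solvable). Now $\vx(i) = (1 - \delta)\vx(i-1) + \delta \cdot \vz(i)$ is a convex combination of $\vx(i-1)$ and $\vz(i)$, where $\delta \in (0, 1)$ by assumption (and the preceding remark replacing $\delta$ by $1/\lceil \delta^{-1} \rceil$ ensures $\delta \in (0,1]$, so the coefficients $1 - \delta$ and $\delta$ are both in $[0,1]$ and sum to $1$). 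Since $P$ is a convex body, it is closed under convex combinations, and so $\vx(i) \in P$.

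There is no real obstacle here — the observation is essentially a sanity check that Algorithm~\ref{alg:local_search} maintains feasibility throughout its execution, and the argument is a one-line convex-combination check wrapped in an induction.
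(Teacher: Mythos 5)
Your proof is correct and matches the paper's argument exactly: induction on $i$ with the base case from the choice of $\vx(0)$ and the inductive step from $\vz(i)\in P$ plus convexity of $P$ applied to the convex combination $\vx(i)=(1-\delta)\vx(i-1)+\delta\,\vz(i)$. Nothing further is needed.
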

\begin{proof}
We prove the observation by induction on $i$. For $i = 0$, the observation holds by the definition of $\vx(0)$. Assume now that the observation holds for $i - 1$. Since $\vz(i) \in P$ by definition and $\vx(i - 1) \in P$ by our assumption, the convexity of $P$ guarantees that $\vx(i)$ belongs to $P$ because $\vx(i)$ is defined as a convex combination of the vectors $\vx(i - 1)$ and $\vz(i)$. Thus, the observation holds for $i$ as well.
\end{proof}

Next, we would like to prove that
\begin{equation} \label{eq:contradiction_proof}
	\inner{\vy - \vx(i^* - 1)}{\nabla F(\vx(i^* - 1))}
	\leq
	\delta \cdot \left[\max_{\vy' \in P} F(\vy') + D^2 L / 2\right]
	\quad
	\forall\;	\vy \in P
	\enspace.
\end{equation}
Notice that, if $F$ is DR-submodular, then this inequality implies
\[
	F(\vx(i^* - 1))
	\geq
	\frac{1}{2}\cdot \max_{\vy \in P} [F(\vx(i^* - 1) \vee \vy)+ F(\vx(i^* - 1) \wedge \vy)]
	-
	\frac{\delta \cdot [2 \cdot \max_{\vy' \in P} F(\vy') + D^2 L]}{4}
\]
via the same arguments used in the proof of Lemma~\ref{lem:optimal_local_maximum}. Therefore, proving Inequality~\eqref{eq:contradiction_proof} suffices to complete the proof of Theorem~\ref{thm:local_search}.

Assume to the contrary that Inequality~\eqref{eq:contradiction_proof} does not hold. Then, by the chain rule, for every $i \in [\delta^{-2}]$,
\begin{align}
	F(\vx(i)) -{}& F(\vx(i - 1))
	=
	\int_0^{\delta} \inner{\vz(i) - \vx(i - 1)}{\nabla F\left(\vx(i - 1) + s(\vz(i) - \vx(i - 1))\right)} ds \nonumber \\
	={} &
	\delta \cdot \inner{\vz(i) - \vx(i - 1)}{\nabla F(\vx(i - 1))} \nonumber\\&\mspace{50mu}+ \int_0^{\delta} \inner{\vz(i) - \vx(i - 1)}{\nabla F(\vx(i - 1) + s(\vz(i) - \vx(i - 1))) - \nabla F(\vx(i - 1))} ds \nonumber \\
	\geq{} & \delta \cdot \inner{\vz(i) - \vx(i - 1)}{\nabla F(\vx(i - 1))}
	\label{ineq11}\\\nonumber&\mspace{50mu} - \int_0^{\delta} \|\vz(i) - \vx(i - 1)\|_2 \cdot \|\nabla F(\vx(i - 1) + s(\vz(i) - \vx(i - 1))) - \nabla F(\vx(i - 1))\|_2 ds\\
	\geq{} &
	\delta \cdot \inner{\vz(i) - \vx(i - 1)}{\nabla F(\vx(i - 1))} - \int_0^{\delta} Ls \cdot \|\vz(i) - \vx(i - 1)\|_2^2 ds \label{ineq12}\\
	\geq{} &
	\delta \cdot \inner{\vz(i) - \vx(i - 1)}{\nabla F(\vx(i - 1))} - \delta^2 D^2 L/2\enspace,\nonumber
\end{align}
where Inequality \eqref{ineq11} follows from the Cauchy-Schwarz inequality, and Inequality \eqref{ineq12} uses the $L$-smoothness of $F$. By the definition of $i^*$, the rightmost side of the last inequality is at least
\begin{align*}
  \delta \cdot \inner{\vz(i^*) - \vx(i^* - 1)}{\nabla F(\vx(i^* - 1))}&{} - \delta^2 D^2 L/2\\
 ={} &  \delta \cdot \max_{\vy\in P}\inner{\vy - \vx(i^* - 1)}{\nabla F(\vx(i^* - 1))} - \delta^2 D^2 L/2\\
  >{} & \delta^2 \cdot [\max_{\vy \in P} F(\vy) + D^2 L / 2] - \delta^2 D^2 L/2 = \delta^2 \cdot \max_{\vy \in P} F(\vy)\enspace, 
\end{align*}
where the equality holds due to the way $\vz(i^*)$ is chosen by Algorithm~\ref{alg:local_search}, and the inequality follows from our assumption that Inequality~\eqref{eq:contradiction_proof} does not hold. Thus, we have proved $F(\vx(i)) - F(\vx(i - 1)) > \delta^2 \cdot \max_{\vy \in P} F(\vy)$. Adding up this inequality for every integer $1 \leq i \leq \delta^{-2}$, and using the non-negativity of $F$, we now get
\[
	F(\vx(\delta^{-2}))
	\geq
	F(\vx(\delta^{-2}))- F(\vx(0))
	=
	\sum_{i = 1}^{\delta^{-2}} [F(\vx(i)) - F(\vx(i - 1))]
	>
	\max_{\vy \in P} F(\vy)
	\enspace.
\]
Since, by Observation~\ref{obs:output_in_P}, $\vx(\delta^{-2})\in P$, we get the contradiction that we were looking for.

\section{Algorithm for Unconstrained DR-Submodular Maximization} \label{app:unconstrained}

In this section, we prove Theorem~\ref{thm:unconstrained}, which we repeat here for convenience. In the proof of this theorem, we assume that $\vo$ maximizes $F(\vo)$ among all vectors in $[0, 1]^\cN$. Notice that proving Theorem~\ref{thm:unconstrained} under this assumption implies the stated version of the theorem.

\thmUnconstrained*

The algorithm that we use to prove Theorem~\ref{thm:unconstrained} appears as Algorithm~\ref{alg:double_greedy}. In the description and analysis of Algorithm~\ref{alg:double_greedy}, we assume that $\eps^{-1}$ is an even integer. If that is not the case, then we can replace $\eps$ with $1 / (2\lceil \eps^{-1} \rceil)$. 
Like all algorithms based on the \textbf{Double-Greedy} algorithm of~\cite{buchbinder2015tight}, Algorithm~\ref{alg:double_greedy} maintains two solutions $\vx$ and $\vy$ that start as $\vzero$ and $\vone$, respectively, and converge to a single solution in iterations. In each iteration, the algorithm makes one coordinate of the solutions agree. The new value assigned to this coordinate (in both $\vx$ and $\vy$) is some multiple of $\eps/n$, and the algorithm denotes by $V$ the set of all such multiples.

\begin{algorithm}
\caption{\textbf{Double-Greedy} \texttt{for DR-Submodular Functions}($F, \cN, \eps$)} \label{alg:double_greedy}
Let $V \gets \{\frac{\eps j}{n} \mid j \in \bZ, 0 \leq j \leq n\eps^{-1}\}\subseteq[0,1]$. \\
Denote the elements of $\cN$ by $u_1, u_2, \dotsc, u_n$ in an arbitrary order.\\
\BlankLine
Let $\vx \gets \vzero$ and $\vy \gets \vone$.\\
\For{$i = 1$ \KwTo $|\cN|$}
{
	Find $a_i \in \arg \max_{v \in V} F(\vx + v \cdot \characteristic_{u_i})$, and let $\Delta_{a, i} = F(\vx + a_i \cdot \characteristic_{u_i}) - F(\vx)$.\label{line:a_u}\\
	Find $b_i \in \arg \max_{v \in V} F(\vy - v \cdot \characteristic_{u_i})$, and let $\Delta_{b, i} = F(\vy - b_i \cdot \characteristic_{u_i}) - F(\vy)$.\label{line:b_u}\\
	Set both $\vx_{u_i}$ and $\vy_{u_i}$ to $\frac{\Delta_{a,i}}{\Delta_{a,i} + \Delta_{b,i}} \cdot a_i + \frac{\Delta_{b,i}}{\Delta_{a,i} + \Delta_{b,i}} \cdot (1 - b_i)$ (if $\Delta_{a,i} + \Delta_{b,i} = 0$, we assume $\frac{\Delta_{a,i}}{\Delta_{a,i} + \Delta_{b,i}} = 0$ and $\frac{\Delta_{b,i}}{\Delta_{a,i} + \Delta_{b,i}} = 1$).
}
\Return $\vx$ (or, equivalently, $\vy$).
\end{algorithm}

We begin the analysis of Algorithm~\ref{alg:double_greedy} by studying the number of accesses to $F$ necessary for implementing it.
\begin{observation}
Every iteration of Algorithm~\ref{alg:double_greedy} can be implemented using $O(\log (n\eps^{-1}))$ function evaluations, and therefore, the entire algorithms requires $O(n \log (n\eps^{-1}))$ function evaluations.
\end{observation}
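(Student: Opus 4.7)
The plan is to exploit the DR-submodularity of $F$ to replace the naive linear scan over $V$ in lines that compute $a_i$ and $b_i$ with a ternary (or binary) search, reducing the cost from $|V|=O(n\eps^{-1})$ function evaluations to $O(\log(n\eps^{-1}))$ per iteration.

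First I would focus on the computation of $a_i$. Define $g_i\colon[0,1]\to\nnR$ by $g_i(v)\triangleq F(\vx+v\cdot\characteristic_{u_i})$. By Property~\ref{prop:dr_bound1} of Lemma~\ref{lem:DR_properties} (DR-submodular functions are concave along non-negative directions), the map $g_i$ is concave on $[0,1]$, so in particular $g_i$ restricted to the evenly spaced discrete set $V=\{0,\eps/n,2\eps/n,\dotsc,1\}$ is a discrete concave sequence: the successive differences $g_i(\eps j/n)-g_i(\eps(j-1)/n)$ are non-increasing in $j$. A standard ternary search over the index set $\{0,1,\dotsc,n\eps^{-1}\}$ then locates a maximizer of $g_i$ over $V$ (equivalently, the largest index $j$ where the successive difference is still non-negative can be found by binary search on these differences) using only $O(\log |V|)=O(\log(n\eps^{-1}))$ evaluations of $F$.

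Next I would handle $b_i$ symmetrically: the function $\tilde g_i(v)\triangleq F(\vy-v\cdot\characteristic_{u_i})$ is also concave on $[0,1]$ because, writing $\vy'\triangleq\vy-\characteristic_{u_i}$, we have $\tilde g_i(v)=F(\vy'+(1-v)\characteristic_{u_i})$, which is the composition of an affine reparametrization with a function that is concave in its argument by the same application of Property~\ref{prop:dr_bound1}. Hence ternary search again produces $b_i$ in $O(\log(n\eps^{-1}))$ evaluations. Once $a_i,b_i,\Delta_{a,i},\Delta_{b,i}$ are known, the remaining update takes $O(1)$ time and no further function evaluations. Summing over the $n$ iterations of the for-loop gives the claimed total of $O(n\log(n\eps^{-1}))$ evaluations.

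There is no real obstacle here; the only subtlety worth flagging is the justification that DR-submodularity implies concavity of the one-coordinate restriction (which is exactly Property~\ref{prop:dr_bound1}), and the observation that ternary search over a discrete concave sequence indeed returns an exact maximizer in logarithmically many queries. Both points are standard, so the observation follows immediately.
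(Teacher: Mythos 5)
Your proposal is correct and matches the paper's argument in essence: both use the fact that $v \mapsto F(\vx + v\cdot\characteristic_{u_i})$ is concave (via Property~\ref{prop:dr_bound1} of Lemma~\ref{lem:DR_properties}) so that a logarithmic search over the discrete grid $V$ — the paper phrases it as binary search on the sign of successive differences, you as ternary search or binary search on the monotone difference sequence, which is the same thing — recovers a maximizer in $O(\log|V|)$ evaluations per coordinate. Your treatment of $b_i$ by reparametrization is also the same symmetry the paper invokes.
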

\begin{proof}
Access to the function $F$ is necessary only for implementing Lines~\ref{line:a_u} and~\ref{line:b_u} of Algorithm~\ref{alg:double_greedy}. We explain below why Line~\ref{line:a_u} can be implemented using $O(\log (n\eps^{-1}))$ function evaluations. A similar explanation applies also to Line~\ref{line:b_u}. Line~\ref{line:a_u} asks us to find a value $v \in V$ maximizing $F(\vx + v \cdot \characteristic_{u_i})$.
Since $F$ is DR-submodular, the expression $F(\vx + v \cdot \characteristic_{u_i})$ is a concave function of $v\in[0,1]$. Thus, when restricted to the discrete values $v\in V$, $F(\vx + v \cdot \characteristic_{u_i})$ forms a series of three parts: an increasing part, a constant part, and a decreasing part (one or two of these parts might be missing). Thus, for a given value $\eps n^{-1}\cdot j \in V \setminus \{1\}$, it is to possible to get information by computing the sign of the difference
\[
	F(\vx + \eps n^{-1} \cdot (j + 1) \cdot \characteristic_{u_i}) - F(\vx + \eps n^{-1} \cdot j \cdot \characteristic_{u_i})
	\enspace.
\]
If this sign is positive, then the maximum of $F(\vx + v \cdot \characteristic_{u_i})$ must be obtained either at $v = \eps n^{-1}\cdot (j+1)$ or to its right; if this sign is negative, then the maximum of $F(\vx + v \cdot \characteristic_{u_i})$ must be obtained either at $v = \eps n^{-1}\cdot j$ or to its left; and if the difference is $0$, then the (non-unique) maximum of $F(\vx + v \cdot \characteristic_{u_i})$ is obtained at $v=\eps n^{-1}\cdot j$. Given these observations, one can use binary search to find in $O(\log |V|) = O(\log (n\eps^{-1}))$ time a value $\eps n^{-1} \cdot j \in V$ maximizing $F(\vx + v \cdot \characteristic_{u_i})$. 
\end{proof}

Our next goal is to analyze the value of the output vector of Algorithm~\ref{alg:double_greedy}. For that purpose, let us denote by $\vx^{(i)}$ and $\vy^{(i)}$ the vectors $\vx$ and $\vy$ after $i$ iterations of the main loop of Algorithm~\ref{alg:double_greedy}. The next lemma shows that even though $a_i$ and $b_i$ are defined as maxima only over the values in $V$, they are in fact close to be maxima over the entire range $[0, 1]$.
\begin{lemma} \label{lem:approx_max}
For every integer $1 \leq i \leq n$,
\[
	F(\vx^{(i - 1)} + a_i \cdot \characteristic_{u_i})
	\geq
	\max_{v \in [0, 1]} F(\vx^{(i - 1)} + v \cdot \characteristic_{u_i}) - (2\eps/n) \cdot F(\vo)
\]
and
\[
	F(\vy^{(i - 1)} - b_i \cdot \characteristic_{u_i})
	\geq
	\max_{v \in [0, 1]} F(\vy^{(i - 1)} - v \cdot \characteristic_{u_i}) - (2\eps/n) \cdot F(\vo)
	\enspace.
\]
\end{lemma}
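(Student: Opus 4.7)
I plan to prove the first inequality by analyzing the one-dimensional restriction $g(v) \triangleq F(\vx^{(i-1)} + v \characteristic_{u_i})$ for $v \in [0,1]$. Since the algorithm has not yet modified the $u_i$-coordinate, $x^{(i-1)}_{u_i} = 0$ and $g$ is well-defined on the full interval. The DR-submodularity of $F$ (Property~\ref{prop:dr_bound1} of Lemma~\ref{lem:DR_properties}) makes $g$ concave; non-negativity of $F$ makes $g$ non-negative; and the assumption that $\vo$ maximizes $F$ over $[0,1]^\cN$ gives $g(v) \leq F(\vo)$ for all $v \in [0,1]$. The second inequality (for $b_i$) will follow by an identical argument applied to $h(v) \triangleq F(\vy^{(i-1)} - v \characteristic_{u_i})$, which is likewise a non-negative concave function of $v \in [0,1]$ bounded above by $F(\vo)$ (concavity uses that affine reparametrization preserves it).

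Let $v^* \in \arg\max_{v \in [0,1]} g(v)$. If $v^* \in V$ the claim is trivial, so assume $v^* \in (0,1) \setminus V$ and let $v_l, v_r \in V$ bracket it with $v_r - v_l = \eps/n$. The core idea is to use concavity to \emph{extrapolate $g$ past $v^*$ to a grid endpoint in $\{0,1\}$}, whichever is opposite the bracket. Explicitly, if $v_r < 1$, write $v_r = \alpha v^* + (1-\alpha)\cdot 1$ with $\alpha = (1-v_r)/(1-v^*) \in (0,1)$; concavity together with $g(1) \geq 0$ gives $g(v_r) \geq \alpha g(v^*)$, i.e.,
\[
    g(v^*) \leq g(v_r) \cdot \frac{1-v^*}{1-v_r}\enspace.
\]
Symmetrically, if $v_l > 0$, then $g(v^*) \leq g(v_l) \cdot v^*/v_l$.

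When $v^*$ is near $1/2$ both denominators $v_l$ and $1-v_r$ are close to $1/2$, so each single-sided extrapolation alone yields error $\approx 2(\eps/n) \cdot F(\vo)$; when $v^*$ is near an endpoint, the extrapolation from the far side is much tighter. To obtain a clean bound uniformly in $v^*$, I plan to multiply the two inequalities and take a square root, obtaining
\[
    g(v^*) \leq \sqrt{g(v_l) g(v_r)} \cdot \sqrt{\frac{v^*(1-v^*)}{v_l(1-v_r)}}\enspace,
\]
and then use $\sqrt{g(v_l) g(v_r)} \leq \max(g(v_l), g(v_r)) \leq g(a_i)$ together with the elementary identity $v_l + (1-v_r) = 1 - \eps/n$, which forces $\max(v_l, 1-v_r) \geq (1-\eps/n)/2$, to show that the second square root is at most $1 + 2(\eps/n) + O((\eps/n)^2)$. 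Combined with $g(a_i) \leq F(\vo)$, this yields the target bound $g(v^*) - g(a_i) \leq (2\eps/n) \cdot F(\vo)$. The boundary cases $v_l = 0$ or $v_r = 1$ force $v^*$ to lie within $\eps/n$ of a grid endpoint, and the single remaining extrapolation already gives the required estimate. The main mild obstacle is bookkeeping the constant $2$ through the square-root arithmetic in the symmetric case $v^* \approx 1/2$; this is where the argument is tightest, and everything else is slack.
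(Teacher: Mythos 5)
Your two one-sided extrapolation inequalities $g(v^*) \leq g(v_r)\,\frac{1-v^*}{1-v_r}$ and $g(v^*) \leq g(v_l)\,\frac{v^*}{v_l}$ are both correct, and each by itself is essentially what the paper uses (via the two-step argument $g(v^*)-g(v_l) \leq (v^*-v_l)\,g'(v_l) \leq \frac{v^*-v_l}{v_l}\,g(v_l)$, i.e.\ Properties~\ref{prop:dr_bound2_up} and~\ref{prop:dr_bound2_down} of Lemma~\ref{lem:DR_properties}). The flaw is in the combination step: the geometric mean does not give a uniform $1+O(\eps/n)$ factor. Your claim hinges on $\max(v_l,\,1-v_r) \geq (1-\eps/n)/2$, but the ratio $\sqrt{\frac{v^*(1-v^*)}{v_l(1-v_r)}}$ is controlled by the \emph{min}, not the max, and $\min(v_l,\,1-v_r)$ can be as small as $\eps/n$ without being zero. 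Concretely, take $v_l = 1-2\eps/n$, $v_r = 1-\eps/n$, $v^* = v_l + \eps/(2n)$: then $\frac{v^*}{v_l} = 1+O(\eps/n)$ but $\frac{1-v^*}{1-v_r} = \frac{3}{2}$, so the product is $\approx 3/2$ and the square root is $\approx 1.22$. This yields $g(v^*)-g(a_i) \leq 0.22\,F(\vo)$, far weaker than the required $(2\eps/n)\,F(\vo)$. In short, when $v^*$ is near an endpoint, one of your two single-sided bounds is loose by a constant factor, and a geometric mean with a tight bound still retains a constant-factor loss — not an $O(\eps/n)$ loss.

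The fix is to replace the geometric mean with a minimum, which amounts to choosing \emph{which side} to extrapolate from. If $v^* \geq 1/2$, take $v_l$ (which is $\geq 1/2$ since $\eps^{-1}$ is even so $1/2 \in V$), and $\frac{v^*}{v_l} \leq 1 + \frac{\eps/n}{1/2} = 1 + 2\eps/n$. If $v^* \leq 1/2$, take $v_r \leq 1/2$ and $\frac{1-v^*}{1-v_r} \leq 1 + 2\eps/n$ symmetrically. This is exactly the case split the paper performs, and it is where the constant $2$ actually comes from (a $1/2$ in the denominator, not a square-root cancellation). The rest of your setup — restricting to the one-dimensional slice, invoking concavity from Property~\ref{prop:dr_bound1}, and bounding everything by $F(\vo)$ — matches the paper and is sound.
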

\begin{proof}
We only prove the part of the lemma related to $a_i$. The proof of the other part of the lemma is analogous. Let $v^*$ be a value of $v$ maximizing $F(\vx^{(i - 1)} + v \cdot \characteristic_{u_i})$ There are two cases to consider depending on the value of $v^*$. If $v^* \geq 1/2$, then let $v$ be the largest value in $V$ that is not larger than $v^*$. Then,
\begin{align} \label{eq:marginal_bound}
	F(\vx^{(i - 1)} + v^* \cdot \characteristic_{u_i}) -{}& F(\vx^{(i - 1)} + a_i \cdot \characteristic_{u_i})
	\leq
	F(\vx^{(i - 1)} + v^* \cdot \characteristic_{u_i}) - F(\vx^{(i - 1)} + v \cdot \characteristic_{u_i})\\\nonumber
	\leq{} &
	(v^* - v) \cdot \left. \frac{\partial F(\vc)}{c_{u_i}} \right|_{c = \vx^{(i - 1)} + v \cdot \characteristic_{u_i}}
	\leq
	\frac{v^* - v}{v} \cdot [F(\vx^{(i - 1)} + v \cdot \characteristic_{u_i}) - F(\vx^{(i - 1)})]\\\nonumber
	\leq{} &
	\frac{v^* - v}{v} \cdot F(\vx^{(i - 1)} + v \cdot \characteristic_{u_i})
	\leq
	\frac{2\eps}{n} \cdot F(\vx^{(i - 1)} + v \cdot \characteristic_{u_i})
	\leq
	\frac{2\eps}{n} \cdot F(\vo)
	\enspace,
\end{align}
where the first inequality holds by the definition of $a_i$, the second and third inequalities follow from Properties~\ref{prop:dr_bound2_up} and~\ref{prop:dr_bound2_down} of Lemma~\ref{lem:DR_properties}, respectively, the fourth inequality holds due to $F$'s non-negativity, and the last inequality follows from our assumption that $F(\vo)$ maximizes $F$ within $[0, 1]^\cN$. The penultimate inequality holds since the definition of $v$ guarantees that $v^* - v \leq \eps/n$, and the case we consider and our assumption that $\eps^{-1}$ is an even integer imply together that $v \geq 1/2$. Using Inequality~\eqref{eq:marginal_bound}, we now get
\[
	\max_{v \in [0, 1]} F(\vx_{i - 1} + v \cdot \characteristic_{u_i})
	=
	F(\vx_{i - 1} + v^* \cdot \characteristic_{u_i})
	\leq
	F(\vx^{(i - 1)} + a_i \cdot \characteristic_{u_i}) + \frac{2\eps}{n} \cdot F(\vo)
	\enspace,
\]
which completes the proof of the lemma for the case of $v^* \geq 1/2$.

Consider now the case of $v^* \leq 1/2$. In this case, we choose $v$ as the minimal value in $V$ that is not smaller than $v^*$. Then,
\begin{align*}
	F(\vx^{(i - 1)} +{}& a_i \cdot \characteristic_{u_i}) - F(\vx^{(i - 1)} + v^* \cdot \characteristic_{u_i})
	\geq
	F(\vx^{(i - 1)} + v \cdot \characteristic_{u_i}) - F(\vx^{(i - 1)} + v^* \cdot \characteristic_{u_i})\\
	\geq{} &
	(v - v^*) \cdot \left. \frac{\partial F(\vc)}{c_{u_i}} \right|_{c = \vx^{(i - 1)} + v \cdot \characteristic_{u_i}}
	\geq
	\frac{v - v^*}{1 - v} \cdot [F(\vx^{(i - 1)} + \characteristic_{u_i}) - F(\vx^{(i - 1)} + v \cdot \characteristic_{u_i})]\\
	\geq{} &
	-\frac{v - v^*}{1 - v} \cdot F(\vx^{(i - 1)} + v \cdot \characteristic_{u_i})
	\geq
	- \frac{2\eps}{n} \cdot F(\vx^{(i - 1)} + v \cdot \characteristic_{u_i})
	\geq
	- \frac{2\eps}{n} \cdot F(\vo)
	\enspace,
\end{align*}
where the first inequality holds by the definition of $a_i$, the second and third inequalities follow from Properties~\ref{prop:dr_bound2_up} and~\ref{prop:dr_bound2_down} of Lemma~\ref{lem:DR_properties}, the fourth inequality holds due to $F$'s non-negativity, and the last inequality follows from our assumption that $F(\vo)$ maximizes $F$ within $[0, 1]^\cN$. The penultimate inequality holds since the definition of $v$ guarantees that $v - v^* \leq \eps/n$, and the case we consider and our assumption that $\eps^{-1}$ is an even integer imply together that $v \leq 1/2$. The rest of the proof of the current case is identical to the corresponding part of the proof of the other case, and is thus, omitted.
\end{proof}

Next, we would like to lower bound the increase in the values of $F(\vx)$ and $F(\vy)$ during each iteration.
\begin{lemma} \label{lem:gain}
For every integer $1 \leq i \leq n$,
\[
	F(\vx^{(i)}) - F(\vx^{(i - 1)})
	\geq
	\frac{\Delta_{a,i}^2}{\Delta_{a,i} + \Delta_{b,i}}
	\qquad
	\text{and}
	\qquad
	F(\vy^{(i)}) - F(\vy^{(i - 1)})
	\geq
	\frac{\Delta_{b,i}^2}{\Delta_{a,i} + \Delta_{b,i}}
	\enspace.
\]
\end{lemma}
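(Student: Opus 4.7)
The plan is to establish the first inequality first; the second will follow by a symmetric argument. Let $\alpha=\Delta_{a,i}/(\Delta_{a,i}+\Delta_{b,i})$ and $\beta=\Delta_{b,i}/(\Delta_{a,i}+\Delta_{b,i})$, so the common value assigned at step $i$ is $c_i=\alpha a_i + \beta(1-b_i)$. Introduce the univariate restrictions
\[
	G(v) \triangleq F(\vx^{(i-1)} + v\cdot \characteristic_{u_i}),\qquad H(v) \triangleq F(\vy^{(i-1)} - (1-v)\cdot \characteristic_{u_i}),
\]
which track $F$ as the $u_i$-th coordinate of $\vx^{(i-1)}$ (currently $0$) or $\vy^{(i-1)}$ (currently $1$) is moved to $v$. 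Both functions are concave on $[0,1]$ because $F$ is DR-submodular. Moreover, the invariant $\vx^{(i-1)}\leq \vy^{(i-1)}$ (preserved across iterations, since the two solutions are equalized coordinate by coordinate) together with DR-submodularity yields the pointwise derivative comparison $G'(v)\geq H'(v)$ for every $v\in[0,1]$. Note that $\Delta_{a,i}=G(a_i)-G(0)$ and $\Delta_{b,i}=H(1-b_i)-H(1)$, and the quantities we need to lower bound are precisely $G(c_i)-G(0)$ and $H(c_i)-H(1)$.

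For the first inequality, concavity of $G$ applied to the convex combination $c_i=\alpha a_i + \beta(1-b_i)$ gives
\[
	G(c_i)-G(0) \geq \alpha[G(a_i)-G(0)] + \beta[G(1-b_i)-G(0)] = \alpha \Delta_{a,i} + \beta[G(1-b_i)-G(0)],
\]
and since $\alpha\Delta_{a,i}=\Delta_{a,i}^2/(\Delta_{a,i}+\Delta_{b,i})$, it suffices to prove that $G(1-b_i)\geq G(0)$. This is the step I expect to be the main obstacle: when $a_i=0$ (so $\Delta_{a,i}=0$), $G$ is in fact non-increasing across $V$, so no pure concavity argument for $G$ alone can deliver the bound. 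The plan is to transfer information from $H$ to $G$ via the derivative inequality $G'\geq H'$: integrating it over $[0,1-b_i]$ yields
\[
	G(1-b_i)-G(0) \geq H(1-b_i)-H(0) = \Delta_{b,i} - \bigl[H(0)-H(1)\bigr].
\]
Because $n\eps^{-1}$ is an integer, the grid $V$ is symmetric about $1/2$, so the substitution $v\mapsto 1-v$ identifies $1-b_i$ as the maximizer of $H$ over $V$; since $0\in V$, this yields $H(1-b_i)\geq H(0)$, equivalently $H(0)-H(1)\leq \Delta_{b,i}$. Therefore $G(1-b_i)-G(0)\geq 0$, completing the first inequality.

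The second inequality is handled symmetrically. Concavity of $H$ at $c_i$ gives $H(c_i)-H(1)\geq \alpha[H(a_i)-H(1)] + \beta \Delta_{b,i}$, so it suffices to prove $H(a_i)\geq H(1)$. Integrating $G'\geq H'$ over $[a_i,1]$ yields $G(1)-G(a_i)\geq H(1)-H(a_i)$, i.e., $H(a_i)-H(1)\geq G(a_i)-G(1)$, and the right-hand side is non-negative because $a_i$ maximizes $G$ over $V$ and $1\in V$. Finally, the degenerate case $\Delta_{a,i}+\Delta_{b,i}=0$ is handled by the algorithm's convention $\alpha=0$, $\beta=1$: both right-hand sides then vanish, while the two arguments above specialized to $\alpha=0$ still certify that the corresponding left-hand sides are non-negative.
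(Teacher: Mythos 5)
Your proof is correct and takes essentially the same route as the paper's: concavity of $F$ along the segment from $a_i$ to $1-b_i$ in coordinate $u_i$ to split the gain into the two weighted terms, DR-submodularity (equivalently, integrating the antitone gradient property from $\vx^{(i-1)}$ up to $\vy^{(i-1)}-\characteristic_{u_i}$) to transfer the $(1-b_i)$ cross-term to the $\vy$-side, and the maximality of $a_i$ (resp. $b_i$) over $V$ together with $1 \in V$ to discard it. The only cosmetic differences are the univariate $G$/$H$ packaging and the slight detour through the symmetry of $V$ to get $H(1-b_i)\geq H(0)$ — the paper just uses $1\in V$ directly, which is all that is needed.
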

\begin{proof}
We prove only the first part of the lemma. The proof of the other part is analogous. By Property~\ref{prop:dr_bound1} of Lemma~\ref{lem:DR_properties}, 
\begin{align*}
	\mspace{99mu}&\mspace{-99mu}
	F\left(\vx^{(i - 1)} + \frac{\Delta_{a,i}}{\Delta_{a,i} + \Delta_{b,i}} \cdot a_i\cdot \characteristic_{u_i} + \frac{\Delta_{b,i}}{\Delta_{a,i} + \Delta_{b,i}} \cdot (1 - b_i)\cdot \characteristic_{u_i}\right) - F(\vx^{(i - 1)})\\
 ={}& F\left(\frac{\Delta_{a,i}}{\Delta_{a,i} + \Delta_{b,i}} \cdot\left(\vx^{(i - 1)} + a_i \cdot \characteristic_{u_i}\right)  + \frac{\Delta_{b,i}}{\Delta_{a,i} + \Delta_{b,i}} \left(\vx^{(i - 1)} + (1 - b_i) \cdot \characteristic_{u_i}\right) \right) - F(\vx^{(i - 1)})\\
	\geq{} &
	\frac{\Delta_{a,i}}{\Delta_{a,i} + \Delta_{b,i}} \cdot [F(\vx^{(i - 1)} + a_i \cdot \characteristic_{u_i}) - F(\vx^{(i - 1)})] \\&\mspace{200mu}+ \frac{\Delta_{b,i}}{\Delta_{a,i} + \Delta_{b,i}} \cdot [F(\vx^{(i - 1)} + (1 - b_i) \cdot \characteristic_{u_i}) - F(\vx^{(i - 1)})]\\
	\geq{} &
	\frac{\Delta_{a,i}}{\Delta_{a,i} + \Delta_{b,i}} \cdot [F(\vx^{(i - 1)} + a_i \cdot \characteristic_{u_i}) - F(\vx^{(i - 1)})] \\&\mspace{200mu}+ \frac{\Delta_{b,i}}{\Delta_{a,i} + \Delta_{b,i}} \cdot [F(\vy^{(i - 1)} - b_i \cdot \characteristic_{u_i}) - F(\vy^{(i - 1)} - \characteristic_{u_i})]\\
	\geq{} &
	\frac{\Delta_{a,i}}{\Delta_{a,i} + \Delta_{b,i}} \cdot [F(\vx^{(i - 1)} + a_i \cdot \characteristic_{u_i}) - F(\vx^{(i - 1)})]
	=
	\frac{\Delta_{a,i}^2}{\Delta_{a,i} + \Delta_{b,i}}
	\enspace,
\end{align*}
where the second inequality follows from the DR-submodularity of $F$, and the last inequality holds by the definition of $b_i$ since $1 \in V$.
\end{proof}

As is standard in the analysis of algorithms based on \textbf{Double-Greedy}, we also need to define a hybrid solution that starts as $\vo$ before the first iteration of the algorithm, and becomes more and more similar to the output of the algorithm as the algorithm progresses. Formally, for every integer $0 \leq i \leq n$, we define $\vo^{(i)} = (\vo \vee \vx^{(i)}) \wedge \vy^{(i)}$. Notice that indeed $\vo^{(0)} = \vo$ and $\vo^{(n)} = \vx^{(n)}$ because $\vx^{(n)} = \vy^{(n)}$. The next lemma bounds the decrease in $F(\vo^{(i)})$ as a function of $i$.
\begin{lemma} \label{lem:loss}
For every integer $1 \leq i \leq n$,
\[
	F(\vo^{(i - 1)}) - F(\vo^{(i)})
	\leq
	\frac{\Delta_{a,i} \cdot \Delta_{b,i}}{\Delta_{a,i} + \Delta_{b,i}} - \frac{2\eps}{n} \cdot F(\vo)
	\enspace.
\]
\end{lemma}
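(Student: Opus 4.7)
The plan is to reduce Lemma~\ref{lem:loss} to a one-dimensional analysis along coordinate $u_i$. First I will observe that $\vo^{(i-1)}$ and $\vo^{(i)}$ agree on every coordinate other than $u_i$: for $j<i$ both take the common value $c_j$ that was already assigned to $\vx$ and $\vy$ at iteration $j$, for $j>i$ both take the value $o_{u_j}$ (since $\vx^{(i-1)}$ is still $0$ and $\vy^{(i-1)}$ is still $1$ at these coordinates), and at coordinate $u_i$ they take values $o_i := o_{u_i}$ and $c_i := \alpha_i a_i + \beta_i (1 - b_i)$ respectively, where $\alpha_i := \Delta_{a,i}/(\Delta_{a,i}+\Delta_{b,i})$ and $\beta_i := 1 - \alpha_i$. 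I will then introduce the univariate function $\phi(t) := F(\vo^{(i-1)} + (t - o_i) \cdot \characteristic_{u_i})$, together with the auxiliary functions $\psi^X(v) := F(\vx^{(i-1)} + v \characteristic_{u_i})$ and $\psi^Y(v) := F(\vy^{(i-1)} - v \characteristic_{u_i})$. All three are concave on $[0,1]$ by DR-submodularity, and by construction $F(\vo^{(i-1)}) - F(\vo^{(i)}) = \phi(o_i) - \phi(c_i)$.

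The first analytic step uses DR-submodularity together with the sandwich $\vx^{(i-1)} \leq \vo^{(i-1)} \leq \vy^{(i-1)}$ (an easy invariant of the main loop) to compare marginals in $\phi$ with those in $\psi^X$ and $\psi^Y$. Specifically, for any $t_1 \leq t_2$ in $[0,1]$, one gets
\[
\phi(t_2) - \phi(t_1) \leq \psi^X(t_2) - \psi^X(t_1)
\quad\text{and}\quad
\phi(t_1) - \phi(t_2) \leq \psi^Y(1 - t_1) - \psi^Y(1 - t_2).
\]
Assuming $o_i \leq c_i$, the second inequality (applied with $t_1 = o_i$ and $t_2 = c_i$) reduces the task to bounding $\psi^Y(1 - o_i) - \psi^Y(1 - c_i)$; the case $o_i > c_i$ is handled symmetrically via the first inequality and $\psi^X$.

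The key step is then to exploit the identity $1 - c_i = \alpha_i (1 - a_i) + \beta_i b_i$ (a direct computation from the definition of $c_i$). Concavity of $\psi^Y$ gives $\psi^Y(1 - c_i) \geq \alpha_i \psi^Y(1 - a_i) + \beta_i \psi^Y(b_i)$, so
\[
\phi(o_i) - \phi(c_i)
\leq
\alpha_i \bigl[\psi^Y(1 - o_i) - \psi^Y(1 - a_i)\bigr]
+ \beta_i \bigl[\psi^Y(1 - o_i) - \psi^Y(b_i)\bigr].
\]
By the approximate maximality of $b_i$ (Lemma~\ref{lem:approx_max}) the second bracket is at most $(2\eps/n) F(\vo)$. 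For the first bracket I will split on the sign of $a_i + b_i - 1$: when $1 - a_i \leq b_i$, the chord lower bound from concavity of $\psi^Y$ on $[0, b_i]$ yields $\psi^Y(1 - a_i) \geq \psi^Y(0) + ((1 - a_i)/b_i) \cdot \Delta_{b,i}$, which combined with $\psi^Y(1 - o_i) \leq \psi^Y(b_i) + (2\eps/n) F(\vo)$ bounds the bracket by $\Delta_{b,i} \cdot (a_i + b_i - 1)/b_i + (2\eps/n) F(\vo) \leq \Delta_{b,i} + (2\eps/n) F(\vo)$ (using $a_i \leq 1$); when $1 - a_i > b_i$, the analogous chord argument on $[b_i, 1]$ together with $\psi^Y(1) \geq \psi^Y(b_i) - (2\eps/n) F(\vo)$ shows the bracket is merely $O(\eps/n) F(\vo)$. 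In both subcases the first bracket contributes at most $\alpha_i \Delta_{b,i} + O(\eps/n) F(\vo)$, and the identity $\alpha_i \Delta_{b,i} = \Delta_{a,i}\Delta_{b,i}/(\Delta_{a,i}+\Delta_{b,i})$ closes the case $o_i \leq c_i$; the other case follows by the same argument applied to $\psi^X$ with the roles of $a_i$ and $b_i$ (and of $\alpha_i$ and $\beta_i$) swapped.

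The main obstacle I anticipate is careful bookkeeping of the $(2\eps/n) F(\vo)$ slack across both the approximate-maximality uses and the chord arguments, ensuring it collects into a single additive error matching the error term stated in the lemma, and confirming that summing the two weighted brackets does not double-count the worst case (which is precisely what the DR-plus-concavity bound $\psi^Y(1 - c_i) \geq \alpha_i \psi^Y(1 - a_i) + \beta_i \psi^Y(b_i)$ prevents, in contrast to naively splitting $\phi(o_i) - \phi(c_i)$ by concavity of $\phi$ itself).
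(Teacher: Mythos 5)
Your overall framework is sound and close to the paper's: both reduce to a one-dimensional analysis at coordinate $u_i$, both exploit the convex-combination identity $1-c_i=\alpha_i(1-a_i)+\beta_ib_i$, both use the sandwich $\vx^{(i-1)}\leq\vo^{(i-1)}\leq\vy^{(i-1)}$ to pass from $\phi$ to $\psi^Y$ (resp.\ $\psi^X$), and both use the approximate maximality of $b_i$ (Lemma~\ref{lem:approx_max}) on the bracket $\psi^Y(1-o_i)-\psi^Y(b_i)$. Your handling of the subcase $1-a_i\leq b_i$ is also valid and, interestingly, avoids the paper's cross-solution step by instead using the chord inequality on $[0,b_i]$; that part is fine.

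However, there is a genuine gap in the subcase $1-a_i>b_i$. You claim $\psi^Y(1)\geq\psi^Y(b_i)-(2\eps/n)\cdot F(\vo)$, but this inequality is unjustified and, in general, false: since $1\in V$ and $b_i\in\arg\max_{v\in V}\psi^Y(v)$, what you actually know is $\psi^Y(1)\leq\psi^Y(b_i)$, and nothing forces $\psi^Y(1)$ to be close to the maximum (a strongly concave $\psi^Y$ can have $\psi^Y(1)$ near zero while $\psi^Y(b_i)$ is near $F(\vo)$). Consequently your chord argument on $[b_i,1]$ does not yield the needed lower bound on $\psi^Y(1-a_i)$, and concavity of $\psi^Y$ alone cannot: when $1-a_i$ lies strictly to the right of $b_i$, concavity gives only an upper bound there. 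The ingredient the paper uses instead (and which you should substitute) is a single DR-submodularity comparison \emph{across} the two solutions: since $\vx^{(i-1)}+a_i\cdot\characteristic_{u_i}\leq\vy^{(i-1)}-(1-a_i)\cdot\characteristic_{u_i}$, one has
\[
\psi^Y(1-a_i)-\psi^Y(0)
=
F(\vy^{(i-1)}-(1-a_i)\characteristic_{u_i})-F(\vy^{(i-1)})
\geq
F(\vx^{(i-1)}+a_i\characteristic_{u_i})-F(\vx^{(i-1)}+\characteristic_{u_i})
\geq 0,
\]
the last inequality because $a_i$ maximizes $\psi^X$ over $V\ni 1$. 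This gives $\psi^Y(1-a_i)\geq\psi^Y(0)$ with no case split on the sign of $a_i+b_i-1$, which immediately yields $\psi^Y(1-o_i)-\psi^Y(1-a_i)\leq\Delta_{b,i}+(2\eps/n)\cdot F(\vo)$ and closes the argument (the symmetric case $o_i>c_i$ then uses $\psi^X(b_i)-\psi^X(0)\geq\psi^Y(b_i)-\psi^Y(1)\geq 0$ analogously). Once you replace the second chord step with this cross-solution comparison, your proof matches the paper's bound; also be aware that the sign of the $\eps$-term you derive (a $+\tfrac{2\eps}{n}F(\vo)$ slack on the right) is what the downstream Corollary~\ref{cor:relation} actually uses, so there is no inconsistency there.
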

\begin{proof}
There are two cases to consider based on the relationship between $o_{u_i}^{(i - 1)}$ and $o_{u_i}^{(i)}$. We consider here only the case of $o_{u_i}^{(i - 1)} \leq o_{u_i}^{(i)}$. The other case is analogous. By the DR-submoduarlity of $F$ and the fact that $\vy^{(i - 1)} \geq \vo^{(i - 1)}$ by the definition of $\vo^{(i - 1)}$,
\[
	F(\vo^{(i)}) - F(\vo^{(i - 1)})
	\geq
	F(\vy^{(i - 1)} - (1 - o_{u_i}^{(i)}) \cdot \characteristic_{u_i}) - F(\vy^{(i - 1)} - (1 - o_{u_i}^{(i - 1)}) \cdot \characteristic_{u_i})
	\enspace.
\]
Let us bound the two terms on the right hand side of the last inequality. First, by Lemma~\ref{lem:approx_max},
\[
	F(\vy^{(i - 1)} - (1 - o_{u_i}^{(i - 1)}) \cdot \characteristic_{u_i})
	\leq
	F(\vy^{(i - 1)} - b_i \cdot \characteristic_{u_i}) - (2\eps / n) \cdot F(\vo)
	\enspace.
\]
Second, by Property~\ref{prop:dr_bound1} of Lemma~\ref{lem:DR_properties},
\begin{align*}
	F(\vy^{(i - 1)} - (1 - o_{u_i}^{(i)}&) \cdot \characteristic_{u_i})\\
	={} &
	F\left(\vy^{(i - 1)} - \left(1 - \frac{\Delta_{a,i}}{\Delta_{a,i} + \Delta_{b,i}} \cdot a_i - \frac{\Delta_{b,i}}{\Delta_{a,i} + \Delta_{b,i}} \cdot (1 - b_i)\right) \cdot \characteristic_{u_i}\right)\\
	\geq{} &
	\frac{\Delta_{a,i}}{\Delta_{a,i} + \Delta_{b,i}} \cdot F(\vy^{(i - 1)} - (1 - a_i) \cdot \characteristic_{u_i}) + \frac{\Delta_{b,i}}{\Delta_{a,i} + \Delta_{b,i}} \cdot F(\vy^{(i - 1)} - b_i \cdot \characteristic_{u_i})
	\enspace.
\end{align*}
Combining all the above inequalities, we get
\begin{align*}
	F(&\vo^{(i)}) - F(\vo^{(i - 1)})\\
	\geq{} &
	\frac{\Delta_{a,i}}{\Delta_{a,i} + \Delta_{b,i}} \cdot F(\vy^{(i - 1)} - (1 - a_i) \cdot \characteristic_{u_i}) + \left(\frac{\Delta_{b,i}}{\Delta_{a,i} + \Delta_{b,i}} - 1\right) \cdot F(\vy^{(i - 1)} - b_i \cdot \characteristic_{u_i}) + \frac{2\eps}{n} \cdot F(\vo)\\
	={} &
	\frac{\Delta_{a,i}}{\Delta_{a,i} + \Delta_{b,i}} \cdot \left[F(\vy^{(i - 1)} - (1 - a_i) \cdot \characteristic_{u_i}) - F(\vy^{(i - 1)} - b_i \cdot \characteristic_{u_i})\right] + \frac{2\eps}{n} \cdot F(\vo)\\
	={} &
	\frac{\Delta_{a,i}}{\Delta_{a,i} + \Delta_{b,i}} \cdot \left[F(\vy^{(i - 1)} - (1 - a_i) \cdot \characteristic_{u_i}) - F(\vy^{(i - 1)}) - \Delta_{b, i}\right] + \frac{2\eps}{n} \cdot F(\vo)\\
	\geq{} &
	\frac{\Delta_{a,i}}{\Delta_{a,i} + \Delta_{b,i}} \cdot \left[F(\vx^{(i - 1)} + a_i \cdot \characteristic_{u_i}) - F(\vx^{(i - 1)} + \characteristic_{u_i}) - \Delta_{b, i}\right] + \frac{2\eps}{n} \cdot F(\vo)\\
	\geq{} &
	- \frac{\Delta_{a,i} \cdot \Delta_{b, i}}{\Delta_{a,i} + \Delta_{b,i}} + \frac{2\eps}{n} \cdot F(\vo)
	\enspace,
\end{align*}
where the second inequality holds by the DR-submodularity of $F$ since $\vx^{(i - 1)} + a_i \cdot \characteristic_{u_i} \leq \vy^{(i - 1)} - (1 - a_i) \cdot \characteristic_{u_i}$, and the last inequality follows from the definition of $a_i$ since $1 \in V$. The lemma now follows by rearranging the last inequality.
\end{proof}

\begin{corollary} \label{cor:relation}
For every $r \geq 0$ and integer $1 \leq i \leq n$,
\[
	\frac{1}{r} \cdot [F(\vx^{(i)}) - F(\vx^{(i - 1)})] + r \cdot [F(\vy^{(i)}) - F(\vy^{(i - 1)})]
	\geq
	2[F(\vo^{(i - 1)}) - F(\vo^{(i)})] - \frac{4\eps}{n} \cdot F(\vo)
	\enspace.
\]
\end{corollary}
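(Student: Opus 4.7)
The plan is to derive the corollary by combining Lemmata~\ref{lem:gain} and~\ref{lem:loss} through an elementary weighted AM--GM inequality. Observe first that both $\Delta_{a,i}$ and $\Delta_{b,i}$ are non-negative, since $0\in V$ and $F$ is compared against these baselines in the definitions of $a_i,b_i$. I will work in the generic case $\Delta_{a,i}+\Delta_{b,i}>0$ and comment on the degenerate case at the end.

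Applying Lemma~\ref{lem:gain} termwise,
\[
\frac{1}{r}[F(\vx^{(i)})-F(\vx^{(i-1)})] + r\,[F(\vy^{(i)})-F(\vy^{(i-1)})]
\ \geq\ \frac{\Delta_{a,i}^2/r + r\,\Delta_{b,i}^2}{\Delta_{a,i}+\Delta_{b,i}}.
\]
Expanding $(\Delta_{a,i}/\sqrt r - \sqrt r\,\Delta_{b,i})^2\geq 0$ gives $\Delta_{a,i}^2/r + r\,\Delta_{b,i}^2\geq 2\,\Delta_{a,i}\Delta_{b,i}$, so the right-hand side above is in turn at least $\tfrac{2\,\Delta_{a,i}\Delta_{b,i}}{\Delta_{a,i}+\Delta_{b,i}}$. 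Rearranging Lemma~\ref{lem:loss} gives
\[
\frac{\Delta_{a,i}\Delta_{b,i}}{\Delta_{a,i}+\Delta_{b,i}}\ \geq\ F(\vo^{(i-1)})-F(\vo^{(i)}) + \tfrac{2\eps}{n}\,F(\vo),
\]
and doubling this bound produces an extra $+\tfrac{4\eps}{n}F(\vo)$ on the right hand side. Chaining the inequalities and then slackening $+\tfrac{4\eps}{n}F(\vo)$ to $-\tfrac{4\eps}{n}F(\vo)$ (which is valid since $F(\vo)\geq 0$) yields exactly the statement of the corollary.

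The only residual subtlety concerns the degenerate case $\Delta_{a,i}=\Delta_{b,i}=0$, in which the $0/0$ expressions in Lemmata~\ref{lem:gain} and~\ref{lem:loss} must be handled by continuity. In this case Lemma~\ref{lem:gain} degenerates to the trivial bound LHS~$\geq 0$ (with the usual convention $0^2/0=0$), and inspecting the proof of Lemma~\ref{lem:loss} before its final division by $\Delta_{a,i}+\Delta_{b,i}$ shows that the intermediate inequality already yields $F(\vo^{(i-1)})-F(\vo^{(i)})\leq -\tfrac{2\eps}{n}F(\vo)$, so doubling and again using $F(\vo)\geq 0$ delivers the claim. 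There is no genuine obstacle here: the corollary is a one-step AM--GM bridge between the gain and loss lemmata, and the approximation error $\tfrac{2\eps}{n}F(\vo)$ in Lemma~\ref{lem:loss} actually works in our favor, with the weaker sign stated in the corollary being chosen presumably for uniformity with later telescoping arguments.
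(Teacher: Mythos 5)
Your proof is correct and follows the same route as the paper's: apply Lemma~\ref{lem:gain} to both terms, complete the square $(\Delta_{a,i}/\sqrt r - \sqrt r\,\Delta_{b,i})^2\geq 0$ to reduce the sum to $2\Delta_{a,i}\Delta_{b,i}/(\Delta_{a,i}+\Delta_{b,i})$, and then invoke Lemma~\ref{lem:loss}. Your explicit treatment of the degenerate case $\Delta_{a,i}=\Delta_{b,i}=0$ (using the algorithm's $0/0$ convention) is a care the paper's proof silently omits, and is correct.

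One remark on your closing observation. You noticed that the sign in Lemma~\ref{lem:loss} appears to ``work in our favor,'' so that you derive $+\tfrac{4\eps}{n}F(\vo)$ and then have to \emph{slacken} to the corollary's $-\tfrac{4\eps}{n}F(\vo)$, and you hypothesize the weaker form was chosen for uniformity downstream. That hypothesis is not the right explanation: what you detected is a latent sign error in the paper. In the proof of Lemma~\ref{lem:loss}, Lemma~\ref{lem:approx_max} is invoked to conclude $F(\vy^{(i-1)}-(1-o^{(i-1)}_{u_i})\cdot\characteristic_{u_i})\leq F(\vy^{(i-1)}-b_i\cdot\characteristic_{u_i})-(2\eps/n)F(\vo)$, but the correct consequence of Lemma~\ref{lem:approx_max} has $+(2\eps/n)F(\vo)$, not $-$. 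Propagating the correct sign yields $F(\vo^{(i-1)})-F(\vo^{(i)})\leq\tfrac{\Delta_{a,i}\Delta_{b,i}}{\Delta_{a,i}+\Delta_{b,i}}+\tfrac{2\eps}{n}F(\vo)$, from which the corollary's $-\tfrac{4\eps}{n}F(\vo)$ follows with no slack at all. So your extra slackening step is valid given the lemma as printed, but it is compensating for an upstream typo rather than giving away genuine room; a discretization error term should not be expected to help.
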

\begin{proof}
By Lemmata~\ref{lem:gain} and~\ref{lem:loss},
\begin{align*}
	\frac{1}{r} \cdot [F(\vx^{(i)}) - F(\vx^{(i - 1)})] &{}+ r \cdot [F(\vy^{(i)}) - F(\vy^{(i - 1)})]
	\geq
	\frac{(1/r) \cdot \Delta_{a,i}^2}{\Delta_{a,i} + \Delta_{b,i}} + \frac{r \cdot \Delta_{b,i}^2}{\Delta_{a,i} + \Delta_{b,i}}\\
	={} &
	\frac{(\Delta_{a,i} / \sqrt{r} - \Delta_{b,i} \cdot \sqrt{r})^2 + 2 \cdot \Delta_{a,i} \cdot \Delta_{b,i}}{\Delta_{a,i} + \Delta_{b,i}}
	\geq
	2 \cdot \frac{\Delta_{a,i} \cdot \Delta_{b,i}}{\Delta_{a,i} + \Delta_{b,i}}\\
	\geq{} &
	2 \cdot [F(\vo^{(i - 1)}) - F(\vo^{(i)})] - \frac{4\eps}{n} \cdot F(\vo)
	\enspace.
	\qedhere
\end{align*}
\end{proof}

We are now ready to prove Theorem~\ref{thm:unconstrained}.
\begin{proof}[Proof of Theorem~\ref{thm:unconstrained}.]
For every $r \geq 0$, adding up Corollary~\ref{cor:relation} for all integers $1 \leq i \leq n$ yields
\begin{align*}
	\frac{1}{r} \cdot [F(\vx^{(n)}) - F(\vx^{(0)}&)] + r \cdot [F(\vy^{(n)}) - F(\vy^{(0)})]\\
	={} &
	\sum_{i = 1}^n\left\{\frac{1}{r} \cdot [F(\vx^{(i)}) - F(\vx^{(i - 1)})] + r \cdot [F(\vy^{(i)}) - F(\vy^{(i - 1)})]\right\}\\
	\geq{} &
	\sum_{i = 1}^n\left\{2[F(\vo^{(i - 1)}) - F(\vo^{(i)})] - \frac{4\eps}{n} \cdot F(\vo)\right\}
	=
	2[F(\vo^{(0)}) - F(\vo^{(n)})] - 4\eps \cdot F(\vo)
	\enspace.
\end{align*}
Recalling that $\vo^{(0)} = \vo$, $\vx^{(n)} = \vy^{(n)} = \vo^{(n)}$, $\vx^{(0)} = \vzero$ and $\vy^{(0)} = \vone$, the last inequality implies
\[
	\frac{1}{r} \cdot [F(\vx^{(n)}) - F(\vzero)] + r \cdot [F(\vx^{(n)}) - F(\vone)]
	\geq
	2[F(\vo) - F(\vx^{(n)})] - 4\eps \cdot F(\vo)
	\enspace,
\]
and rearranging this inequality yields
\begin{align*}
	F(\vx^{(n)})
	\geq{} &
	\frac{(2 - 4\eps) \cdot F(\vo) + \frac{1}{r} \cdot F(\vzero) + r \cdot F(\vone)}{r + 2 + 1/r}\\
	\geq{} &
	\left(\frac{2r}{(r + 1)^2} - \eps\right) \cdot F(\vo) + \frac{1}{(r + 1)^2} \cdot F(\vzero) + \frac{r^2}{(r + 1)^2} \cdot F(\vone)
	\enspace.
	\qedhere
\end{align*}
\end{proof}
\section{Guessing the Necessary Values} \label{sec:knowledge}

In this section, we prove Lemma~\ref{lem:guess_tau}, which we repeat here for convenience.

\lemGuess*

\begin{proof}
Recall that $F(\vo)$ is the optimal value for the problem $\max_{\vx \in P} F(\vx)$. As discussed in Section~\ref{sec:introduction}, many previous constant approximation ratio algorithms have been suggested for this problem, and by executing any of these algorithms, we can get a value $v$ obeying $c \cdot F(\vo) \leq v \leq F(\vo)$ for some absolute constant $c \in (0, 1)$. Consider now the set
\[
	G_{\vo}
	=
	\{(1 - \eps)^i \cdot v/c \mid i \in \bZ, 1 \leq i \leq \log_{1 - \eps} c + 1\}
	\enspace.
\]
One can observe that the size of this set is at most $\log_{1 - \eps} c + 1$, and therefore, depends only on $\eps$ and the absolute constant $c$.

Note now that, for $i = 0$, $(1 - \eps)^i \cdot v/c = v/c \geq F(\vo)$, and for $i = \lceil \log_{1 - \eps} c \rceil$,
\[
	(1 - \eps)^i \cdot v/c
	\leq
	(1 - \eps)^{\log_{1 - \eps} c} \cdot v/c
	=
	v
	\leq
	F(\vo)
	\enspace.
\]
Hence, there must exist an integer $i \in [\log_{1 - \eps} c + 1]$ such that $(1 - \eps)^i \cdot v/c \leq F(\vo)$, but $(1 - \eps)^{i - 1} \cdot v/c \geq F(\vo)$. This implies that the value $g \triangleq (1 - \eps)^i \cdot v/c$ belongs to the set $G_{\vo}$ and has the properties stated in the lemma.

Our next objective is to show that, given the above value $g$, it is possible to construct in polynomial time constant size (depending only on $\eps$) sets $G_{\hprod}(g)$ and $G_{\psum}(g)$ of non-negative values such that $G_{\hprod}(g)$ is guaranteed to include a value obeying the inequalities of the lemma involving $g_{\hprod}$, and $G_{\psum}(g)$ is guaranteed to include a value obeying the inequalities of the lemma involving $g_{\psum}$. Notice that once we prove that, the lemma will follow by simply choosing $G \triangleq \{\{g\} \times G_{\hprod}(g) \times G_{\psum}(g) \mid g \in G_{\vo}\}$.

We begin by defining
\[
	G_{\psum}(g)
	\triangleq
	\left\{\eps i \cdot g ~\middle|~ i \in \bZ,  1 \leq i \leq \frac{2}{\eps(1 - \eps)} + 1\right\}
	\enspace.
\]
Notice that the size of this set indeed depends only on $\eps$, and there is a value in this set that can be chosen as $g_{\hprod}$ because
\[
	\eps \cdot 0 \cdot g
	=
	0
	\leq
	F(\vz \psum \vo)
	\leq
	2 \cdot F(\vo)
	=
	\eps \cdot \frac{2}{\eps} \cdot F(\vo)
	\leq
	\eps \cdot \frac{2}{\eps(1 - \eps)} \cdot g
	\enspace,
\]
where the first inequality holds by the non-negativity of $F$, and the last inequality follows from the properties of $g$.
The second inequality follows since the DR-submodularity and non-negativity of $F$ imply together that
\begin{align} \label{eq:psum_bound}
	F(\vz \psum \vo)
	={} &
	F(\vo + (1 - \vo) \hprod \vz)
	\leq
	F(\vo) + F((1 - \vo) \hprod \vz) - F(\vzero)\\\nonumber
	\leq{} &
	F(\vo) + F((1 - \vo) \hprod \vz)
	\leq
	2 \cdot F(\vo)
	\enspace,
\end{align}
where the last inequality follows since $(1 - \vo) \hprod \vz\leq \vo$, and hence, belongs to $P$ by the down-closeness of $P$ (recall that $F(\vo)$ upper bounds $F(\vx)$ for any $\vx \in P$ by the definition of $\vo$).

The construction of $G_{\hprod}(g)$ is very similar to the above construction of $G_{\psum}(g)$; with the sole difference being that Inequality~\eqref{eq:psum_bound} is replaced with the inequality $F(\vz \hprod \vo) \leq F(\vo)$, which holds since the down-closeness of $P$ implies that $\vz \hprod \vo$ belongs $P$.
%
\end{proof}

\end{document}